\crefname{theorem}{Theorem}{Theorems}
\crefname{lemma}{Lemma}{Lemmas}
\crefname{proposition}{Proposition}{Propositions}
\crefname{claim}{Claim}{Claims}
\crefname{subclaim}{Claim}{Claims}
\crefname{corollary}{Corollary}{Corollaries}
\crefname{subcorollary}{Corollary}{Corollaries}
\crefname{definition}{Definition}{Definitions}
\crefname{remark}{Remark}{Remarks}
\crefname{example}{Example}{Examples}
\crefname{question}{Question}{Questions}
\numberwithin{equation}{section}
\theoremstyle{plain}
\newtheorem{theorem}[equation]{Theorem}
\newtheorem{lemma}[equation]{Lemma}
\newtheorem{proposition}[equation]{Proposition}
\newtheorem{corollary}[equation]{Corollary}
\newtheorem{question}[equation]{Question}
\newtheorem*{question*}{Question}
\theoremstyle{definition}
\newtheorem{definition}[equation]{Definition}
\theoremstyle{remark}
\newtheorem{remark}[equation]{Remark}
\newtheorem*{remark*}{Remark}
\newtheorem{example}[equation]{Example}
\newtheorem*{example*}{Example}
\newtheorem*{notation*}{Notation}
\newcommand{\ot}{\otimes}
\newcommand{\op}{\oplus}
\newcommand{\bbN}{\mathbb{N}}
\newcommand{\bbZ}{\mathbb{Z}}
\newcommand{\bbQ}{\mathbb{Q}}
\newcommand{\bbR}{\mathbb{R}}
\newcommand{\bbC}{\mathbb{C}}
\newcommand{\bbF}{\mathbb{F}}
\newcommand{\group}[1]{\mathsf{#1}}
\newcommand{\SL}{\group{SL}}
\newcommand{\GL}{\group{GL}}
\newcommand{\Og}{\group{O}}
\newcommand{\Ug}{\group{U}}
\newcommand{\Sp}{\group{Sp}}
\newcommand{\Symg}{\group{S}}
\newcommand{\Eg}{\group{E}}
\newcommand{\class}[1]{\textnormal{\textbf{#1}}}
\newcommand{\NP}{\class{NP}}
\newcommand{\coAM}{\mathrm{co}\class{AM}}
\newcommand{\GIclass}{\class{GI}}
\newcommand{\TIclass}{\class{TI}}
\newcommand{\OCIclass}{\class{TOCI}}
\newcommand{\problem}[1]{\textnormal{\texttt{#1}}}
\newcommand{\GI}{\problem{GI}}
\renewcommand{\OE}{\problem{OE}}
\newcommand{\OCI}{\problem{OCI}\!}
\newcommand{\Tr}{\operatorname{Tr}}
\newcommand{\Id}{\operatorname{Id}}
\newcommand{\id}{\operatorname{id}}
\newcommand{\rep}[2]{\left({#1}; {#2}\right)}
\newcommand{\tworow}[2]{({#1};\,{#2})}
\newcommand{\tspace}[3]{{\textstyle\bigotimes\nolimits^{#2}_{#3}}{#1}}
\newcommand{\vtuple}[1]{\mathbf{#1}}
\newcommand{\ituple}[1]{\mathbf{#1}}
\DeclareMathOperator{\Sym}{Sym}
\DeclarePairedDelimiter\abs{\lvert}{\rvert}
\begin{document}

\title{Complexity theory of orbit closure intersection\\ for tensors: reductions, completeness, and\\ graph isomorphism hardness}
\author{\IEEEauthorblockN{Vladimir Lysikov}\IEEEauthorblockA{\textit{Faculty of Computer Science}\\\textit{Ruhr University Bochum}\\Bochum, Germany\\vladimir.lysikov@rub.de} \and \IEEEauthorblockN{Michael Walter}\IEEEauthorblockA{\textit{Faculty of Computer Science}\\\textit{Ruhr University Bochum}\\Bochum, Germany} \smallskip \IEEEauthorblockA{\textit{Faculty of Physics}\\\textit{LMU Munich}\\Munich, Germany\\michael.walter@lmu.de}}

\maketitle

\begin{abstract}
A wide range of natural computational problems in computer science, mathematics, physics, and other sciences amounts to deciding if two objects are equivalent.
Very often this equivalence is defined in terms of group actions.
A natural question is to ask \emph{when two objects can be distinguished by polynomial functions that are invariant under the group action}.
For finite groups, this is just the usual notion of equivalence, but for continuous groups such as the general linear groups it gives rise to a new notion, called \emph{orbit closure intersection}.
This new notion has recently seen substantial interest in the community, as it captures, among others, the graph isomorphism problem, noncommutative polynomial identity testing, null cone problems in invariant theory, equivalence problems for tensor networks, and the classification of multiparty quantum states.
Despite remarkable recent algorithmic progress in celebrated special cases, the computational complexity of general orbit closure intersection problems is currently quite unclear. In particular, tensors seem to give rise to the most difficult problems.

In this work we start a systematic study of orbit closure intersection problems from the complexity-theoretic viewpoint.
Our key contributions include:
\begin{itemize}
\item We define a complexity class $\OCIclass$ that captures the power of orbit closure intersection problems for general tensor actions.
\item We give an appropriate notion of \emph{algebraic reductions} that imply polynomial-time reductions in the usual sense, but are amenable to invariant-theoretic techniques.
\item We identify several natural tensor problems that are complete for $\OCIclass$, one of which is the equivalence of PEPS tensor networks considered by Acuaviva \emph{et al} (FOCS'23).
\item We show that the graph isomorphism problem can be reduced to these complete problems and hence~$\GIclass \subseteq \OCIclass$.
\end{itemize}
As such, our work establishes the first lower bound on the computational complexity of orbit closure intersection problems, and it explains the difficulty of finding unconditional polynomial-time algorithms beyond special cases, as has been observed in the recent literature.
\end{abstract}

\begin{IEEEkeywords}
tensors, orbit closure intersection, equivalence problems, invariants, graph isomorphism
\end{IEEEkeywords}

\section{Introduction}
A common theme in computer science, mathematics, and many other sciences is to consider objects or structures up to some notion of ``isomorphism'' or ``equivalence''.
Very often the equivalence is defined by describing possible transformations of an object into an equivalent one.
Examples are ubiquitous:
graphs can be considered up to graph isomorphism;
curves, surfaces, and manifolds in general are considered up to homeomorphisms or diffeomorphism;
algebraic structures such as groups or rings always come with the corresponding notion of isomorphism;
subgroups in a given group can be described up to isomorphism or up to conjugacy;
for matrices and matrix spaces we have notions of similarity and conjugacy;
quantum states shared by multiple parties are considered up to operations performed locally by these parties;
etc.

Each such notion motivates a corresponding computational problem: decide if two given objects are indeed equivalent.
The most famous problem of this kind is the graph isomorphism problem.
This problem is known to lie in $\NP \cap \coAM$, but for many years resisted attempts to find a polynomial-time solution (currently asymptotically best algorithm for graph isomorphism is the recent quasipolynomial algorithm of Babai~\cite{DBLP:conf/stoc/Babai16}, see also~\cite{zbMATH07503184}).
It is conjectured to be an $\NP$-intermediate problem (although Babai's algorithm casts some doubt on this conjecture).
Many isomorphism problems for various discrete structures and finite algebraic structures (for example, finite groups represented by their Cayley tables) reduce to graph isomorphism.
In fact, the isomorphism problem for general finite algebraic structures with operations and relations given by listing all values is equivalent to graph isomorphism~\cite[\S15]{ZKT-survey}.
This leads to the introduction of the complexity class $\GIclass$, which contains all problems polynomial-time reducible to graph isomorphism.

\paragraph*{Group actions and notions of equivalence}
Isomorphism problems can often be stated in terms of group actions which describe allowed equivalence transformations, and indeed this is the case for all examples mentioned above.
For example, suppose we are given a pair of graphs with the vertex set $[n] = \{1, \dots, n\}$ (it is trivial to check if the graphs have the same number of vertices).
The symmetric group~$\Symg_n$, which consists of the bijections of~$[n]$, acts on the set of graphs with vertex set~$[n]$.
Then two graphs are isomorphic precisely when they lie in the same orbit of $\Symg_n$, that is, when there exists a permutation~$\sigma \in \Symg_n$ which transforms one graph into another.

For finite groups, the only meaningful way to formulate an equivalence relation in terms of a group action is to deem two objects equivalent if they can be related by the group action or, equivalently, if they have the same orbits.
But if we consider continuous groups acting on various spaces, such as general linear groups, the orthogonal, or the symplectic groups over~$\bbC$, which represent different notions of symmetries of a complex vector space, then there is a second way to define a meaningful equivalence relation.
It is possible that two points do not lie in the same orbit, but there exist two sequences, one in each orbit, that converge to the same point.
So in addition to the standard notion of equivalence we can consider a relaxed notion of equivalence under which two points are equivalent if their orbit closures contain a common point.
Remarkably, for a wide class of groups known as \emph{reductive algebraic groups}, which in particular contains the classical groups such as $\GL_n$, $\SL_n$, $\Og_n$, $\Sp_{2n}$ and products thereof, this second equivalence captures precisely the power of invariant functions to separate the orbits:
\emph{two points have intersecting orbit closures if and only if values of all continuous (or even polynomial) invariants on these points coincide}!

More formally, we consider a linear action of an algebraic group $\group{G}$ (such as the general linear group~$\GL_n$) on a finite-dimensional complex vector space~$X$.
The orbit of a point $x \in X$ is the set $\group{G}x = \{\group{g} \cdot x \mid \group{g} \in \group{G}\}$.
Two points $x, y \in X$ are \emph{equivalent} under the action of $\group{G}$ if~$y = \group{g} x$ for some~$\group{g} \in \group{G}$ (equivalently, the orbits $\group{G}x$ and $\group{G}y$ coincide).
The second, more relaxed, notion of equivalence is defined as follows.
Two points $x, y \in X$ are \emph{closure equivalent} if $\overline{\group{G}x} \cap \overline{\group{G}y} \neq \varnothing$ (in our setting the orbit closure $\overline{\group{G}x}$ is the same in Zariski and Euclidean topology).
We denote equivalence by $x \sim_{\group{G}} y$ and closure equivalence by $x \approx_{\group{G}} y$.
A result of Mumford states that for $\GL_n$ and other reductive groups, $x \approx_{\group{G}} y$ is equivalent to the fact that $F(x) = F(y)$ for every invariant polynomial~$F$ on~$X$ (or equivalently for any invariant continuous function).

A classical example shows the difference and relevance of both notions.
Consider the action~$\GL_n$ on~$\mathbb{C}^{n \times n}$ given by conjugation: $\group{g}\cdot x = \group{g} x \group{g}^{-1}$.
The orbits of this action can be described using the classical theory of Jordan normal forms.
That is, two matrices are equivalent if they have a common Jordan normal form.
However, note that every Jordan normal form contains in its orbit closure the diagonal matrix with the same eigenvalues.
Thus the description of the closure equivalence is much simpler:
two matrices are closure equivalent if and only if they have the same eigenvalues (counted with multiplicity) or, equivalently, the same characteristic polynomials.
This example highlights that closure equivalence is a meaningful relaxation of the traditional notion of equivalence that often has an easier description.
This is one of the insights behind Mumford's geometric invariant theory (GIT)~\cite{Mumford-GIT}, which is a powerful approach in mathematics to construct robust and well-behaved moduli spaces -- geometrical spaces that parameterize objects such as curves up to closure equivalence.

Since we have two notions of equivalence, there are two computational problems associated with the action of~$\group{G}$ on~$X$:
\begin{itemize}
  \item {\small\emph{Orbit equality:} given $x, y \in X$, decide if $x \sim_{\group{G}} y$.}
  \item {\small \emph{Orbit closure intersection:} given $x, y \in X$, decide if $x \approx_{\group{G}} y$.}
\end{itemize}
To study the complexity theory of these problems, we consider a sequence of groups $\group{G}_n$ acting on vector spaces $X_n$ depending on the problem size $n$.
For example, for the conjugation action, we have~$\group{G}_n = \GL_n$ and~$X_n = \bbC^{n\times n}$, where the problem size~$n$ determines the size of the matrices.
And for the graph isomorphism problem, we take~$n$ to be the number of vertices, the groups $\group{G}_n$ are symmetric groups~$\Symg_n$ identified with the subgroups of permutation matrices in~$\GL_n$, and we represent graphs by their adjacency matrices, which can be thought of as elements in~$X_n = \mathbb{C}^{n \times n}$.
In the latter case the group is finite and hence both computational problems coincide.

\paragraph*{Orbit equality and the complexity class $\TIclass$}
Many problems involving isomorphisms of algebraic structures or matrix spaces can be either directly presented as orbit equality problems, or reduced to them in a straightforward manner.
For example, the isomorphism problem for modules over a finitely generated algebra~\cite{DBLP:conf/issac/ChistovIK97,BROOKSBANK20084020,Sergeichuk-Canonical} is the orbit equality problem for the action of $\GL_n$ on the space $\left(\mathbb{F}^{n \times n}\right)^{\oplus m}$ of matrix tuples given by $$\group{g} \cdot (x_1, \dots, x_m) = (\group{g} x_1 \group{g}^{-1}, \dots, \group{g} x_m \group{g}^{-1}),$$ known as the \emph{simultaneous conjugation} action.
The study of the complexity theory of orbit equality problems for general linear groups, their products, and other classical groups was pioneered by Grochow and Qiao and has seen tremendous recent progress over the past years~\cite{DBLP:journals/siamcomp/GrochowQ23,DBLP:journals/toct/GrochowQ24,DBLP:conf/stacs/GrochowQT21,DBLP:conf/coco/GalesiGPS23,DBLP:conf/innovations/ChenGQTZ24,DBLP:conf/stoc/GrochowQ25,DBLP:conf/stoc/GrochowQSS25}.
Similarly to how graph isomorphism is central for the isomorphism problems of discrete structures, it has become clear that many orbit equality problems for classical groups and various algebraic isomorphism problems can be reduced to the basic problem of \emph{$3$-tensor isomorphism}, which is the orbit equality problem for the standard action of $\GL_{n_1} \times \GL_{n_2} \times \GL_{n_3}$ on 3-tensors in~$\mathbb{F}^{n_1} \otimes \mathbb{F}^{n_2} \otimes \mathbb{F}^{n_3}$,
\begin{align*}
    (\group{g}_1,\group{g}_1,\group{g}_3) \cdot t = (\group{g}_1 \ot \group{g}_2 \ot \group{g}_3) t.
\end{align*}
Accordingly, as an analogue of~$\GIclass$ the complexity class~$\TIclass$ has been introduced, which contains all problems that are polynomially reducible to the $3$-tensor isomorphism problem for tensors over a field~$\bbF$, and it plays a foundational role in the theory.
Recent work has also studied this complexity class when restricted to classical groups such as the orthogonal or unitary subgroups.
In addition, group equality problems involving symmetric groups and general linear groups over finite fields are studied in coding theory and cryptography in the context of code equivalence problems~\cite{DBLP:journals/corr/abs-2011-04611,DBLP:journals/dcc/ReijndersST24,DBLP:journals/dcc/DAlconzo24}.

\paragraph*{Orbit closure intersection, algorithms and complexity}
Recent work has made important algorithmic progress in special cases of the orbit closure intersection problem.
Mulmuley~\cite{GCT-V} investigated orbit closure intersection problems in the context of geometric complexity theory.
He provided a randomized efficient algorithm solving the orbit closure intersection problem for the simultaneous conjugation action, which was derandomized in~\cite{DBLP:conf/approx/ForbesS13}.
He also conjectured that all orbit closure intersection problems for explicit actions of reductive groups are efficiently solvable.
Another special case that attracted a lot of interest because of its relation to noncommutative polynomial identity testing is the \emph{left-right action} of $\SL_{n_1} \times \SL_{n_2}$ on matrix tuples in $(\bbF^{n_1 \times n_2})^{\oplus m}$ defined as
\[(\group{g}, \group{h}) \cdot (x_1, \dots x_m) = (\group{g}x_1\group{h}^\top,\dots,\group{g}x_n\group{h}^\top).\]
Identity testing for polynomials and rational functions in noncommutative variables can be reduced to the following special case of the orbit closure intersection problem.
\begin{itemize}
    \item \emph{Null cone problem:} given $x \in X$, decide if $x \approx_{\group{G}} 0$.
\end{itemize}
The orbit closure intersection problem for the left-right action is also efficiently solvable.
Known algorithms for the above problems use one of two different approaches:
invariant-theoretic algorithms~\cite{DBLP:journals/cc/IvanyosQS17,DM17,DBLP:journals/cc/IvanyosQS18,chatterjee2023noncommutativeedmondsproblemrevisited,DM20,DBLP:conf/soda/IvanyosQ23} are based on finding polynomial invariants which as discussed can separate orbit closures.
Optimization-based algorithms~\cite{DBLP:conf/stoc/Gurvits03,DBLP:conf/focs/GargGOW16,DBLP:journals/focm/GargGOW20,Hamada-Hirai,DBLP:conf/stoc/Allen-ZhuGLOW18}, first use geodesic convex optimization to try and move the input point as close to~$0$ as possible, which reduces the orbit closure intersection problem to an orbit equality for a compact subgroup by a result of Kempf and Ness~\cite{Kempf-Ness} (the first step suffices for the null cone problem).
Both approaches have their advantages and can also output witnesses for the intersection or separation of orbit closures in addition to solving the decision problem.
Moreover, it has been understood~\cite{garg2017algorithmic,DBLP:conf/focs/BurgisserFGOWW19,iwamasa2024algorithmic} that many algorithmic results can be generalized to so-called \emph{quiver representations}, which describe linear algebraic actions in terms of graphs~\cite{derksen2011combinatorics,baldoni2023horn}.
In fact, the orbit closure intersection problem for the~$\GL$-action on an arbitrary quiver can be reduced to the simultaneous conjugation action action on $k$-tuples of matrices~\cite{le1990semisimple}, and similar results reduce acyclic quivers to the left-right action~\cite{Derksen-Weyman,iwamasa2024algorithmic}.
The orbit closure intersection problem is also efficiently solvable for actions of abelian groups~$\GL_1^{\times k}$ (called \emph{tori})~\cite{DBLP:conf/coco/BurgisserDMWW21}.

Despite this important progress, however, not much is known about the computational complexity of the orbit closure intersection problem beyond the well-behaved actions discussed above.
Purely invariant theoretic approaches are unlikely to be efficient for general reductive group actions because polynomial invariants in general have high degree~\cite{DM-lowerbound} and are hard to compute~\cite{DBLP:conf/coco/GargIMOWW20}.
Similarly, the power of optimization-based analytic methods is currently unclear.
In~\cite{DBLP:conf/focs/BurgisserFGOWW19} a general framework of geodesic optimization is developed, which leads to the construction of algorithms solving null cone problems for all reductive group actions, but the parameters controlling the performance of their algorithms can be exponential in some cases~\cite{DBLP:conf/coco/FranksR21}.
Moreover, while this also offers a plausible path for generalizing the Kempf--Ness approach of \cite{DBLP:conf/stoc/Allen-ZhuGLOW18}, this approach quickly runs into $\GIclass$-hardness~\cite{DBLP:conf/innovations/ChenGQTZ24} or versions of the $abc$-conjecture~\cite{DBLP:conf/coco/BurgisserDM0W24}!
In fact, the complexity of orbit closure intersection is very much unclear even for concrete actions.
For example, no polynomial-time algorithms are known for the 3-tensor action of~$\SL_{n_1} \times \SL_{n_2} \times \SL_{n_3}$ on $\bbC^{n_1} \ot \bbC^{n_2} \ot \bbC^{n_3}$.
Another interesting action for which no efficient algorithms are known is the action of~$\GL_{n_1} \times \GL_{n_2}$ on tensors in~$\bbC^p \otimes \bbC^{n_1 \times n_1} \otimes \bbC^{n_2 \times n_2}$, arising in the study of tensor networks.
Equivalently, this action can be thought of as the ``double conjugation'' action of $\GL_{n_1} \times \GL_{n_2}$ on the tuples in $(\bbC^{n_1 n_2 \times n_1 n_2})^{\oplus p}$ via
\begin{multline*}
    (\group{g}, \group{h}) \cdot (x_1, \dots, x_p) \\ = \left((\group{g} \boxtimes \group{h}) x_1 (\group{g} \boxtimes \group{h})^{-1}, \dots, (\group{g} \boxtimes \group{h}) x_p (\group{g} \boxtimes \group{h})^{-1}\right),
\end{multline*}
where $\boxtimes$ denotes the Kronecker product of matrices.
These are very natural problems that arise in a plethora of applications, such as in the classification of entanglement for multipartite quantum states~\cite{PhysRevA.68.012103,Briand-moduli333,PhysRevLett.111.060502,SAWICKI201881,Slowik2020linkbetween},
in multilinear cryptography~\cite{DBLP:conf/tcc/JiQ0Y19}, in geometric complexity theory~\cite{GCT-V}, and in the theory and practice of tensor network states~\cite{DBLP:conf/focs/AcuavivaMNPS0W23}.

It is interesting to note that the simultaneous conjugation action corresponds to one-dimensional tensor networks (known as matrix product states or MPS), which are computationally tractable, while the ``double conjugation'' action is the relevant gauge group for two-dimensional tensor networks (known as projected entangled pair states or PEPS), for which many computationally problems are hard.
\emph{To summarize, while no polynomial-time algorithms are known for essentially any action other than the special cases discussed above, including not for essentially any nontrivial tensor action, neither are there any sharp hardness results}!

\subsection{Main results}\label{main:results}
In this work we start a systematic study of orbit closure intersection (OCI) problems from the complexity-theoretic viewpoint.
We work over~$\bbC$ as our techniques are invariant-theoretic and work best in characteristic 0 (this setting also captures all applications mentioned above).

\paragraph*{Orbit closure intersection problems}
To formally define orbit closure intersection as a decision problem, we specify a sequence $X_n$ of vector spaces and a sequence of groups $\group{G}_n$ such that $X_n$ has a structure of $\group{G}_n$-representation.
The parameter~$n$ is a positive integer which measures the size of the problem; we require that the dimension of $X_n$ is polynomially bounded in terms of~$n$.
More generally, we allow the parameter to be a tuple~$\ituple{n}=(n_1,\dots,n_d)$.
An instance of an OCI problem $\OCI\rep{X_n}{\group{G}_n}$ is then given by $n$ and a pair of elements $x, y \in X_n$ with rational coordinates.\footnote{More generally, we can allow coordinates from a field $\bbF$ such that its elements can be represented by bitstrings and arithmetic operations on them can be performed efficiently.}

\paragraph*{Tensors and the complexity class $\OCIclass$}
We focus on the OCI problems for tensors and tensor tuples.
We can specify a sequence of representations by specifying type of each tensor factor for each tensor in a tuple and varying the dimensions of the vector spaces involved.
This captures several important group actions, such as:
\begin{itemize}
    \item \emph{Simultaneous conjugation:} $\rep{\left(V \otimes V^*\right)^{\oplus p}}{\GL(V)}$;
    \item \emph{Left-right action on matrix tuples:} \[\rep{\left(V \otimes W\right)^{\oplus p}}{\SL(V) \times \SL(W)};\]
    \item \emph{Gauge equivalence for 2D PEPS tensor networks:}
        \[\rep{\left(V \otimes V^* \otimes W \otimes W^*\right)^{\oplus p}}{\GL(V) \times \GL(W)};\]
    \item \emph{SLOCC classification of tripartite quantum states:} \[\rep{V_1 \otimes V_2 \otimes V_3}{\SL(V_1) \times \SL(V_2) \times \SL(V_3)}.\]
\end{itemize}
More formally, we specify a sequence of representations by giving a formula $\mathcal{F}(V_1,\dots,V_m)$, where the variables $V_1,\dots,V_m$ correspond to vector spaces, and allowed operations are taking duals, direct sums, and tensor products.
Such a formula defines a sequence of \emph{tensor tuple representations} by setting $\group{G}_{\ituple{n}} = \GL_{n_1} \times \dots \times \GL_{n_m}$ and~$X_{\ituple{n}} = \mathcal{F}(\bbC^{n_1},\dots,\bbC^{n_m})$.
All examples above can be interpreted in this way.
For convenience, we denote the sequence by $\rep{\mathcal{F}(V_1,\dots,V_n)}{\GL(V_1) \times \dots \times \GL(V_n)}$;
We call this a \emph{tensor tuple sequence}.
The corresponding orbit closure intersection problem is denoted by \[\OCI\rep{\mathcal{F}(V_1,\dots,V_n)}{\GL(V_1) \times \dots \times \GL(V_n)}.\]
By analogy with $\GIclass$ and $\TIclass$, we define the \emph{complexity class $\OCIclass$}, which contains all decision problems that are polynomial-time Karp-reducible to some OCI problem for actions of general linear groups.

\paragraph*{Completeness of tensor OCI problems}
We prove that there are orbit closure intersection problems that are complete for $\OCIclass$.
Note that any such problem must involve tensors with both covariant and contravariant factors.
For example, while the action of $\GL(V)$ on $V^{\otimes k}$ is complete for~$\TIclass$ (if $k \geq 3$), it is trivial from the perspective of OCI (every orbit contains $0$ in the closure).
We show that the natural problem of orbit closure intersection for the gauge group of 2D PEPS tensor networks for qutrits is complete for $\OCIclass$ (\cref{thm:complete PEPS}):

\begin{theorem}\label{thm:complete-PEPS}
  Gauge equivalence of 2D PEPS tensor networks with physical dimension~$3$ is $\OCIclass$-complete.
\end{theorem}

Written in our notation, this problem is $\OCI\rep{\left(V \otimes V^* \otimes W \otimes W^*\right)^{\oplus 3}}{\GL(V) {\times} \GL(W)}$.
We note that the corresponding problem for 1D MPS tensor networks (simultaneous conjugaton) can be solved in polynomial-time.
This is rather suggestive as many problems that are computationally tractable for 1D tensor networks become computationally hard for 2D and higher dimensions, as discussed earlier.

We also reduce any problem in $\OCIclass$ to a problem of the form $\OCI\rep{\bigoplus_{k = 1}^q V^{\otimes a_k} \otimes (V^*)^{\otimes a_k}}{\GL(V)}$ where the spaces and dual spaces are ``balanced''.
When $a_1 = \dots = a_q = 1$ this correponds to the simultaneous conjugation action, for which the OCI problem can be solved in polynomial time.
In almost all remaining cases, we establish the following hardness result (\cref{cor:complete-balanced-GL}):

\begin{theorem}\label{thm:complete-balanced-GL}
    The problem~{\small$\OCI\rep{\bigoplus\limits_{k = 1}^q V^{\otimes a_k} \otimes (V^*)^{\otimes a_k}\!}{\GL(V)}$} is $\OCIclass$-complete if $\sum_{k = 1}^q a_k \geq 4$ and at least one of $a_k$ exceeds $1$.
\end{theorem}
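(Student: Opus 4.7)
Membership in $\OCIclass$ is immediate from the definition, since the problem is itself an OCI problem for a tensor tuple representation of $\GL(V)$. The substance of the theorem is the hardness claim.

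My plan is to start from the canonical form $\OCI\rep{\bigoplus_{\ell} V^{\otimes b_\ell} \otimes (V^*)^{\otimes b_\ell}}{\GL(V)}$, which by the general reduction announced just before the statement already captures all of $\OCIclass$, and to show that any such canonical instance can be algebraically reduced to one of the target shape $\bigoplus_{k=1}^q V^{\otimes a_k} \otimes (V^*)^{\otimes a_k}$ for any prescribed sequence $(a_k)$ satisfying $\sum_k a_k \geq 4$ and $\max_k a_k \geq 2$. In this way the theorem reduces to a purely combinatorial manipulation of the multiset of exponents, implemented by algebraic reductions.

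Two primitive moves on canonical instances should suffice to realize an arbitrary target multiset. The first is a \emph{merge} that combines two summands $V^{\otimes b}\otimes(V^*)^{\otimes b}$ and $V^{\otimes b'}\otimes(V^*)^{\otimes b'}$ into a single summand $V^{\otimes(b+b')}\otimes(V^*)^{\otimes(b+b')}$, implemented by placing the two inputs on complementary ``slots'' of a larger $V$ together with a fixed marker tensor (for example a direct sum of projectors in $V\otimes V^*$) whose $\GL(V)$-stabilizer pins down the slot decomposition. The second is a \emph{degree-raising} move that embeds a summand $V^{\otimes b}\otimes(V^*)^{\otimes b}$ into $V^{\otimes(b+1)}\otimes(V^*)^{\otimes(b+1)}$ by tensoring with a canonical identity in $V\otimes V^*$ on the extra slot and again using a marker to force that slot to carry the identity up to closure. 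Iterating these moves in either direction lets one match any source canonical multiset $(b_\ell)$ to the target multiset $(a_k)$. The two hypotheses enter precisely here: the condition $\max_k a_k \geq 2$ ensures that the target has at least one summand of high enough degree to host the degree-raising marker and thereby to simulate arbitrary $b_\ell > 1$, while $\sum_k a_k \geq 4$ supplies the total degree needed to carry the markers used by the merge moves. Without both hypotheses, the target is either the simultaneous conjugation action (all $a_k = 1$), which is polynomial-time solvable, or too small in total degree to accommodate the required markers.

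The main obstacle, as is typical for OCI reductions, is that each move must preserve closure equivalence and not merely orbit equivalence. When one pads an instance with a marker tensor, any sequence of $\GL(V)$-translates of the padded object that converges to a limit must force convergence of the un-padded components as well; this requires the marker's $\GL(V)$-orbit to be closed and its stabilizer to be reductive and to act compatibly on the complementary tensor factors. Verifying these closure properties for each primitive move is the bulk of the work and would rest on the algebraic-reduction framework of the paper (designed precisely to imply closure preservation) together with standard invariant-theoretic tools such as the Hilbert--Mumford criterion and the reductivity of stabilizers at closed orbits.
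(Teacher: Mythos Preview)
Your proposal has a genuine gap. You plan to reduce an \emph{arbitrary} balanced source $(b_\ell)_{\ell=1}^{p}$ to the \emph{fixed} target $(a_k)_{k=1}^{q}$ using only two moves: merging two summands of degrees $b,b'$ into one of degree $b+b'$, and raising a degree $b\mapsto b+1$ by tensoring with $\Id$. Both moves weakly increase the total degree $\sum_\ell b_\ell$. But the balanced form coming out of the reduction announced before the theorem can have arbitrarily large total degree (it depends on the types of the original tensor tuple), whereas the target total degree $\sum_k a_k$ is a fixed constant. So your primitive moves simply cannot reach the target in general. What is missing is a move that trades tensor degree for vector-space dimension --- roughly, replacing $V$ by $V^{\otimes m}$ or by $V^{\oplus p}$ to collapse many factors or many summands into a constant number. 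The paper's \cref{lem:GL-multilinear-sum-to-deg2-sum,lem:GL-sum-to-summand} are exactly such moves, and they are not of the marker-and-slot type you sketch.

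The paper's proof is organized in the opposite direction and avoids this difficulty entirely. It first establishes, via the full chain of \cref{lem:GL-to-balanced,lem:GL-balanced-to-sum,lem:GL-to-one-space,lem:GL-from-one-space,lem:GL-balanced-sum-to-multilinear-sum,lem:GL-multilinear-sum-to-deg2-sum,lem:GL-deg2-to-peps,lem:GL-sum-to-summand,lem:GL-sum-to-one}, that four specific \emph{minimal} balanced problems with $\sum a_k = 4$ are already $\OCIclass$-complete: the multisets $(4)$, $(3,1)$, $(2,2)$, and $(2,1,1)$ (\cref{thm:complete X,thm:complete-multiconj-4,thm:complete-multiconj-31}). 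Then a single trivial monotonicity step --- your degree-raising move, formalized as \cref{lem:balanced-GL-increase}: tensor with $\Id$ and pad with $\Id^{\otimes b_k}$ summands --- reduces one of these four base cases into any target $(a_k)$ with $\sum a_k\geq 4$ and $\max a_k\geq 2$. A short case analysis on $q$ shows that every such target dominates at least one of the four base multisets coordinatewise. No ``merge'' move or marker-stabilizer argument is needed at this final stage, and none of the closure-preservation worries you raise arise, because the reduction in \cref{lem:balanced-GL-increase} is simply an injective equivariant linear map.

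Your reading of the hypotheses is also off: $\max_k a_k\geq 2$ is not about hosting a marker but about excluding the all-ones case (simultaneous conjugation), which is in $\class{P}$; and $\sum_k a_k\geq 4$ is precisely the threshold at which one of the four minimal complete multisets is guaranteed to embed.
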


There are three remaining cases that are not handled by the above result and discussion, namely, {$\OCI\rep{V^{\otimes 2} \otimes (V^*)^{\otimes 2}}{\GL(V)}$}, {$\OCI\rep{V^{\otimes 3} \otimes (V^*)^{\otimes 3}}{\GL(V)}$}, and {$\OCI\rep{V^{\otimes 2} \otimes (V^*)^{\otimes 2} \oplus V \otimes V^*}{\GL(V)}$}.
We conjecture that these are also $\OCIclass$-complete.

\paragraph*{Completeness for the orthogonal, symplectic, and unitary groups}
Additionally, we study the OCI problems for tensor tuple sequences with the action of orthogonal (over $\bbC$, so there are nontrivial orbit closures) and symplectic groups or their products.
We denote the corresponding complexity classes by $\OCIclass_\Og$ and $\OCIclass_\Sp$, respectively.
We show that the complexity of these problems is at the same level as for general linear groups and identify several complete problems, both for the orthogonal and symplectic groups as well as for the general linear groups (\cref{subsec:o sp}):

\begin{corollary}
We have that $\OCIclass_\Og = \OCIclass_\Sp = \OCIclass$.
Moreover, the following orbit closure problems are complete for this class:
    \begin{itemize}
        \item $\rep{(V^{\otimes 3})^{\oplus 2}}{\Og(V)}$ and $\rep{(V^{\otimes 3})^{\oplus 2}}{\Sp(V)}$ ;
        \item $\rep{(V^{\otimes 3})^{\oplus 2} \oplus (V^*)^{\otimes 2}}{\GL(V)}$;
        \item $\rep{V^{\otimes 7}}{\Og(V)}$ and $\rep{V^{\otimes 7}}{\Sp(V)}$;
        \item $\rep{V^{\otimes 7} \oplus (V^*)^{\otimes 2}}{\GL(V)}$.
    \end{itemize}
\end{corollary}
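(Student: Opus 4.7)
The plan is to derive the corollary from \cref{thm:complete-balanced-GL} via a uniform form-adjoining reduction combined with an equivariant direct-sum collapse, in three stages.

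First, for $\OCIclass_\Og,\OCIclass_\Sp \subseteq \OCIclass$: given any $\Og(V)$-tensor-tuple representation $X$, I reduce $\OCI\rep{X}{\Og(V)}$ to $\OCI\rep{X \oplus (V^*)^{\otimes 2}}{\GL(V)}$ by mapping $x \mapsto (x, \omega)$, where $\omega \in (V^*)^{\otimes 2}$ is the standard symmetric nondegenerate form whose $\GL(V)$-stabilizer is $\Og(V)$. Correctness follows from a standard transfer-principle argument: the $\GL(V)$-orbit of $\omega$ is open dense in the symmetric part of $(V^*)^{\otimes 2}$, so restricting $\GL(V)$-invariants on $X \oplus (V^*)^{\otimes 2}$ to $X \times \{\omega\}$ surjects onto $\Og(V)$-invariants on $X$; Mumford's theorem on closure-equivalence via invariants then yields $(x,\omega) \approx_{\GL} (y,\omega)$ iff $x \approx_\Og y$. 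The $\Sp$ case is identical with a skew-symmetric $\omega \in (V^*)^{\otimes 2}$, and products of factors are handled one factor at a time.

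Second, for the completeness assertions: by the first step applied in both directions, the three problems in each listed pair are polynomial-time inter-reducible through the form, so it suffices to prove completeness for any one representative. I reduce from an $\OCIclass$-complete problem of \cref{thm:complete-balanced-GL} by appending the symmetric form in a fresh $(V^*)^{\otimes 2}$-slot (making it an $\Og$-problem), then using the $\Og$-equivariant isomorphism $V \cong V^*$ induced by the form to convert every $V^*$-factor into a $V$-factor, reducing the problem to a direct sum of pure tensor powers of $V$. A direct-sum collapse (see below) then merges these summands into the single target tensor power ($V^{\otimes 7}$ or $(V^{\otimes 3})^{\oplus 2}$). The $\Sp$ variants proceed identically with a skew form in place of the symmetric one. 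Since the resulting $\Og$- and $\Sp$-problems lie in $\OCIclass_\Og$ and $\OCIclass_\Sp$ respectively and are $\OCIclass$-hard, this also establishes $\OCIclass \subseteq \OCIclass_\Og = \OCIclass_\Sp$, completing the class equalities.

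The crux of the argument is a direct-sum collapse lemma of the form
\[
\OCI\rep{V^{\otimes a} \oplus V^{\otimes b}}{\Og(V)} \;\leq_p\; \OCI\rep{(V')^{\otimes \max(a,b)+1}}{\Og(V')},
\]
where $V'$ is a modest enlargement of $V$ carrying a compatible extension of the form. The idea is to introduce one or two orthogonal ``bookkeeping'' summands in $V'$ and to encode each input tensor into a single tensor of order $\max(a,b)+1$ whose components along the bookkeeping directions serve as $\Og$-equivariant flags distinguishing the two summands. The main technical obstacle is the closure-equivalence setting: the flags must not be allowed to degenerate in the limit in a way that conflates the summands. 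Overcoming this requires choosing flag tensors whose $\Og(V')$-invariants, computed via the first fundamental theorem for the orthogonal group (and its symplectic analogue), separate the two slots even at the boundary of the orbit closure. An explicit invariant-ring calculation then verifies that the reduction correctly pulls back closure-equivalence, and the remainder of the corollary is an assembly of the pieces above.
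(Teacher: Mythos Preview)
Your Stage~1 (the form-adjoining reduction $x\mapsto (x,\omega)$ from $\Og$ to $\GL$) is essentially the paper's \cref{thm:O-to-GL}, though the paper justifies the surjection of invariants directly from the first fundamental theorem rather than invoking a transfer principle; your density argument alone does not establish that the restriction map on invariants is surjective (the orbit of $\omega$ is open but not closed, so some care is needed).

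The substantive gap is in Stage~2, and it sits exactly at the reverse inclusion $\OCIclass \subseteq \OCIclass_\Og$. You write that you ``reduce from an $\OCIclass$-complete problem \dots\ by appending the symmetric form in a fresh $(V^*)^{\otimes 2}$-slot (making it an $\Og$-problem)'', but adjoining a form goes the \emph{wrong} direction: it turns an $\Og$-problem into a $\GL$-problem, not a $\GL$-problem into an $\Og$-problem. Likewise your claim that the three problems in each bullet are ``inter-reducible through the form'' is only half justified: the $\Og$-problem on $X$ reduces to the $\GL$-problem on $X\oplus (V^*)^{\otimes 2}$, but the $\GL$-problem has instances where the form slot is degenerate, and you give no reduction back. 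The paper's route from $\GL$ to $\Og$ is a genuinely different construction (\cref{thm:GL-to-O}): set $W=V\oplus V^*$ with the hyperbolic form $g((v_1,w_1),(v_2,w_2))=w_1(v_2)+w_2(v_1)$, so that $\GL(V)\hookrightarrow \Og(W)$ via $f\mapsto f\oplus (f^*)^{-1}$, and then embed $\tspace{V}{a}{b}$ into $W^{\otimes(a+b)}$. Without this (or something like it) you have no reduction from a $\GL$-complete problem to any $\Og$-problem, and neither the class equality nor the completeness claims follow.

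Your direct-sum collapse lemma is also only a sketch, and the obstacle you flag (flag vectors degenerating under orbit closure) is real. The paper sidesteps it entirely: its de-tupling step (\cref{lem:detupelize o}) uses \emph{invariant} tensors $P_\pi\in\tspace{X}{d}{d}$ (the permutation operators), so that $(x_1,\dots,x_p)\mapsto\sum_k x_k\otimes P_{\pi_k}$ is an injective $\Og(X)$-equivariant linear map from $(X^{\otimes c})^{\oplus p}$ into $X^{\otimes(c+2d)}$; there is no enlargement of $V$ and nothing can degenerate. Even with this fixed, note that naive de-tupling will not reach $(V^{\otimes 3})^{\oplus 2}$: starting from the $\GL$-complete $\tspace{X}{4}{4}$ and passing to $\Og$ via $W=X\oplus X^*$ already lands you at $W^{\otimes 8}$, and collapsing sums only raises the exponent. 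The paper obtains $(Y^{\otimes 3})^{\oplus 2}$ by an additional nontrivial grouping trick (\cref{thm:o 3 2}), taking $Y=X^{\otimes 4}$ and adjoining a specific swap tensor $\hat s\in Y^{\otimes 2}\otimes Y^*$ so that every $\Og(X)$-contraction invariant on $X^{\otimes 12}$ can be rebuilt from contractions of $\hat x$ and $\hat s$ over $Y$.
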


Interestingly, for the tensor isomorphism classes it is only known that $\TIclass_\Og, \TIclass_\Sp \subseteq \TIclass$, while the converse inclusion is open~\cite[Question~10]{DBLP:conf/innovations/ChenGQTZ24}.

Furthermore, we show that tensor isomorphism for the \emph{unitary} group can be reduced to complete problems for~$\OCIclass$ as well as for~$\TIclass$ (\cref{thm:U vs GL}):

\begin{theorem}
We have $\TIclass_\Ug \subseteq \OCIclass$ as well as $\TIclass_\Ug \subseteq \TIclass$.
\end{theorem}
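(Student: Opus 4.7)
The plan is to exhibit, for every unitary tensor tuple representation $\rep{X}{\Ug(V_1) \times \cdots \times \Ug(V_r)}$, a polynomial-time reduction to the $\GL$ tensor tuple representation $\rep{Y}{\GL(V_1) \times \cdots \times \GL(V_r)}$, where $Y = X \oplus X^*$. Here $X^*$ is obtained from the formula defining $X$ by replacing each $V_i$ with $V_i^*$ and vice versa, so that $Y$ is again a tensor tuple representation in the sense of the paper. I send $T \in X$ to $\phi(T) = (T, \bar T) \in Y$, where $\bar T \in X^*$ is the entrywise complex conjugate of $T$ under the Hermitian identification $\bar V_i \cong V_i^*$. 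The map $\phi$ is $\Ug$-equivariant, because for unitary $u$ one has $(u^{-1})^\top = \bar u$, matching the $\GL$-action on $V_i^*$ with the natural unitary action on $\bar V_i$. The central claim I would prove is
\[
    T \sim_{\Ug} S \iff \phi(T) \approx_{\GL} \phi(S) \iff \phi(T) \sim_{\GL} \phi(S),
\]
with the leftmost equivalence yielding $\TIclass_\Ug \subseteq \OCIclass$ and both equivalences together yielding $\TIclass_\Ug \subseteq \TIclass$.

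For the equivalence $T \sim_{\Ug} S \Leftrightarrow \phi(T) \approx_{\GL} \phi(S)$, the forward direction is immediate from $\Ug$-equivariance of $\phi$ together with $\Ug \subseteq \GL$. For the converse, I would apply Mumford's theorem to reduce $\phi(T) \approx_{\GL} \phi(S)$ to equality of all polynomial $\GL$-invariants at $\phi(T)$ and $\phi(S)$. By the First Fundamental Theorem for $\GL$, these invariants are spanned by complete index contractions pairing $V_i$-indices in the $X$-summand with $V_i^*$-indices in the $X^*$-summand. Pulled back along $\phi$, they become polynomials contracting copies of $T$ against copies of $\bar T$, which, by Weyl's First Fundamental Theorem for the unitary group, are exactly the generators of the algebra of $\Ug(V_1) \times \cdots \times \Ug(V_r)$-invariants on $X$. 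Since $\Ug$ is compact, its polynomial invariants separate orbits, so $T \sim_{\Ug} S$ follows.

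For the equivalence $\phi(T) \approx_{\GL} \phi(S) \Leftrightarrow \phi(T) \sim_{\GL} \phi(S)$, it suffices to show that every $\phi(T)$ lies on a closed $\GL$-orbit. By Kempf--Ness this reduces to checking that the moment map of the $\GL$-action on $Y$ vanishes at $\phi(T)$. On the $X$-summand, the moment map at $T$ decomposes as a sum of reduced density operators $\rho_{i,k}(T) \succeq 0$, one for each tensor slot in each factor $V_i$. On the dual summand $X^*$, the analogous sum appears at $\bar T$, but with the opposite sign, since dualization negates the moment map. The two contributions cancel, so $\mu(\phi(T)) = 0$, and the $\GL$-orbit of $\phi(T)$ is closed.

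The main obstacle I anticipate is bookkeeping: carrying out the construction uniformly for arbitrary tensor tuple formulas that freely mix $V_i$ and $V_i^*$, direct sums, and several vector spaces with independent unitary groups; phrasing Weyl's and the $\GL$ First Fundamental Theorem in the required multi-graded form; and verifying that the resulting reduction fits the algebraic reduction framework developed earlier in the paper. The underlying invariant-theoretic and moment-map ingredients are classical and should adapt directly.
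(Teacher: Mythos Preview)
Your proposal is correct and follows essentially the same route as the paper: both use the map $T \mapsto (T,\bar T) \in X \oplus X^*$ and establish the same three-way equivalence $T \sim_{\Ug} S \Leftrightarrow \phi(T) \sim_{\GL} \phi(S) \Leftrightarrow \phi(T) \approx_{\GL} \phi(S)$. The paper quotes this equivalence as a black box from \cite{DBLP:conf/coco/BurgisserDMWW21}, where it is proved entirely via Kempf--Ness (minimum-norm points form a single $\Ug$-orbit), whereas you reprove it with a slightly different mix: Kempf--Ness for closedness of the $\GL$-orbit, and the first fundamental theorems plus compactness for the $\Ug$-orbit separation---both arguments are standard and valid. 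One small correction to your closing remarks: because $T \mapsto \bar T$ is conjugate-linear, $\phi$ is \emph{not} a polynomial map and hence not an algebraic reduction in the paper's sense; the paper explicitly notes this and treats it only as an efficiently computable Karp reduction, which is all that is needed for the complexity-class inclusions.
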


The latter answers \cite[Question~1.10]{DBLP:conf/innovations/ChenGQTZ24}.
Because the unitary groups are not algebraic this requires different techniques from geometric invariant geometry~\cite{DBLP:conf/coco/BurgisserDMWW21}.

\paragraph*{Graph isomorphism-hardness of tensor problems}
Finally, we show that the graph isomorphism problem lies in~$\OCIclass$ (\cref{thm:gi}):

\begin{theorem}\label{thm:gi-hardness}
    The problem~{\small$\OCI\rep{V^{\otimes 2} \oplus V^{\otimes 3} \oplus (V^*)^{\otimes 2}}{\GL(V)}$} is $\GIclass$-hard.
    Thus, $\GIclass \subseteq \OCIclass$.
    In particular, all $\OCIclass$-complete problems listed above are $\GIclass$-hard.
\end{theorem}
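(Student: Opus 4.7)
My plan is to reduce graph isomorphism in polynomial time to the OCI problem $\OCI\rep{V^{\otimes 2} \oplus V^{\otimes 3} \oplus (V^*)^{\otimes 2}}{\GL(V)}$. Given a graph $G$ on vertex set $[n]$ with adjacency matrix $A_G$, I set $V = \bbC^n$ with standard basis $e_1, \dots, e_n$ and define
\[
T_G := (A_G,\, R,\, Q), \quad R := \sum_{i=1}^n e_i^{\otimes 3}, \quad Q := \sum_{i=1}^n e_i^* \otimes e_i^*.
\]
Here $A_G \in V^{\otimes 2}$ carries the graph data, while $R \in V^{\otimes 3}$ and $Q \in (V^*)^{\otimes 2}$ are \emph{rigidifying} tensors whose joint stabilizer should cut the $\GL(V)$-action down to the symmetric group $\Symg_n$ of permutation matrices. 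The construction is obviously computable in polynomial time.

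The key step is to verify $\GL(V)_{(R,Q)} = \Symg_n$. The stabilizer of $Q$ is the complex orthogonal group $\Og(V)$ associated to the symmetric bilinear form with matrix $\Id$. For $R$, viewing it dually as the Fermat cubic $F(x) = \sum_i x_i^3$ on $V^*$, a classical identifiability fact (essential uniqueness of the symmetric rank decomposition of $R$, or equivalently the explicit form of the stabilizer of the Fermat cubic) shows that any $g$ with $g \cdot R = R$ satisfies $ge_i = \omega_i e_{\sigma(i)}$ for some $\sigma \in \Symg_n$ and some cube roots of unity $\omega_i$. Intersecting with $\Og(V)$ forces the $ge_i$ to be orthonormal with respect to $Q$, which gives $\omega_i^2 = 1$; combined with $\omega_i^3 = 1$ this yields $\omega_i = 1$, so $g$ is a genuine permutation matrix.

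Once this identification is in place, the stabilizer of $T_G$ in $\GL(V)$ equals $\operatorname{Aut}(G) \leq \Symg_n$, a finite and hence reductive subgroup. By Matsushima's theorem, the orbit $\GL(V) \cdot T_G$ is therefore Zariski-closed in the ambient affine representation. Since the orbits $\GL(V) \cdot T_G$ and $\GL(V) \cdot T_H$ are both closed, orbit closure equivalence $T_G \approx T_H$ collapses to orbit equivalence $T_G \sim T_H$, which by the stabilizer computation amounts precisely to the existence of some $\sigma \in \Symg_n$ with $\sigma \cdot A_G = A_H$, i.e.\ an isomorphism $G \cong H$. Since the target OCI problem lies in $\OCIclass$ by definition, this yields $\GIclass \subseteq \OCIclass$, and $\GIclass$-hardness of the other listed complete problems follows from the $\OCIclass$-completeness reductions established earlier in the paper.

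The main obstacle is the stabilizer computation for $R$, which rests on essential uniqueness of the Waring decomposition of $\sum_i e_i^{\otimes 3}$. This is precisely why the extra rigidifier $Q$ is included: $R$ alone retains a residual cube-root-of-unity ambiguity, but this is cleanly killed by the orthogonality constraint imposed by $Q$, leaving exactly $\Symg_n$ as the joint stabilizer.
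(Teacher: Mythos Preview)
Your reduction map is exactly the one the paper uses: send a graph $G$ to $(A_G,R,Q)$ with $R=\sum_i e_i^{\otimes 3}$ and $Q=\sum_i e_i^*\otimes e_i^*$. Your stabilizer computation $\GL(V)_{(R,Q)}=\Symg_n$ is also correct (the Hessian of the Fermat cubic picks out the coordinate hyperplanes, forcing $g$ to be monomial, and then the orthogonality condition kills the cube roots).

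The gap is the appeal to Matsushima. Matsushima's theorem says that for reductive $G$ and closed subgroup $H$, the homogeneous space $G/H$ is \emph{affine} iff $H$ is reductive; it does \emph{not} say that an orbit with reductive stabilizer is closed in the ambient representation. The simplest counterexample is $\bbC^*$ acting on $\bbC$ by scaling: the stabilizer of $1$ is trivial (hence finite, hence reductive) and the orbit $\bbC^*$ is affine, but it is not closed in $\bbC$. So ``finite stabilizer $\Rightarrow$ closed orbit'' fails in general, and your deduction that $\GL(V)\cdot T_G$ is closed is unjustified as written. Establishing closedness of these particular orbits is possible (e.g.\ via a Hilbert--Mumford or Kempf--Ness argument), but it requires real work that your sketch does not supply.

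The paper sidesteps this issue entirely. Instead of proving orbits are closed, it uses its invariant-theoretic criterion (\cref{thm:reflect}): the map $\alpha(x)=(x,R,Q)$ reflects closure equivalence because every $\Symg_n$-invariant on $V^{\otimes 2}$ is a contraction built from copies of $x$, $R$, and $Q$ (this is precisely the first fundamental theorem for $\Symg_n$, recorded as \cref{lem:sym invariants}), and hence equals $F\circ\alpha$ for some $\GL(V)$-contraction invariant $F$ on the target. This argument works at the level of invariant rings and never needs to know whether the individual $\GL(V)$-orbits through $T_G$ are closed.
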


To the best of our knowledge, this is the first hardness result for an orbit closure intersection problem.
We sketch the reductions involved in its  proof in \cref{subsec:sketch}.
We also note that Mulmuley has conjectured that OCI problems for arbitrary connected reductive groups are in~$\class{P}$~\cite{GCT-V}.
If so, \cref{thm:gi-hardness} would imply that the graph isomorphism problem is also in~$\class{P}$.

\subsection{Ideas and techniques}

Following~\cite{DBLP:journals/siamcomp/GrochowQ23}, we only consider a special class of reductions between OCI problems, which we call \emph{efficient algebraic reductions}.
These are kernel reductions in the sense of~\cite{DBLP:journals/iandc/FortnowG11} implemented by polynomial maps.
Most of the reductions between orbit equality problems in~\cite{DBLP:journals/siamcomp/GrochowQ23} are constructed in such a way that the output always lies in the nullcone, that is, is closure equivalent to $0$.
As such, these cannot be used as reductions between the corresponding OCI problems.
This also highlights the fact that even representations with the trivial closure equivalence problem (such as the one for the action of $\GL(V)$ on $V^{\otimes k}$) can still have a very intricate orbit structure.
On the other hand, while some of our reductions can be adapted to work on associated orbit equality problems, other do not work on the level of orbits, but only orbit closures.
It seems that very different techniques are required for the analysis of OCI problems compared to orbit equality problems.

Consider groups $\group{G}$ and $\group{H}$ acting on vector spaces $X$ and $Y$ respectively.
We say that a polynomial map $\alpha \colon X \to Y$ \emph{preserves closure equivalence} if the implication \[x \approx_{\group{G}} y  \Rightarrow \alpha(x) \approx_{\group{H}} \alpha(y)\] holds for every $x, y \in X$, and that $\alpha$ \emph{reflects closure equivalence} if the opposite implication \[\alpha(x) \approx_{\group{H}} \alpha(y)  \Rightarrow x \approx_{\group{G}} y\] holds for every $x, y \in X$.

An efficient algebraic reduction from $\OCI\rep{X_n}{\group{G}_n}$ to $\OCI\rep{Y_n}{\group{H}_n}$ is an efficiently computable sequence of polynomial maps $\alpha_n \colon X_n \to Y_{q(n)}$ such that $q(n)$ is polynomially bounded in terms of $n$, and each $\alpha_n$ preserves and reflects closure equivalence, that is, $x \approx_{\group{G}_n} y$ if and only if $\alpha_n(x) \approx_{\group{H}_{q(n)}} \alpha_n(y)$.
We require that $\alpha_n$ are polynomials with rational coefficients, so that the reduction always transforms input vectors with rational coordinates into output vectors with rational coordinates.
Our results do not depend much on the notion of efficient computability used in this definition.
In fact, in all reductions the maps~$\alpha_n$ always can be implemented by tuples of monomials.

We use invariant-theoretic methods to construct our reductions.
If a group $\group{G}$ acts on $X$, then a polynomial $F \in \bbC[X]$ is \emph{invariant} if $F(x) = F(\group{g} \cdot x)$ for all $x \in X$ and $\group{g} \in \group{G}$, that is, $F$ is constant on orbits.
Polynomial invariants are in addition constant on orbit closures, because polynomials are continuous (with respect to Zariski topology).
The set of all invariant polynomials is closed under algebraic operations, so it is a subalgebra of $\bbC[X]$, denoted by $\bbC[X]^{\group{G}}$.
The groups we consider ($\GL_n$, $\Og_n$, $\Sp_{2n}$, $\Symg_n$ and their products) belong to the class of algebraic groups called \emph{reductive groups}.
For reductive groups the relation between polynomial invariants and closure equivalence is especially close, as shown by the following theorem.
\begin{theorem}[{\cite{Mumford-GIT,Derksen-Kemper}}]
    Let $X$ be a representation of a reductive group $\group{G}$ and let $x,y\in X$.
    Then $x \approx_{\group{G}} y$ if, and only if, $F(x) = F(y)$ for all polynomial invariants $F \in \bbC[X]^{\group{G}}$.
\end{theorem}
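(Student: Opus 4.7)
The plan is to split into the two implications, the forward being immediate from continuity of polynomials and the reverse being the substantive content.

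For the forward direction, suppose $z \in \overline{\group{G}x} \cap \overline{\group{G}y}$. For any $F \in \bbC[X]^{\group{G}}$, invariance gives $F \equiv F(x)$ on the orbit $\group{G}x$, and continuity of $F$ extends this to $\overline{\group{G}x}$; similarly on $\overline{\group{G}y}$. Evaluating at $z$ gives $F(x) = F(z) = F(y)$.

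For the converse I would argue the contrapositive. Assume $\overline{\group{G}x} \cap \overline{\group{G}y} = \varnothing$; I want to build an invariant $F$ with $F(x) \neq F(y)$. The key tool is the following separation lemma: if $Y_1, Y_2 \subseteq X$ are disjoint Zariski-closed $\group{G}$-invariant subvarieties, then there exists $F \in \bbC[X]^{\group{G}}$ with $F|_{Y_1} = 0$ and $F|_{Y_2} = 1$. I would prove this in two steps. First, by the Nullstellensatz the disjointness forces $I(Y_1) + I(Y_2) = \bbC[X]$, so we can write $1 = P_1 + P_2$ with $P_i$ vanishing on $Y_i$; the polynomial $P := P_1$ then separates $Y_1$ from $Y_2$ but is generally not invariant. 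Second, I would apply the Reynolds operator $\rho \colon \bbC[X] \to \bbC[X]^{\group{G}}$, a $\group{G}$-equivariant linear projection which is the identity on invariants. Setting $F = \rho(P)$, for any $z \in Y_i$ the $\group{G}$-invariance of $Y_i$ keeps the whole $\group{G}$-orbit of $z$ inside $Y_i$, so $\rho$ averages the constant values of $P$ on that orbit and produces $F|_{Y_i} = P|_{Y_i}$. Applying this to $Y_1 = \overline{\group{G}x}$ and $Y_2 = \overline{\group{G}y}$ finishes the argument.

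The main obstacle is justifying the existence of the Reynolds operator, which is where reductivity is essential. Over $\bbC$ I would invoke Weyl's unitary trick: the complex reductive group $\group{G}$ contains a maximal compact subgroup $K$ which is Zariski-dense, so $\bbC[X]^K = \bbC[X]^{\group{G}}$, and I can concretely define
\[
\rho(P)(z) = \int_K P(k \cdot z) \, dk
\]
using normalized Haar measure on $K$; this integral of a polynomial in the coordinates of $z$ is again polynomial, $\group{G}$-invariant by Zariski density, and equals $P$ when $P$ is already invariant. More abstractly, $\rho$ is the projection onto the trivial isotypic component provided by complete reducibility of rational $\group{G}$-representations, which is again a hallmark of reductivity. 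With $\rho$ in hand, the separation lemma and hence the theorem follow.
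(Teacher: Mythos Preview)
The paper does not supply its own proof of this theorem; it is quoted as a classical result from geometric invariant theory with a citation (in the body it reappears as \cref{thm:mumford}, citing \cite[Cor.~2.3.8]{Derksen-Kemper}). Your sketch is the standard argument and is correct: the forward implication is immediate from continuity, and the converse follows from the Nullstellensatz plus the Reynolds operator exactly as you outline. One small wording point: when you say ``$\rho$ averages the constant values of $P$ on that orbit,'' the averaging is over the $K$-orbit (the Haar integral is over $K$), not the full $\group{G}$-orbit; this is harmless here because $P$ is in fact constant on all of $Y_i$, but it is worth being precise. Alternatively, one can avoid the analytic description entirely and argue that $I(Y_i)$ is a $\group{G}$-stable ideal, so the Reynolds projection sends $I(Y_i)$ into itself, which together with $\rho(1)=1$ immediately gives $\rho(P_1)|_{Y_1}=0$ and $\rho(P_1)|_{Y_2}=1$.
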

\noindent In other words, invariants separate orbit closures.
More generally, we can define a separating set of invariants as follows (for reductive groups this definition is equivalent to the definition of a separating set given in~\cite{Derksen-Kemper}).
\begin{definition}
    Let $X$ be a representation of $\group{G}$.
    A set of invariants $\Gamma \subset \bbC[X]^{\group{G}}$ is a \emph{separating set} if $x \approx_{\group{G}} y$ if, and only if, $F(x) = F(y)$ for all $F \in \Gamma$.
\end{definition}
As an example, the coefficients of the characteristic polynomial of a square matrix form a separating set of invariants for the conjugation action on the space of square matrices.

The idea that invariants can be used to construct algorithms for equivalence problems is well-known~\cite{DBLP:journals/siamcomp/BlassG84} (in this setting, a more general notion of a separating set is called a \emph{complete invariant}).
Because of the special position of invariant polynomials in our setting, we can also use them to guide the construction of reductions between orbit closure intersection problems.
Instead of using gadgets to relate closure equivalence for different spaces directly, the reductions will provide tools to construct gadgets relating invariants.
More specifically, to establish reductions, we need to
prove that certain polynomial maps preserve and reflect closure equivalence.
To do this, we use the following key technical lemmata which characterize these properties in terms of invariants.
They can be found as \cref{lem:closure equivalence vs invariants,thm:reflect} in the body of the paper.

\begin{lemma}
    Let $X$ and $Y$ be representations of groups $\group{G}$ and $\group{H}$ respectively, with $\group{H}$ reductive.
    Then a polynomial map $\alpha \colon X \to Y$ preserves closure equivalence if, and only if, for every invariant $F \in \bbC[Y]^{\group{H}}$ the pullback $F \circ \alpha$ is an invariant in $\bbC[X]^{\group{G}}$.
\end{lemma}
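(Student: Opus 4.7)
The plan is to verify each direction of the equivalence by invoking the theorem stated just above, which asserts that for a reductive group action the polynomial invariants separate closure equivalence classes.

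For the forward direction, I would argue as follows. Suppose $\alpha$ preserves closure equivalence and let $F \in \bbC[Y]^{\group{H}}$. To see that $F \circ \alpha$ lies in $\bbC[X]^{\group{G}}$, pick any $x \in X$ and $\group{g} \in \group{G}$. Since $\group{g} \cdot x$ and $x$ lie in the same $\group{G}$-orbit, they are a fortiori closure equivalent under $\group{G}$, and so by hypothesis $\alpha(\group{g} \cdot x) \approx_{\group{H}} \alpha(x)$. Any $\group{H}$-invariant polynomial is continuous and constant on orbits, so it is automatically constant on orbit closures and therefore takes the same value on any two closure-equivalent points. This forces $F(\alpha(\group{g} \cdot x)) = F(\alpha(x))$. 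As this holds for all $\group{g}$ and $x$, and as $F \circ \alpha$ is polynomial (both $F$ and $\alpha$ being so), the pullback is a $\group{G}$-invariant polynomial. Note that this implication uses only continuity of invariants and does not rely on reductivity.

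For the backward direction, assume $F \circ \alpha \in \bbC[X]^{\group{G}}$ for every $F \in \bbC[Y]^{\group{H}}$, and suppose $x \approx_{\group{G}} y$. Here reductivity of $\group{H}$ enters crucially: applying the theorem above to the $\group{H}$-action on $Y$, it suffices to check $F(\alpha(x)) = F(\alpha(y))$ for every $F \in \bbC[Y]^{\group{H}}$. But $F \circ \alpha$ is $\group{G}$-invariant by hypothesis, hence, being polynomial and so continuous, constant on every $\group{G}$-orbit closure. Picking any point $z$ in the nonempty intersection $\overline{\group{G} x} \cap \overline{\group{G} y}$ then yields $(F \circ \alpha)(x) = (F \circ \alpha)(z) = (F \circ \alpha)(y)$, as desired.

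I do not anticipate any real obstacle — the statement is essentially a dictionary between the geometric property of preserving closure equivalence and the algebraic property of pulling back invariants to invariants. The only subtlety worth flagging is that reductivity of $\group{H}$ is needed only for the backward implication (to invoke the separating-invariants theorem on $Y$), while reductivity of $\group{G}$ is never used; this asymmetry is what allows the lemma to be stated with a one-sided hypothesis on the target group.
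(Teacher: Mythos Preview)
Your proof is correct and matches the paper's argument essentially line for line: both directions are argued exactly as you do, with the forward direction using only that invariants are constant on orbit closures and the backward direction invoking Mumford's theorem for the reductive group~$\group{H}$. Your additional remarks about which implication uses reductivity are accurate and not in the paper's proof, but the mathematical content is the same.
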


\begin{lemma}\label{mainlemma:reflect}
    Let $X$ and $Y$ be representations of groups $\group{G}$ and $\group{H}$ respectively, with $\group{H}$ reductive.
    Suppose a polynomial map $\alpha \colon X \to Y$ preserves closure equivalence.
    The following statements are equivalent:
    \begin{itemize}
        \item[(a)] $\alpha$ reflects closure equivalence;
        \item[(b)] the set $\{ F \circ \alpha \mid F \in \bbC[Y]^{\group{H}} \} \subseteq \bbC[X]^{\group{G}}$ is separating;
        \item[(c)] for some separating set $\Gamma \subseteq \bbC[Y]^{\group{H}}$, the set \[\Gamma \circ \alpha = \{ F \circ \alpha \mid F \in \Gamma \} \subseteq \bbC[X]^{\group{G}}\] is separating;
        \item[(d)] for every separating set $\Gamma \subseteq \bbC[Y]^{\group{H}}$, the set $\Gamma \circ \alpha$ is separating.
    \end{itemize}
\end{lemma}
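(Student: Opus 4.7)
The plan is to reduce everything to a single observation about separating sets. Given the standing assumption that $\alpha$ preserves closure equivalence, and given that $\group{H}$ is reductive so that $\bbC[Y]^{\group{H}}$ is itself a separating set by Mumford's theorem, I would first record that \emph{for any} $\Gamma \subseteq \bbC[Y]^{\group{H}}$ which is separating on $Y$, the two conditions
\[
\alpha(x) \approx_{\group{H}} \alpha(y) \quad \Longleftrightarrow \quad F(\alpha(x)) = F(\alpha(y)) \text{ for all } F \in \Gamma
\]
coincide. This follows directly from the definition of ``separating'' applied to the points $\alpha(x), \alpha(y) \in Y$, and it converts statements about $\Gamma \circ \alpha$ into statements about closure equivalence in $Y$.

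With that reformulation, I would prove the cycle $(a) \Rightarrow (b) \Rightarrow (c) \Rightarrow (a)$ and then $(a) \Rightarrow (d) \Rightarrow (c)$. For $(a) \Rightarrow (b)$: assume reflection. Given $x,y \in X$, preservation plus the fact that invariants are constant on orbit closures gives one direction of ``$(F \circ \alpha)(x) = (F \circ \alpha)(y)$ for all $F \in \bbC[Y]^{\group{H}}$ iff $x \approx_{\group{G}} y$''; the converse direction uses Mumford applied to $\alpha(x), \alpha(y) \in Y$ followed by reflection. So $\alpha^*\bigl(\bbC[Y]^{\group{H}}\bigr)$ is separating on $X$. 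The step $(b) \Rightarrow (c)$ is just taking $\Gamma = \bbC[Y]^{\group{H}}$ (separating by reductivity of $\group{H}$). For $(c) \Rightarrow (a)$: assume $\alpha(x) \approx_{\group{H}} \alpha(y)$; then by the observation above, $(F \circ \alpha)(x) = (F \circ \alpha)(y)$ for every $F$ in the given separating $\Gamma$, so by separatingness of $\Gamma \circ \alpha$ we get $x \approx_{\group{G}} y$, which is (a).

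The extension to (d) is the same idea: $(a) \Rightarrow (d)$ proceeds exactly like $(a) \Rightarrow (b)$ but with an arbitrary separating $\Gamma \subseteq \bbC[Y]^{\group{H}}$ in place of the full invariant ring, using preservation for one direction and the observation plus reflection for the other; and $(d) \Rightarrow (c)$ is trivial since $\bbC[Y]^{\group{H}}$ is itself a separating set.

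I do not expect a real obstacle here: the reductivity of $\group{H}$ does all the heavy lifting by guaranteeing that Mumford's criterion applies on the target side, and the preservation hypothesis is what ensures the pullback map $\alpha^* \colon \bbC[Y]^{\group{H}} \to \bbC[X]^{\group{G}}$ actually lands in the invariant ring so the statements in $(b)$--$(d)$ even make sense. The only mild subtlety to watch for is that no reductivity of $\group{G}$ is assumed, so I must phrase ``separating set on $X$'' strictly via the definition (both implications with $\approx_{\group{G}}$) rather than invoking Mumford on the source side; the argument sketched above is written so that this asymmetry is respected throughout.
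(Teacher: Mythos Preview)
Your proposal is correct and follows essentially the same approach as the paper: both arguments hinge on the observation that, for any separating $\Gamma \subseteq \bbC[Y]^{\group{H}}$, the condition $\alpha(x) \approx_{\group{H}} \alpha(y)$ is equivalent to $(F\circ\alpha)(x) = (F\circ\alpha)(y)$ for all $F \in \Gamma$, and both use reductivity of $\group{H}$ only to invoke Mumford on the target side. The only difference is cosmetic---the paper proves the single cycle $(a)\Rightarrow(d)\Rightarrow(b)\Rightarrow(c)\Rightarrow(a)$ rather than your two overlapping cycles---but the substantive implications and their proofs are the same.
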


It is usually not hard to construct maps so that they preserve closure equivalence.
The main part of our proofs is usually concerned with establishing the second part, for which we use the description of invariants for $\GL_n$ and reductive groups on tensor tuple representations in terms of tensor contractions.
Note that for mixed tensors over $V$ (elements of tensor products with factors $V$ and $V^*$) operations of tensor product and tensor contraction (that is, application of the trace map $\Tr \colon V \otimes V^* \to \bbC$ to two factors of a tensor) are $\GL(V)$-equivariant.
As a consequence, if from a tuple of tensors we form a tensor product with equal numbers of factors of type $V$ and $V^*$ and then contract them in some order, the resulting expression is invariant under the action of $\GL(V)$.
We call these \emph{contraction invariants}.
The following result is a main ingredient in the construction of many of our reductions (\cref{thm:contraction-invariants-multiple}).

\begin{theorem}\label{main:contraction-invariants}
    The invariant algebra of a tensor tuple representation is spanned by contraction invariants.
\end{theorem}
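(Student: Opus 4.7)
Write $X = \bigoplus_{i=1}^p X_i$, where each $X_i$ is a tensor monomial in the vector spaces $V_j$ and their duals, and let $\group{G} = \GL(V_1) \times \cdots \times \GL(V_m)$. The $\group{G}$-action preserves the multigrading of $\bbC[X]$ by degree in each summand, so the invariant algebra splits as $\bbC[X]^{\group{G}} = \bigoplus_{\vtuple{d} \in \bbN^p} \bigl( \bigotimes_{i=1}^p \Sym^{d_i}(X_i^*) \bigr)^{\group{G}}$. The plan is to prove the statement one multigraded piece at a time. In characteristic zero, polarization together with restitution yields a $\group{G}$-equivariant surjection from the invariants on the full tensor power $\bigotimes_i X_i^{* \otimes d_i}$ onto the invariants on the corresponding symmetric product, so it suffices to show that every multilinear invariant on $\bigotimes_i X_i^{* \otimes d_i}$ is a linear combination of complete tensor contractions.

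After reassociating and reordering tensor factors, $\bigotimes_i X_i^{* \otimes d_i}$ is $\group{G}$-equivariantly isomorphic to $\bigotimes_{j=1}^m V_j^{\otimes b_j} \otimes (V_j^*)^{\otimes a_j}$ for non-negative integers $a_j, b_j$ determined by $\vtuple{d}$ and the types of the $X_i$. Since $\group{G}$ is a direct product acting on independent tensor factors, the space of multilinear invariants factors as $\bigotimes_{j=1}^m \bigl( V_j^{\otimes b_j} \otimes (V_j^*)^{\otimes a_j} \bigr)^{\GL(V_j)}$. The First Fundamental Theorem of invariant theory for $\GL(V_j)$ then says that this subspace vanishes when $a_j \neq b_j$ (which one also sees directly from the action of the center $\bbC^{*} \subset \GL(V_j)$ by weight $b_j - a_j$), and that when $a_j = b_j$ it is spanned by the $a_j!$ complete pairings between the $V_j$-slots and the $V_j^*$-slots, each pairing given by a permutation $\sigma \in \Symg_{a_j}$ composed with the canonical evaluation map $V_j \otimes V_j^* \to \bbC$.

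Unraveling the isomorphism, a joint choice of pairings across all $j$ is precisely a prescription for contracting the tensor product $\bigotimes_i X_i^{* \otimes d_i}$ completely in pairs; after restitution, i.e., symmetrization over the $d_i$ copies of each $X_i^*$, this yields exactly a contraction invariant on $X$ in the sense of the paper. Hence every multigraded piece, and therefore all of $\bbC[X]^{\group{G}}$, is spanned by contraction invariants. The main external input is the First Fundamental Theorem in the mixed covariant/contravariant setting, which I would obtain cleanly from Schur--Weyl duality via the $\GL(V_j)$-equivariant isomorphism $V_j^{\otimes a} \otimes (V_j^*)^{\otimes a} \cong \End(V_j^{\otimes a})$, under which the $\GL(V_j)$-invariant endomorphisms are exactly the image of $\bbC[\Symg_a]$ acting by permutation of tensor factors. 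The remaining obstacle is purely combinatorial bookkeeping: tracking how an abstract slot-pairing on $\bigotimes_j V_j^{\otimes a_j} \otimes (V_j^*)^{\otimes a_j}$ corresponds to one of the diagrammatic contraction patterns on the original tensor monomials $X_i$, which is routine.
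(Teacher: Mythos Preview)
Your argument is correct for the case $\group{G} = \GL(V_1)\times\cdots\times\GL(V_m)$, and in that case it is essentially the same strategy as the paper's: embed each multigraded piece $\bigotimes_k \Sym^{d_k}X_k^*$ into the full tensor power, identify the latter with a mixed tensor space $\tspace{\vtuple{V}}{\ituple{b}}{\ituple{a}}$, and then classify the $\group{G}$-invariant tensors there. You invoke the First Fundamental Theorem / Schur--Weyl directly to get that $(V_j^{\otimes a_j}\otimes(V_j^*)^{\otimes a_j})^{\GL(V_j)}$ is spanned by the $\Tr_\pi$; the paper instead phrases this via the wheeled-PROP formalism, using that $\mathcal{T}(V_j)^{\GL(V_j)}=\langle\varnothing\rangle_{\Tr}$ (\cref{lem:gl invariants}) together with \cref{lem:sub-wheeled-prop}. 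These are the same fact in different packaging.

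The one point to flag is scope. In the paper a \emph{tensor tuple representation} (\cref{def:tens tup rep}) allows $\group{G}=\prod_i\group{G}_i$ with each $\group{G}_i$ an arbitrary reductive subgroup of $\GL(V_i)$, and the corresponding \emph{contraction invariants} are allowed to involve fixed fundamental invariant tensors $h_{i\ell}$ (e.g.\ the bilinear forms for $\Og$ or $\Sp$, or the tensors $g,h$ for $\Symg_n$). The body theorem you are sketching a proof of (\cref{thm:contraction-invariants-multiple}) is stated in that generality. Your proof does not cover this: once $\group{G}_i\subsetneq\GL(V_i)$, the space $(V_i^{\otimes a}\otimes(V_i^*)^{\otimes a})^{\group{G}_i}$ is no longer described by Schur--Weyl, and you need instead the Schrijver / Derksen--Makam result (\cref{thm:dm prop}) that $\mathcal{T}(V_i)^{\group{G}_i}$ is a finitely generated sub-wheeled PROP. That is the black box the paper's proof is built around, and it is precisely what buys the uniform treatment of $\GL$, $\Og$, $\Sp$, $\Symg_n$, etc.\ that the paper exploits later. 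If you only intended the $\GL$ case, your argument is complete and slightly more elementary; if you intended the general statement, the missing ingredient is \cref{thm:dm prop}.
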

For $\GL(V)$ this description is classical and can be viewed as a reformulation of the classical first fundamental theorem of invariants~\cite{Weyl-book}.
Similar description of invariants also hold for reductive subgroups of $\GL(V)$.
In this case the contraction invariants can contain a finite number of fundamental invariant tensors alongside the argument tensors.
For $\Og(V)$ and $\Sp(V)$ these are the tensors in $V \otimes V$ and $V^* \otimes V^*$ representing invariant bilinear forms, for $\SL(V)$ these are tensors representing determinants as multilinear functions of several vectors in $V$ and $V^*$, and for the symmetric group $\Symg_n$ the fundamental invariants are the bilinear form $\sum_{i = 1}^n e_i^{*} \otimes e_i^* \in \bbC^{n*} \otimes \bbC^{n*}$ and the diagonal tensor $\sum_{i = 1}^n e_i^{\otimes 3} \in (\bbC^n)^{\otimes 3}$.
The descriptions for other classical groups also follow from the corresponding first fundamental theorems, and the case of general reductive groups is considered in~\cite{Sch08-first-fundamental-MR2379100,DM23-wheeled-props-MR4568127}.

To prove that a polynomial map $\alpha \colon X \to Y$ between tensor tuple representations of the groups $\group{G}$ and $\group{H}$ reflects closure equivalence we most often use \cref{mainlemma:reflect} and \cref{main:contraction-invariants} as follows.
We consider an arbitrary contraction invariant $G$ on $X$ and represent it as a multiple of $F \circ \alpha$ for some contraction invariant $F$ on $Y$.
The main part of the proof is the construction of the invariant $F$ as a tensor contraction given $G$ as a tensor contraction.
This construction in fact gives a condition much stronger than required by \cref{mainlemma:reflect}, namely, for our reductions the pullback map $F \mapsto F \circ \alpha$ is a surjection from $\bbC[Y]^{\group{H}}$ to $\bbC[X]^{\group{G}}$.

We note that similar surjective maps relating invariant algebras for representations of various groups are used in~\cite{DM-lowerbound} to prove exponential lower bounds on the degree of separating invariants, but for this the surjections between invariant algebras must be degree-preserving, which is a much stronger condition than needed for our goals.

The description of invariants given in~\Cref{main:contraction-invariants} is only valid in characteristic $0$, which is why we restrict our attention to group actions over $\mathbb{C}$.
In prime characteristic some of the simpler reductions can be proven using different methods, but the general picture remains unclear.

\subsection{Sketch of reduction from graph isomorphism to 2D PEPS}\label{subsec:sketch}

In this section we will illustrate our techniques and sketch how the graph isomorphism problem can be reduced to the orbit closure intersection problem for the simultaneous ``double conjugation'' corresponding to the gauge group of 2D PEPS tensor networks.

As explained earlier, the graph isomorphism problem can be modeled in terms of the action of the symmetric group $\Symg_n$ on matrices in $\bbC^{n \times n}$ or, equivalently, on tensors in $(\bbC^n)^{\otimes 2}$.
Since $\Sym_n$ is a finite group, its orbits are finite and, therefore, closed.
Thus, $\OCI\rep{(\bbC^n)^{\otimes 2}}{\Symg_n}$ captures graph isomorphism.
Our goal is to reduce this problem to $\OCI\rep{(V \otimes V^* \otimes W \otimes W^*)^{\oplus 8}}{\GL(V) \times \GL(W)}$, which describes the equivalence of 2D PEPS tensor networks with physical dimension~8.

As discussed, our approach is based on constructing algebraic reductions that relate the invariants for these groups and actions.
In particular, we will be able to obtain any arbitrary polynomial invariant of graphs from a corresponding invariant for the PEPS tensor action.
Consider, e.g.,
\begin{align*}
    F = \sum\limits_{\text{$v$ vertex}} (\deg v)^2,
\end{align*}
which is an interesting invariant because it can be used to distinguish, e.\,g., certain graphs that are \emph{non-isomorphic but isospectral} (that is, their adjacency matrices have the same eigenvalues).
For example, it takes different values on the two non-isomorphic isospectral connected graphs on $6$ vertices.
This graph invariant is captured by a $\Symg_n$-invariant polynomial $F(x) = \sum_{i = 1}^n (\sum_{j = 1}^n x^{ij} )^2$ if we substitute for $x$ the adjacency matrix of the graph.
Denote $V = \bbC^n$.
In general, the invariant algebra $\bbC[V^{\otimes 2}]^{\Symg_n}$ is spanned by contraction invariants, which are polynomials obtained from a tensor $x \in V^{\otimes 2}$ by completely contracting tensor products of copies of~$x$ and of the diagonal tensors~$g = \sum_{i = 1}^n e_i^* \otimes e_i^* \in (V^*)^{\otimes 2}$ and~$h = \sum_{i = 1}^n e_i^{\otimes 3} \in V^{\otimes 3}$ (these two tensors can be used to construct diagonal tensors of all orders, which characterize $\Symg_n$, see~\cref{lem:sym invariants} below).
Our invariant $F$ is such a contraction invariant: we can present it as a tensor contraction
\[F(x) = x^{ab} x^{cd} h^{eij} h^{fkl} g_{ac} g_{be} g_{df} g_{ij} g_{kl} \]
using Einstein summation notation, which can be seen by first noting that the contraction $g_{ab} g_{cd} h^{bcd}$ appearing twice in the above expression represents the covector $u = \sum_{i = 1}^n e_i^*$, which is used to obtain from $x$ the vector with coordinates $\sum_{j = 1}^n x^{ij} = x^{ij} u_j$.

\begin{figure}[ht]
\centering
\begin{tikzpicture}[>=latex]
  \node[draw,regular polygon,regular polygon sides=4] (x1) at (1.2,0) {$x$};
  \node[draw,regular polygon,regular polygon sides=3,shape border rotate=-30] (h1) at (-1.2,-1) {$h$};
  \node[draw,circle] (g1) at (0,0) {$g$};
  \node[draw,circle] (g2) at (0,-1) {$g$};
  \draw[->] (x1.west)--(g1.east);
  \draw[->] (h1.north east)--(g1.west);
  \draw[->] (h1.east)--(g2.west);
  \draw[->] (h1.south east)--(g2.south west);
  \draw[color=gray,fill] (x1.south)+(0,0.08) circle (0.02);
  \draw[color=gray,fill] (x1.west)+(0.08,0.03) circle (0.02);
  \draw[color=gray,fill] (x1.west)+(0.08,-0.03) circle (0.02);
  \node[draw,regular polygon,regular polygon sides=4] (x2) at (2.4,0) {$x$};
  \node[draw,regular polygon,regular polygon sides=3,shape border rotate=30] (h2) at (4.8,-1) {$h$};
  \node[draw,circle] (g3) at (3.6,0) {$g$};
  \node[draw,circle] (g4) at (3.6,-1) {$g$};
  \draw[->] (x2.east)--(g3.west);
  \draw[->] (h2.north west)--(g3.east);
  \draw[->] (h2.west)--(g4.east);
  \draw[->] (h2.south west)--(g4.south east);
  \draw[color=gray,fill] (x2.south)+(0,0.08) circle (0.02);
  \draw[color=gray,fill] (x2.east)+(-0.08,0.03) circle (0.02);
  \draw[color=gray,fill] (x2.east)+(-0.08,-0.03) circle (0.02);
  \node[draw,circle] (g5) at (1.8,-1.5) {$g$};
  \draw[->] (x1.south)--(g5.north west);
  \draw[->] (x2.south)--(g5.north east);
\end{tikzpicture}
  \caption{Diagrammatic representation of the invariant $F\nobreak\in\nobreak\bbC[V^{\otimes 2}]^{\Symg_n}$. We use different numbers of small dots to indicate the two tensor factors (indices) of the matrix~$x$. The other tensors are symmetric.}
  \label{fig:diagram-1}
\end{figure}
The index notation is good for discussing specific contractions of specific tensors with known number of factors, but because we often consider arbitrary contraction invariants on arbitrary tensor tuple representations, we will use the notation with explicit contractions.
The symbol $\Tr^{p_1 p_2 \dots p_k}_{q_1 q_2 \dots q_k}$ will denote the contraction of a tensor where the $p_i$-th contravariant factors are contracted with $q_i$-th covariant factors.
In this notation, the above contraction reads
\[F(x) = \Tr^{1,2,3,4,5,6,7,8,9,10}_{1,3,2,5,4,7,8,6,9,10} \left(x^{\otimes 2} \otimes h^{\otimes 2} \otimes g^{\otimes 5}\right).\]
In case all factors are contracted, we also write $\Tr_\pi$ where $\pi$ is a permutation of $m$ indices for $\Tr^{\phantom{\pi(}1\phantom{)},\dots,\phantom{\pi(}m\phantom{)}}_{\pi(1),\dots,\pi(m)}$.
Both notations are unwieldy already in this relatively simple case.
A diagrammatic representation of this contraction is presented on Figure~\ref{fig:diagram-1}.

We will now describe the reduction from $\OCI\rep{(\bbC^n)^{\otimes 2}}{\Symg_n}$ to $\OCI\rep{V^{\otimes 2} \oplus V^{\otimes 3} \oplus (V^*)^{\otimes 2}}{\GL(V)}$, thus proving \cref{thm:gi-hardness}.
Let $V = \bbC^n$ and consider the map
\begin{align*}
  \alpha \colon V^{\otimes 2} &\to V^{\otimes 2} \oplus V^{\otimes 3} \oplus (V^*)^{\otimes 2} \\
  \alpha(x) & = (x, h, g)
\end{align*}
with $g$ and $h$ as above.
Since $g$ and $h$ are invariant under $\Symg_n$, for every invariant $F \in \bbC[V^{\otimes 2} \oplus V^{\otimes 3} \oplus (V^*)^{\otimes 2}]^{\GL(V)}$ the polynomial $F' \circ \alpha(x) = F'(x, g, h)$ is an invariant in $\bbC[V^{\otimes 2}]^{\Symg_n}$, therefore, $\alpha$ preserves closure equivalence.\\
\begin{figure}[ht]
\centering
\begin{tikzpicture}[>=latex]
  \node[draw,regular polygon,regular polygon sides=4] (x1) at (1.2,0) {$x$};
  \node[draw,regular polygon,regular polygon sides=3,shape border rotate=-30] (h1) at (-1.2,-1) {$y$};
  \node[draw,circle] (g1) at (0,0) {$z$};
  \node[draw,circle] (g2) at (0,-1) {$z$};
  \draw[->] (x1.west)--(g1.east);
  \draw[->] (h1.north east)--(g1.west);
  \draw[->] (h1.east)--(g2.west);
  \draw[->] (h1.south east)--(g2.south west);
  \draw[color=gray,fill] (x1.south)+(0,0.08) circle (0.02);
  \draw[color=gray,fill] (x1.west)+(0.08,0.03) circle (0.02);
  \draw[color=gray,fill] (x1.west)+(0.08,-0.03) circle (0.02);
  \draw[color=gray,fill] (h1.north east)+(-0.08,0) circle (0.02);
  \draw[color=gray,fill] (h1.east)+(-0.08,0.03) circle (0.02);
  \draw[color=gray,fill] (h1.east)+(-0.08,-0.03) circle (0.02);
  \draw[color=gray,fill] (h1.south east)+(-0.08,0.05) circle (0.02);
  \draw[color=gray,fill] (h1.south east)+(-0.08,0) circle (0.02);
  \draw[color=gray,fill] (h1.south east)+(-0.08,-0.05) circle (0.02);
  \draw[color=gray,fill] (g1.east)+(-0.08,0) circle (0.02);
  \draw[color=gray,fill] (g1.west)+(0.08,0.03) circle (0.02);
  \draw[color=gray,fill] (g1.west)+(0.08,-0.03) circle (0.02);
  \draw[color=gray,fill] (g2.west)+(0.08,0) circle (0.02);
  \draw[color=gray,fill] (g2.south west)+(0.02,0.06) circle (0.02);
  \draw[color=gray,fill] (g2.south west)+(0.06,0.02) circle (0.02);
  \node[draw,regular polygon,regular polygon sides=4] (x2) at (2.4,0) {$x$};
  \node[draw,regular polygon,regular polygon sides=3,shape border rotate=30] (h2) at (4.8,-1) {$y$};
  \node[draw,circle] (g3) at (3.6,0) {$z$};
  \node[draw,circle] (g4) at (3.6,-1) {$z$};
  \draw[->] (x2.east)--(g3.west);
  \draw[->] (h2.north west)--(g3.east);
  \draw[->] (h2.west)--(g4.east);
  \draw[->] (h2.south west)--(g4.south east);
  \draw[color=gray,fill] (x2.south)+(0,0.08) circle (0.02);
  \draw[color=gray,fill] (x2.east)+(-0.08,0.03) circle (0.02);
  \draw[color=gray,fill] (x2.east)+(-0.08,-0.03) circle (0.02);
  \draw[color=gray,fill] (h2.north west)+(0.08,0) circle (0.02);
  \draw[color=gray,fill] (h2.west)+(0.08,0.03) circle (0.02);
  \draw[color=gray,fill] (h2.west)+(0.08,-0.03) circle (0.02);
  \draw[color=gray,fill] (h2.south west)+(0.08,0.05) circle (0.02);
  \draw[color=gray,fill] (h2.south west)+(0.08,0) circle (0.02);
  \draw[color=gray,fill] (h2.south west)+(0.08,-0.05) circle (0.02);
  \draw[color=gray,fill] (g3.west)+(0.08,0) circle (0.02);
  \draw[color=gray,fill] (g3.east)+(-0.08,0.03) circle (0.02);
  \draw[color=gray,fill] (g3.east)+(-0.08,-0.03) circle (0.02);
  \draw[color=gray,fill] (g4.east)+(-0.08,0) circle (0.02);
  \draw[color=gray,fill] (g4.south east)+(-0.02,0.06) circle (0.02);
  \draw[color=gray,fill] (g4.south east)+(-0.06,0.02) circle (0.02);
  \node[draw,circle] (g5) at (1.8,-1.5) {$z$};
  \draw[->] (x1.south)--(g5.north west);
  \draw[->] (x2.south)--(g5.north east);
  \draw[color=gray,fill] (g5.north west)+(0.05,-0.05) circle (0.02);
  \draw[color=gray,fill] (g5.north east)+(-0.02,-0.06) circle (0.02);
  \draw[color=gray,fill] (g5.north east)+(-0.06,-0.02) circle (0.02);
\end{tikzpicture}
  \caption{Diagrammatic representation of the invariant $F'\nobreak\in\nobreak\bbC[V^{\otimes 2} \oplus V^{\otimes 3} \oplus (V^*)^{\otimes 2}]^{\GL(V)}$ corresponding to the invariant~$F$ in the previous diagram. Different numbers of small dots indicate different tensor factors.}
  \label{fig:diagram-2}
\end{figure}
To prove that it also reflects closure equivalence, consider an arbitrary contraction invariant $F$ for $\Symg_n$ on $V^{\otimes 2}$ given by $F(x) = \Tr_\pi(x^{\otimes a} \otimes h^{\otimes b} \otimes g^{\otimes c})$.
Then it is easy to see that there exists a $\GL(V)$-invariant $F'$ on $V^{\otimes 2} \oplus V^{\otimes 3} \oplus (V^*)^{\otimes 2}$ such that $F' \circ \alpha = F$,
namely, $F'(x, y, z) = \Tr_\pi(x^{\otimes a} \otimes y^{\otimes b} \otimes c^{\otimes c})$.
This proves that $\alpha$ reflects closure equivalence.
The map $\alpha$ depends on the dimension $n$.
It is clear that it can be efficiently computed for a given rational tensor $x \in (\bbQ^{n})^{\otimes 2}$.
Collecting all these maps into a sequence, we obtain the required efficient algebraic reduction.

As the next step, we construct a reduction from the problem $\OCI\rep{V^{\otimes 2} \oplus V^{\otimes 3} \oplus (V^*)^{\otimes 2}}{\GL(V)}$ considered in the previous step to $\OCI\rep{V^{\otimes 6} \otimes (V^*)^{\otimes 6} \oplus V^{\otimes 2} \otimes (V^*)^{\otimes 2}}{\GL(V)}$.
To do this, we note that in every contraction invariant $F'(x, y, z) = \Tr_\pi(x^{\otimes a} \otimes y^{\otimes b} \otimes z^{\otimes c})$ on $V^{\otimes 2} \oplus V^{\otimes 3} \oplus (V^*)^{\otimes 2}$ the tensor product $x^{\otimes a} \otimes y^{\otimes b} \otimes z^{\otimes c}$ must have the same number of covariant and contravariant factors to contract.
Counting the number of factors, we obtain the equation $2a + 3b = 2c$.
It follows that $b$ is even and $c$ can be expressed as $c = d + a$ with $2d = 3b$.
In other words, the tensors in the above tensor product can be divided into blocks of two types: $x \otimes z$ and $y^{\otimes 2} \otimes z^{\otimes 3}$.
This suggest the construction based on the map
\begin{align*}
  \alpha'\colon\! V^{\otimes 2} \oplus V^{\otimes 3} \oplus (V^*)^{\otimes 2} & \to V^{\otimes 6} {\otimes} (V^*)^{\otimes 6} \oplus V^{\otimes 2} {\otimes} (V^*)^{\otimes 2}
  \\
  \alpha'(x,y,z) &= (y^{\otimes 2} \otimes z^{\otimes 3}, x \otimes z)
\end{align*}
The map $\alpha'$ preserves closure equivalence because it is equivariant, and reflects closure equivalence because every contraction invariant $F'(x, y, z) = \Tr_\pi(x^{\otimes a} \otimes y^{\otimes b} \otimes z^{\otimes c})$ on $V^{\otimes 2} \oplus V^{\otimes 3} \oplus (V^*)^{\otimes 2}$ can be represented as $F' = F'' \circ \alpha'$ with $F''(s, t) = \Tr_{\pi'} (s^{\otimes \frac{b}{2}} \otimes t^{\otimes a})$.
Again, the maps $\alpha'$ are efficiently computable for spaces of various dimension and thus give an efficient algebraic reduction.
In a similar way we can reduce OCI problems for arbitrary tensor tuple sequences to OCI problems for \emph{balanced} sequences for which each tensor in the tuple has equal number of covariant and contravariant factors.

\begin{figure}[ht]
\centering
\begin{tikzpicture}[>=latex]
  \node[draw,regular polygon,regular polygon sides=4] (x1) at (1.2,0) {$x$};
  \node[draw,regular polygon,regular polygon sides=3,shape border rotate=-30] (h1) at (-1.2,-1) {$y$};
  \node[draw,circle] (g1) at (0,0) {$z$};
  \node[draw,circle] (g2) at (0,-1) {$z$};
  \draw[->] (x1.west)--(g1.east);
  \draw[->] (h1.north east)--(g1.west);
  \draw[->] (h1.east)--(g2.west);
  \draw[->] (h1.south east)--(g2.south west);
  \draw[color=gray,fill] (x1.south)+(0,0.08) circle (0.02);
  \draw[color=gray,fill] (x1.west)+(0.08,0.03) circle (0.02);
  \draw[color=gray,fill] (x1.west)+(0.08,-0.03) circle (0.02);
  \draw[color=gray,fill] (h1.north east)+(-0.08,0) circle (0.02);
  \draw[color=gray,fill] (h1.east)+(-0.08,0.03) circle (0.02);
  \draw[color=gray,fill] (h1.east)+(-0.08,-0.03) circle (0.02);
  \draw[color=gray,fill] (h1.south east)+(-0.08,0.05) circle (0.02);
  \draw[color=gray,fill] (h1.south east)+(-0.08,0) circle (0.02);
  \draw[color=gray,fill] (h1.south east)+(-0.08,-0.05) circle (0.02);
  \draw[color=gray,fill] (g1.east)+(-0.08,0) circle (0.02);
  \draw[color=gray,fill] (g1.west)+(0.08,0.03) circle (0.02);
  \draw[color=gray,fill] (g1.west)+(0.08,-0.03) circle (0.02);
  \draw[color=gray,fill] (g2.west)+(0.08,0) circle (0.02);
  \draw[color=gray,fill] (g2.south west)+(0.02,0.06) circle (0.02);
  \draw[color=gray,fill] (g2.south west)+(0.06,0.02) circle (0.02);
  \node[draw,regular polygon,regular polygon sides=4] (x2) at (2.4,0) {$x$};
  \node[draw,regular polygon,regular polygon sides=3,shape border rotate=30] (h2) at (4.8,-1) {$y$};
  \node[draw,circle] (g3) at (3.6,0) {$z$};
  \node[draw,circle] (g4) at (3.6,-1) {$z$};
  \draw[->] (x2.east)--(g3.west);
  \draw[->] (h2.north west)--(g3.east);
  \draw[->] (h2.west)--(g4.east);
  \draw[->] (h2.south west)--(g4.south east);
  \draw[color=gray,fill] (x2.south)+(0,0.08) circle (0.02);
  \draw[color=gray,fill] (x2.east)+(-0.08,0.03) circle (0.02);
  \draw[color=gray,fill] (x2.east)+(-0.08,-0.03) circle (0.02);
  \draw[color=gray,fill] (h2.north west)+(0.08,0) circle (0.02);
  \draw[color=gray,fill] (h2.west)+(0.08,0.03) circle (0.02);
  \draw[color=gray,fill] (h2.west)+(0.08,-0.03) circle (0.02);
  \draw[color=gray,fill] (h2.south west)+(0.08,0.05) circle (0.02);
  \draw[color=gray,fill] (h2.south west)+(0.08,0) circle (0.02);
  \draw[color=gray,fill] (h2.south west)+(0.08,-0.05) circle (0.02);
  \draw[color=gray,fill] (g3.west)+(0.08,0) circle (0.02);
  \draw[color=gray,fill] (g3.east)+(-0.08,0.03) circle (0.02);
  \draw[color=gray,fill] (g3.east)+(-0.08,-0.03) circle (0.02);
  \draw[color=gray,fill] (g4.east)+(-0.08,0) circle (0.02);
  \draw[color=gray,fill] (g4.south east)+(-0.02,0.06) circle (0.02);
  \draw[color=gray,fill] (g4.south east)+(-0.06,0.02) circle (0.02);
  \node[draw,circle] (g5) at (1.8,-1.5) {$z$};
  \draw[->] (x1.south)--(g5.north west);
  \draw[->] (x2.south)--(g5.north east);
  \draw[color=gray,fill] (g5.north west)+(0.05,-0.05) circle (0.02);
  \draw[color=gray,fill] (g5.north east)+(-0.02,-0.06) circle (0.02);
  \draw[color=gray,fill] (g5.north east)+(-0.06,-0.02) circle (0.02);
  \draw[dashed] (-0.5,0.5)--(1.7,0.5)--(1.7,-0.5)--(-0.5,-0.5)--cycle;
  \draw[dashed] (4.1,0.5)--(1.9,0.5)--(1.9,-0.5)--(4.1,-0.5)--cycle;
  \draw[dashed,dash pattern=on 3 off 1 on 1 off 1] (-2,-0.2)--(-0.7,-0.2)--(-0.7,-0.6)--(4.3,-0.6)--(4.3,-0.2)--(5.6,-0.2)--(5.6,-1.9)--(-2,-1.9)--cycle;
  \draw[white] (1,-3)--(2,-3);
\end{tikzpicture}
  \\
  \vspace{-3em}
\begin{tikzpicture}[>=latex]
  \draw (0,0.5)--(2,0.5)--(2,-0.5)--(0,-0.5)--cycle;
  \node at (1,0) {$t$};
  \draw (3,0.5)--(5,0.5)--(5,-0.5)--(3,-0.5)--cycle;
  \node at (4,0) {$t$};
  \draw (1,-1.5)--(4,-1.5)--(4,-2.5)--(1,-2.5)--cycle;
  \draw[color=gray,fill] (1.25,-0.4) circle (0.02);
  \draw[color=gray,fill] (0.78,-0.4) circle (0.02);
  \draw[color=gray,fill] (0.72,-0.4) circle (0.02);
  \draw[color=gray,fill] (1.25,0.4) circle (0.02);
  \draw[color=gray,fill] (0.78,0.4) circle (0.02);
  \draw[color=gray,fill] (0.72,0.4) circle (0.02);
  \draw[color=gray,fill] (3.75,-0.4) circle (0.02);
  \draw[color=gray,fill] (4.28,-0.4) circle (0.02);
  \draw[color=gray,fill] (4.22,-0.4) circle (0.02);
  \draw[color=gray,fill] (3.75,0.4) circle (0.02);
  \draw[color=gray,fill] (4.28,0.4) circle (0.02);
  \draw[color=gray,fill] (4.22,0.4) circle (0.02);
  \node at (2.5, -2) {$s$};
  \draw[->] (1.25,-0.5) to[out=-90,in=90] (2.25,-1.5);
  \draw[->] (3.75,-0.5) to[out=-90,in=90] (2.75,-1.5);
  \draw[->] (0.75,-0.5) to[out=-150,in=-90,looseness=2] (-0.25,0) to[out=90,in=150,looseness=1.5] (1.25,0.5);
  \draw[->] (4.25,-0.5) to[out=-30,in=-90,looseness=2] (5.25,0) to[out=90,in=30,looseness=1.5] (3.75,0.5);
  \draw[->] (1.25,-2.5) to[out=-150,in=-90,looseness=1] (-0.5,-1) to[out=90,in=135,looseness=2] (0.75,0.5);
  \draw[->] (3.75,-2.5) to[out=-30,in=-90,looseness=1] (5.5,-1) to[out=90,in=45,looseness=2] (4.25,0.5);
  \draw[->] (1.75,-2.5) to[out=-135,in=-90,looseness=1] (0.5,-2) to[out=90,in=150,looseness=1] (1.25,-1.5);
  \draw[->] (2.25,-2.5) to[out=-135,in=-90,looseness=1] (0.25,-2) to[out=90,in=135,looseness=1] (1.75,-1.5);
  \draw[->] (2.75,-2.5) to[out=-45,in=-90,looseness=1] (4.5,-2) to[out=90,in=30,looseness=1] (3.25,-1.5);
  \draw[->] (3.25,-2.5) to[out=-45,in=-90,looseness=1] (4.75,-2) to[out=90,in=45,looseness=1] (3.75,-1.5);
  \draw[color=gray,fill] (1.25,-2.4) circle (0.02);
  \draw[color=gray,fill] (1.72,-2.4) circle (0.02);
  \draw[color=gray,fill] (1.78,-2.4) circle (0.02);
  \draw[color=gray,fill] (2.20,-2.4) circle (0.02);
  \draw[color=gray,fill] (2.25,-2.4) circle (0.02);
  \draw[color=gray,fill] (2.30,-2.4) circle (0.02);
  \draw[color=gray,fill] (2.72,-2.43) circle (0.02);
  \draw[color=gray,fill] (2.78,-2.43) circle (0.02);
  \draw[color=gray,fill] (2.72,-2.37) circle (0.02);
  \draw[color=gray,fill] (2.78,-2.37) circle (0.02);
  \draw[color=gray,fill] (3.20,-2.43) circle (0.02);
  \draw[color=gray,fill] (3.25,-2.43) circle (0.02);
  \draw[color=gray,fill] (3.30,-2.43) circle (0.02);
  \draw[color=gray,fill] (3.22,-2.37) circle (0.02);
  \draw[color=gray,fill] (3.28,-2.37) circle (0.02);
  \draw[color=gray,fill] (3.70,-2.43) circle (0.02);
  \draw[color=gray,fill] (3.75,-2.43) circle (0.02);
  \draw[color=gray,fill] (3.80,-2.43) circle (0.02);
  \draw[color=gray,fill] (3.70,-2.37) circle (0.02);
  \draw[color=gray,fill] (3.75,-2.37) circle (0.02);
  \draw[color=gray,fill] (3.80,-2.37) circle (0.02);
  \draw[color=gray,fill] (1.25,-1.6) circle (0.02);
  \draw[color=gray,fill] (1.72,-1.6) circle (0.02);
  \draw[color=gray,fill] (1.78,-1.6) circle (0.02);
  \draw[color=gray,fill] (2.20,-1.6) circle (0.02);
  \draw[color=gray,fill] (2.25,-1.6) circle (0.02);
  \draw[color=gray,fill] (2.30,-1.6) circle (0.02);
  \draw[color=gray,fill] (2.72,-1.63) circle (0.02);
  \draw[color=gray,fill] (2.78,-1.63) circle (0.02);
  \draw[color=gray,fill] (2.72,-1.57) circle (0.02);
  \draw[color=gray,fill] (2.78,-1.57) circle (0.02);
  \draw[color=gray,fill] (3.20,-1.63) circle (0.02);
  \draw[color=gray,fill] (3.25,-1.63) circle (0.02);
  \draw[color=gray,fill] (3.30,-1.63) circle (0.02);
  \draw[color=gray,fill] (3.22,-1.57) circle (0.02);
  \draw[color=gray,fill] (3.28,-1.57) circle (0.02);
  \draw[color=gray,fill] (3.70,-1.63) circle (0.02);
  \draw[color=gray,fill] (3.75,-1.63) circle (0.02);
  \draw[color=gray,fill] (3.80,-1.63) circle (0.02);
  \draw[color=gray,fill] (3.70,-1.57) circle (0.02);
  \draw[color=gray,fill] (3.75,-1.57) circle (0.02);
  \draw[color=gray,fill] (3.80,-1.57) circle (0.02);
\end{tikzpicture}
  \caption{Construction of the invariant $F''\nobreak\in\nobreak\bbC[V^{\otimes 6} \otimes (V^*)^{\otimes 6} \oplus V^{\otimes 2} \otimes (V^*)^{\otimes 2}]^{\GL(V)}$ corresponding to the invariant~$F'$ in the previous diagram. Again, we use different numbers of small dots to indicate different tensor factors.}
  \label{fig:diagram-3}
\end{figure}

Next, reduce {\small$\OCI\rep{V^{\otimes 6} \otimes (V^*)^{\otimes 6} \oplus V^{\otimes 2} \otimes (V^*)^{\otimes 2}}{\GL(V)}$} to $\OCI\rep{(V^{\otimes 6} \otimes (V^*)^{\otimes 6})^{\oplus 2}}{\GL(V)}$ in order to force all tensors in a tuple to have the same type.
This is a very easy reduction given by the maps $\alpha''(s, t) = (s, t \otimes \Id^{\otimes 4})$ where $\Id \in V \otimes V^*$ is the tensor corresponding to the identity map on $V$.
It is easy to see that $\alpha''$ embeds $V^{\otimes 6} \otimes (V^*)^{\otimes 6} \oplus V^{\otimes 2} \otimes (V^*)^{\otimes 2}$ into $(V^{\otimes 6} \otimes (V^*)^{\otimes 6})^{\oplus 2}$ as a subrepresentation, and therefore obviously preserves and reflects closure equivalence.
This reduction can of course still analyzed from the invariant-theoretic point of view.
To do this, note that $t = \frac{1}{n^4}\Tr^{3,4,5,6}_{3,4,5,6}(t \otimes \Id^{\otimes 4})$, which means that every contraction invariant $F''(s, t) = \Tr_\pi (s^{\otimes a} \otimes t^{\otimes b})$ on $V^{\otimes 6} \otimes (V^*)^{\otimes 6} \oplus V^{\otimes 2} \otimes (V^*)^{\otimes 2}$ can be represented as $F'' = F''' \circ \alpha''$ where
\[
  F'''(s, t) = n^{-4b} \Tr_\pi(s^{\otimes a} \otimes (\Tr^{3,4,5,6}_{3,4,5,6} t)^{\otimes b})
\] is a multiple of a contraction invariant on $(V^{\otimes 6} \otimes (V^*)^{\otimes 6})^{\oplus 2}$.

At the next step, reduce {\small$\OCI\rep{(V^{\otimes 6} \otimes (V^*)^{\otimes 6})^{\oplus 2}}{\GL(V)}$} to  {\small$\OCI\rep{(W \otimes W^*)^{\oplus 2} {\oplus} (V \otimes V^* \otimes W \otimes W^*)^{\oplus 6}}{\GL(V) \otimes \GL(W)}$}.
For a vector space $V$ we define $W = V^{\otimes 6}$.
Every tensor $t \in V^{\otimes 6} \otimes (V^*)^{\otimes 6}$ can be seen as a tensor $\bar{t} \in W \otimes W^*$, since the two spaces are isomorphic, but not every contraction of tensors in $V^{\otimes 6} \otimes (V^*)^{\otimes 6}$ can be expressed as a contraction of tensors in $W \otimes W^*$, since it is no longer possible to contract the six factors in arbitrary order.
To restore the ability to perform arbitrary contractions, we introduce additional helper tensors in $V \otimes V^* \otimes W \otimes W^*$.
Consider six inclusions $\rho_1, \dots, \rho_6$ of $V \otimes V^*$ into $W \otimes W^*$ defined as
\[\rho_i(m) = \Id^{\otimes (i - 1)} \otimes m \otimes \Id^{\otimes (6 - i)}.\]
Let $r(1), \dots, r(6)$ be the corresponding tensors in $V \otimes V^* \otimes W \otimes W^*$.
In the index notation, we can write
\[r(i)^{a \{b_1 \dots b_6\}}_{c \{d_1 \dots d_6\}} = \delta^a_{d_i} \delta^{b_i}_c \prod_{j\neq i} \delta^{b_i}_{d_i}\]
where index sextuplets surrounded by braces $\{b_1 \dots b_6\}$ and $\{d_1 \dots d_6\}$ correspond to one factor of type $W = V^{\otimes 6}$ or $W^* \cong (V^*)^{\otimes 6}$ each.
A straightforward algebraic manipulation shows that a tensor $t \in (V^{\otimes 6}) \otimes (V^*)^{\otimes 6}$ can be recovered from the corresponding $\bar{t} \in W \otimes W^*$ as the contraction
\[
  t = \Tr[W]^{1,2,3,4,5,6,7}_{2,3,4,5,6,7,1} \left(\bar{t} \otimes r(1) \otimes \dots \otimes r(6)\right)
\]
where the symbol $\Tr[W]$ shows that only factors of type $W$ and $W^*$ are contracted.
To give the required reduction, we define a sequence of maps
\[
  \alpha'''(s, t) = (\bar{s}, \bar{t}, r(1), \dots, r(6)).
\]
Using the fact that the tensors $r(i)$ are invariant under the action of $\GL(V)$ on $V$ and $W = V^{\otimes 6}$, we can prove that this map preserves closure equivalence.
And to prove that it reflects closure equivalence, we note that every contraction invariant $F'''(s, t) = \Tr_\pi(s^{\otimes a} \otimes t^{\otimes b})$ on $(V^{\otimes 6} \otimes (V^*)^{\otimes 6})^{\oplus 2}$ can be represented as $F'''' \circ \alpha'''$ where $F''''$ is a contraction invariant on $(W \otimes W^*)^{\oplus 2} \oplus (V \otimes V^* \otimes W \otimes W^*)^{\oplus 6}$, constructed by expressing $s$ and $t$ in terms of $\bar{s}$ and $\bar{t}$ respectively with the help of tensors $r(i)$ as above.

As the final step of the example, we reduce {\small$\OCI\rep{(W \otimes W^*)^{\oplus 2} {\oplus} (V \otimes V^* \otimes W \otimes W^*)^{\oplus 6}}{\GL(V) \otimes \GL(W)}$} to {\small$\OCI\rep{(V \otimes V^* \otimes W \otimes W^*)^{\oplus 8}}{\GL(V) \times \GL(W)}$}.
This step utilizes essentially the same construction as in the above reduction from {\small$\OCI\rep{V^{\otimes 6} \otimes (V^*)^{\otimes 6} \oplus V^{\otimes 2} \otimes (V^*)^{\otimes 2}}{\GL(V)}$} to {\small$\OCI\rep{(V^{\otimes 6} \otimes (V^*)^{\otimes 6})^{\oplus 2}}{\GL(V)}$}.
We again use the identity tensors to construct an injective morphism of representations
\[\alpha''''(s, t, r_1, \dots, r_6) = (\Id_V \otimes\, s, \Id_V \otimes\, t, r_1, \dots, r_6).\]

This sequence of reductions proves that {\small$\OCI\rep{(V \otimes V^* \otimes W \otimes W^*)^{\oplus 8}}{\GL(V) \times \GL(W)}$}, the 2D~PEPS gauge equivalence problem for local dimension (that is, number of summands) equal to~$8$, is $\GIclass$-hard.
Similar constructions can be used to give reductions from an arbitrary OCI problem for tensor tuples to a problem of the same form, with a possibly different local dimension, see~\S\ref{sec:reductions}.
Another reduction allows us to reduce the local dimension to~$3$ and thereby establish~\cref{thm:complete-PEPS}.

\subsection{Outlook and open problems}

It remains an open question to determine the exact complexity of OCI problems.
An interesting question related to this is whether orbit closure intersection problems are in fact easier than orbit equality problems.
On one hand, simple cases suggest that orbit closure intersection problems may have a simpler description.
On the other hand, in many situations a generic tensor tuple in large enough dimension has a closed orbit (that is, is \emph{polystable} in the sense of geometric invariant theory~\cite{Mumford-GIT}), which means that the two problems have the same answer for most inputs.
We can rephrase this question as the question about complexity classes $\OCIclass$ and $\TIclass$.
We note that $\TIclass$ is known to contain not only the graph isomorphism problem, but also the code equivalence problem~\cite{DBLP:journals/dcc/DAlconzo24}, for which no quasipolynomial-time algorithm is known.

\begin{question}
    What is the computational complexity of OCI problems for reductive groups? Can they be put in $\NP \cap \coAM$ like graph isomorphism?
    Does $\OCIclass$ contain the code equivalence problem?
    What is the relationship between our class $\OCIclass$ and the class $\TIclass$ from~\textnormal{\cite{DBLP:journals/siamcomp/GrochowQ23}}?
\end{question}

It seems that a tensor tuple sequence does not need to be too complicated to be $\OCIclass$-complete.
We conjecture that as long as an OCI problem is not trivial or efficiently solvable using quiver representation techniques, it is $\OCIclass$-complete, similarly to the classical dichotomy of tame and wild quivers~\cite{tame-wild}; see also~\cite{Belitskii-Sergeichuk,Futorny-Grochow-Sergeichuk}.
A first step in the investigation of this question may be the analysis of the three remaining unclassified balanced tensor problems discussed above in \cref{thm:complete-balanced-GL}.

\begin{question}
    Are the OCI problems for the actions of~$\GL(V)$ on~$V^{\otimes 2} \otimes (V^*)^{\otimes 2}$,
    on $V^{\otimes 2} {\otimes} (V^*)^{\otimes 2} \oplus V {\otimes} V^*$,
    and on $V^{\otimes 3} \otimes (V^*)^{\otimes 3}$ complete for $\OCIclass$?
    More generally, is any OCI problems for tensor tuple sequences either $\OCIclass$-complete, or efficiently solvable?
\end{question}

While we can use our techniques to work with some subgroups of general linear groups, it appears not straightforward to work with special linear groups or their products, in particular, the action of special linear groups and their products on tensors.
The main reason for this is that while the invariants of the orthogonal, symplectic, and symmetric groups can be described using tensors with bounded number of factors ($2$ in the case of orthogonal and symplectic groups, $2$ and $3$ in the case of symmetric groups), the invariants for $\SL_n$ require tensors with $n$ factors (the determinant).
We note that the relation between representations of general linear and special linear groups is similarly unclear for orbit equality~\cite[Question~9]{DBLP:conf/innovations/ChenGQTZ24}.

\begin{question}
    Are orbit closure intersection problems for special linear groups contained in $\OCIclass$?
    In particular, are the problems {\small$\OCI\rep{V_1 \otimes \dots \otimes V_m}{\SL(V_1) \times \dots \times \SL(V_m)}$} and {\small$\OCI\rep{V^{\otimes p}}{\SL(V)}$} in $\OCIclass$?
\end{question}

Our definition of a tensor tuple sequence can be extended to include symmetric/antisymmetric tensors or more general Schur functors, which specify symmetry of tensors with respect to the permutation of factors.
OCI problems for these more general sequences still lie in the class $\OCIclass$, as every Schur module can be included in an appropriate tensor product, and the orbit closures stay the same.
It is likely that many of these problems are also complete for $\OCIclass$.

\begin{question}
    Are there $\OCIclass$-complete problems involving symmetric tensors?
\end{question}

It is interesting to see if the analytic techniques developed recently for null cone and OCI problems~\cite{DBLP:conf/focs/BurgisserFGOWW19,DBLP:conf/stoc/Allen-ZhuGLOW18} can be applied to solve OCI instances derived from difficult instances of graph isomorphism.
Purely invariant-theoretic techniques are likely to fail here, because polynomial invariants for graph isomorphism include hard to compute invariants such as permanent and ``permanental characteristic polynomial'' $\operatorname{per}(A - \lambda I)$ of the adjacency matrix~\cite{Turner-GI} while the power of analytic techniques is currently less clear.

\begin{question}
    Can reductions to OCI problems be used to construct interesting algorithms for graph isomorphism?
\end{question}

\subsection{Organization of the paper}

In \cref{sec:preliminaries} we provide the necessary background information on algebraic groups, invariant theory, and some properties of rational polyhedral cones.
In \cref{sec:oci} we formally define OCI problems and efficient algebraic reductions between them.
In \cref{sec:tensors} we review algebraic structures related to tensor contractions and introduce tensor tuple sequences and the complexity class~$\OCIclass$ and its variants.
In \cref{sec:reductions} we describe various reductions between OCI problems for tensor tuples and prove our main complexity theoretic results.

\section{Preliminaries}\label{sec:preliminaries}
In this section we review some background concepts.
After fixing our notation in \cref{subsec:notation}, we discuss algebraic groups, their representations, and their invariants in \cref{subsec:algebraic groups}.
We also recall a useful fact from integer linear programming in \cref{subsec:ilp}.

\subsection{Notation and conventions}\label{subsec:notation}
Throughout, we work with computational problems over the field of complex numbers~$\bbC$.
To specify the input to such a problem, it needs to be encoded in a way that is amenable for computation.
To this end, we fix some subfield $\bbF \subset \bbC$ whose elements can be represented by bitstrings and such that all operations can be computed in polynomial time.
Typical choices are~$\bbQ$ or~$\bbQ(i)$.

We use bold font for tuples of data associated with product groups, for example, we write $\group{G}_{\ituple{n}} = \prod_{i = 1}^m \GL_{n_i}$\linebreak where $\ituple{n} = (n_1,
\dots,n_m) \in \bbN^m$.

We denote by $\bbN$ the set of nonnegative integers and by~$[n]$ the $n$-element set $\{1, \dots, n\}$.

All vector spaces are complex vector spaces.
By a \emph{group} we mean a linear algebraic group over $\bbC$, unless stated otherwise.

Unless stated otherwise, $\overline{\mathcal{X}}$ denotes Zariski closure of a set~$\mathcal{X}$.
As mentioned earlier, when $\mathcal{X}$ is an orbit of a complex reductive group, the Zariski closure coincides with the Euclidean one.

\subsection{Algebraic groups, representations, invariants}\label{subsec:algebraic groups}
We work with representations of linear algebraic groups over $\bbC$, except in \S\ref{subsec:u} where we consider the unitary groups $\group{U}_n$
For readers not familiar with algebraic groups, we introduce the relevant concepts below.
For our purposes, an affine variety is a subset of a vector space which is the common zero set of a set of polynomials with complex coefficients,
and a morphism of algebraic varieties is a map between varieties obtained as a restriction of a polynomial map between vector spaces.

\begin{definition}[General linear group]
    For a vector space $V$, the \emph{general linear group} $\GL(V)$ is the group of invertible linear operators on $V$.
    We define~$\GL_n = \GL(\bbC^n)$.

    The general linear group $\GL(V)$ is an \emph{affine algebraic group}: it can be identified with the affine variety
    \[\GL(V) = \{(A, c) \mid c \cdot \det A = 1\} \subset \mathrm{Hom}(V,V) \oplus \bbC.\]
    In this form the multiplication and inversion in $\GL(V)$ are given by polynomial maps:
    $(A_1, c_1) \cdot (A_2, c_2) = (A_1 A_2, c_1 c_2)$ and $(A, c)^{-1} = (c \cdot \operatorname{adj} A, \det A)$.
\end{definition}

\begin{definition}[Linear algebraic group]
    A \emph{linear algebraic group} is a subgroup of some general linear group which is also an closed subvariety.
    A \emph{subgroup} of a linear algebraic group $\group{G}$ is a subgroup which is also a closed subvariety.
    A \emph{homomorphism} of algebraic groups is a group homomorphism which is also a morphism of affine varieties.
\end{definition}

\begin{definition}[Representation]
    A \emph{representation} of a group $\group{G}$ is a vector space~$X$ equipped with a homomorphism $\rho\colon \group{G} \to \GL(X)$.
    We write $\group{g} x$ for $\rho(\group{g})x$ if $\rho$ is clear from the context.
    We will often use the notation $\rep{X}{\group{G}}$ to refer to a representation of~$\group{G}$ on~$X$ (the map~$\rho$ is left implicit in the notation).
    The motivation is that we will consider reductions between representations of different groups, hence it will be useful to make the group explicit in the notation.

    A \emph{subrepresentation} of $X$ is a vector subspace~$Y \subset X$ such that $\rho(\group{g}) Y \subseteq Y$ for every $\group{g} \in \group{G}$.
    A representation is \emph{irreducible} if it has no nontrivial subrepresentations.

    If $X$ and $Y$ are representaions of $\group{G}$, then a linear $\alpha \colon X \to Y$ is \emph{equivariant} if $\alpha(\group{g} x) = \group{g} \alpha(x)$ for all $x \in X$ and $\group{g} \in \group{G}$.
    An equivariant map is also called a \emph{morphism of representations}.
\end{definition}

As usual, if $X$ is a representation of $\group{G}$ via $\rho\colon \group{G} \to \GL(X)$, then the structure of a representation on the dual space $X^*$ is given by $\group{g}y = \rho(\group{g}^{-1})^* y$ for $y \in X^*$, so that $y(x) = \group{g}y(\group{g}x)$.
Additionally, if $X$ and $Y$ are representations of $\group{G}$, then $X \oplus Y$ and $X \otimes Y$ can also be given the structure of a representation by acting on both summands or tensor factors respectively.

\begin{definition}[Orbit, equivalence, closure equivalence]\label{def:orbit and equivalence}
    Let $X$ be a representation of $\group{G}$ and $x \in X$.
    The \emph{orbit} of $x$ is \[\group{G} x = \{ \group{g} x \mid \group{g} \in G\}.\]
    We say that two elements $x, y \in X$ are \emph{equivalent (under the action of $\group{G}$)}
    if there exists $\group{g} \in \group{G}$ such that $y = \group{g} x$ or, equivalently, if the orbits $\group{G} x$ and $\group{G} y$ coincide.
    We say that $x$ and $y$ are \emph{closure equivalent} if they have intersecting orbit closures, that is, if~$\overline{\group{G} x} \cap \overline{\group{G} y} \neq \varnothing$.
    We denote equivalence by $\sim_{\group{G}}$ and closure equivalence by $\approx_{\group{G}}$, omitting the subscript if the group is clear from the context.
\end{definition}

\begin{definition}[Invariant]
    Let $X$ be a representation of $\group{G}$.
    A polynomial $F \in \bbC[X]$ is \emph{invariant} under the action of $\group{G}$ if $F(\group{g} \cdot x) = F(x)$ for all $x \in X$ and $\group{g} \in \group{G}$.
    Invariant polynomials form a subalgebra $\bbC[X]^{\group{G}}$ of $\bbC[X]$ called the \emph{invariant algebra}.
\end{definition}

The invariants in $\bbC[X]^{\group{G}}$ are by definition constant on each orbit, and since they are also continuous with respect to Zariski topology, on each orbit closure.
Therefore, if $x \approx_{\group{G}} y$, then the values of all invariants coincide on $x$ and $y$.
The converse holds for so-called reductive groups, which are an important class of groups with a particular well-behaved representation and invariant theory.
In characteristic zero, they can be defined as follows:

\begin{definition}
    An algebraic group $\group{G}$ is \emph{(linearly) reductive} if every representation of $\group{G}$ can be decomposed into a direct sum of irreducible representations.
\end{definition}

Examples of reductive groups include the general linear groups $\GL(V)$ and products thereof.
If the group $\group{G}$ is reductive, then invariants completely characterize closure equivalence, as follows from results of Mumford.

\begin{theorem}[{\cite[Cor.~2.3.8]{Derksen-Kemper}}]\label{thm:mumford}
    Let $\group{G}$ be a reductive group acting on a vector space $X$, and let $x, y$ be two elements of~$X$.
    Then, $x \approx_{\group{G}} y$ if and only if $F(x) = F(y)$ for all $F \in \bbC[X]^{\group{G}}$.
\end{theorem}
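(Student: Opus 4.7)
The plan is to prove the two implications separately, with the reductivity hypothesis entering only in the converse direction.

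The easy direction, that $x \approx_{\group{G}} y$ implies $F(x) = F(y)$ for every invariant~$F$, is immediate from continuity: any $F \in \bbC[X]^{\group{G}}$ is constant on each orbit by invariance, and hence constant on the orbit closure by Zariski continuity of polynomials. Picking any $z \in \overline{\group{G}x} \cap \overline{\group{G}y}$ then yields $F(x) = F(z) = F(y)$.

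For the converse I would argue by contrapositive: assuming $\overline{\group{G}x} \cap \overline{\group{G}y} = \varnothing$, I would produce an invariant $F$ with $F(x) \neq F(y)$. The key ingredient is a separation principle for closed $\group{G}$-stable sets: whenever $Y_1, Y_2 \subseteq X$ are disjoint closed $\group{G}$-stable subsets, there exists $F \in \bbC[X]^{\group{G}}$ with $F \equiv 1$ on $Y_1$ and $F \equiv 0$ on $Y_2$. Applying this with $Y_1 = \overline{\group{G}x}$ and $Y_2 = \overline{\group{G}y}$ (both closed and $\group{G}$-stable) then finishes the proof.

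To establish the separation principle I would proceed in two steps. First, since $Y_1$ and $Y_2$ are disjoint closed subvarieties, Hilbert's Nullstellensatz gives $I(Y_1) + I(Y_2) = \bbC[X]$, so there exist $f_i \in I(Y_i)$ with $f_1 + f_2 = 1$; the polynomial $f := f_2$ then satisfies $f \equiv 1$ on $Y_1$ and $f \equiv 0$ on $Y_2$, but is not yet invariant. Second, I would invoke the Reynolds operator $R \colon \bbC[X] \to \bbC[X]^{\group{G}}$ available for linearly reductive groups: it is an equivariant $\bbC[X]^{\group{G}}$-linear projection onto the invariant subalgebra, which in particular commutes with restriction to any $\group{G}$-stable closed subset and acts as the identity on constant functions. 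Since the restrictions $f|_{Y_1}$ and $f|_{Y_2}$ are constants, $R(f)$ takes the same values on $Y_1$ and $Y_2$ as $f$ does, so $F := R(f)$ is the desired invariant separator.

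The main obstacle is the Reynolds operator itself, which is precisely where reductivity enters the argument: it is constructed from the decomposition of $\bbC[X]$ into irreducible $\group{G}$-representations guaranteed by linear reductivity, and without such a decomposition the separation statement generally fails. In the body of the paper I would simply cite this together with its interaction with restriction to invariant closed subsets as standard tools of geometric invariant theory, rather than reproving them from scratch.
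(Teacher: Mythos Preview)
The paper does not give its own proof of this statement: it is quoted as a known result with a citation to \cite[Cor.~2.3.8]{Derksen-Kemper}, and the text moves on immediately. Your sketch is correct and is in fact the standard argument found in that reference (Nullstellensatz to separate the two closed invariant sets by a not-necessarily-invariant polynomial, then apply the Reynolds operator to make it invariant while preserving the constant values on each set), so there is nothing to compare.
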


This fact is the foundation of Mumford's geometric invariant theory, which uses the invariant algebra $\bbC[X]^{\group{G}}$ to construct spaces parameterizing orbits or orbit closures.

A famous result of Hilbert (see, e.g., \cite[Thm.~3.5]{Popov-Vinberg}) shows that if $\group{G}$ is reductive, then the invariant algebra $\bbC[X]^{\group{G}}$ is finitely generated.
This means that if $F_1, \dots, F_m$ is a generating set of $\bbC[X]^{\group{G}}$, then it is enough to check if $F_i(x) = F_i(y)$ to conclude that $x \approx_{\group{G}} y$.
Sometimes one can find smaller sets of invariants which, while not generating, still have this property.
To this end we make the following definition of a separating set of invariants.

\begin{definition}\label{def:sep}
    Let $X$ be a representation of reductive group $\group{G}$.
    A set of invariants $\Gamma \subset \bbC[X]^{\group{G}}$ is a \emph{separating set} if for every $x, y\in X$ we have $x \approx_{\group{G}} y$ if and only if $F(x) = F(y)$ for all $F \in \Gamma$.
\end{definition}

For reductive groups this definition is equivalent to \cite[Def.~2.4.1]{Derksen-Kemper}.
For non-reductive groups the definition in~\cite{Derksen-Kemper} use non-distinguishability by polynomial invariants instead of closure equivalence~$\approx_{\group{G}}$.
The relevance of \cref{def:sep} has also been observed in the context of algebraic and geometric complexity theory, see e.g., \cite{DBLP:conf/coco/GargIMOWW20} and references therein.

In some cases we require that linear algebraic groups and homomorphisms between them are defined over the subfield $\bbF \subset \bbC$ in which we perform computations, in the following sense.
\begin{definition}
    Suppose $V$ is a vector space with a fixed basis.
    We say that a closed subvariety of $V$ is \emph{defined over $\bbF$} if it is a common zero set of a set of polynomials with coefficients in $\bbF$.
    A morphism of varieties is \emph{defined over $\bbF$} if it is given by a polynomial map with coefficients in $\bbF$.
    In particular, we talk about linear algebraic groups and their homomorhisms defined over $\bbF$.
    If $X$ is a representation of $\group{G}$ with a fixed basis is \emph{defined over $\bbF$} if its defining homomorphism $\rho \colon \group{G} \to \GL(X)$ is defined over $\bbF$.
\end{definition}
We refer to~\cite{Borel-LAG} for a comprehensive treatment of groups defined over a subfield of an algebraically closed field.
An important fact about reductive linear algebraic groups defined over $\bbF$ is that the invariant algebra $\bbC[X]^{\group{G}}$ is generated by invariants with coefficients in $\bbF$~\cite[Theorem 1.1]{Mumford-GIT}.

\subsection{Integer linear programming}\label{subsec:ilp}

We recall a simple fact from the theory of integer linear programming  which can be found, for example, in~\cite[\S16]{Schrijver-Integer-Programming}. It states that the set of nonnegative integral points in a rational polyhedral cone is a finitely generated monoid.

\begin{definition}
    A \emph{rational polyhedral cone} is a set of the form
    $C = \{x \in \bbR^n \mid Ax \geq 0\}$
    for some rational matrix~$A \in \bbQ^{m \times n}$.
    An \emph{integral Hilbert basis} is a set of vectors~$a_1, \dots, a_q \in C \cap \bbZ^n$ such that every~$v \in C \cap \bbZ^n$ can be written as a nonnegative integer linear combination: $v = \sum_{k = 1}^q m_k a_k$ with $m_k \in \bbN$.
\end{definition}

\begin{theorem}[{\cite[Thm.\,16.4]{Schrijver-Integer-Programming}}]\label{thm:hilbert}
    Every rational polyhedral cone has an integral Hilbert basis.
\end{theorem}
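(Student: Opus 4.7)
The plan is to combine the rational Minkowski--Weyl theorem with a ``division with remainder'' argument. The first step is to invoke Minkowski--Weyl to replace the inequality description $C = \{x \in \bbR^n \mid Ax \geq 0\}$ by a generator description: since $A$ is rational, there exist finitely many vectors $c_1, \dots, c_p \in C \cap \bbQ^n$ such that $C = \{ \sum_{j=1}^p \lambda_j c_j \mid \lambda_j \geq 0 \}$. Clearing denominators, we may assume $c_j \in \bbZ^n$. The second step is to form the bounded ``fundamental parallelotope''
\[
  P = \Bigl\{ \sum_{j=1}^p \lambda_j c_j \;\Big|\; 0 \leq \lambda_j \leq 1 \Bigr\},
\]
which, being bounded, contains only finitely many lattice points. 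The candidate integral Hilbert basis will be the finite set $\mathcal{H} = \{c_1, \dots, c_p\} \cup (P \cap \bbZ^n) \subset C \cap \bbZ^n$.

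Next I would verify the Hilbert basis property. Given any $v \in C \cap \bbZ^n$, write $v = \sum_{j} \lambda_j c_j$ with $\lambda_j \geq 0$, and split $\lambda_j = \lfloor \lambda_j \rfloor + \{\lambda_j\}$ with fractional part $\{\lambda_j\} \in [0,1)$. This yields
\[
  v \;=\; \sum_{j=1}^p \lfloor \lambda_j \rfloor c_j \;+\; w, \qquad w = \sum_{j=1}^p \{\lambda_j\} c_j.
\]
The first summand is a nonnegative integer combination of the $c_j$ and is integral; since $v$ is integral, so is $w$. By construction $w \in P$, hence $w \in P \cap \bbZ^n \subseteq \mathcal{H}$. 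Thus $v$ is a nonnegative integer combination of elements of $\mathcal{H}$, with coefficient $\lfloor \lambda_j \rfloor$ on $c_j$ and coefficient $1$ on $w$ (and $0$ elsewhere). This establishes that $\mathcal{H}$ is an integral Hilbert basis.

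The only nontrivial ingredient is the rational Minkowski--Weyl theorem, which asserts the equivalence of polyhedral cones (described by finitely many linear inequalities) and finitely generated cones (described by finitely many generators), together with the preservation of rationality under this duality. Once this is in hand, the ``integer part plus fractional part'' decomposition above is essentially a one-line argument, and the boundedness of $P$ guarantees that the finitely many ``remainders'' $w$ that can occur already lie in our basis. So the main obstacle, if any, is really just citing (or proving) rational Minkowski--Weyl; the remainder of the argument is elementary.
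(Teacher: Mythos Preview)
Your proof is correct and is precisely the standard argument (the parallelotope / division-with-remainder argument) found in the cited reference~\cite[Thm.\,16.4]{Schrijver-Integer-Programming}; the paper itself does not supply an independent proof but simply invokes that citation. One minor cosmetic point: since each generator $c_j$ already lies in $P \cap \bbZ^n$ (take $\lambda_j = 1$ and the other coefficients zero), your Hilbert basis $\mathcal{H}$ is in fact just $P \cap \bbZ^n$, so listing the $c_j$ separately is redundant---but this does no harm.
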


\section{Orbit closure intersection problems and reductions}\label{sec:oci}
In this section we define the orbit closure intersection problem formally as a computational decision problem (\cref{subsec:orbit-closure-intersection-problems}).
We then discuss an appropriate notion of reductions (\cref{subsec:reductions}) and give criteria for verifying when maps gives rise to a reduction (\cref{subsec:criterion}).

\subsection{Sequences of representations and orbit closure intersection problems}\label{subsec:orbit-closure-intersection-problems}
We now formalize the orbit closure intersection as a computational problem.
More specifically, we will describe a template orbit closure intersection problem which can be adapted to any situation where we have a family of groups depending on a ``size parameter'' such as $\GL_n$ and a representation for each of these groups.

\begin{definition}[$p$-sequence]\label{def:p seq}
    A \emph{polynomially bounded sequence of representations}, or a \emph{$p$-sequence of representations}, is given by a sequence of groups $\group{G}_n$, a sequence of representations~$\rho_n \colon \group{G}_n \to \GL(X_n)$ on vector spaces~$X_n$ such that the dimension $\dim X_n$ is polynomially bounded as a function of~$n\in\bbN$, and a choice of basis in every vector space $X_n$.
    As with representations, we often write $\rep{X_n}{\group{G}_n}$ for such a $p$-sequence of representations, leaving the homomorphisms $\rho_n$ and bases implicit.

    A \emph{multivariate $p$-sequence of representations} is defined analogously but with~$\ituple{n} = (n_1, \dots, n_m) \in \bbN^m$ a tuple of parameters.
\end{definition}

This is a very general definition.
Tensor actions provide a rich source of examples of $p$-sequences, as we will see in \cref{subsec:tuples} below.
For now, the reader can imagine any of the concrete examples discussed in the introduction, such as the simultaneous conjugation action of $\group{G}_n = \GL_n$ on the space~$X_n = (\bbC^{n \times n})^{\oplus p}$ of $p$-tuples of $n\times n$ matrices.

We are now ready to formally define the notion of an orbit closure intersection problem.

\begin{definition}[Orbit closure intersection problem]
    Let~$\rep{X_n}{\group{G}_n}$ be a $p$-sequence of representations.
    Denote by~$X_n(\bbF)$ the set of elements in $X_n$ coordinates of which (in the chosen basis) lie in the subfield $\bbF$.
    The \emph{orbit closure intersection problem} $\OCI\rep{X_n}{\group{G}_n}$ is the following decision problem:

    \begin{quote}
      {\emph{Given $n$ (encoded in unary) and two elements\linebreak $x, y \in X_n(\bbF)$ (encoded as tuples of elements of~$\bbF$, which are the coordinates in the chosen basis), decide if~$x \approx_{\group{G}_n} y$.}}
    \end{quote}
\end{definition}

The requirement that~$n$ is encoded in unary is harmless when~$\dim X_n$ grows at least linearly with~$n$, as the encodings of $x$ and $y$ in this case will have length at least $n$.
It is only needed to ensure that the input size is at least linear in~$n$ even when the dimension $\dim X_n$ is sublinear (in particular, when it is constant).

\subsection{Reductions between representations and orbit closure intersection problems}\label{subsec:reductions}
Since we consider algebraic groups, it is natural to consider reductions that are given by polynomial functions.
This leads to the following definition.

\begin{definition}[Efficient algebraic reduction]\label{def:algebraic reduction}
    Let $\rep{X_n}{\group{G}_n}$ and $\rep{Y_n}{\group{H}_n}$ be two $p$-sequences of representations.
    An \emph{algebraic reduction} from $\rep{X_n}{\group{G}_n}$ to $\rep{Y_n}{\group{H}_n}$ is a sequence of polynomial maps $\alpha_n \colon X_n \to Y_{q(n)}$ for some function $q\colon \bbN \to \bbN$ such that the following holds for all $n\in \bbN$, $x,y \in X_n$:
    \begin{align*}
        x \approx_{\group{G}_n} y
        \quad\Leftrightarrow\quad
        \alpha_n(x) \approx_{\group{H}_{q(n)}} \alpha_n(y).
    \end{align*}
    Such a reduction is called \emph{efficient} if~$q$ is polynomially bounded, the maps~$\alpha_n$ are defined over $\bbF$, and the functions~$q$ and~$(n, x) \mapsto \alpha_n(x)$ for $x \in X_n(\bbF)$ are efficiently computable (with $n$ given in unary and elements of $X_n$ and $Y_{q(n)}$ encoded as tuples of coordinates in the chosen bases).
    Analogous definitions are made for multivariate $p$-sequences.
\end{definition}

We use the standard notion of efficiently computable function, which can be defined for example as a function implementable by a polynomial-time Turing machine.
Our results do not depend much on the notion of efficient computability used in this definition, and can be made to work in a more algebraic setting where $\alpha_n$ are implemented by polynomial-size circuits over a field $\bbF$, which in this case may even be uncountable.
In fact, in all reductions we construct in this paper the maps~$\alpha_n$ always can be implemented by tuples of monomials.

Efficient algebraic reductions between $p$-sequences of representations give rise to reductions between corresponding orbit closure intersection problems in the usual sense (polynomial-time Karp reductions).
These are special cases of \emph{kernel reductions} introduced by Fortnow and Grochow~\cite{DBLP:journals/iandc/FortnowG11} for general equivalence problems.

\begin{proposition}
    If there exist an efficient algebraic reduction from $\rep{X_n}{\group{G}_n}$ to $\rep{Y_n}{\group{H}_n}$,
    then there exists a polynomial-time Karp reduction from $\OCI\rep{X_n}{\group{G}_n}$ to $\OCI\rep{Y_n}{\group{H}_n}$.
\end{proposition}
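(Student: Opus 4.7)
The proof is essentially a direct unpacking of the definitions, so the plan is to construct the Karp reduction explicitly and check each clause in the definition of an efficient algebraic reduction. Given an efficient algebraic reduction $(\alpha_n,q)$ from $\rep{X_n}{\group{G}_n}$ to $\rep{Y_n}{\group{H}_n}$, I would define the reduction between decision problems to send an instance $(n,x,y)$ of $\OCI\rep{X_n}{\group{G}_n}$ to the instance $(q(n),\alpha_n(x),\alpha_n(y))$ of $\OCI\rep{Y_n}{\group{H}_n}$. The rest of the argument has two parts: correctness and efficiency.

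For correctness, I would simply invoke the defining biconditional of an algebraic reduction (\cref{def:algebraic reduction}): $x\approx_{\group{G}_n}y$ if and only if $\alpha_n(x)\approx_{\group{H}_{q(n)}}\alpha_n(y)$. So $(n,x,y)$ is a yes-instance of $\OCI\rep{X_n}{\group{G}_n}$ precisely when its image is a yes-instance of $\OCI\rep{Y_n}{\group{H}_n}$. I would also note here that the output is a \emph{valid} instance: because $\alpha_n$ is defined over $\bbF$ and $x,y \in X_n(\bbF)$, we get $\alpha_n(x),\alpha_n(y)\in Y_{q(n)}(\bbF)$, so the image is indeed encodable in the input format expected by the target problem.

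For the efficiency step I would verify polynomial-time computability in three bookkeeping items. First, since $q$ is polynomially bounded and efficiently computable (as required by \cref{def:algebraic reduction}), the unary encoding of $q(n)$ has polynomial length and can be produced in polynomial time from the unary encoding of $n$. Second, by the $p$-sequence hypothesis on $\rep{Y_n}{\group{H}_n}$, $\dim Y_{q(n)}$ is polynomially bounded in $q(n)$, hence in $n$, so the coordinate tuples representing $\alpha_n(x)$ and $\alpha_n(y)$ have polynomially many entries. Third, the map $(n,x)\mapsto \alpha_n(x)$ is efficiently computable by assumption, which implies in particular that the coordinates of $\alpha_n(x)$ have polynomial bit length and can be produced in polynomial time from the input.

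There is no real obstacle to overcome here; the only place where one might be tempted to slip is ensuring the output encoding is well-formed and of polynomial size, which is exactly why the definition of efficient algebraic reduction was engineered to include the polynomial bound on $q$, the requirement that $\alpha_n$ be defined over $\bbF$, and the polynomial-time computability of $(n,x)\mapsto\alpha_n(x)$. With these clauses in hand, the proof is a short verification and no further machinery (invariant theory, reductivity, contraction invariants) is needed at this stage.
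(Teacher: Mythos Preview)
Your proposal is correct and follows exactly the same approach as the paper: define the reduction by $(n,x,y)\mapsto(q(n),\alpha_n(x),\alpha_n(y))$ and appeal directly to the defining biconditional and efficiency clauses of \cref{def:algebraic reduction}. The paper's own proof is in fact a single sentence stating this map; your additional bookkeeping on well-formedness and polynomial output size is accurate but more detailed than what the paper spells out.
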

\begin{proof}
    The required Karp reduction maps an instance $(n, x, y)$ of $\OCI\rep{X_n}{\group{G}_n}$ to the instance $(q(n), \alpha_n(x), \alpha_n(y))$ of $\OCI\rep{Y_n}{\group{H}_n}$.
\end{proof}

\begin{corollary}\label{cor:trivial}
    If there exist an efficient algebraic reduction from $\rep{X_n}{\group{G}_n}$ to the trivial $p$-sequence $\rep{\bbC^n}{\Eg}$, where~$\Eg$ is the trivial group with one element, then $\OCI\rep{X_n}{\group{G}_n} \in \class{P}$.
\end{corollary}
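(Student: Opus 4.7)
The plan is to observe that the orbit closure intersection problem for the trivial group reduces to equality testing, which is trivially in~$\class{P}$, and then invoke the preceding proposition.

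More specifically, first I would note that when $\group{G} = \Eg$ is the trivial group, every orbit~$\Eg x = \{x\}$ is a single point, hence already closed. Therefore~$x \approx_{\Eg} y$ if and only if the orbit closures $\overline{\{x\}} = \{x\}$ and $\overline{\{y\}} = \{y\}$ intersect, i.e., if and only if $x = y$. Consequently, the problem $\OCI\rep{\bbC^n}{\Eg}$ is just the problem of deciding equality of two vectors in~$\bbC^n(\bbF)$ encoded as tuples of coordinates in~$\bbF$. This can clearly be decided in polynomial time by a coordinatewise comparison.

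Next, by the preceding proposition, the assumed efficient algebraic reduction from $\rep{X_n}{\group{G}_n}$ to $\rep{\bbC^n}{\Eg}$ yields a polynomial-time Karp reduction from $\OCI\rep{X_n}{\group{G}_n}$ to $\OCI\rep{\bbC^n}{\Eg}$. Composing this with the polynomial-time equality tester gives a polynomial-time decision procedure for $\OCI\rep{X_n}{\group{G}_n}$. Explicitly, given $(n, x, y)$, we compute $\alpha_n(x)$ and $\alpha_n(y)$ in polynomial time (since $\alpha_n$ is efficiently computable and $q(n)$ is polynomially bounded, the output size remains polynomial in the input size), and we return ``yes'' if and only if $\alpha_n(x) = \alpha_n(y)$.

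There is no real obstacle here; the corollary is essentially immediate once one notices that the trivial group makes $\approx_{\Eg}$ coincide with equality. The only very minor subtlety is to check that $\alpha_n$ maps $X_n(\bbF)$ into $\bbC^{q(n)}(\bbF)$ and that the output can be encoded in polynomial size, both of which follow directly from the definition of an efficient algebraic reduction (maps defined over~$\bbF$, polynomially bounded~$q$, and efficient computability of $(n,x) \mapsto \alpha_n(x)$).
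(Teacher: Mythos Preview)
Your proposal is correct and follows essentially the same approach as the paper: observe that orbits for the trivial group are singletons (hence closed), so $\OCI\rep{\bbC^n}{\Eg}$ is just equality testing in~$\class{P}$, and then invoke the preceding proposition to transfer this to $\OCI\rep{X_n}{\group{G}_n}$. The paper's proof is essentially a two-line version of yours; your additional remarks on output size and the~$\bbF$-rationality of~$\alpha_n$ are correct but not strictly needed given the definitions.
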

\begin{proof}
    The orbits and orbit closures for the action of the trivial group are singleton sets.
    Thus, the orbit closure intersection $\OCI\rep{\bbC^n}{\Eg}$ is just equality testing, which is in~$\class{P}$.
\end{proof}

Note that an algebraic reduction as in \cref{cor:trivial} consists of polynomial maps~$\alpha_n \colon X_n \to \bbC^{q(n)}$ such that $x \approx_{\group{G}} y$ if and only if $\alpha_n(x) = \alpha_n(y)$, which means that that the coordinates of~$\alpha_n$ form a separating set of invariants.
Thus the corollary expresses in the context of orbit closure intersection problems a well-known principle that in order to efficiently solve an equivalence problem, it is enough to find an efficiently computable system of separating invariants.

\subsection{Criteria for reductions}\label{subsec:criterion}
While the notion of an algebraic reduction is natural, it is not obvious how to verify this property.
In this section we give a number of useful characterizations.
We first introduce the following notions:

\begin{definition}
Consider groups $\group{G}$ and $\group{H}$ acting on vector spaces $X$ and $Y$ respectively.
We say that a polynomial map $\alpha \colon X \to Y$ \emph{preserves closure equivalence} if \[x \approx_{\group{G}} y \Rightarrow \alpha(x) \approx_{\group{H}} \alpha(y)\] holds for every $x, y \in X$.
Conversely, we say that $\alpha$ \emph{reflects closure equivalence} (or \emph{preserves closure inequivalence}) if the opposite implication \[\alpha(x) \approx_{\group{H}} \alpha(y)  \Rightarrow x \approx_{\group{G}} y\] holds for every $x, y \in X$.
\end{definition}

Thus, an algebraic reduction consists of maps~$\alpha_n$ that preserve \emph{and} reflect closure equivalence.
It is often not hard to construct maps that preserve closure equivalence.
The following lemma is applicable in many situations:

\begin{lemma}\label{lem:beta crit}
    Let $X$ and $Y$ be representations of groups $\group{G}$ and $\group{H}$ respectively, and let $\alpha \colon X \to Y$ be a polynomial map.
    Suppose that there exists a group homomorphism~$\beta\colon \group{G} \to \group{H}$ such that~$\alpha(\group{g} x) = \beta(\group{g}) \alpha(x)$ for all $\group{g} \in \group{G}$, $x\in X$.
    Then $\alpha$ preserves closure equivalence.
\end{lemma}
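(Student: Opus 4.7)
The plan is to unpack the definition of closure equivalence directly and propagate it through $\alpha$ using equivariance together with the elementary fact that polynomial maps are continuous in the Zariski topology.

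First, I would fix $x, y \in X$ with $x \approx_{\group{G}} y$, so that by definition there is some $z \in \overline{\group{G} x} \cap \overline{\group{G} y}$. The map $\alpha$, being polynomial, is a morphism of affine varieties and in particular continuous with respect to the Zariski topology, so for any subset $S \subseteq X$ we have $\alpha(\overline{S}) \subseteq \overline{\alpha(S)}$. Applying this to $S = \group{G} x$ and $S = \group{G} y$ yields $\alpha(z) \in \overline{\alpha(\group{G} x)} \cap \overline{\alpha(\group{G} y)}$.

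Next, I would invoke the equivariance hypothesis $\alpha(\group{g} x) = \beta(\group{g}) \alpha(x)$. This gives the set-theoretic identity
\[
\alpha(\group{G} x) \;=\; \beta(\group{G}) \alpha(x) \;\subseteq\; \group{H} \alpha(x),
\]
and likewise $\alpha(\group{G} y) \subseteq \group{H} \alpha(y)$. Taking closures, $\overline{\alpha(\group{G} x)} \subseteq \overline{\group{H} \alpha(x)}$ and similarly for $y$. Combining with the previous step, $\alpha(z) \in \overline{\group{H} \alpha(x)} \cap \overline{\group{H} \alpha(y)}$, which is therefore nonempty. This is exactly the statement $\alpha(x) \approx_{\group{H}} \alpha(y)$, completing the argument.

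There is essentially no obstacle here: the only nontrivial ingredient is the continuity of $\alpha$, which is entirely standard for polynomial maps between affine spaces (and, as the paper notes in \cref{subsec:notation}, Zariski and Euclidean closures agree on orbits of reductive groups, so a sequential argument $\group{g}_n x \to z \Rightarrow \beta(\group{g}_n) \alpha(x) = \alpha(\group{g}_n x) \to \alpha(z)$ would work equally well in the reductive case). No reductivity or any other hypothesis on $\group{G}, \group{H}$ is needed beyond what is stated.
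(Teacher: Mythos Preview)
Your proof is correct and follows essentially the same approach as the paper's own argument: pick $z$ in the intersection of orbit closures, use continuity of $\alpha$ to push $z$ into $\overline{\alpha(\group{G} x)} \cap \overline{\alpha(\group{G} y)}$, and then use the equivariance $\alpha(\group{G} x) = \beta(\group{G})\alpha(x) \subseteq \group{H}\alpha(x)$ to conclude. The paper presents these steps in a slightly different order but the content is identical.
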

\begin{proof}
The assumption implies that if $x, y\in X$ are in the same $\group{G}$-orbit then $\alpha(x),\alpha(y)$ are in the same~$\group{H}$-orbit.
Since $\alpha$ is a continuous map, it is continuous, so if $z \in \overline{\group{G} x}$ then
\begin{align*}
    \alpha(z) \subseteq \alpha(\overline{\group{G} x}) \subseteq \overline{\alpha(\group{G} x)} = \overline{\beta(G) \alpha(x)} \subseteq \overline{\group{H} \alpha(x)}.
\end{align*}
Therefore, if $x \approx_{\group{G}} y$ then there exists $z \in \overline{\group{G} x} \cap \overline{\group{G} y}$, so
\begin{align*}
    \alpha(z) \in \alpha(\overline{\group{G} x} \cap \overline{\group{G} y}) \subseteq \overline{\group{H} \alpha(x)} \cap \overline{\group{H} \alpha(y)},
\end{align*}
and hence $\alpha(x) \approx_{\group{H}} \alpha(y)$.
Thus, $\alpha$ preserves closure equivalence.
\end{proof}

\begin{corollary}\label{ex:injective-morphism}
Let $X$ and $Y$ be representations of the same group $\group{G}$.
If $\alpha \colon X \to Y$ is an injective $G$-equivariant linear map, then $\alpha$ preserves and reflects closure equivalence.
\end{corollary}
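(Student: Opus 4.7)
The plan is to treat the two halves of the statement separately, with the preservation part being an immediate application of \cref{lem:beta crit} and the reflection part being a short topological argument using injectivity and linearity.

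For preservation, the map $\alpha$ is $\group{G}$-equivariant by hypothesis, so I would invoke \cref{lem:beta crit} directly with $\group{H} = \group{G}$ and $\beta = \id_{\group{G}}$. This immediately yields that $x \approx_{\group{G}} y$ implies $\alpha(x) \approx_{\group{G}} \alpha(y)$.

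For reflection, the key observation is that since $\alpha$ is a linear injection between finite-dimensional complex vector spaces, its image $\alpha(X)$ is a linear subspace of $Y$, hence closed (both in the Zariski and the Euclidean topology), and the restriction $\alpha \colon X \to \alpha(X)$ is a linear bijection, hence a homeomorphism and indeed an isomorphism of affine varieties. From equivariance we get $\group{G}\alpha(x) = \alpha(\group{G}x)$, and since $\alpha$ is a closed embedding taking closures through $\alpha$ commutes with taking $\group{G}$-orbit closures: $\overline{\group{G}\alpha(x)} = \alpha(\overline{\group{G}x})$, and similarly for $y$. Therefore, any point $z \in \overline{\group{G}\alpha(x)} \cap \overline{\group{G}\alpha(y)}$ lies in $\alpha(X)$ and can be written as $z = \alpha(x') = \alpha(y')$ for some $x' \in \overline{\group{G}x}$ and $y' \in \overline{\group{G}y}$. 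Injectivity of $\alpha$ forces $x' = y'$, exhibiting a common point in $\overline{\group{G}x} \cap \overline{\group{G}y}$, so $x \approx_{\group{G}} y$.

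I do not expect any serious obstacle; the only point that deserves a line of justification is that $\alpha$ intertwines orbit closures, which reduces to the standard fact that a closed embedding of varieties takes closures to closures. Since we operate in the finite-dimensional setting over $\bbC$, the Zariski and Euclidean closures of $\group{G}$-orbits coincide (as remarked earlier in the preliminaries), so no extra care about topologies is required.
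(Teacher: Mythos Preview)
Your proof is correct. The preservation half matches the paper exactly (invoke \cref{lem:beta crit} with $\beta = \id_{\group{G}}$). For reflection, the paper takes a slightly slicker route: it replaces $Y$ by the image $\alpha(X)$ (which is legitimate since $\alpha(X)$ is a closed subrepresentation), reducing to the case where $\alpha$ is an isomorphism, and then applies \cref{lem:beta crit} to $\alpha^{-1}$ as well, noting that $\alpha^{-1}$ preserving closure equivalence is the same as $\alpha$ reflecting it. Your direct topological argument (closed embedding commutes with closure, then use injectivity) is just the unpacked version of this; the content is the same, and your version arguably makes the role of closedness of $\alpha(X)$ more explicit.
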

\begin{proof}
We may assume without loss of generality that~$\alpha$ is an isomorphism (otherwise replace~$Y$ by the image~$\alpha(X)$).
Then we can apply \cref{lem:beta crit} with $\beta=\id_{\group{G}}$ to see that~$\alpha$ and~$\alpha^{-1}$ preserve closure equivalence.
But $\alpha^{-1}$ preserves closure equivalence if and only if $\alpha$ reflects closure equivalence.
\end{proof}

We also have the following useful criterion:

\begin{lemma}\label{lem:closure equivalence vs invariants}
    Let $X$ and $Y$ be representations of groups $\group{G}$ and $\group{H}$ respectively, with $\group{H}$ reductive.
    A polynomial map $\alpha \colon X \to Y$ preserves closure equivalence if, and only if, for every invariant $F \in \bbC[Y]^{\group{H}}$ the pullback $F \circ \alpha$ is an invariant in $\bbC[X]^{\group{G}}$.
\end{lemma}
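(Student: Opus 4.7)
The plan is to prove both implications by combining Mumford's theorem (\cref{thm:mumford}) applied to the reductive group $\group{H}$ on $Y$ with the elementary fact that every polynomial (and in particular every invariant or pullback of an invariant) is constant on orbit closures for any group action, because polynomials are Zariski-continuous. Notice that we do \emph{not} assume $\group{G}$ is reductive, so we cannot freely apply Mumford's theorem on the $X$ side; we will only need one direction of it there.

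For the forward direction, I would assume that $\alpha$ preserves closure equivalence and fix $F \in \bbC[Y]^{\group{H}}$. To see that $F \circ \alpha$ is $\group{G}$-invariant, pick any $x \in X$ and $\group{g} \in \group{G}$. Clearly $x \sim_{\group{G}} \group{g}x$, hence trivially $x \approx_{\group{G}} \group{g}x$, and by hypothesis $\alpha(x) \approx_{\group{H}} \alpha(\group{g}x)$. Choose $z \in \overline{\group{H}\alpha(x)} \cap \overline{\group{H}\alpha(\group{g}x)}$. Since $F$ is $\group{H}$-invariant and continuous, it is constant on each orbit closure, yielding $F(\alpha(x)) = F(z) = F(\alpha(\group{g}x))$, so $F \circ \alpha \in \bbC[X]^{\group{G}}$.

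For the converse, assume that $F \circ \alpha \in \bbC[X]^{\group{G}}$ for every $F \in \bbC[Y]^{\group{H}}$ and suppose $x \approx_{\group{G}} y$. Pick $w \in \overline{\group{G}x} \cap \overline{\group{G}y}$. For any $F \in \bbC[Y]^{\group{H}}$, the pullback $F \circ \alpha$ is a $\group{G}$-invariant polynomial, hence constant on each $\group{G}$-orbit; being a polynomial, it is also continuous and thus constant on each orbit closure. Evaluating at $w$ gives $F(\alpha(x)) = (F\circ\alpha)(x) = (F\circ\alpha)(w) = (F\circ\alpha)(y) = F(\alpha(y))$. Since this holds for every $F \in \bbC[Y]^{\group{H}}$ and $\group{H}$ is reductive, \cref{thm:mumford} applied on $Y$ yields $\alpha(x) \approx_{\group{H}} \alpha(y)$, establishing that $\alpha$ preserves closure equivalence.

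There is no serious obstacle: both directions are one-line applications of continuity plus Mumford's theorem on the $Y$ side. The only thing to be careful about is not to accidentally use reductivity of $\group{G}$, which is not assumed — this is why, on the $X$ side, we only ever use the easy direction that invariants are constant on orbit closures, never the converse.
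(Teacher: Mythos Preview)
The proposal is correct and follows essentially the same approach as the paper's proof: both directions use that invariant polynomials are constant on orbit closures, with Mumford's theorem (\cref{thm:mumford}) invoked only on the $Y$ side where $\group{H}$ is reductive. Your version is slightly more explicit in naming the witness points $z$ and $w$ in the orbit-closure intersections, but the logical content is identical.
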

\begin{proof}
Suppose that $\alpha$ preserves closure equivalence and let $F \in \bbC[Y]^{\group{H}}$.
For every $x \in X$, and~$\group{g} \in \group{G}$, it holds that $\alpha(x) \approx_{\group{H}} \alpha(\group{g} x)$, and hence $F(\alpha(x)) = F(\alpha(\group{g} x))$ because invariant polynomials are constant on orbit closures.
This means that $F \circ \alpha \in \bbC[X]^{\group{G}}$.

Conversely, suppose that for every invariant~$F \in \bbC[Y]^{\group{H}}$ the pullback $F \circ \alpha$ is in $\bbC[X]^{\group{G}}$.
If $x \approx_{\group{G}} y$, then we have $F(\alpha(x)) = F(\alpha(y))$ for every~$F \in \bbC[Y]^{\group{H}}$.
By \cref{thm:mumford}, since $\group{H}$ is reductive, it follows that $\alpha(x) \approx_{\group{H}} \alpha(y)$, and hence $\alpha$ preserves closure equivalence.
\end{proof}

It is often substantially more difficult to show that a map also \emph{reflects} closure equivalence.
To this end we provide the following criterion:

\begin{lemma}\label{thm:reflect}
    Let $X$ and $Y$ be representations of groups $\group{G}$ and $\group{H}$ respectively, with $\group{H}$ reductive.
    Suppose a polynomial map $\alpha \colon X \to Y$ preserves closure equivalence.
    The following~statements~are~equivalent:
    \begin{itemize}
        \item[(a)] $\alpha$ reflects closure equivalence;
        \item[(b)] for every separating set $\Gamma \subseteq \bbC[Y]^{\group{H}}$, the set \[\Gamma \circ \alpha = \{ F \circ \alpha \mid F \in \Gamma \} \subseteq \bbC[X]^{\group{G}}\] is separating;
        \item[(c)] the set $\{ F \circ \alpha \mid F \in \bbC[Y]^{\group{H}} \} \subseteq \bbC[X]^{\group{G}}$ is separating;
        \item[(d)] for some separating set $\Gamma \subseteq \bbC[Y]^{\group{H}}$, the set $\Gamma \circ \alpha$ is separating.
    \end{itemize}
\end{lemma}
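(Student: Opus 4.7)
The plan is to prove the cycle of implications $(a) \Rightarrow (b) \Rightarrow (c) \Rightarrow (d) \Rightarrow (a)$, which yields the full equivalence. The two middle implications are immediate: $(b) \Rightarrow (c)$ follows by taking $\Gamma = \bbC[Y]^{\group{H}}$, which is a separating set because $\group{H}$ is reductive (by \cref{thm:mumford}, all $\group{H}$-invariants together separate orbit closures), and $(c) \Rightarrow (d)$ follows from the same choice of $\Gamma$. So the real content of the lemma lies in $(a) \Rightarrow (b)$ and $(d) \Rightarrow (a)$.

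For $(a) \Rightarrow (b)$, I would fix an arbitrary separating set $\Gamma \subseteq \bbC[Y]^{\group{H}}$. Since $\alpha$ is assumed to preserve closure equivalence, \cref{lem:closure equivalence vs invariants} guarantees that $F \circ \alpha \in \bbC[X]^{\group{G}}$ for every $F \in \Gamma$, so $\Gamma \circ \alpha \subseteq \bbC[X]^{\group{G}}$ and the condition of being a separating subset of $\bbC[X]^{\group{G}}$ is well-posed. I would then verify the separating property by chaining biconditionals: $x \approx_{\group{G}} y$ if and only if $\alpha(x) \approx_{\group{H}} \alpha(y)$, using preservation in one direction and the reflection property from (a) in the other, and $\alpha(x) \approx_{\group{H}} \alpha(y)$ if and only if $F(\alpha(x)) = F(\alpha(y))$ for all $F \in \Gamma$, since $\Gamma$ is separating for $\group{H}$.

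For $(d) \Rightarrow (a)$, I would start from $\alpha(x) \approx_{\group{H}} \alpha(y)$. By \cref{thm:mumford} every $\group{H}$-invariant takes the same value on these two points, so in particular every $F \in \Gamma$ does, giving $(F \circ \alpha)(x) = (F \circ \alpha)(y)$ for every $F$ in the assumed separating set $\Gamma \circ \alpha$. The definition of separating then forces $x \approx_{\group{G}} y$, which is exactly the reflection property in (a).

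I do not anticipate any substantive obstacle: the argument is essentially a diagram chase through the definitions of preserving, reflecting, and separating, with \cref{thm:mumford} supplying the canonical separating set $\bbC[Y]^{\group{H}}$ whenever one is needed. The only point requiring care is to record explicitly where the standing hypothesis that $\alpha$ preserves closure equivalence is used, namely in invoking \cref{lem:closure equivalence vs invariants} to ensure that pullbacks of $\group{H}$-invariants land in $\bbC[X]^{\group{G}}$, and in one direction of the biconditional chain in $(a) \Rightarrow (b)$.
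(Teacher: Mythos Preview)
Your proposal is correct and follows essentially the same cycle $(a)\Rightarrow(b)\Rightarrow(c)\Rightarrow(d)\Rightarrow(a)$ as the paper, with the same biconditional chase for $(a)\Rightarrow(b)$ and the same one-way chain for $(d)\Rightarrow(a)$. The only cosmetic differences are that you invoke \cref{lem:closure equivalence vs invariants} explicitly to justify $\Gamma\circ\alpha\subseteq\bbC[X]^{\group{G}}$ (the paper leaves this implicit in the lemma statement) and that in $(d)\Rightarrow(a)$ you appeal to \cref{thm:mumford} where the paper just uses that invariants are constant on orbit closures; both are fine.
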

\begin{proof}
\emph{(a) $\Rightarrow$ (b):}
Suppose $\alpha$ reflects closure equivalence and let~$\Gamma \subseteq \bbC[Y]^{\group{H}}$ be a separating set.
Then the following holds for all~$x,y \in X$:
\begin{align*}
x \approx_{\group{G}} y
& \Leftrightarrow \alpha(x) \approx_{\group{H}} \alpha(y)\\
& \Leftrightarrow \forall F \in \Gamma \colon F(\alpha(x)) = F(\alpha(y))\\
& \Leftrightarrow \forall G \in \Gamma \circ \alpha \colon G(x) = G(y),
\end{align*}
where we first used that~$\alpha$ preserves and reflects closure equivalence, then that $\Gamma$ is separating, and finally the definition of $\Gamma \circ \alpha$.
Thus we have proved that $\Gamma \circ \alpha$ is separating.

\emph{(b) $\Rightarrow$ (c) $\Rightarrow$ (d):} This is clear since $\Gamma = \bbC[Y]^{\group{H}}$ is separating set as $\group{H}$ is reductive (\cref{thm:mumford}).

\emph{(d) $\Rightarrow$ (a):}
Suppose $\Gamma \subseteq \bbC[Y]^{\group{H}}$ is such that $\Gamma \circ \alpha$ is separating.
Then the following holds for all~$x,y \in X$:
\begin{align*}
\alpha(x) \approx_{\group{H}} \alpha(y)
& \Rightarrow \forall F \in \Gamma \colon F(\alpha(x)) = F(\alpha(y))\\
& \Rightarrow \forall G \in \Gamma \circ \alpha \colon G(x) = G(y)\\
& \Rightarrow x \approx_{\group{G}} y,
\end{align*}
where we first used that $\Gamma$ consists of invariants, then the definition of $\Gamma \circ \alpha$, and finally that~$\Gamma \circ \alpha$ is separating.
Thus we have proved that $\alpha$ reflects closure equivalence.
\end{proof}

We note that in all our reductions below the sets $\Gamma$ and $\Gamma \circ \alpha$ will in fact generate the respective invariant algebras, but to state the general criterion, we need to use separating sets.

\section{Tensor tuple representations and invariants}\label{sec:tensors}
From now on we will focus our attention on the broad class of orbit closure intersection problems associated with tensors and tensor tuples.
In \cref{subsec:tensors} we discuss tensor spaces; these are naturally captured by the tensor algebra and we discuss its mathematical structure.
We first consider a single vector space and then also multiple vector spaces (and their dual).
By fixing the structure of the tensors while varying the dimension of the vector spaces, we obtain $p$-sequences of representations.
We explain this in \cref{subsec:tuples} and define an associated complexity class~$\OCIclass$ that captures the associated orbit closure intersection problems.
We end the section by describing the associated invariant theory in \cref{subsec:tensor invariants}.

\subsection{Tensor algebras and tensor contractions}\label{subsec:tensors}
We recall basic definitions related to mixed tensor spaces, mixed tensor algebras and tensor contractions following~\cite[\S III.3]{Greub-multilinear} and~\cite{Sch08-first-fundamental-MR2379100}.
Since these are the main objects that we work with, we usually omit the ``mixed'' qualification.

\begin{definition}
    Let $V$ be a vector space. The space \[ \tspace{V}{a}{b} = V^{\otimes a} \otimes (V^*)^{\otimes b} \] is called the \emph{(mixed) tensor space of type $\tworow{a}{b}$ over $V$}.
\end{definition}

Every tensor product with $V$ and $V^*$ as factors is isomorphic to $\tspace{V}{a}{b}$ for some $\tworow{a}{b} \in \bbN \times \bbN$ by rearranging the tensor factors, so we can restrict our attention to these spaces when studying tensors formed from elements $V$ and $V^*$.
In particular, the tensor product $\tspace{V}{a_1}{b_1} \otimes \tspace{V}{a_2}{b_2}$ is isomorphic to~$\tspace{V}{a_1+a_2}{b_1+b_2}$.
This allows us to view the tensor product $t_1 \otimes t_2$ with $t_1 \in \tspace{V}{a_1}{b_1}$ and $t_2 \in \tspace{V}{a_2}{b_2}$ as an element of $\tspace{V}{a_1+a_2}{b_1+b_2}$.
More specifically, we define the tensor product on elementary tensors as
\begin{multline*}
    (\bigotimes_{i = 1}^{a_1} v_i \otimes \bigotimes_{j = 1}^{b_1} w_j) \otimes (\bigotimes_{i = 1}^{a_2} x_i \otimes \bigotimes_{j = 1}^{b_2} y_j) \\ = (\bigotimes_{i = 1}^{a_1} v_i \otimes \bigotimes_{i = 1}^{a_2} x_i) \otimes (\bigotimes_{j = 1}^{b_1} w_i \otimes \bigotimes_{j = 1}^{b_2} y_j)
\end{multline*}
where $v_i, x_i \in V$ and $w_j, y_j \in V^*$, and extend it bilinearly to arbitrary tensors.
In what follows, the symbol $\otimes$ applied to two tensors always denotes this tensor product (or its extension to tensors over multiple vector spaces introduced in~\cref{subsec:tuples}).

The tensor product is associative, and we can combine the tensor spaces $\tspace{V}{a}{b}$ into a graded associative algebra.

\begin{definition}
    Let $V$ be a vector space.
    The \emph{(mixed) tensor algebra} over $V$ is the $\bbN \times \bbN$-graded algebra \[\mathcal{T}(V) = \bigoplus\limits_{\tworow{a}{b} \in \bbN \times \bbN} \tspace{V}{a}{b}\] with the tensor product operation $\otimes$ defined above.
\end{definition}

The standard action of $\GL(V)$ on $V$ induces the structure of $\GL(V)$-representation on each tensor space $\tspace{V}{a}{b}$ and therefore on the mixed tensor algebra $\mathcal{T}(V)$.
Explicitly, the action of $\group{g} \in \GL(V)$ is given on elementary tensors by
\begin{align}\label{eq:gl action}
    \group{g} \cdot (\bigotimes_{i = 1}^a v_i \otimes \bigotimes_{j = 1}^b w_j) = \bigotimes_{i = 1}^a (\group{g} v_i) \otimes \bigotimes_{j = 1}^b (\group{g}^*)^{-1} w_j.
\end{align}
In addition, each tensor space $\tspace{V}{a}{b}$ is also a representation of $\Symg_a \times \Symg_b$, acting by separately permuting the contravariant and covariant indices.
The action of $(\pi,\sigma) \in \Symg_a \times \Symg_b$ is given by
\begin{align}\label{eq:permutation action}
    (\pi, \sigma) \cdot (\bigotimes_{i = 1}^a v_i \otimes \bigotimes_{j = 1}^b w_j) = \bigotimes_{i = 1}^a v_{\pi^{-1}i} \otimes \bigotimes_{j = 1}^b w_{\sigma^{-1}j}.
\end{align}

The tensor algebra $\mathcal{T}(V)$ carries additional structure which comes from identity tensors and tensor contractions.
\begin{definition}
    Let $V$ be a vector space.
    The \emph{identity tensor} $\Id \in \tspace{V}{1}{1} = V \otimes V^*$ is the tensor corresponding to the identity map $V \to V$.
\end{definition}

\begin{definition}
    Let $V$ be a vector space.
    For every $\tworow{a}{b} \in \bbN \times \bbN$ and $p \in [a]$, $q \in [b]$ the \emph{contraction map} $\Tr^{p:a}_{q:b} \colon \tspace{V}{a}{b} \to \tspace{V}{a-1}{b-1}$ \emph{contracting the $p$-th contravariant factor with the $q$-th covariant factor} is the linear map defined on elementary tensors as
    \[
        \Tr^{p:a}_{q:b} \left(\bigotimes_{i = 1}^a v_i \otimes \bigotimes_{j = 1}^b w_j\right) = w_q(v_p) \cdot \bigotimes_{i = 1, i \neq p}^a v_i \otimes \bigotimes_{j = 1, j \neq q}^b w_j.
    \]
    We write $\Tr^p_q$ if the tensor type $\tworow{a}{b}$ is clear from the context.
\end{definition}

Clearly, the identity tensor is $\GL(V)$-invariant and tensor products and contractions are $\GL(V)$-equivariant.
We list some simple properties of tensor products and contractions which follow directly from the definitions.

\begin{proposition}\label{prop:contraction-simple}
  Let $t \in \tspace{V}{a}{b}$, $s \in \tspace{V}{c}{d}$, and let $p \in [a]$, $q \in [b]$.
  The following properties hold:
  \begin{enumerate}
    \item\label{prop:contraction-simple-1}
    $s \otimes t = (\pi, \sigma) (t \otimes s)$ where $\pi$ swaps the first $a$ and the last $c$ elements of $[a + c]$, and similarly $\sigma$ swaps the first $b$ and the last $d$ elements of $[b + d]$.
    \item\label{prop:contraction-simple-2} Commutation relations for tensor products and a contraction: \begin{align*}(\Tr^{p:a}_{q:b} t) \otimes s & = \Tr^{p:a + c}_{q:b + d} (t \otimes s) \\ s \otimes (\Tr^{p:a}_{q:b} t) & = \Tr^{c + p:a + c}_{d + q:a + d} (s \otimes t).\end{align*}
    \item\label{prop:contraction-simple-3} Properties of contraction with identity tensor: \begin{align*}\Tr^{p+1:a + 1}_{1:b+1} (\Id \otimes t) & = ((12\dots p),\id) \cdot t \\ \Tr^{1:a+1}_{q+1:b+1} (\Id \otimes t) & = (\id, (12\dots q)) \cdot t.\end{align*}
  \end{enumerate}
\end{proposition}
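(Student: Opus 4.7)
The plan is to verify all three identities directly from the definitions. Since tensor product, contraction, and the symmetric group actions in equations \eqref{eq:gl action} and \eqref{eq:permutation action} are all (multi)linear in each tensor argument, it suffices in every case to prove the identity on elementary tensors $t = \bigotimes_{i=1}^a v_i \otimes \bigotimes_{j=1}^b w_j$ and $s = \bigotimes_{k=1}^c x_k \otimes \bigotimes_{\ell=1}^d y_\ell$, and then extend by linearity. After this reduction, each part becomes a bookkeeping exercise about indices.

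For part \ref{prop:contraction-simple-1}, I would unfold both sides: the definition of $\otimes$ gives $t \otimes s$ with contravariant sequence $(v_1,\dots,v_a,x_1,\dots,x_c)$ and covariant sequence $(w_1,\dots,w_b,y_1,\dots,y_d)$, while $s \otimes t$ has $(x_1,\dots,x_c,v_1,\dots,v_a)$ and $(y_1,\dots,y_d,w_1,\dots,w_b)$. The permutation $\pi$ that cyclically exchanges the first $a$ block with the last $c$ block in $[a+c]$ has $\pi^{-1}(i) = i+a$ for $i\le c$ and $\pi^{-1}(i) = i-c$ otherwise; substituting into \eqref{eq:permutation action} recovers the contravariant sequence of $s \otimes t$, and symmetrically for $\sigma$. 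For part \ref{prop:contraction-simple-2}, both $(\Tr^{p:a}_{q:b} t) \otimes s$ and $\Tr^{p:a+c}_{q:b+d}(t \otimes s)$ reduce, after evaluating the contraction, to $w_q(v_p) \cdot \big(\bigotimes_{i\ne p} v_i \otimes \bigotimes_{k} x_k\big) \otimes \big(\bigotimes_{j\ne q} w_j \otimes \bigotimes_\ell y_\ell\big)$; the second identity is analogous with a shift of $c$ in the contravariant index and $d$ in the covariant index, which I would check matches the second contraction formula.

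For part \ref{prop:contraction-simple-3}, I would expand $\Id = \sum_i e_i \otimes e_i^*$ using any basis $\{e_i\}$ of $V$ with dual basis $\{e_i^*\}$. Then $\Id \otimes t$ has contravariant sequence $(e_i, v_1, \dots, v_a)$ and covariant sequence $(e_i^*, w_1, \dots, w_b)$, summed over $i$. Applying $\Tr^{p+1:a+1}_{1:b+1}$ contracts $v_p$ against $e_i^*$, producing $\sum_i e_i^*(v_p)\, e_i \otimes v_1 \otimes \cdots \otimes \hat v_p \otimes \cdots \otimes v_a \otimes w_1 \otimes \cdots \otimes w_b$, where $\hat v_p$ denotes omission. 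Since $\sum_i e_i^*(v_p) e_i = v_p$, the contravariant sequence becomes $(v_p, v_1, \dots, v_{p-1}, v_{p+1}, \dots, v_a)$. A direct comparison with \eqref{eq:permutation action} applied to the cycle $(12\cdots p)$, whose inverse sends $1 \mapsto p$ and $i \mapsto i-1$ for $2 \le i \le p$, shows that this is exactly $((12\cdots p), \id)\cdot t$. The second identity in \ref{prop:contraction-simple-3} follows by the same argument with the roles of $V$ and $V^*$ swapped.

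There is no real obstacle here: once the reductions to elementary tensors are in place, each claim is settled by carefully tracking the positions of factors and invoking the basis expansion of $\Id$ in part \ref{prop:contraction-simple-3}. The only mild subtlety is keeping the direction of the permutation (i.e.\ $\pi$ versus $\pi^{-1}$) consistent with the convention in \eqref{eq:permutation action}, which is why I would set up the index bookkeeping explicitly before reading off the permutations.
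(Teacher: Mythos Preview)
Your proposal is correct and is precisely the kind of verification the paper has in mind: the paper does not give a proof at all, merely stating that these properties ``follow directly from the definitions,'' and your reduction to elementary tensors followed by index bookkeeping is exactly that direct verification. Your handling of the permutation conventions (tracking $\pi^{-1}$ versus $\pi$ against \eqref{eq:permutation action}) and the basis expansion of $\Id$ in part~\ref{prop:contraction-simple-3} are both correct.
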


Interestingly, the last property means that the permutation action~\cref{eq:permutation action} can be recovered by using contractions and tensoring with identity tensors, because the cyclic permutations $(12\dots p)$ generate the whole symmetric group.

Contractions can be composed to create more complicated tensor contractions.
After applying one contraction map $\Tr^{p:a}_{q:b}$, all contravariant indices higher than~$p$ and all covariant indices higher than~$q$ decrease by one.
This leads to the following commutation relation for contraction maps:
\[
    \Tr^{r:a-1}_{s:b-1} \circ \Tr^{p:a}_{q:b} = \begin{cases}
    \Tr^{p-1:a-1}_{q-1:b-1} \circ \Tr^{r:a}_{s:b}, &\text{if $r < p, s < q$} \\
    \Tr^{p-1:a-1}_{q\phantom{-1}:b-1} \circ \Tr^{r\phantom{+1}:a}_{s+1:b}, &\text{if $r < p, s \geq q$} \\
    \Tr^{p\phantom{-1}:a-1}_{q-1:b-1} \circ \Tr^{r+1:a}_{s\phantom{+1}:b}, &\text{if $r \geq p, s < q$} \\
    \Tr^{p:a-1}_{q:b-1} \circ \Tr^{r+1:a}_{s+1:b}, &\text{if $r \geq p, s \geq q$}
    \end{cases}
\]
To avoid the necessity of iteratively tracking these changes of indices, we introduce a convenient notation for contracting multiple pairs of indices at once.

\begin{definition}
A \emph{partial bijection} is a bijection $\gamma \colon P \to Q$ between subsets~$P \subseteq [a]$ and~$Q \subseteq [b]$.
Given such a partial bijection, we define
$\Tr_{\gamma} \colon \tspace{V}{a}{b} \to \tspace{V}{a-k}{b-k}$, where $k = \abs P = \abs Q$, as the contraction of all contravariant factors with indices in $P$ with the covariant factors with indices in~$Q$ according to the bijection $\gamma$, that is,
\[
    \Tr_{\gamma} (\bigotimes_{i = 1}^a v_i \otimes \bigotimes_{j = 1}^b w_j)
  = \prod_{p \in P} w_{\gamma(p)}(v_p) \cdot \left(\bigotimes_{i \notin P} v_i \otimes \bigotimes_{j \notin Q} w_j\right)
\]
We also write $\Tr^{p_1,\dots,p_k:a}_{q_1,\dots,q_k:b} := \Tr_\gamma$ when $P = \{p_1,\dots,p_k\}$, $Q = \{q_1,\dots,q_k\}$, $\gamma(p_j) = q_j$ for all~$j\in[k]$.
\end{definition}

Compositions of contraction maps are exactly $\Tr_{\gamma}$ for some partial bijection $\gamma$:
\begin{itemize}
  \item The identity function on $\tspace{V}{a}{b}$ is $\Tr_\varnothing$ where $\varnothing$ is the nowhere-defined partial function.
  \item If $\gamma$ is a partial bijection with domain $P$ and image $Q$ of cardinality $k$,
    then $\Tr^{p:a-k}_{q:a-k} \circ T_{\gamma} = T_{\bar{\gamma}}$ for the partial bijection $\bar\gamma$ extending $\gamma$ with $\bar\gamma(\bar{p}) = \bar{q}$, where $\bar{p}$ is the $p$-th element of $[a] \setminus P$ and $\bar{q}$ is the $q$-th element of $[b] \setminus Q$.
\end{itemize}
In particular, if $a=b$ and $\pi \in \Symg_a$ is a permutation of $[a]$, then $\Tr_{\pi} \colon \tspace{V}{a}{a} \to \tspace{V}{0}{0} = \bbC$ maps tensors to scalars as follows.
\[
    \Tr_{\pi} (\bigotimes_{i = 1}^a v_i \otimes \bigotimes_{j = 1}^a w_j) = \prod_{p = 1}^a w_{\pi(p)}(v_p).
\]
In this case, $\Tr_{\pi}$ can also be expressed using the action~\eqref{eq:permutation action} of the permutation group as
\[
    \Tr_{\pi}(x) = \Tr \left((\pi, \id) \cdot x\right) = \Tr \left((\id, \pi^{-1}) \cdot x\right)
\]
in terms of the canonical $\Tr = \Tr_{\id}$ that contracts the covariant and contravariant in the same order:
\[
    \Tr (\bigotimes_{i = 1}^a v_i \otimes \bigotimes_{j = 1}^a w_j) = \prod_{p = 1}^a w_p(v_p)
\]

Several other constructions related to tensors can be expressed using contractions.
For example, the tensor space $\tspace{V}{a}{b}$ can be identified with the space of linear maps from $V^{\otimes b}$ to $V^{\otimes a}$.
The composition of linear maps induces a bilinear map $\tspace{V}{a}{b} \times \tspace{V}{b}{c} \to \tspace{V}{a}{c}$, which we denote by
\begin{align}\label{eq:tensor composition}
t \cdot s = \Tr^{a+1,\dots,a+b:a+b}_{\phantom{a+}1,\dots,\phantom{a+}b:b+c} (t \otimes s)
\end{align}
for $t \in \tspace{V}{a}{b}$, $s \in \tspace{V}{b}{c}$.
When $c=a$ we can take the trace of the above to obtain the canonical bilinear pairing between the spaces $\tspace{V}{a}{b}$ and $\tspace{V}{b}{a}$:
\[\left<t, s\right> = \Tr(t \cdot s) = \Tr_\sigma (t \otimes s) \]
where $\sigma$ is the permutation swapping the first $a$ and the last $b$ indices, that is, $\sigma(i) = b + i$ for $i \leq a$, and $\sigma(i) = i - a$ for $i \geq a$.
Finally, we can also express the permutation action using tensor contractions, as follows.
For every permutation $\pi \in \Symg_a$, consider the linear map $P_{\pi} \colon V^{\otimes a} \to V^{\otimes a}$ defined on elementary tensors as
\[
P_\pi( \bigotimes_{i = 1}^a v_i ) = \bigotimes_{i = 1}^a v_{\pi^{-1}(i)}.
\]
Identifying $P_\pi$ with a tensor in~$\tspace{V}{a}{a}$, it can be written as
\[
    P_{\pi} = \Tr^{a+1,a+2,\dots,\phantom{\pi(}2a:2a}_{\pi(1),\pi(2),\dots,\pi(a):2a}\, ( \Id^{\otimes (2a)} ),
\]
Then the action~\eqref{eq:permutation action} of $\Symg_a \times \Symg_b$-action on $\tspace{V}{a}{b}$ can be expressed in terms of two compositions~\eqref{eq:tensor composition}:
\[
(\pi, \sigma) \cdot x = P_{\pi} \cdot x \cdot P_{\sigma^{-1}}
\quad\text{for $\pi \in \Symg_a$, $\sigma \in \Symg_b$.}
\]

The algebraic structure of $\mathcal{T}(V)$, with tensor products, contractions, and identity tensors, is called a \emph{wheeled PROP}~\cite{DM23-wheeled-props-MR4568127} (\emph{PROP} is short for \emph{pro}duct and \emph{p}ermutation category).
We will not require the general definition of a wheeled PROP, but only define what it means for a subalgebra of~$\mathcal{T}(V)$:

\begin{definition}
    A graded subalgebra
    \[ A = \bigoplus\limits_{\tworow{a}{b} \in \bbN \times \bbN} A^a_b \]
    of $\mathcal{T}(V)$ is called a \emph{sub-wheeled PROP} if $\Id \in A^1_1$ and $\Tr^{p:a}_{q:b}(A^a_b) \subseteq A^{a - 1}_{b - 1}$ for every contraction.

    Given a set of tensors $\mathcal{H} \subset \mathcal{T}(V)$, the \emph{sub-wheeled PROP generated by $\mathcal{H}$} is the minimal (with respect to inclusion) sub-wheeled PROP containing $\mathcal{H}$.
    We denote it by $\left<\mathcal{H}\right>_{\Tr}$.
    A sub-wheeled PROP is called \emph{finitely generated} if it is generated by some finite set~$\mathcal{H}$.
\end{definition}

As mentioned, if $X$ is a representation of a reductive group $\group{G}$ then the invariant algebra~$\bbC[X]^{\group{G}}$ is finitely generated.
There is a variant of finite generation theorem for mixed tensor algebras, as was proved by Schrijver~\cite{Sch08-first-fundamental-MR2379100} over~$\bbC$ and by Derksen and Makam~\cite{DM23-wheeled-props-MR4568127} over an arbitrary field of characteristic zero.
For the following result, recall that we have an action~\eqref{eq:gl action} of $\GL(V)$ on the tensor algebra~$\mathcal{T}(V)$.

\begin{theorem}[{\cite[Thm. 6.2, Cor. 6.6]{DM23-wheeled-props-MR4568127}}]\label{thm:dm prop}
    Let $V$ be a vector space and $\group{G}$ be a reductive subgroup of $\GL(V)$.
    Then the invariant tensors
    $\mathcal{T}(V)^{\group{G}} = \{ t \in \mathcal{T}(V) \mid \group{g} \cdot t = t \ \forall \group{g} \in \group{G} \}$
    form a finitely generated sub-wheeled PROP.
\end{theorem}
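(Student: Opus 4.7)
The plan is to lift classical finite generation theorems for polynomial invariants to the wheeled PROP setting via the identification of invariant tensors with multilinear invariant polynomials, and to realize the classical operations of polarization and restitution by the wheeled PROP operations of tensor product and contraction with $\Id$.

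First I would reformulate the statement. Under the canonical pairing, the invariant space $\tspace{V}{a}{b}^{\group{G}}$ is naturally isomorphic to the space of multilinear $\group{G}$-invariant polynomials on $V^{\oplus b} \oplus (V^*)^{\oplus a}$, that is, to the multi-homogeneous component of $\bbC[V^{\oplus b} \oplus (V^*)^{\oplus a}]^{\group{G}}$ of multi-degree $(1,\dots,1;1,\dots,1)$. Finite generation of $\mathcal{T}(V)^{\group{G}}$ as a sub-wheeled PROP is therefore equivalent to the existence of a finite list of multilinear invariant polynomials that, via operations corresponding to tensor product, contraction, and insertion of $\Id$, produce every multilinear invariant on any number of copies of $V$ and $V^*$.

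Next, I would combine two classical inputs for reductive groups in characteristic zero. Hilbert's finite generation theorem gives, for every $N$, a finite set of polynomial invariants that generates $\bbC[V^{\oplus N} \oplus (V^*)^{\oplus N}]^{\group{G}}$ as an algebra. Weyl's polarization theorem provides an $N_0$ depending only on $\dim V$ such that every invariant on $V^{\oplus N} \oplus (V^*)^{\oplus N}$ for $N \geq N_0$ is a linear combination of polarizations of invariants on $V^{\oplus N_0} \oplus (V^*)^{\oplus N_0}$. Combined, these yield a finite set of multi-homogeneous invariants on $N_0$ copies; after total polarization each becomes a multilinear invariant, giving a finite family $\mathcal{H} \subset \mathcal{T}(V)^{\group{G}}$ of invariant tensors of bounded type.

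The core step is to show that the classical operations used to build arbitrary invariants out of $\mathcal{H}$, namely multiplication, polarization, and restitution, all admit realizations as wheeled PROP operations. Multiplication of multilinear invariants on disjoint variable sets translates to tensor product. Polarization, which replaces a variable of multi-degree $d$ by a sum of $d$ distinct fresh variables and extracts the multilinear piece, translates to tensor product with factors of $\Id$ followed by appropriate contractions; here part (3) of \cref{prop:contraction-simple} is crucial, since any reordering of tensor factors by a permutation is itself a composition of contractions with $\Id$. Restitution, which identifies several variables and then takes a homogeneous component, corresponds to contracting pairs of slots. Verifying that these translations are faithful and exhaust all invariants is the main obstacle, since one must show that $\mathcal{H}$ together with $\Id$ is closed enough under the wheeled PROP operations to reach every element of every $\tspace{V}{a}{b}^{\group{G}}$. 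Once this translation is established, $\mathcal{H}$ generates $\mathcal{T}(V)^{\group{G}}$ as a sub-wheeled PROP, completing the proof.
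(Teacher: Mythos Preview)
The paper does not give its own proof of this theorem; it is quoted from Derksen--Makam (and, over~$\bbC$, Schrijver) without argument, so there is nothing in the paper to compare against. I evaluate your proposal on its own.

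Your overall strategy is the standard characteristic-zero argument and is sound: identify $(\tspace{V}{a}{b})^{\group G}$ with the multilinear $\group G$-invariants on $b$ vectors and $a$ covectors; use Hilbert's finiteness together with Weyl's bound on the number of copies (applied to $W=V\oplus V^*$) to obtain a finite set of multihomogeneous generators on $N_0$ copies; take their full polarizations to get a finite set $\mathcal H$ of invariant tensors; and then argue that every multilinear invariant is a linear combination of products of full polarizations of these generators applied to disjoint blocks of variables, which on the tensor side is a linear combination of factor-permuted tensor products of elements of $\mathcal H$. Since factor permutations are realized inside any sub-wheeled PROP by tensoring with $\Id$ and contracting (\cref{prop:contraction-simple}(\ref{prop:contraction-simple-3})), this yields $\mathcal T(V)^{\group G}=\langle\mathcal H\rangle_{\Tr}$.

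Where your write-up goes astray is in the dictionary between classical operations and wheeled-PROP operations. Restitution (identifying two vector variables) does \emph{not} correspond to contracting a pair of slots: a contraction pairs a $V$-slot with a $V^*$-slot via the evaluation map, which is a different operation. Likewise, once you have passed to fully polarized generators, ``polarization'' is no longer an operation on tensors; the only wheeled-PROP operations you actually need are tensor product and reindexing of slots. Fortunately restitution is never needed in the argument: a product of multihomogeneous invariants is multilinear if and only if each factor is already multilinear on its own disjoint block of variables, so every contributing monomial is automatically a tensor product of full polarizations. With that correction the argument goes through cleanly.
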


The following lemma shows that any finitely generated sub-wheeled PROP has a concrete description in terms of contracting tensor products of tensors of the generators and identity tensors.
It applies in particular in the situation of \cref{thm:dm prop}.

\begin{lemma}\label{lem:sub-wheeled-prop}
  Let $\mathcal{H} = \{h_1, \dots, h_r\}$ with $h_k \in \tspace{V}{a_k}{b_k}$.
  Then the sub-wheeled PROP $\left<\mathcal{H}\right>_{\Tr}$ is linearly spanned by tensors of the form
  \begin{equation}\label{eq:contraction-expression}
    \Tr_{\gamma} \left(\Id^{\otimes d_0} \otimes \bigotimes_{k = 1}^r h_k^{\otimes d_k}\right),
  \end{equation}
  where each $d_k \in \bbN$ and $\gamma$ is a partial bijection.
\end{lemma}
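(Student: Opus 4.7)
I would prove the lemma by exhibiting two inclusions. Let $S \subseteq \mathcal{T}(V)$ be the linear span of tensors of the form \eqref{eq:contraction-expression}. The inclusion $S \subseteq \langle \mathcal{H} \rangle_{\Tr}$ is immediate, because each tensor of this form is built from elements of $\mathcal{H} \cup \{\Id\}$ using only tensor products and contractions, which a sub-wheeled PROP must be closed under. For the reverse inclusion $\langle \mathcal{H} \rangle_{\Tr} \subseteq S$, it suffices, by definition of the generated sub-wheeled PROP, to show that $S$ is itself a sub-wheeled PROP containing $\mathcal{H}$.

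Three conditions need to be checked. First, $\Id \in S$ (take $d_0 = 1$, all $d_k = 0$, $\gamma = \varnothing$) and $h_k \in S$ (take $d_k = 1$, all other $d_j = 0$, $d_0 = 0$, $\gamma = \varnothing$), so $\mathcal{H} \cup \{\Id\} \subseteq S$. Second, closure under contractions is straightforward: applying a single contraction $\Tr^{p:a}_{q:b}$ to a spanning element $\Tr_{\gamma}(\Id^{\otimes d_0} \otimes \bigotimes_k h_k^{\otimes d_k})$ gives $\Tr_{\gamma'}(\Id^{\otimes d_0} \otimes \bigotimes_k h_k^{\otimes d_k})$, where $\gamma'$ extends $\gamma$ by one more pair using the bookkeeping described after the definition of $\Tr_\gamma$; hence the result is again of form \eqref{eq:contraction-expression}. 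Since $S$ is a linear span, this extends linearly.

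The main step is closure under tensor products. For spanning elements $t_1 = \Tr_{\gamma_1}(X_1)$ and $t_2 = \Tr_{\gamma_2}(X_2)$, where each $X_i = \Id^{\otimes d_0^{(i)}} \otimes \bigotimes_k h_k^{\otimes d_k^{(i)}}$ is in canonical order, \cref{prop:contraction-simple}\ref{prop:contraction-simple-2} gives $t_1 \otimes t_2 = \Tr_{\gamma_1 \sqcup \gamma_2}(X_1 \otimes X_2)$, where $\gamma_1 \sqcup \gamma_2$ is the partial bijection obtained by concatenating $\gamma_1$ and $\gamma_2$ with appropriate index shifts. The tensor $X_1 \otimes X_2$, however, is not in canonical form: the identities and copies of each $h_k$ from the two halves are interleaved. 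I would resolve this by appealing to \cref{prop:contraction-simple}\ref{prop:contraction-simple-1}: reordering tensor product factors amounts to applying some permutation $(\pi, \sigma) \in \Symg_{a} \times \Symg_{b}$ to the indices. Since $\Tr_\gamma \circ (\pi,\sigma) = \Tr_{\gamma \cdot (\pi,\sigma)}$ for a suitably relabelled partial bijection, this permutation is simply absorbed into the contraction: we obtain $t_1 \otimes t_2 = \Tr_{\gamma'}\!\bigl(\Id^{\otimes (d_0^{(1)}+d_0^{(2)})} \otimes \bigotimes_k h_k^{\otimes (d_k^{(1)}+d_k^{(2)})}\bigr)$ for some partial bijection $\gamma'$, which is again of the form \eqref{eq:contraction-expression}.

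The only subtlety, and the step I expect to require the most care, is the bookkeeping in that last paragraph: tracking exactly how the reordering permutation acts on the source and target of $\gamma_1 \sqcup \gamma_2$ to produce the correct $\gamma'$. The conceptual point, however, is that the definition of \eqref{eq:contraction-expression} permits arbitrary partial bijections $\gamma$, so any such reindexing is automatically allowed, and no structural obstruction arises.
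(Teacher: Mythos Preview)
Your proposal is correct and follows essentially the same approach as the paper's proof: both show that the linear span of tensors of the form \eqref{eq:contraction-expression} is a sub-wheeled PROP containing $\mathcal{H}$ (hence contains $\langle\mathcal{H}\rangle_{\Tr}$) and is contained in any such sub-wheeled PROP (hence equals $\langle\mathcal{H}\rangle_{\Tr}$). The paper's argument is terser, but your identification of the tensor-product closure as the nontrivial step, handled via \cref{prop:contraction-simple}, matches exactly what the paper does.
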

\begin{proof}
  Note that the set of all tensors of the form~\eqref{eq:contraction-expression} is closed under contraction maps.
  To prove that it is also closed under tensor product, one needs to use the simple properties listed in \cref{prop:contraction-simple} to rearrange the order of operations in the expression
  \[
    \Tr_{\gamma} \left(\Id^{\otimes d_0} \otimes \bigotimes_{k = 1}^r h_k^{\otimes d_k}\right) \otimes \Tr_{\gamma'} \left(\Id^{\otimes d'_0} \otimes \bigotimes_{k = 1}^r \otimes h_k^{\otimes d'_k}\right).
  \]
  Taking into account the bilinearity of tensor product and linearity of contractions we see that that the linear span of the tensors of the form~\eqref{eq:contraction-expression} is also closed under tensor products and contractions.
  Since it also contains $\Id$, it is a sub-wheeled PROP.

  It is also clear that all tensors of the form~\eqref{eq:contraction-expression} are contained in all sub-wheeled PROPs containing $\mathcal{H}$.
  We conclude that their linear span coincides with $\left<h_1, \dots, h_r\right>_{\Tr}$.
\end{proof}

We will also see expressions similar to~\eqref{eq:contraction-expression} where instead of fixed tensors~$h_1,\dots,h_r$ we have variable tensors~$x_1, \dots, x_p$.
If the value of such an expression is scalar, we can get rid of identity tensors in the expression.

\begin{lemma}\label{lem:contraction-remove-id}
  Every function of the form
  \begin{equation}
    F(x_1, \dots, x_p) = \Tr_{\pi} \left(\Id^{\otimes d_0} \otimes \bigotimes_{k = 1}^p x_k^{\otimes d_k}\right),
  \end{equation}
  where $\pi$ is a permutation, can be expressed as
  \begin{equation}
    F(x_1, \dots, x_p) = (\dim V)^c \cdot \Tr_{\pi'} \bigotimes_{k = 1}^p x_k^{\otimes d_k}
  \end{equation}
  for some permutation $\pi'$ and some $c \in \bbN$.
\end{lemma}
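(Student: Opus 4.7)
The plan is to proceed by induction on the number of identity tensors $d_0$. The base case $d_0 = 0$ is immediate by taking $c = 0$ and $\pi' = \pi$. For the inductive step, I single out one identity tensor, say the first copy, whose upper index and lower index sit in fixed positions $u$ and $\ell$ of the total tensor $\Id^{\otimes d_0} \otimes \bigotimes_k x_k^{\otimes d_k}$; the permutation $\pi$ then either matches $u$ with $\ell$ or it does not, which gives two cases.

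In the first case, when $\pi(u) = \ell$, the chosen identity tensor is contracted with itself, contributing a factor $\Tr(\Id) = \dim V$. Using the definition of $\Tr_\gamma$ on elementary tensors, the expression factors as $\Tr_{\pi}(\Id^{\otimes d_0} \otimes \bigotimes_k x_k^{\otimes d_k}) = (\dim V) \cdot \Tr_{\tilde\pi}(\Id^{\otimes d_0 - 1} \otimes \bigotimes_k x_k^{\otimes d_k})$, where $\tilde\pi$ is the permutation on the smaller index set obtained by deleting the positions $u$ and $\ell$ and renumbering. Applying the inductive hypothesis to the smaller expression and multiplying by $\dim V$ finishes this case with $c = c' + 1$.

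In the second case, when $\pi(u) = \ell' \neq \ell$, let $u' = \pi^{-1}(\ell)$. On elementary tensors, writing $\Id = \sum_i e_i \otimes e^i$, the sum built into the chosen identity tensor contracts $u'$ through it straight to $\ell'$: this is the standard Kronecker-delta collapse $\sum_{i,j} \delta^i_j \cdot (\cdots)_i^{\,j} = \sum_i (\cdots)_i^{\,i}$. Checking this on elementary tensors yields $\Tr_{\pi}(\Id^{\otimes d_0} \otimes \bigotimes_k x_k^{\otimes d_k}) = \Tr_{\tilde\pi}(\Id^{\otimes d_0-1} \otimes \bigotimes_k x_k^{\otimes d_k})$, where now $\tilde\pi$ is obtained from $\pi$ by deleting the positions $u$ and $\ell$ and then inserting the pair $u' \mapsto \ell'$ on the remaining indices (and renumbering). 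This again reduces $d_0$ by one, so induction applies and yields the same exponent $c$.

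The only real work is the bookkeeping around the relabeling of indices when $u$ and $\ell$ are deleted and $\tilde\pi$ is defined; this is the anticipated main obstacle but it is purely combinatorial. Conceptually, the proof amounts to the observation that each identity tensor in a complete contraction sits either in a ``closed loop'' built from identities alone---each such loop contributing one factor of $\dim V$---or along a ``wire'' that can be spliced out of the contraction diagram, leaving a direct contraction between two indices belonging to the $x_k$'s. Iterating produces the desired permutation $\pi'$ and exponent $c$.
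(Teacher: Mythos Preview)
Your proof is correct and follows essentially the same approach as the paper: the paper's one-line argument invokes \cref{prop:contraction-simple}(\ref{prop:contraction-simple-3}) to splice out each identity tensor that is contracted with something other than itself (your Case~2), and then notes that the remaining self-contracted identities each contribute a factor of $\dim V$ (your Case~1). Your explicit induction on~$d_0$ with the two cases simply unpacks this into a more detailed form, and your concluding ``closed loop'' versus ``wire'' picture is exactly the intuition behind the paper's terse proof.
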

\begin{proof}
  Use \cref{prop:contraction-simple}(\ref{prop:contraction-simple-3}) to remove the identity tensors contracted with other tensors (including other identity tensors).
  The remaining identity tensors contribute some power of $\Tr \left(\Id\right) = \dim V$.
\end{proof}

So far we considered tensor spaces built from a single vector space and its dual, but the definitions generalize readily to multiple vector spaces.

\begin{definition}[Tensor spaces and algebra]
    Let $\vtuple{V} = (V_1, \dots, V_m)$ be a tuple of vector spaces.
    The \emph{(mixed) tensor space of type~$\tworow{\ituple{a}}{\ituple{b}} \in \bbN^m \times \bbN^m$} over $\vtuple{V}$ is defined as
    \[
        \tspace{\vtuple{V}}{\ituple{a}}{\ituple{b}} = \bigotimes_{i = 1}^m \tspace{V_i}{a_i}{b_i} = \bigotimes_{i = 1}^m \left(V_i^{\otimes a_i} \otimes (V_i^*)^{\otimes b_i}\right).
    \]
    We abbreviate $\vtuple{V}^{\otimes\ituple{a}} = \bigotimes_{i = 1}^m V_i^{\otimes a_i}$.
    The \emph{(mixed) tensor algebra} over $\vtuple{V}$ is the $\bbN^m \times \bbN^m$-graded algebra
    \[ \mathcal{T}(\vtuple{V})
    = \bigotimes_{i = 1}^m \mathcal{T}(V_i)
    = \bigoplus_{\tworow{\ituple{a}}{\ituple{b}} \in \bbN^m \times \bbN^m} \tspace{\vtuple{V}}{\ituple{a}}{\ituple{b}}
    \]
    with the tensor product operation~$\ot$ inherited from the~$\mathcal{T}(V_i)$.
\end{definition}

Each~$\GL(V_i)$ acts on the vector spaces~$V_i$ and~$V_i^*$.
This induces a structure of a representation of
\[ \GL(\vtuple{V}) = \prod_{i = 1}^m \GL(V_i). \]
on the tensor spaces~$\tspace{\vtuple{V}}{\ituple{a}}{\ituple{b}}$ and therefore on~$\mathcal{T}(\vtuple{V})$.
The tensor algebras $\mathcal{T}(V_i)$ also come with the structure of a wheeled PROP.
This additional structure transfers to $\mathcal{T}(\vtuple{V})$ as follows.
\begin{itemize}
\item For every $i \in [m]$, the tensor space $\tspace{\vtuple{V}}{\ituple{e_i}}{\ituple{e_i}} = V_i \otimes V_i^*$ contains a distinguished element corresponding to the identity map, which we call the \emph{identity tensor on~$V_i$} and denote by~$\Id[V_i]$.
\item The contraction maps $\Tr_{\gamma_i} \colon \tspace{V_i}{a_i}{b_i} \to \tspace{V_i}{a_i - k_i}{b_i - k_i}$ compose to form contraction maps
    \[\bigotimes_{i=1}^m \Tr_{\gamma_i} \colon \tspace{\vtuple{V}}{\ituple{a}}{\ituple{b}} \to \tspace{\vtuple{V}}{\ituple{a}-\ituple{k}}{\ituple{b}-\ituple{k}}\] for any choice of partial bijections~$\gamma_1,\dots,\gamma_m$.
In particular, we denote by $\Tr[V_i]_{\gamma_i}$ the contraction map $\bigotimes_{j = 1}^{i - 1} \Tr_\varnothing \otimes \Tr_{\gamma_i} \otimes \bigotimes_{j = i + 1}^n \Tr_\varnothing$ that acts as $\Tr_{\gamma_i}$ on $\tspace{V_i}{a_i}{b_i}$ and as the identity on the rest.
\end{itemize}

We will also allow tuples of such tensor spaces:

\begin{definition}[Tensor tuple representation]\label{def:tens tup rep}
Let $\vtuple{V} = (V_1, \dots, V_n)$ be a tuple of vector spaces, and let $\group{G}$ be a reductive subgroup of $\GL(\vtuple{V})$.
A \emph{tensor tuple representation} of $\group{G}$ is a representation of the form $\rep{\bigoplus_{k = 1}^p \tspace{\vtuple{V}}{\ituple{a}_k}{\ituple{b}_k}}{\group{G}}$ for some tensor types $\tworow{\ituple{a}_1}{\ituple{b}_1}, \dots, \tworow{\ituple{a}_p}{\ituple{b}_p} \in \bbN^m \times \bbN^m$.
\end{definition}

\subsection{Tensor tuple sequences and the complexity class \OCIclass}\label{subsec:tuples}
We can obtain multivariate $p$-sequences of representations (\cref{def:p seq}) by fixing the structure of a tensor tuple representation (\cref{def:tens tup rep}) while allowing the dimension of the vector spaces to vary.
The underlying vector spaces will always be coordinate vector spaces.
The corresponding tensor spaces~$\tspace{(\bbC^{n_1}, \dots, \bbC^{n_m})}{\ituple{a}_k}{\ituple{b}_k}$ will be called \emph{standard tensor spaces}.
Each standard tensor space is equipped with a standard basis given by the tensor products of standard basis elements in~$\bbC^{n_i}$ and dual basis elements in~$\bbC^{n_i*}$.

\begin{definition}[Tensor tuple sequences]
Let $\tworow{\ituple{a}_1}{\ituple{b}_1}$, $\tworow{\ituple{a}_2}{\ituple{b}_2}$, $\dots$, $\tworow{\ituple{a}_p}{\ituple{b}_p} \in \bbN^m \times \bbN^m$.
The corresponding \emph{tensor tuple sequence} is the multivariate {$p$-sequence} $\rep{X_{\ituple{n}}}{\group{G}_{\ituple{n}}}$ where
$\group{G}_{\ituple{n}} = \prod_{i = 1}^m \GL_{n_i}$ and $X_{\ituple{n}} = \bigoplus_{k = 1}^p \tspace{(\bbC^{n_1}, \dots, \bbC^{n_m})}{\ituple{a}_k}{\ituple{b}_k}$, and $X_{\ituple{n}}$ is equipped with the standard basis as defined above.
When $p=1$, this is called a \emph{tensor sequence}.
We will also the convenient symbolic notation
\[ \rep{\bigoplus_{k = 1}^p \tspace{\vtuple{V}}{\ituple{a}_k}{\ituple{b}_k}}{\GL(\vtuple{V})}, \]
where we think of~$V_1, \dots, V_m$ as variables ranging over the coordinate vector spaces, and $\vtuple{V} = (V_1, \dots, V_m)$.

We also allow~$\group{G}_{\ituple{n}} \subset \prod_{i = 1}^m \GL_{n_i}$ to be a suitable subgroup and use analogous notation:
$\SL(\vtuple{V})$ corresponds to the group sequence $\prod_{i = 1}^{m} \SL_{n_i}$, $\Og(\vtuple{V})$~corresponds to $\prod_{i = 1}^{m} \Og_{n_i}$, and $\Sp(\vtuple{V})$ corresponds to $\prod_{i = 1}^{m} \Sp(n_i)$ (in the last case only even dimensions $n_i$ are used).
\end{definition}

\begin{example}
Using the symbolic notation, the sequence of simultaneous conjugation actions on $p$-tuples of $n\times n$-matrices is given by $(\oplus_{k=1}^p V \ot V^*, \GL(V))$ rather than by the tuples $(1,1),\dots,(1,1)$.
\end{example}

\begin{definition}[$\OCIclass$]
    We define the complexity class $\OCIclass = \OCIclass_\GL$ as the class of computational problems that are reducible (by a polynomial-time Karp reduction) to any orbit closure intersection problem for a tensor tuple sequence $\rep{\bigoplus_{k = 1}^p \tspace{\vtuple{V}}{\ituple{a}_k}{\ituple{b}_k}}{\GL(\vtuple{V})}$, where $p \in \bbN$ and $\tworow{\ituple{a}_1}{\ituple{b}_1}, \dots, \tworow{\ituple{a}_p}{\ituple{b}_p} \in \bbN^m {\times} \bbN^m$ for some $m\in\bbN$.

    The classes $\OCIclass_\Og$ and $\OCIclass_\Sp$ are defined analogously using tensor tuple sequences of the form
    $\rep{\bigoplus_{k = 1}^p \vtuple{V}^{\otimes \ituple{a}_k}}{\Og(\vtuple{V})}$ and
    $\rep{\bigoplus_{k = 1}^p \vtuple{V}^{\otimes \ituple{a}_k}}{\Sp(\vtuple{V})}$ respectively.
\end{definition}

The above classes depend on the choice of the field~$\bbF \subseteq \bbC$ over which we perform computations.
All our results hold for an arbitary field of characteristic zero with efficient arithmetic, so we omit the dependence on~$\bbF$ in the notation.

\subsection{Invariant theory of tensors tuple representations}\label{subsec:tensor invariants}
We now describe the relevant invariant theory.
We consider individual tensor tuple representations (not sequences) and hence $\vtuple{V} = (V_1,\dots,V_m)$ refers to a concrete tuple of vector spaces.

\begin{definition}[Balancedness]
    A tensor space over $\vtuple{V}$ is \emph{balanced} if it has the form $\tspace{\vtuple{V}}{\ituple{c}}{\ituple{c}}$.
    Elements of a balanced tensor space are called \emph{balanced tensors}.
    A tensor tuple representation is called \emph{balanced} if all its summands are balanced, that is, if it has the form $\bigoplus_{i = 1}^{p} \tspace{\vtuple{V}}{\ituple{c}_k}{\ituple{c}_k}$.
\end{definition}

Recall from \cref{thm:dm prop} that the invariant tensor algebras are sub-wheeled PROPs.
Using \cref{lem:sub-wheeled-prop}, this implies a fairly explicit description of the invariants for the tensor tuple representations:

\begin{theorem}\label{thm:contraction-invariants-multiple}
    Let $\vtuple{V} = (V_1, \dots, V_m)$ be vector spaces, let $\group{G}_i$ be reductive subgroups of $\GL(V_i)$, and let~$\ituple{a}_1,\ldots,\ituple{b}_p \in \bbN^m$.
    Consider the tensor tuple representation of $\group{G} = \prod_{i = 1}^m \group{G}_i$ on $X = \bigoplus_{k = 1}^p \tspace{\vtuple{V}}{\ituple{a}_k}{\ituple{b}_k}$.
    Let $h_{i1}, \dots, h_{ir_i}$ denote homogeneous generators of $\mathcal{T}(V_i)^{\group{G_i}}$.
    Then the invariant algebra $\bbC[X]^{\group{G}}$ is linearly spanned by invariants of the form
    {\small
    \begin{equation}\label{eq:contraction-invariant-multiple}
        F(x_1, \dots, x_p) \\= \bigotimes_{i = 1}^m \Tr_{\pi_i} \left(\bigotimes_{k = 1}^p x_k^{\otimes d_k} \otimes \bigotimes_{i = 1}^m \bigotimes_{\ell = 1}^{r_i} h_{i\ell}^{\otimes f_{i\ell}}\right)
    \end{equation}}%
    where the tensor in brackets is balanced and $\pi_1, \dots, \pi_m$ are permutations of appropriate sizes.
\end{theorem}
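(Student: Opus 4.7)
The plan is to reduce the claim to the structural theorem for invariant tensors in each individual tensor algebra $\mathcal{T}(V_i)$, and then reassemble. First I would use polarization: by expanding $F\in\bbC[X]^{\group{G}}$ into its multihomogeneous components and replacing each component of multidegree $\ituple{d}=(d_1,\dots,d_p)$ by its (completely symmetric) polarization, it suffices to treat invariant multilinear forms on $X_{1}^{d_1}\times\cdots\times X_{p}^{d_p}$, which are canonically the same thing as $\group{G}$-invariant tensors
\[
T\in X_{1}^{*\otimes d_1}\otimes\cdots\otimes X_{p}^{*\otimes d_p},
\]
paired with $x_1^{\otimes d_1}\otimes\cdots\otimes x_p^{\otimes d_p}$ to recover the polynomial. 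Since each $X_{k}^{*}$ is itself a standard tensor space (with contravariant and covariant types swapped relative to $X_k$), after a suitable reordering of factors the ambient space of $T$ is simply the balanced tensor space $\tspace{\vtuple{V}}{\sum_k d_k\ituple{b}_k}{\sum_k d_k\ituple{a}_k}$ inside $\mathcal{T}(\vtuple{V})=\bigotimes_i \mathcal{T}(V_i)$.

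Next I would invoke the decomposition $\bigl(\bigotimes_i A_i\bigr)^{\prod_i\group{G}_i}=\bigotimes_i A_i^{\group{G}_i}$, which holds because each $\group{G}_i$ is reductive and therefore its representations are semisimple (so a $\group{G}_1$-invariant complement of $A_1^{\group{G}_1}$ in $A_1$ contributes no $\group{G}_1$-invariants in $A_1\otimes A_2$, and one iterates). Applied to our balanced tensor space, this writes $T$ as a sum of pure tensor products $\bigotimes_i u_i$ with $u_i\in \mathcal{T}(V_i)^{\group{G}_i}$. By \cref{thm:dm prop}, each $\mathcal{T}(V_i)^{\group{G}_i}$ is a finitely generated sub-wheeled PROP, and \cref{lem:sub-wheeled-prop} then expresses every such $u_i$ as a $\bbC$-linear combination of contractions
\[
\Tr_{\gamma_i}\!\Bigl(\Id[V_i]^{\otimes d_{i,0}}\otimes\bigotimes_{\ell=1}^{r_i}h_{i\ell}^{\otimes f_{i\ell}}\Bigr).
\]

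Substituting these expressions for $u_i$ into $T$ and then forming the pairing $\langle T,\,x_1^{\otimes d_1}\otimes\cdots\otimes x_p^{\otimes d_p}\rangle$, the pairing is by definition a complete tensor contraction across all factors, which (after rearranging the tensor product using \cref{prop:contraction-simple}(\ref{prop:contraction-simple-1}), which only conjugates the contraction permutations) can be written in the form
\[
\bigotimes_{i=1}^m \Tr_{\sigma_i}\!\Bigl(\bigotimes_{k=1}^p x_k^{\otimes d_k}\otimes\bigotimes_{i=1}^m\Bigl(\Id[V_i]^{\otimes d_{i,0}}\otimes\bigotimes_{\ell=1}^{r_i}h_{i\ell}^{\otimes f_{i\ell}}\Bigr)\Bigr),
\]
where each $\sigma_i$ is a permutation that completely contracts the $V_i$-factors with the $V_i^{*}$-factors. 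Since the contractions of $V_i$-factors are independent of those of $V_j$-factors for $j\neq i$, I can apply \cref{lem:contraction-remove-id} separately inside each $\mathcal{T}(V_i)$ to remove the identity tensors $\Id[V_i]$, gaining a scalar factor $\prod_{i=1}^{m}(\dim V_i)^{c_i}$ which is absorbed into the coefficient. What remains is precisely an expression of the form~\eqref{eq:contraction-invariant-multiple}, which proves the claim.

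\textbf{Main obstacle.} The two substantive technical points are the decomposition $\bigl(\bigotimes_i A_i\bigr)^{\prod_i\group{G}_i}=\bigotimes_i A_i^{\group{G}_i}$ (which relies essentially on reductivity and must be checked carefully, though it is standard) and verifying that after substituting the sub-wheeled PROP expressions for each $u_i$, the resulting total contraction really can be rearranged into a single $\bigotimes_i\Tr_{\pi_i}$ over a balanced tensor product of the $x_k$'s, $\Id[V_i]$'s, and $h_{i\ell}$'s, as stated. The rest is bookkeeping with permutations and the elementary properties collected in \cref{prop:contraction-simple}.
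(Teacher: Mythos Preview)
Your proposal is correct and follows essentially the same route as the paper: reduce to multihomogeneous components, identify each with a $\group{G}$-invariant tensor in $\tspace{\vtuple{V}}{\ituple{b}}{\ituple{a}}$ (the paper does this via the embedding $\bigotimes_k S^{d_k}X_k^*\subset\bigotimes_k(X_k^*)^{\otimes d_k}$, which is exactly your polarization), split using $\mathcal{T}(\vtuple{V})^{\group{G}}=\bigotimes_i\mathcal{T}(V_i)^{\group{G}_i}$, apply \cref{lem:sub-wheeled-prop} factorwise, pair back with $\bigotimes_k x_k^{\otimes d_k}$, and clean out identities via \cref{lem:contraction-remove-id}. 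One small slip: the ambient space $\tspace{\vtuple{V}}{\sum_k d_k\ituple{b}_k}{\sum_k d_k\ituple{a}_k}$ need not itself be balanced (think $\group{G}_i=\Og(V_i)$); balancedness is only guaranteed for the final bracketed tensor after the generators $h_{i\ell}$ are inserted, but this does not affect your argument.
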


\begin{proof}
  Since tensor products and contractions are $\GL(\vtuple{V})$-equivariant, and $h_{i1}, \dots, h_{ir_i}$ are invariant under $\group{G}_i$, all polynomials of the form~\eqref{eq:contraction-invariant-multiple} are invariants.
  We need to argue that they span~$\bbC[X]^{\group{G}}$.

  Let $X_k = \tspace{\vtuple{V}}{\ituple{a}_k}{\ituple{b}_k}$, so that $X = \bigoplus_{k = 1}^p X_k$.
  The space of multihomogeneous polynomials $\bbC[X]_{(d_1, \dots, d_p)}$, where~$d_k$ is the degree in~$X_k$, is isomorphic as a representation of~$\GL(\vtuple{V})$ (and therefore of~$\group{G}$) to $\bigotimes_{k = 1}^p S^{d_k} X_k^*$.
  The latter can be equivariantly embedded into a tensor space as follows:
  \[
    \bigotimes_{k = 1}^p S^{d_k}X_k^* \subset \bigotimes_{k = 1}^p (X_k^*)^{\otimes d_k} \cong \bigotimes_{k = 1}^p \tspace{\vtuple{V}}{\ituple{b}_k}{\ituple{a}_k}^{\otimes d_k} \cong \tspace{\vtuple{V}}{\ituple{b}}{\ituple{a}}
  \]
  where $\ituple{a} = \sum_{k = 1}^p d_k \ituple{a}_k$ and $\ituple{b} = \sum_{k = 1}^p d_k \ituple{b}_k$.
  Thus, any invariant polynomial~$F$ can be recovered from a corresponding invariant tensor~$t \in \mathcal{T}(\vtuple{V})^{\group{G}}$ as \[F(x_1, \dots, x_p) = \left<t, \bigotimes_{k = 1}^p x_k^{\otimes d_k}\right>.\]
  Applying \cref{lem:sub-wheeled-prop} to each factor of the tensor product $\mathcal{T}(\vtuple{V})^{\group{G}} = \bigotimes_{i = 1}^m \mathcal{T}(V_i)^{\group{G}_i}$ we see that the invariant tensors in~$\mathcal{T}(\vtuple{V})^{\group{G}}$ are spanned by contraction expressions
  \begin{multline*}
    \bigotimes_{i = 1}^m \left(\Tr_{\gamma_i} \left(\Id[V_i]^{\otimes {f_{i0}}} \otimes \bigotimes_{\ell = 1}^r h_{i\ell}^{\otimes f_{i\ell}}\right)\right) \\= \bigotimes_{i = 1}^m \Tr_{\gamma_i} \left(\bigotimes_{i = 1}^m \Id[V_i]^{\otimes {f_{i0}}} \otimes \bigotimes_{i = 1}^m \bigotimes_{\ell = 1}^r h_{i\ell}^{\otimes f_{i\ell}}\right).
  \end{multline*}
  Therefore, the invariants are spanned by scalar contraction expressions of the form
  \[
    \bigotimes_{i = 1}^m \Tr_{\pi} \left( \bigotimes_{k = 1}^p x_k^{\otimes d_k} \otimes \bigotimes_{i = 1}^m \left(\Id[V_i]^{\otimes f_{i0}} \otimes \bigotimes_{\ell = 1}^r h_{i\ell}^{\otimes f_{i\ell}}\right)\right).
  \]
  The identity tensors can be removed using \cref{lem:contraction-remove-id}.
\end{proof}

The following definition formalizes the objects that appear in \cref{thm:contraction-invariants-multiple}, which can then be summarized as follows:
if we know tensor invariant generators for each~$\group{G}_i$, the invariants of any tensor tuple representation of $\group{G} = \prod_{i=1}^m \group{G}_i$ are spanned by the corresponding contraction invariants.

\begin{definition}
  Let $V$ be a vector space and let $\group{G}$ be a reductive subgroup of $\GL(V)$.
  If $\mathcal{T}(V)^{\group{G}} = \left<h_1, \dots, h_r\right>_{\Tr}$, with homogeneous~$h_k \in \tspace{V}{u_k}{v_k}$, then we call $h_1, \dots, h_k$ \emph{tensor invariant generators} for $\group{G}$.

  The invariants of the form~\eqref{eq:contraction-invariant-multiple} are called \emph{contraction invariants} for the tensor tuple representation $\rep{\bigoplus_{k = 1}^p \tspace{\vtuple{V}}{\ituple{a}_k}{\ituple{b}_k}}{\group{G}}$.
\end{definition}

We list here tensor invariant generators for some classical groups, determined in~\cite{Sch08-first-fundamental-MR2379100} over $\bbC$, but having coordinates in $\bbQ$, so valid over every field of characteristic zero, and the corresponding contraction invariants.
We first discuss the case that~$\group{G} = \GL(\vtuple{V})$.
In this case, there are only trace invariants.

\begin{lemma}\label{lem:gl invariants}
  It holds that
  $\mathcal{T}(V)^{\GL(V)} = \left<\varnothing\right>_{\Tr}$.
  Therefore, the invariant algebra for the action of~$\GL(\vtuple{V})$ on any~$X\nobreak=\nobreak \bigoplus_{k = 1}^p \tspace{\vtuple{V}}{\ituple{a}_k}{\ituple{b}_k}$ is linearly generated by invariants of the form
    \[
        F(x_1, \dots, x_p) = \bigotimes_{i = 1}^m \Tr_{\pi_i} \left(\bigotimes_{k = 1}^p x_k^{\otimes d_k}\right)
    \]
\end{lemma}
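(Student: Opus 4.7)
I would split the claim into its two parts and treat them in order.

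For $\mathcal{T}(V)^{\GL(V)} = \langle \varnothing \rangle_{\Tr}$, the inclusion $\supseteq$ is immediate: $\Id \in V \otimes V^*$ is $\GL(V)$-invariant, and tensor products and contractions are $\GL(V)$-equivariant, so the sub-wheeled PROP generated by $\Id$ (i.e.\ by $\varnothing$) sits inside the invariants. For the reverse inclusion I would argue grade by grade. Acting by the scalar matrices $\lambda\Id_V \in \GL(V)$, any homogeneous invariant $t \in \tspace{V}{a}{b}$ must satisfy $\lambda^{a-b} t = t$ for all $\lambda\in\bbC^*$, which forces $a = b$ or $t = 0$. On the balanced piece $\tspace{V}{a}{a} \cong \End(V^{\otimes a})$, the $\GL(V)$-invariants correspond to $\GL(V)$-equivariant endomorphisms of $V^{\otimes a}$. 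By the classical first fundamental theorem of invariant theory for $\GL(V)$ (equivalently Schur--Weyl duality), this space is linearly spanned by the permutation operators $P_\pi$, $\pi \in \Symg_a$. The paper already records that each $P_\pi$ is a contraction of identity tensors,
\[
  P_{\pi} = \Tr^{a+1,a+2,\dots,2a:2a}_{\pi(1),\pi(2),\dots,\pi(a):2a}\bigl(\Id^{\otimes 2a}\bigr),
\]
and hence lies in $\langle \varnothing \rangle_{\Tr}$. This yields the first claim.

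I expect the only genuinely nontrivial step to be invoking the Schur--Weyl / first fundamental theorem statement that $\End_{\GL(V)}(V^{\otimes a}) = \mathrm{span}\{P_\pi : \pi \in \Symg_a\}$; everything else is equivariance bookkeeping and the scalar-matrix argument. Note in particular that I only need the \emph{surjection} $\bbC[\Symg_a] \twoheadrightarrow \End_{\GL(V)}(V^{\otimes a})$, not injectivity, so the statement holds uniformly in $\dim V$.

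The second claim is then an immediate specialization of \cref{thm:contraction-invariants-multiple}. Applying that theorem with $\group{G}_i = \GL(V_i)$ and using part one to conclude that each $\mathcal{T}(V_i)^{\GL(V_i)}$ is generated as a sub-wheeled PROP by the empty set, I may take $r_i = 0$. No tensor invariant generators $h_{i\ell}$ remain, so the general formula \eqref{eq:contraction-invariant-multiple} collapses precisely to the stated one.
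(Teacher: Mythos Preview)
Your proposal is correct. Note that the paper does not actually prove this lemma in the text: part one is attributed to Schrijver's first fundamental theorem work (the lemma is introduced with ``We list here tensor invariant generators for some classical groups, determined in~\cite{Sch08-first-fundamental-MR2379100}\dots''), and part two is signalled by the word ``Therefore'' as an immediate application of \cref{thm:contraction-invariants-multiple}. Your argument for part one---scalar matrices force balancedness, then Schur--Weyl identifies $(\tspace{V}{a}{a})^{\GL(V)}$ with the span of the $P_\pi$, which are contractions of identity tensors---is exactly the classical route underlying the cited reference, and your part two is precisely the intended specialization of \cref{thm:contraction-invariants-multiple} with $r_i=0$.
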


Next we consider products of special linear groups.

\begin{lemma}
    It holds that $\mathcal{T}(V)^{\SL(V)} = \left<\omega, \omega^*\right>_{\Tr}$, where $\omega \in V^{\otimes (\dim V)}$ and $\omega^* \in (V^*)^{\otimes (\dim V)}$ are nonzero antisymmetric tensors.
  Therefore, the invariant algebra for the action of~$\SL(\vtuple{V})$ on $X = \bigoplus_{k = 1}^p \tspace{\vtuple{V}}{\ituple{a}_k}{\ituple{b}_k}$ is linearly generated by invariants of the form
  {\small\[
    F(x_1, \dots, x_p) = \bigotimes_{i = 1}^m \Tr_{\pi_i}  \left(\bigotimes_{k = 1}^p x_k^{\otimes d_k} \otimes \bigotimes_{i = 1}^m (\omega_i^{\otimes f_i} \otimes (\omega^*_i)^{\otimes f'_i}) \right).
  \]}%
  where $\omega_i \in V_i^{\otimes (\dim V_i)}$ and $\omega^*_i \in (V_i^*)^{\otimes (\dim V_i)}$ are antisymmetric tensors.
\end{lemma}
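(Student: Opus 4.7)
The plan is to verify the claimed equality $\mathcal{T}(V)^{\SL(V)} = \left<\omega, \omega^*\right>_{\Tr}$ by proving both inclusions, and then derive the corollary on tensor tuple representations as an immediate instance of \cref{thm:contraction-invariants-multiple}.

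For the inclusion $\left<\omega, \omega^*\right>_{\Tr} \subseteq \mathcal{T}(V)^{\SL(V)}$, it suffices to check that the generators $\omega$ and $\omega^*$ are $\SL(V)$-invariant. This is the defining property of the special linear group: an antisymmetric tensor of top degree is a volume form, preserved precisely by elements of determinant one. The identity tensor $\Id \in V \otimes V^*$ is even $\GL(V)$-invariant and hence also $\SL(V)$-invariant. Since tensor products and contractions are $\GL(V)$-equivariant (hence $\SL(V)$-equivariant) by construction, the sub-wheeled PROP generated by $\{\omega, \omega^*\}$ is contained in $\mathcal{T}(V)^{\SL(V)}$.

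For the reverse inclusion, I would invoke the first fundamental theorem of invariant theory for $\SL(V)$ in its wheeled-PROP formulation, which is precisely the content of \cref{thm:dm prop} combined with the explicit description of tensor generators for classical groups in \cite{Sch08-first-fundamental-MR2379100,DM23-wheeled-props-MR4568127}. Classically (Weyl), the $\SL(V)$-invariants of polynomial functions of tuples of vectors in $V$ and covectors in $V^*$ are generated by the bilinear pairings between $V$ and $V^*$ (which correspond to contractions with $\Id$) and by determinants of $\dim V$ vectors or covectors (which correspond to contractions against $\omega^*$ or $\omega$, respectively). Translating this statement from polynomial invariants back to invariant tensors via the duality used in the proof of \cref{thm:contraction-invariants-multiple} gives exactly the claim that $\mathcal{T}(V)^{\SL(V)}$ is generated as a sub-wheeled PROP by $\omega$ and $\omega^*$.

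With the first part in hand, the second statement of the lemma is a direct application of \cref{thm:contraction-invariants-multiple} with $\group{G}_i = \SL(V_i)$ and tensor invariant generators $h_{i,1} = \omega_i$ and $h_{i,2} = \omega_i^*$ for each factor. Substituting into the general contraction-invariant expression \eqref{eq:contraction-invariant-multiple} yields the stated form, with exponents $f_i$ and $f'_i$ in place of the generic $f_{i\ell}$.

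The main obstacle is really just the second inclusion in the first part, which is the classical first fundamental theorem for $\SL(V)$ repackaged in the wheeled-PROP language; this is not reproved here but cited from \cite{Sch08-first-fundamental-MR2379100,DM23-wheeled-props-MR4568127}. Everything else is a direct bookkeeping exercise, modulo the care required to verify that the generators $\omega$ and $\omega^*$ can be chosen with rational (indeed integer) coefficients, which is necessary if we want the reductions derived from these invariants to be defined over $\bbF$.
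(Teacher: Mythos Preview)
Your proposal is correct and matches the paper's approach: the paper does not prove this lemma but simply lists it among the tensor invariant generators for classical groups cited from \cite{Sch08-first-fundamental-MR2379100}, with the second part following immediately from \cref{thm:contraction-invariants-multiple}. Your write-up adds helpful detail on the easy inclusion and on how the classical first fundamental theorem translates to the wheeled-PROP language, but the underlying argument is the same.
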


For the orthogonal and symplectic groups, one analogously uses the corresponding bilinear forms.

\begin{lemma}\label{lem:o sp invariants}
  Let $g \in V^* \otimes V^*$ be a nondegenerate symmetric bilinear form.
  Then, $\mathcal{T}(V)^{\Og(V, g)} = \left<g, \bar{g}\right>_{\Tr}$ where $\bar{g} \in V \otimes V$ is a nonzero bilinear form on $V^*$ invariant under $\Og(V, g)$.
  The same is true for $\Sp(V, g)$ with $g, \bar{g}$ being tensors corresponding to the symplectic forms.
\end{lemma}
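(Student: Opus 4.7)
The inclusion $\left<g, \bar g\right>_{\Tr} \subseteq \mathcal{T}(V)^{\Og(V,g)}$ is immediate: by definition, $\Og(V,g)$ preserves $g$, and since $\bar g$ is determined by $g$ via the nondegeneracy, it is also $\Og(V,g)$-invariant; the identity tensor is invariant under all of $\GL(V)$; and tensor products and contractions are $\GL(V)$-equivariant operations that preserve $\Og(V,g)$-invariance. The substantive content of the lemma is the reverse inclusion $\mathcal{T}(V)^{\Og(V,g)} \subseteq \left<g, \bar g\right>_{\Tr}$.

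My plan is to reduce this to the classical First Fundamental Theorem of invariant theory for the orthogonal group on purely contravariant tensors. The nondegeneracy of $g$ gives an $\Og(V,g)$-equivariant isomorphism $\phi \colon V \to V^*$ sending $v \mapsto \Tr^{1}_{1}(v \otimes g)$, with inverse $\psi \colon V^* \to V$ sending $w \mapsto \Tr^{1}_{1}(\bar g \otimes w)$; the defining duality between $g$ and $\bar g$ is precisely the statement that $\phi$ and $\psi$ are mutually inverse. Iterating $\psi$ on each covariant factor yields an $\Og(V,g)$-equivariant isomorphism $\Phi_{a,b} \colon \tspace{V}{a}{b} \to V^{\otimes (a+b)}$ implemented by tensor products with $\bar g$ followed by suitable contractions, with inverse $\Phi_{a,b}^{-1}$ implemented in the same fashion using $g$.

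Given an invariant $t \in (\tspace{V}{a}{b})^{\Og(V,g)}$, the image $\Phi_{a,b}(t) \in (V^{\otimes (a+b)})^{\Og(V,g)}$ is again invariant. By the classical First Fundamental Theorem (Weyl~\cite{Weyl-book}), the space $(V^{\otimes n})^{\Og(V,g)}$ vanishes when $n$ is odd, and for $n = 2k$ it is linearly spanned by the Brauer tensors obtained by permuting the factors of $\bar g^{\otimes k}$, i.e.\ by tensors of the form $(\pi, \id)\cdot \bar g^{\otimes k}$ for $\pi \in \Symg_n$. Each such Brauer tensor lies in $\left<g, \bar g\right>_{\Tr}$: the permutation action can be implemented via tensor products with $\Id$ together with contractions by \cref{prop:contraction-simple}(\ref{prop:contraction-simple-3}), so $(\pi, \id)\cdot \bar g^{\otimes k}$ is a composition of tensor products and contractions involving only $\bar g$ and $\Id$. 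Pulling back by $\Phi_{a,b}^{-1}$, which is itself built from tensor products and contractions with $g$, expresses $t$ as an element of $\left<g, \bar g\right>_{\Tr}$, as required.

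For the symplectic case, the argument is strictly parallel: $g$ is now a nondegenerate antisymmetric form, $\bar g$ its dual (also $\Sp(V,g)$-invariant), and together they induce $\Sp(V,g)$-equivariant isomorphisms $V \cong V^*$. The classical First Fundamental Theorem for $\Sp(V,g)$ analogously describes $(V^{\otimes n})^{\Sp(V,g)}$ as being spanned by (signed) Brauer tensors built from $\bar g$, so the remainder of the argument carries over unchanged. The main subtlety is the bookkeeping required to express Brauer tensors as elements of the sub-wheeled PROP and to track signs in the symplectic case; this is routine given \cref{prop:contraction-simple} and the fact that $\Symg_n$ is generated by cyclic permutations.
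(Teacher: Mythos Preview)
Your proof is correct. The paper itself does not prove this lemma; it is listed among several tensor invariant generators ``determined in~\cite{Sch08-first-fundamental-MR2379100}'' and stated without argument. Your reduction to Weyl's classical First Fundamental Theorem for $(V^{\otimes n})^{\Og(V,g)}$ via the $\Og(V,g)$-equivariant isomorphism $V\cong V^*$, followed by the observation (supported in the paper by \cref{prop:contraction-simple}(\ref{prop:contraction-simple-3}) and the remark after it) that permutations of factors lie in any sub-wheeled PROP, is exactly the standard route and is how Schrijver's result specializes here. The only thing to be slightly careful about is that $\Phi_{a,b}$ and its inverse are genuinely expressed via tensor products and contractions with $\bar g$ and $g$ (so that both directions stay inside $\langle g,\bar g\rangle_{\Tr}$), which you have stated and which is routine to verify.
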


The nondegenerate form allows identifying~$V \cong V^*$ and accordingly one can always reduce to tensor tuple representations without covariant indices.
In this case we do not need the dual form to span the invariants:

\begin{corollary}
    Let $g_i \in V^*_i \otimes V^*_i$ be nondegenerate symmetric bilinear forms.
    Then, the invariant algebra for the action of~$\Og(\vtuple{V}, \ituple{g}) = \prod_{i=1}^m \Og(V_i, g_i)$ on $X\nobreak=\nobreak \bigoplus_{k = 1}^p \vtuple{V}^{\otimes a_k}$ is linearly generated by invariants of the form
    \[
        F(x_1, \dots, x_p) = \bigotimes_{i = 1}^m \Tr_{\pi_i}  \left(\bigotimes_{k = 1}^p x_k^{\otimes d_k} \otimes \bigotimes_{i = 1}^m g_i^{\otimes f_i} \right)
    \]
    The same is true for $\Sp(\vtuple{V},\ituple{g})$ if we take the $g_i$ to be the symplectic forms of the~$V_i$.
\end{corollary}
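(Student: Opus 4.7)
The plan is to derive the statement from Theorem~\ref{thm:contraction-invariants-multiple} applied to the tensor invariant generators provided by Lemma~\ref{lem:o sp invariants}, and then to eliminate from the resulting expressions every occurrence of the dual forms $\bar g_i \in V_i \otimes V_i$ in favour of additional copies of $g_i$ and of scalar factors. The orthogonal and symplectic cases will be treated uniformly, the only difference being whether $g_i$ and $\bar g_i$ are symmetric or antisymmetric, which at most introduces harmless signs.

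Concretely, I would proceed by induction on the total number $\sum_i f'_i$ of $\bar g_i$-factors appearing in a contraction invariant
\[
F(x_1,\dots,x_p) = \bigotimes_{i=1}^m \Tr_{\pi_i}\Bigl(\bigotimes_{k=1}^p x_k^{\otimes d_k} \otimes \bigotimes_{i=1}^m g_i^{\otimes f_i} \otimes \bigotimes_{i=1}^m \bar g_i^{\otimes f'_i}\Bigr).
\]
The decisive observation is that, because each $x_k$ lies in the purely contravariant space $\vtuple V^{\otimes \ituple{a}_k}$, every $V_i^*$-index is contributed by some $g_i$. In particular, for any fixed copy of $\bar g_i$, its two $V_i$-indices must be contracted with two $V_i^*$-indices belonging to copies of $g_i$. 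The sole algebraic input needed is the inverse-form identity
\[
\Tr^{1:2}_{1:2}\bigl(\bar g_i \otimes g_i\bigr) = c\, \Id[V_i]
\]
for some nonzero $c$, which follows from nondegeneracy of $g_i$ and the uniqueness (up to scalar) of the invariant form $\bar g_i$ supplied by Lemma~\ref{lem:o sp invariants}; we may normalise $\bar g_i$ so that $c=1$.

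From this identity, two reduction rules emerge. If both $V_i$-indices of the chosen $\bar g_i$ are contracted with the two $V_i^*$-indices of a single copy of $g_i$, that sub-block evaluates to the scalar $\dim V_i$, which may be pulled outside, giving a contraction invariant of the same form with $f_i$ and $f'_i$ each reduced by one. If instead the two $V_i$-indices of $\bar g_i$ are contracted with $V_i^*$-indices of two distinct copies $g_i^A$ and $g_i^B$, then applying the inverse-form identity twice splices $g_i^A$ and $g_i^B$ together through $\bar g_i$ and collapses the block to a single $g_i$, whose two free $V_i^*$-indices inherit the roles previously played by the uncontracted indices of $g_i^A$ and $g_i^B$ (any resulting $\Id[V_i]$-tensors being absorbed via Proposition~\ref{prop:contraction-simple}(\ref{prop:contraction-simple-3})); this again reduces both $f_i$ and $f'_i$ by one. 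Iterating closes the induction.

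The main obstacle is purely bookkeeping: tracking how the global contraction permutations $\pi_i$ transform under the local block rewriting, so that the output is verifiably another contraction invariant of the prescribed form. Once the rewriting is formulated cleanly at the block level, no further subtleties arise; the symplectic case is handled identically, with antisymmetry of $g_i$ (and of $\bar g_i$) producing at worst an overall sign that is absorbed into the coefficient.
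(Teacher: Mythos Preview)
Your proposal is correct and follows essentially the same approach as the paper. Both arguments start from Theorem~\ref{thm:contraction-invariants-multiple} with the generators $g_i,\bar g_i$ of Lemma~\ref{lem:o sp invariants}, observe that every $V_i$-index of a $\bar g_i$ must meet a $V_i^*$-index of some $g_i$ (since the $x_k$ are purely contravariant), and then use the identity $\Tr^1_1(g_i\otimes\bar g_i)=\Id[V_i]$ together with the removal of identity tensors (Lemma~\ref{lem:contraction-remove-id}) to eliminate all $\bar g_i$; your two-case induction simply spells out what the paper compresses into a single sentence.
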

\begin{proof}
    \Cref{thm:contraction-invariants-multiple} gives a spanning set consisting of invariants
    \[
      F(x_1, \dots, x_p) = \Tr_{\pi} \left(\bigotimes_{k = 1}^p x_k^{\otimes d_k} \otimes \bigotimes_{i=1}^m g_i^{\otimes f_{i1}} \otimes \bar{g}_i^{\otimes f_{i2}} \right)
    \]
    Since among the tensors $x_1, \dots, x_p, g_1, \dots, \bar{g}_m$ only the~$g_i$ contains $V_i^*$-factors, in this expression every factor of every $\bar{g_i}$ tensor must be contracted with some factor of some $g_i$ tensor.
    Using the identity $\Tr^1_1 (g_i \otimes \bar{g}_i) = \Id$ and \cref{lem:contraction-remove-id}, we can transform the expression into the required form without $\bar{g_i}$ in the product.
\end{proof}

Finally we consider the symmetric group~$\group{G} = \Symg_n$, which we can think of the subgroup of permutation matrices in~$\GL_n$.

\begin{lemma}\label{lem:sym invariants}
    Denote by $\Symg_n$ the symmetric group on $n$ embedded in $\GL_n$ as permutation matrices.
    It holds that
    $\mathcal{T}(\bbC^n)^{\Symg_n} = \left<g, h\right>_{\Tr}$,
    where $g = \sum_{i = 1}^n e_i^* \otimes e_i^* \in (\bbC^{n*})^{\otimes 2}$ and $h = \sum_{i = 1}^n e_i \otimes e_i \otimes e_i \in (\bbC^n)^{\otimes 3}$.
    Therefore, the invariant algebra for the action of~$\Symg_n$ on~$X = \bigoplus_{k = 1}^p (\bbC^n)^{\otimes a_k}$ is linearly generated by invariants of the form
    \[
        F(x_1, \dots, x_p) = \Tr_{\pi} \left(\bigotimes_{k = 1}^p x_k^{\otimes d_k} \otimes g^{\otimes e_1} \otimes h^{\otimes e_2} \right).
    \]
\end{lemma}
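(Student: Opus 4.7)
The plan is to prove the first assertion $\mathcal{T}(\bbC^n)^{\Symg_n} = \langle g, h\rangle_{\Tr}$; the statement about the invariants of a general tensor tuple representation then follows immediately from \cref{thm:contraction-invariants-multiple} applied with $g$ and $h$ as tensor invariant generators. The inclusion $\langle g, h\rangle_{\Tr} \subseteq \mathcal{T}(\bbC^n)^{\Symg_n}$ is direct: a quick check shows that $g$ and $h$ are $\Symg_n$-invariant, and sub-wheeled PROP operations preserve $\Symg_n$-equivariance.

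For the reverse inclusion, I would first identify a convenient spanning set of each graded piece $(\tspace{\bbC^n}{a}{b})^{\Symg_n}$. Writing an arbitrary tensor in the standard basis, $\Symg_n$-invariance of its coefficients translates into the condition that the coefficient of $e_{i_1} \otimes \cdots \otimes e_{i_a} \otimes e_{j_1}^* \otimes \cdots \otimes e_{j_b}^*$ depends only on the set partition of $[a+b]$ induced by the tuple $(i_1, \dots, i_a, j_1, \dots, j_b)$ (two positions are in the same block precisely when their entries coincide). Consequently the invariant subspace is spanned by the \emph{partition tensors}
\[
s_P = \sum_{f \colon \mathrm{blocks}(P) \to [n]} \; \bigotimes_{k = 1}^{a + b} \epsilon_k\bigl(e_{f(B(k))}\bigr),
\]
where $P$ runs over set partitions of $[a+b]$, $B(k)$ denotes the block containing $k$, and $\epsilon_k$ is the identity on basis vectors at contravariant positions and sends $e_i$ to $e_i^*$ at covariant positions.

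The heart of the argument is to realize each $s_P$ inside $\langle g, h\rangle_{\Tr}$. Up to permuting contravariant and covariant factors independently, $s_P$ equals the tensor product, over blocks $B_j$ of $P$, of \emph{diagonal tensors} $d_k^{a,b} := \sum_{i=1}^n e_i^{\otimes a} \otimes (e_i^*)^{\otimes b}$, where $k = \abs{B_j}$ and $a, b$ count the contravariant/covariant positions in $B_j$. Such permutations of tensor factors are themselves implementable by contractions with $\Id$ by \cref{prop:contraction-simple}(\ref{prop:contraction-simple-3}), so it suffices to produce every $d_k^{a,b}$ with $k \geq 1$ inside the sub-wheeled PROP. Moreover, any $d_k^{a, b}$ with $b \geq 1$ can be obtained from $d_k^{a+1, b-1}$ by tensoring with $g$ and contracting one contravariant factor into one factor of $g$ (which effectively replaces a single $e_i$ by $e_i^*$), so the problem further reduces to exhibiting the purely contravariant diagonals $d_k^{k,0}$ for every $k \geq 1$.

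Finally, I would build $d_k^{k,0}$ by induction on $k$. The base cases are $d_3^{3,0} = h$ and $d_1^{1,0} = \Tr^{2,3}_{1,2}(h \otimes g) = \sum_i e_i$, and for every other $k \geq 2$ the formula
\[
d_k^{k,0} = \Tr^{k-1, k}_{1, 2}\bigl(d_{k-1}^{k-1, 0} \otimes h \otimes g\bigr)
\]
does the job, because the $g$-contraction forces the summation index of $d_{k-1}^{k-1,0}$ to agree with that of $h$, collapsing the double sum to $\sum_i e_i^{\otimes (k-2)} \otimes e_i^{\otimes 2} = \sum_i e_i^{\otimes k}$. The main conceptual obstacle is the translation from the combinatorial description of the invariant subspace (via set partitions) into the algebraic language of diagonal tensors built by tensor contractions; once this dictionary is in place, the remaining contraction identities and the induction reduce to direct calculations.
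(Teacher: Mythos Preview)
Your proof is correct. The paper takes a shorter route: it cites Schrijver's result \cite{Sch08-first-fundamental-MR2379100} to obtain directly that $\mathcal{T}(\bbC^n)^{\Symg_n}$ is generated (as a sub-wheeled PROP) by the diagonal tensors $h_k = \sum_i e_i^{\otimes k}$ for all $k \geq 1$ together with $g$, and then observes that all the $h_k$ can be built from $h_3 = h$ and $g$ via the contraction identities $h_{k+1} = \Tr^{1,4}_{1,2}(g \otimes h_3 \otimes h_k)$ and $h_{k-2} = \Tr^{1,2}_{1,2}(g \otimes h_k)$.

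Your argument is more self-contained: rather than invoking Schrijver, you identify the partition tensors $s_P$ as spanning each graded invariant space and decompose them into tensor products of diagonal tensors $d_k^{a,b}$, thereby reproving the relevant part of Schrijver's theorem directly. Your inductive construction of $d_k^{k,0}$ is essentially the same contraction identity the paper uses. What you gain is a proof that does not rely on an external reference; what the paper gains is brevity by outsourcing the combinatorial step to the literature.
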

\begin{proof}
    Let $V = \bbC^n$.
    Schrijvers~\cite[Eq.~43]{Sch08-first-fundamental-MR2379100} gives two systems of generators for the contraction-closed algebra $\mathcal{T}(V)^{\group{\Symg_n}}$.
    One of them consists of tensors $h_k = \sum_{i = 1}^n e_i^{\otimes k} \in V^{\otimes k}$ for all $k \geq 1$ and the tensors $f = \sum_{i = 1}^n e_i \otimes e_i$ and $g = \sum_{i = 1}^n e_i^* \otimes e_i^*$, which are generators for the orthogonal group containing $\Symg_n$. We have $h_2 = f$, so $f$ can be removed in this case.
    This system of generators is redundant as $h_1, h_2$ and $h_k$ for $k > 3$ can be constructed as contractions of $h_3$ and $g$ using the identities $h_{k + 1} = \Tr^{1,4}_{1,2} (g \otimes h_3 \otimes h_k)$ and $h_{k - 2} = \Tr^{1,2}_{1,2} (g \otimes h_k)$.
\end{proof}

\section{Reductions between tensor tuple sequences}\label{sec:reductions}
In this section, we describe our main results, which are efficient reductions that clarify the computational complexity of tensor tuple sequences.
In \cref{subsec:gl}, we establish the study the action of the general linear group and identify several problems that are complete for~$\OCIclass$.
In \cref{subsec:o sp}, we show that actions of the orthogonal and symplectic groups can always be reduced to the preceding, so~$\OCIclass_\Og \cup \OCIclass_\Sp \subseteq \OCIclass$.
In \cref{subsec:gi}, we relate the graph isomorphism problem to a tensor tuple sequence; this proves that~$\GIclass \subseteq \OCIclass$.
In \cref{subsec:u}, we use related methods to show that the tensor isomorphism problem for the unitary group can be reduced to an orbit closure intersection problem for closed orbits; in particular, this shows that $\TIclass_\Ug \subseteq \OCIclass \cap \TIclass$.

\subsection{Reductions and completeness for products of general linear groups}\label{subsec:gl}
In this section we prove the completeness of several specific orbit closure intersection problems for~$\OCIclass$.
To achieve this, we first establish several efficient algebraic reductions between tensor tuple sequences of certain formats.
These reductions are then composed to establish completeness.

Balanced tensors play an important role in the construction of $\GL(\vtuple{V})$-invariants.
The following lemma shows that every tensor tuple sequence can be reduced to a balanced one.

\begin{lemma}\label{lem:GL-to-balanced}
    For every $\tworow{\ituple{a}_1}{\ituple{b}_1}, \dots, \tworow{\ituple{a}_p}{\ituple{b}_p} \in \bbN^{m} \times \bbN^{m}$, there exist~$\ituple{c}_1, \dots, \ituple{c}_q \in \bbN^m$ and an efficient algebraic reduction from
    $\rep{\bigoplus_{k = 1}^p \tspace{\vtuple{V}}{\ituple{a\ituple{}}_k}{\ituple{b\ituple{}}_k}}{\GL(\vtuple{V})}$
    to the balanced sequence
    $\rep{\bigoplus_{\ell = 1}^{q}\tspace{\vtuple{V}}{\ituple{\ituple{c}_\ell}}{\ituple{\ituple{c}_\ell}}}{\GL(\vtuple{V})}$.
\end{lemma}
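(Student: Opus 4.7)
The plan is to exploit the invariant-theoretic characterization of reductions (\cref{thm:reflect}) together with the description of $\GL(\vtuple{V})$-invariants on tensor tuple representations as contraction invariants (\cref{lem:gl invariants}). By \cref{lem:gl invariants}, every invariant on $\bigoplus_{k=1}^p \tspace{\vtuple{V}}{\ituple{a}_k}{\ituple{b}_k}$ is linearly spanned by scalar contractions of $\bigotimes_{k=1}^p x_k^{\otimes d_k}$ where $\ituple{d} = (d_1,\dots,d_p) \in \bbN^p$ satisfies the balancing condition $\sum_{k=1}^p d_k \ituple{a}_k = \sum_{k=1}^p d_k \ituple{b}_k$, since only balanced tensors admit complete contractions. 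This condition cuts out a rational polyhedral cone $C \subseteq \bbR^p$, so by \cref{thm:hilbert} the monoid $C \cap \bbN^p$ admits a finite integral Hilbert basis $\ituple{d}^{(1)}, \dots, \ituple{d}^{(q)}$.

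Given this Hilbert basis, I would set $\ituple{c}_\ell = \sum_{k=1}^p d^{(\ell)}_k \ituple{a}_k = \sum_{k=1}^p d^{(\ell)}_k \ituple{b}_k$ for $\ell \in [q]$ and define the reduction coordinate-wise as
\[
  \alpha_{\ituple{n}}(x_1, \dots, x_p) = (y_1, \dots, y_q), \qquad y_\ell = \bigotimes_{k=1}^p x_k^{\otimes d^{(\ell)}_k} \in \tspace{\vtuple{V}}{\ituple{c}_\ell}{\ituple{c}_\ell},
\]
possibly composed with a fixed permutation of tensor factors to land in the target standard tensor space. Each $\alpha_{\ituple{n}}$ is $\GL(\vtuple{V})$-equivariant, so \cref{lem:beta crit} (with $\beta = \id$) shows that $\alpha_{\ituple{n}}$ preserves closure equivalence. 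Since the Hilbert basis depends only on the fixed data $\{(\ituple{a}_k,\ituple{b}_k)\}_k$ and not on $\ituple{n}$, and since each coordinate of $\alpha_{\ituple{n}}$ is a fixed monomial in the coordinates of the $x_k$, the maps are uniformly and efficiently computable over $\bbF$, and the target dimensions are polynomial in $\ituple{n}$.

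For reflection of closure equivalence, I would invoke criterion (c) of \cref{thm:reflect}: it suffices to show that every contraction invariant on $X$ is the pullback of a contraction invariant on $Y = \bigoplus_{\ell=1}^q \tspace{\vtuple{V}}{\ituple{c}_\ell}{\ituple{c}_\ell}$. Given such an invariant $F(x_1,\dots,x_p) = \bigotimes_{i=1}^m \Tr_{\pi_i}(\bigotimes_{k=1}^p x_k^{\otimes d_k})$, the balancing of this contraction forces $\ituple{d} \in C \cap \bbN^p$, so we may write $\ituple{d} = \sum_{\ell=1}^q n_\ell \ituple{d}^{(\ell)}$ with $n_\ell \in \bbN$. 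Rearranging the tensor factors yields
\[
  \bigotimes_{k=1}^p x_k^{\otimes d_k} \;=\; (\text{permutation}) \cdot \bigotimes_{\ell=1}^q y_\ell^{\otimes n_\ell},
\]
so by absorbing the permutation into the contraction permutations $\pi_i$ we obtain new permutations $\pi'_i$ with $F = G \circ \alpha_{\ituple{n}}$ for the contraction invariant $G(y_1,\dots,y_q) = \bigotimes_{i=1}^m \Tr_{\pi'_i}(\bigotimes_{\ell=1}^q y_\ell^{\otimes n_\ell}) \in \bbC[Y]^{\GL(\vtuple{V})}$. This establishes that $\{G \circ \alpha \mid G \in \bbC[Y]^{\GL(\vtuple{V})}\}$ contains a spanning set of $\bbC[X]^{\GL(\vtuple{V})}$, hence is separating, and so $\alpha$ reflects closure equivalence.

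The only real obstacle is bookkeeping: matching the covariant and contravariant factor positions between the tensor $\bigotimes_k x_k^{\otimes d_k}$ and $\bigotimes_\ell y_\ell^{\otimes n_\ell}$ requires a careful index permutation, but this can be absorbed into the contraction bijections $\pi_i$ using \cref{prop:contraction-simple}(\ref{prop:contraction-simple-1}). Everything else is formal once the Hilbert basis provides the combinatorial backbone.
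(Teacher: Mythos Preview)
Your proposal is correct and follows essentially the same approach as the paper's proof: both use the integral Hilbert basis of the balancing cone to define the map $\alpha$, invoke equivariance and \cref{lem:beta crit} for preservation, and then show every contraction invariant on $X$ factors through $\alpha$ via the Hilbert basis decomposition of its multidegree, applying \cref{thm:reflect} to conclude reflection. The only cosmetic difference is that you are slightly more explicit about the factor-permutation bookkeeping, which the paper absorbs silently into the notation.
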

\begin{proof}
    Consider the set of all tuples $\ituple{d} = (d_1, \dots, d_p) \in \bbN^p$ such that the tensor product $\bigotimes_{k = 1}^p x_k^{\otimes d_k}$ is balanced for $x_k \in \tspace{\vtuple{V}}{\ituple{a}_k}{\ituple{b}_k}$.
    These tuples are exactly the integer points in the rational polyhedral cone~$C$
    given by inequalities $d_k \geq 0$ and linear equations $\sum_{k = 1}^p a_k d_k = \sum_{k = 1}^p b_k d_k$ enforcing the balance.
    By \cref{thm:hilbert}, there is an integral Hilbert basis~$\ituple{h}_1, \dots, \ituple{h}_q \in C \cap \bbZ^p$.

    Set~$\ituple{c}_\ell = \sum_{k = 1}^p \ituple{a}_k h_{\ell k}$ and define a map
    \[\alpha \colon \bigoplus_{k = 1}^p \tspace{\vtuple{V}}{\ituple{a}_k}{\ituple{b}_k} \to \bigoplus_{\ell = 1}^{q}\tspace{\vtuple{V}}{\ituple{c_\ell}}{\ituple{c_\ell}}\]
    given by
    \[ \alpha(x_1, \dots, x_p) = (y_1, \dots, y_q)\quad \text{where } y_\ell = \bigotimes_{k = 1}^p x_k^{\otimes h_{\ell k}}.\]
    Because the map is $\GL(\vtuple{V})$-equivariant it preserves closure equivalence by \cref{lem:beta crit}.

    We will now prove that it also reflects closure equivalence.
    To see this, recall from \cref{lem:gl invariants} that the invariant algebra for $\rep{\bigoplus_{k = 1}^p \tspace{\vtuple{V}}{\ituple{a}_k}{\ituple{b}_k}}{\GL(\vtuple{V})}$ is linearly spanned by contraction invariants of the form
    \[
        G(x_1, \dots, x_p) = \bigotimes_{i = 1}^m \Tr_{\pi_i} \left( \bigotimes_{k = 1}^p x_k^{\otimes d_k} \right),
    \]
    for some permutations $\pi_i$ and where the expression in parentheses is balanced.
    The latter means that $\sum_{k = 1} a_k d_k = \sum_{k = 1} b_k d_k$, so~$\ituple{d} = (d_1,\dots,d_p) \in C \cap \bbZ^p$, hence it can be represented as $\ituple{d} = \sum_{\ell = 1}^q e_\ell \ituple{h}_\ell$ for certain $e_\ell \in \bbN$.
    It follows that $\bigotimes_{k = 1}^p x_k^{\otimes d_k} = \bigotimes_{\ell=1}^q y_\ell^{\otimes e_\ell}$, where $(y_1, \dots, y_q) = \alpha(x_1, \dots, x_p)$.
    Thus the invariant~$G$ can be represented as $F \circ \alpha$, where
    \[
        F(y_1, \dots, y_q) = \bigotimes_{i = 1}^m \Tr_{\pi_i} \left( \bigotimes_{\ell = 1}^q y_\ell^{\otimes e_\ell} \right)
    \]
    is a contraction invariant for $\rep{\bigoplus_{\ell = 1}^{q}\tspace{\vtuple{V}}{\ituple{c_\ell}}{\ituple{c_\ell}}}{\GL(\vtuple{V})}$.
    Applying \cref{thm:reflect}, we see that the map~$\alpha$ reflects closure equivalence.

    By applying this construction to~$\vtuple{V} = (\bbC^{n_1}, \dots, \bbC^{n_m})$, we get an algebraic reduction~$\alpha_{\ituple{n}}$ from $\rep{\bigoplus_{k = 1}^p \tspace{\vtuple{V}}{\ituple{a}_k}{\ituple{b}_k}}{\GL(\vtuple{V})}$ to $\rep{\bigoplus_{\ell = 1}^{q}\tspace{\vtuple{V}}{\ituple{c_\ell}}{\ituple{c_\ell}}}{\GL(\vtuple{V})}$, where~$\ituple{n}=(n_1,\dots,n_m)$.
    It is clear that these maps are efficiently computable.
    Thus we have constructed an efficient algebraic reduction.
\end{proof}

\begin{remark}
    It is possible that no balanced tensor can be constructed from the original tensor tuple.
    In this case we get a reduction to the trivial sequence $\rep{0}{\GL(\vtuple{V})}$, which means that every two vectors are closure equivalent because $0$ is contained in every orbit closure (every tensor tuple is in the null cone).
\end{remark}

The following lemma allows to further simplify balanced tensor tuple sequences to direct sums of the \emph{same} tensor space.

\begin{lemma}\label{lem:GL-balanced-to-sum}
    Let $\ituple{c}_1,\dots,\ituple{c}_q$ and $\ituple{c} \in \bbN^m$ with $\ituple{c} \geq \ituple{c_\ell}$ for all~$\ell$ (where~$\geq$ is co\-ord\-i\-nate-wise inequality).
    Then there is an efficient algebraic reduction from~$\rep{\bigoplus_{\ell = 1}^{q}\tspace{\vtuple{V}}{\ituple{c_\ell}}{\ituple{c_\ell}}}{\GL(\vtuple{V})}$ to~$\rep{\left(\tspace{\vtuple{V}}{\ituple{c}}{\ituple{c}}\right)^{\oplus q}}{\GL(\vtuple{V})}$.
\end{lemma}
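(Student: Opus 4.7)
The plan is to embed each summand of the source into $\tspace{\vtuple{V}}{\ituple{c}}{\ituple{c}}$ by tensoring with appropriate identity tensors. Concretely, for $V_i = \bbC^{n_i}$ I define
\[
\alpha_{\ituple{n}}(x_1, \dots, x_q) = (y_1, \dots, y_q), \qquad y_\ell := x_\ell \otimes \bigotimes_{i = 1}^m \Id[V_i]^{\otimes(c_i - c_{\ell,i})}.
\]
Since each $\Id[V_i]$ is $\GL(V_i)$-invariant, this map is $\GL(\vtuple{V})$-equivariant (in particular linear), so by \cref{lem:beta crit} it preserves closure equivalence. It is clearly efficiently computable, being a tuple of monomials with integer coefficients, and the output dimension is polynomially bounded in~$\ituple{n}$.

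To verify that $\alpha_{\ituple{n}}$ reflects closure equivalence, I will apply \cref{thm:reflect}(c) by showing that every contraction invariant on the source is a pullback of some invariant on the target. By \cref{lem:gl invariants}, an arbitrary such contraction invariant has the form
\[
G(x_1, \dots, x_q) = \bigotimes_{i = 1}^m \Tr_{\pi_i}\Bigl(\bigotimes_{\ell = 1}^q x_\ell^{\otimes d_\ell}\Bigr),
\]
and after expanding each $y_\ell^{\otimes d_\ell}$ the target tensor $\bigotimes_\ell y_\ell^{\otimes d_\ell}$ factors, up to reordering, as $\bigotimes_\ell x_\ell^{\otimes d_\ell}$ tensored with $\bigotimes_i \Id[V_i]^{\otimes D_i}$, where $D_i := \sum_\ell d_\ell(c_i - c_{\ell,i}) \ge 0$.

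The last step is to build $F$ as a contraction invariant on the target whose pullback recovers $G$. For each $i$ I will choose a permutation $\sigma_i$ that (a) acts as $\pi_i$ on the indices originating from the $x_\ell$ factors and (b) pairs each $V_i$-factor of a copy of $\Id[V_i]$ with the matching $V_i^*$-factor of the same copy. The two blocks of indices are disjoint inside $\bigotimes_\ell y_\ell^{\otimes d_\ell}$, so such a $\sigma_i$ exists and is purely combinatorial bookkeeping. Since $\Tr(\Id[V_i]) = n_i$, each self-contraction contributes the scalar $n_i$, giving
\[
F \circ \alpha_{\ituple{n}} \;=\; \prod_{i = 1}^m n_i^{D_i} \cdot G.
\]
The scalar $\prod_i n_i^{D_i}$ is nonzero (the case $n_i = 0$ is trivial, so we may assume $n_i \ge 1$), so every contraction invariant $G$ lies in the span of the pullbacks $F \circ \alpha_{\ituple{n}}$. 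Since contraction invariants form a separating set by \cref{thm:mumford} and \cref{lem:gl invariants}, condition (c) of \cref{thm:reflect} is satisfied, so $\alpha_{\ituple{n}}$ reflects closure equivalence as required. The only point needing care is the combinatorial definition of $\sigma_i$, but as the two index blocks do not interact this is entirely routine.
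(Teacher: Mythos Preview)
Your proof is correct and defines exactly the same map as the paper. The difference is only in how reflection of closure equivalence is established: the paper observes that tensoring with the (nonzero) identity tensors yields an \emph{injective} $\GL(\vtuple{V})$-equivariant linear map, so \cref{ex:injective-morphism} immediately gives both preservation and reflection in one stroke, with no invariant-theoretic computation needed. Your explicit argument via contraction invariants and \cref{thm:reflect} is the general-purpose technique the paper uses elsewhere when no injective morphism is available, but here it is extra work; the slicker route is simply to note injectivity.
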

\begin{proof}
    Define $\bar{\ituple{c}}_\ell = \ituple{c} - \ituple{c}_\ell$ for~$\ell\in[q]$.
    Note that every balanced tensor space
    $\tspace{\vtuple{V}}{\ituple{c}_\ell}{\ituple{c}_\ell}$
    contains a $\GL(\vtuple{V})$-invariant tensor~$\Id_{\bar{c}_\ell}$ corresponding to the identity map on $\vtuple{V}^{\ot \bar{\ituple{c}}_\ell}$.
    Define $\GL(\vtuple{V})$-equivariant injective linear maps~$\iota_\ell \colon \tspace{\vtuple{V}}{\ituple{c}_\ell}{\ituple{c}_\ell} \to \tspace{\vtuple{V}}{\ituple{c}}{\ituple{c}}$ by tensor with this identity: $\iota_\ell(x) = x \otimes \Id_{\bar{c}_\ell}$.
    Then, $\iota = \iota_1 \oplus \dots \oplus \iota_q$ is an~$\GL(\vtuple{V})$-equivariant injective linear map from
    $\bigoplus_{\ell = 1}^{q} \tspace{\vtuple{V}}{\ituple{c}_\ell}{\ituple{c}_\ell}$ to~$\left(\tspace{\vtuple{V}}{\ituple{c}}{\ituple{c}}\right)^{\oplus q}$.
    By \cref{ex:injective-morphism}, it follows that~$\iota$ preserves and reflects closure equivalence.
    By applying this construction to~$\vtuple{V} = (\bbC^{n_1}, \dots, \bbC^{n_m})$, we obtain an efficient algebraic reduction.
\end{proof}

We can also reduce any tensor tuple sequences to one for a single~$\GL(W)$.

\begin{lemma}\label{lem:GL-to-one-space}
    Let $\tworow{\ituple{a}_1}{\ituple{b}_1}, \dots, \tworow{\ituple{a}_p}{\ituple{b}_p} \in \bbN^{m} \times \bbN^{m}$ and define~$A_k = \sum_{i = 1}^m a_{k,i}$ and~$B_k = \sum_{i = 1}^m b_{k,i}$.
    Then there is an efficient algebraic reduction from~$\rep{\bigoplus_{k = 1}^p \tspace{\vtuple{V}}{\ituple{a}_k}{\ituple{b}_k}}{\GL(\vtuple{V})}$ to~$\rep{\bigoplus_{k = 1}^p \tspace{W}{A_k}{B_k}}{\GL(W)}$.
\end{lemma}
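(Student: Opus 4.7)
The plan is to take $W = \bigoplus_{i = 1}^{m} V_i$ (specialized to the coordinate space $W_{\ituple{n}} = \bbC^{n_1 + \dots + n_m}$), and define $\alpha$ as the natural linear embedding that views a tensor in~$\tspace{\vtuple{V}}{\ituple{a}_k}{\ituple{b}_k} = \bigotimes_{i=1}^{m} V_i^{\otimes a_{k,i}} \otimes (V_i^*)^{\otimes b_{k,i}}$ as a tensor in $\tspace{W}{A_k}{B_k} = W^{\otimes A_k} \otimes (W^*)^{\otimes B_k}$, by first reordering so that all contravariant factors precede all covariant factors and then inserting each factor into the corresponding summand of $W$ or $W^*$ using the inclusions $V_i \hookrightarrow W$ and $V_i^* \hookrightarrow W^*$ coming from the direct-sum structure. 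In coordinates this amounts to a permutation of indices followed by zero-padding, so the sequence of maps $\alpha_{\ituple{n}}$ is efficiently computable and the target parameter $q(\ituple{n}) = n_1 + \dots + n_m$ is linear in $\ituple{n}$.

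To see that $\alpha$ preserves closure equivalence, I would apply \cref{lem:beta crit} with the algebraic group homomorphism $\beta \colon \GL(\vtuple{V}) \to \GL(W)$ sending $(\group{g}_1, \dots, \group{g}_m)$ to the block-diagonal operator acting as $\group{g}_i$ on~$V_i$. By construction of $\alpha$ from the inclusions $V_i \hookrightarrow W$ and $V_i^* \hookrightarrow W^*$, we have $\alpha(\group{g} \cdot x) = \beta(\group{g}) \cdot \alpha(x)$ for all $\group{g} \in \GL(\vtuple{V})$ and all $x$.

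To see that $\alpha$ reflects closure equivalence, I would invoke \cref{thm:reflect} together with \cref{lem:gl invariants}: it suffices to exhibit, for every contraction invariant of $\GL(\vtuple{V})$ on $X$, a $\GL(W)$-invariant on $Y$ whose pullback along $\alpha$ equals it. A typical contraction invariant has the form
\[
F(x_1, \dots, x_p) = \prod_{i = 1}^{m} \Tr[V_i]_{\pi_i}\Bigl(\bigotimes_{k = 1}^{p} x_k^{\otimes d_k}\Bigr),
\]
where the $V_i$-block of the tensor in parentheses is balanced for every~$i$; summing the balance conditions over~$i$ gives $\sum_k d_k A_k = \sum_k d_k B_k$, hence $\bigotimes_k \alpha(x_k)^{\otimes d_k}$ is balanced for~$W$. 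I would then assemble a single permutation $\pi$ on $\sum_k d_k A_k$ elements by recording, for each occurrence of a $V_i$- or $V_i^*$-factor inside some $x_k^{\otimes d_k}$, its position among the factors of the reordered $\alpha(x_k)^{\otimes d_k}$, and pairing these positions exactly as dictated by $\pi_1, \dots, \pi_m$. Setting $G(y_1,\dots,y_p) = \Tr_\pi \bigl(\bigotimes_k y_k^{\otimes d_k}\bigr)$ gives a $\GL(W)$-contraction invariant, and the key observation is that under the identification $W = \bigoplus_i V_i$ the canonical pairing $W \otimes W^* \to \bbC$ is block-diagonal: a single contracted pair $(u, v^*)$ vanishes whenever $u \in V_i$ and $v^* \in V_j^*$ with $i \neq j$, and otherwise reduces to the pairing $V_i \otimes V_i^* \to \bbC$. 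Since each factor of $\alpha(x_k)$ carries the memory of which $V_i$ it came from, and $\pi$ was chosen to match these origins, every ``cross'' pairing contributes zero and the remaining contribution factors as $\prod_i \Tr[V_i]_{\pi_i}$, so $G \circ \alpha = F$.

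The main obstacle is the bookkeeping required to turn the tuple of pairings $(\pi_1, \dots, \pi_m)$ into a single permutation $\pi$ compatible with the reordering step of~$\alpha$; once this indexing is set up, the vanishing of cross-pairings follows from the direct-sum decomposition of the trace on $W$, and the proof concludes by \cref{thm:reflect} that $\alpha$ reflects closure equivalence, so the sequence $\alpha_{\ituple{n}}$ is the desired efficient algebraic reduction.
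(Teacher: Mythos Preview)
Your proposal is correct and follows essentially the same approach as the paper: take $W=\bigoplus_i V_i$, use the block-diagonal embedding $\beta$ together with \cref{lem:beta crit} for preservation, and for reflection use \cref{thm:reflect} by rewriting an arbitrary contraction invariant on the $\vtuple{V}$-side as a single-permutation contraction invariant on the $W$-side via the fact that the $W$-pairing restricts to the $V_i$-pairings on matching blocks. The paper's proof differs only in packaging, writing the key compatibility as the identity $(\bar\iota_i u)\cdot(\iota_i v)=u\cdot v$ and phrasing the bookkeeping as ``$\iota_{\tworow{\ituple a}{\ituple b}}$ respects contractions,'' which is exactly your block-diagonality observation.
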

\begin{proof}
    Let $\vtuple{V} = (V_1, \dots, V_m)$ be a tuple of vector spaces, and let $W = \bigoplus_{i = 1}^m V_i$.
    Let $\iota_i$ be the injection of $V_i$ into $W$ as a direct summand, and let $\bar\iota_i$ be the injection of $V_i^*$ into $W^* \cong \bigoplus_{i = 1}^m V_i^*$.
    For every~$\tworow{\ituple{a}}{\ituple{b}}$ we have an injective linear map $\iota_{\tworow{\ituple{a}}{\ituple{b}}}\colon \tspace{\vtuple{V}}{\ituple{a}}{\ituple{b}} \to \tspace{W}{A}{B}$ where~$A = \sum_{i = 1}^m a_i$ and~$B = \sum_{i = 1}^m b_i$, defined by $\iota_{\tworow{\ituple{a}}{\ituple{b}}} = \bigotimes_{i = 1}^m \iota_i^{\otimes a_i} \otimes \bigotimes_{i = 1}^m \bar\iota_i^{\otimes b_i}$.
    This gives us an injective linear map~$\iota = \bigoplus_{k = 1}^p \iota_{\tworow{\ituple{a}_k}{\ituple{b}_k}}$ from $\bigoplus_{k = 1}^p \tspace{\vtuple{V}}{\ituple{a}_k}{\ituple{b}_k}$ to $\bigoplus_{k = 1}^p \tspace{W}{A_k}{B_k}$.
    Let~$\beta$ be the standard embedding of~$\GL(\vtuple{V})$ into~$\GL(W)$ given by $\beta(\group{g}_1, \dots, \group{g}_m) = \group{g}_1 \oplus \dots \oplus \group{g}_m$.
    Then the maps~$\iota_i, \bar\iota_i$ are equi\-vari\-ant with respect to~$\beta$, so the same is true for~$\iota$, and \cref{lem:beta crit} shows that~$\iota$ preserves closure equivalence.

    To prove that it also reflects closure equivalence, recall from \cref{lem:gl invariants} that the invariants for $\rep{\bigoplus_{k = 1}^p \tspace{\vtuple{V}}{\ituple{a}_k}{\ituple{b}_k}}{\GL(\vtuple{V})}$ are linearly spanned by contraction invariants of the form
    \begin{align}\label{eq:G in single GL reduction}
        G(x_1, \dots, x_p) = \bigotimes_{i = 1}^m \Tr_{\pi_i} \left( \bigotimes_{k = 1}^p x_k^{\otimes d_k} \right).
    \end{align}
    Note that if $x \in \tspace{\vtuple{V}}{\ituple{a}}{\ituple{b}}$ and $y \in \tspace{\vtuple{V}}{\ituple{c}}{\ituple{d}}$, then the tensors $\iota_{\tworow{\ituple{a}}{\ituple{b}}}(x) \otimes \iota_{\tworow{\ituple{c}}{\ituple{d}}}(y)$ and $\iota_{\tworow{\ituple{a} + \ituple{c}}{\ituple{b} + \ituple{d}}} (x \otimes y)$ are obtained from each other by a permutation of factors.
    In addition, since for $u \in V_i^*$ and $v \in V_i$ we have $(\bar\iota_i u) \cdot (\iota_i v) = u \cdot v$, the injections $\iota_{\tworow{\ituple{a}}{\ituple{b}}}$ respect contractions in the following way:
    \[\iota_{\tworow{\ituple{a} - \ituple{e}_j}{\ituple{b} - \ituple{e}_j}} (\Tr[V_j]^p_q x) = \Tr^P_Q (\iota_{\tworow{\ituple{a}}{\ituple{b}}} x)\]
    where $P = \sum_{i = 1}^{j - 1} a_i + p$ and $Q = \sum_{i = 1}^{j - 1} b_i + q$ are the indices of factors of $\iota_{\tworow{a}{b}} x$ corresponding to the $p$-th~$V_i$ and the $q$-th $V_i^*$ factors of $x$.
    Using these two properties, the contraction invariant in \cref{eq:G in single GL reduction} can be rewritten as $G = F \circ \iota$, where $F(y_1, \dots, y_p) = \Tr_{\pi} \bigotimes_{k = 1}^p y_k^{\otimes d_k}$ for some permutation~$\pi$.
    The latter is an invariant for $\rep{\bigoplus_{k = 1}^p (W^{\otimes A_k} \otimes (W^*)^{\otimes B_k})}{\GL(W)}$.
    By \cref{thm:reflect}, it follows that~$\iota$ reflects closure equivalence.
    As in the preceding proofs, this map can be efficiently implemented (simply copy each input coordinates to the appropriate output coordinate and set all other to~$0$) and hence we obtain an efficient algebraic reduction.
\end{proof}

The following lemma establishes a partial converse to the preceding one:
it shows that tensor tuple sequences over one vector space can be reduced to smaller tensor tuple sequences over several vector spaces, at the expense of some overhead.

\begin{lemma}\label{lem:GL-from-one-space}
    Let $\tworow{\ituple{a}_1}{\ituple{b}_1}, \dots, \tworow{\ituple{a}_p}{\ituple{b}_p} \in \bbN^{m} \times \bbN^{m}$ and define~$A_k = \sum_{i = 1}^m a_{k,i}$ and~$B_k = \sum_{i = 1}^m b_{k,i}$.
    Then there is an efficient algebraic reduction from~$\rep{\bigoplus_{k = 1}^p \tspace{W}{A_k}{B_k}}{\GL(W)}$
    to
    \[ \rep{\bigoplus_{k = 1}^p \tspace{\vtuple{V}}{\ituple{a}_k}{\ituple{b}_k} \;\op\; \bigoplus_{i = 2}^m (V_i \otimes V_1^* \oplus V_1 \otimes V_i^*)}{\GL(\vtuple{V})}. \]
\end{lemma}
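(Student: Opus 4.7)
The plan is to mirror the construction of \cref{lem:GL-to-one-space} in reverse: we take each $V_i$ to be an isomorphic copy of $W$ and use auxiliary ``gadget'' tensors in $V_i \otimes V_1^*$ and $V_1 \otimes V_i^*$ to stitch the $V_i$ back together, so that arbitrary $W$-contractions on the source can be simulated on the target side. Concretely, I would set $\vtuple{V} = (V_1, \dots, V_m)$ with $V_i = \bbC^n$ for all $i$ (where $W = \bbC^n$), identify each $V_i$ canonically with $W$, let $r_i \in V_i \otimes V_1^*$ and $\bar r_i \in V_1 \otimes V_i^*$ for $i \in \{2, \dots, m\}$ denote the tensors corresponding to these identifications, and let $\iota_k \colon \tspace{W}{A_k}{B_k} \to \tspace{\vtuple{V}}{\ituple{a}_k}{\ituple{b}_k}$ be the linear isomorphism relabeling the first $a_{k,1}$ contravariant $W$-factors as $V_1$-factors, the next $a_{k,2}$ as $V_2$-factors, and so on (and similarly for covariant factors). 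The reduction is then
\begin{align*}
  \alpha(x_1, \dots, x_p) = \bigl(\iota_1(x_1), \dots, \iota_p(x_p),\, r_2, \bar r_2, \dots, r_m, \bar r_m\bigr).
\end{align*}

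Preservation of closure equivalence is easy: the diagonal homomorphism $\beta \colon \GL(W) \to \GL(\vtuple{V})$ given by $\group{g} \mapsto (\group{g}, \dots, \group{g})$ fixes every $r_i$ and $\bar r_i$ (they correspond to the canonical identifications) and satisfies $\alpha(\group{g} x) = \beta(\group{g}) \alpha(x)$, so \cref{lem:beta crit} applies.

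The main step is to show that $\alpha$ also reflects closure equivalence. By \cref{thm:reflect} combined with \cref{lem:gl invariants}, it suffices to express every contraction invariant $G(x_1, \dots, x_p) = \Tr_\pi(\bigotimes_{k = 1}^p x_k^{\otimes d_k})$ on the source as $F \circ \alpha$ for some contraction invariant $F$ on the target. The key observation is that, under the canonical identifications, the ``identity-like'' tensor in $V_i \otimes V_j^*$ can be recovered from the gadgets as $\Tr[V_1]^{1}_{1}(r_i \otimes \bar r_j)$ (with the obvious modifications when $i$ or $j$ equals $1$, where $r_1, \bar r_1$ are interpreted as $\Id[V_1]$). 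After applying $\iota_k$, each pair contracted by $\pi$ becomes a pair $V_i \leftrightarrow V_j^*$ on the target: pairs with $i = j$ are executed directly by $\Tr[V_i]$, while each ``mixed'' pair with $i \neq j$ is re-routed through one extra $\bar r_i$ and one extra $r_j$ (or only one of them when $i = 1$ or $j = 1$), by contracting the original $V_i$-factor with the $V_i^*$-factor of $\bar r_i$, the original $V_j^*$-factor with the $V_j$-factor of $r_j$, and the remaining $V_1$ and $V_1^*$ of these two gadgets with one another. Collecting these substitutions produces a contraction invariant $F$ on the target with $F \circ \alpha = G$.

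The main obstacle, though not a deep one, is purely combinatorial bookkeeping: one needs to count the gadget copies correctly and verify that the resulting contraction is balanced in every $V_i$ individually, as required for it to be a valid contraction invariant by \cref{lem:gl invariants}. This is automatic because each mixed pair inserts gadget factors that are immediately consumed by the new contractions, so the scalar balance of $G$ (in $W$) transfers to the multi-variable balance required for $F$ (in $V_1, \dots, V_m$). Efficiency is immediate: every $\iota_k$ is merely a coordinate reinterpretation, the gadgets are fixed tensors with $0$/$1$ entries depending only on $n$, and the target dimensions all equal $n$, hence are polynomial in $n$.
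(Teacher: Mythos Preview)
Your proposal is correct and follows essentially the same approach as the paper: take each $V_i$ to be a copy of $W$, append the identity-like tensors in $V_i \otimes V_1^*$ and $V_1 \otimes V_i^*$ as gadgets, establish preservation via the diagonal embedding $\beta\colon \GL(W)\to\GL(\vtuple{V})$ and \cref{lem:beta crit}, and verify reflection by expressing every $W$-contraction invariant $G$ as $F\circ\alpha$ through \cref{thm:reflect}. The only cosmetic difference is that the paper uniformly converts \emph{every} $V_i$- and $V_i^*$-factor back to $V_1$ and $V_1^*$ before contracting, whereas you insert gadgets only for mixed pairs $i\neq j$; both arrangements yield the same conclusion.
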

\begin{proof}
    Let $W$ be a vector space, and let $\vtuple{V} = (V_1, \dots, V_m)$ be a tuple consisting of $m$ spaces isomorphic to $W$.
    Fix isomorphisms $\alpha_i \colon W \to V_i$ and let $\bar \alpha_i = (\alpha_i^{-1})^*$ be the isomorphisms~$W^* \to V_i^*$, so that $(\bar \alpha_i u) \cdot (\alpha_i v) = u \cdot v$.
    For every $\tworow{\ituple{a}}{\ituple{b}} \in \bbN^m$ such that $\sum_{i = 1}^m a_i = A$ and $\sum_{i = 1}^m b_i = B$, define the map $\alpha_{\tworow{\ituple{a}}{\ituple{b}}} \colon W^{\otimes A} \otimes (W^*)^{\otimes B} \to \tspace{\vtuple{V}}{\ituple{a}}{\ituple{b}}$ as $\alpha_{\tworow{\ituple{a}}{\ituple{b}}} = \bigotimes_{i = 1}^m \alpha_i^{\otimes a_i} \otimes \bigotimes_{i = 1}^m \bar\alpha_i^{\otimes b_i}$.
    Let $f_i = \alpha_i \alpha_1^{-1}$ and $g_i = f_i^{-1} = \alpha_1 \alpha_i^{-1}$.
    Note that $f_1 = g_1 = \id$.
    We can identify $f_i,g_i$ with tensors in~$V_i \otimes V_1^*$ and~$V_1 \otimes V_i^*$, respectively.
    Now define a map $\alpha$ from $\bigoplus_{k = 1}^p (W^{\otimes A_k} \otimes (W^*)^{\otimes B_k})$ to $\bigoplus_{k = 1}^p \tspace{\vtuple{V}}{\ituple{a}_k}{\ituple{b}_k} \oplus \bigoplus_{i = 2}^m (V_i \otimes V_1^* \oplus V_1 \otimes V_i^*)$
    given by
    \begin{multline*}
    \alpha(x_1, \dots, x_p) \\ = (\alpha_{\tworow{a_1}{b_1}}(x_1), \dots, \alpha_{\tworow{a_p}{b_p}}(x_p), f_2, g_2, \dots, f_m, g_m).
    \end{multline*}
    The linear space isomorphisms $\alpha_i$ induce group isomorphisms $\beta_i\colon \GL(W) \to \GL(V_i)$ defined as $\beta_i(\group{g}) = \alpha_i \group{g} \alpha_i^{-1}$.
    They can be combine into an embedding~$\beta \colon \GL(W) \to \GL(\vtuple{V})$ given by~$\beta(\group{g}) = (\beta_1(\group{g}), \dots, \beta_m(\group{g}))$.
    The maps~$\alpha_i$ are $\GL(W)$-equivariant in the sense that $\alpha_i (\group{g} v) = \beta_i(\group{g}) \alpha_i(v)$ for $\group{g} \in \GL(W)$ and~$v \in W$, and likewise for the maps~$\bar\alpha_i$.
    In addition, $f_i$ and $g_i$ are invariant under the action of $\GL(W)$ via $\beta_1$ on $V_1$ and $\beta_i$ on $V_i$.
    Using \cref{lem:beta crit} it follows that~$\alpha$ preserves closure equivalence.

    To see that it also reflects closure equivalence, note that for every tensor~$x$ we have
    \begin{multline*}
        \left( \bigotimes_{i = 1}^m f_i^{\otimes a_i} \otimes \bigotimes_{i = 1}^m (g_i^*)^{\otimes b_i} \right) \alpha_{\tworow{a}{b}}(x)
     \\ = \left(\bigotimes \alpha_1^{\otimes A} \otimes \bigotimes \bar\alpha_1^{\otimes B}\right)(x).
    \end{multline*}
    Now, every contraction invariant 
    \begin{align*}
        G(x_1, \dots, x_p) = \bigotimes_{i = 1}^m \Tr_{\pi_i} \left( \bigotimes_{k = 1}^p x_k^{\otimes d_k} \right)
    \end{align*}
    for $\rep{\bigoplus_{k = 1}^p (W^{\otimes A_k} \otimes (W^*)^{\otimes B_k})}{\GL(W)}$ can be written as
    \[\bigotimes_{i = 1}^m \Tr_{\pi_i} \left( \bigotimes_{k = 1}^p z_k^{\otimes d_k} \right), z_k = \left(\bigotimes_{i = 1}^m f_i^{\otimes a_{ki}} \otimes \bigotimes_{i = 1}^m (g_{i}^*)^{\otimes b_{ki}}\right) y_k, \]
    where $f_1 = g_1 = \Id$ and $(y_1, \dots, y_p, f_2, g_2, \dots, f_m, g_m) = \alpha(x_1, \dots, x_p)$.
    Note that the tensors $z_k$ can be constructed from $y_1,\dots,y_p$, $f_2, g_2$, \dots, $f_m, g_m$ using tensor contractions ($f_1 = g_1 = \Id$ can be omitted by~\Cref{lem:contraction-remove-id}), so
    this expression gives a contraction invariant~$F$ for {\small $\rep{\bigoplus_{k = 1}^p \tspace{\vtuple{V}}{\ituple{a}_k}{\ituple{b}_k} \oplus \bigoplus_{i = 1}^m (V_i \otimes V_1^* \oplus V_1 \otimes V_i^*)}{\GL(\vtuple{V})}$} such that $G = F \circ \alpha$.
    By \cref{lem:gl invariants,thm:reflect}, we conclude that~$\alpha$ reflects closure equivalence.

    The maps~$\alpha$ can be efficiently computed if $W = V_1 = \dots = V_m = \bbC^n$ and we take $\alpha_i$ to be the identity maps, since then the $\alpha_{\tworow{\ituple{a}}{\ituple{b}}}$ are also identity maps, and the~$f_i$ and~$g_i$ are the identity tensors~$\sum_{i = 1}^n e_i \otimes e_i^*$ seen as elements of~$V_i \otimes V_1^*$ and~$V_1 \otimes V_i^*$ respectively.
    Thus, we obtain an efficient algebraic reduction.
\end{proof}

Together with the previous statements, the following lemma allows reducing any tensor tuple sequence to tuples where each tensor space is a tensor product of the different vector spaces and their duals (that is, all tensor powers are equal to one).

\begin{lemma}\label{lem:GL-balanced-sum-to-multilinear-sum}
    Let $\ituple{c} \in \bbN^m$. Suppose $M = \sum_{i = 1}^m c_i >0$.
    There is an efficient algebraic reduction from~$\rep{\left(\tspace{\vtuple{V}}{\ituple{\ituple{c}}}{\ituple{\ituple{c}}}\right)^{\oplus p}}{\GL(\vtuple{V})}$ to $\rep{\left(\bigotimes_{i = 1}^M W_i \otimes W_i^*\right)^{\oplus(p+M-1)}}{\GL(\vtuple{W})}$.
\end{lemma}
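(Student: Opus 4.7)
The plan is to adapt the strategy of \cref{lem:GL-from-one-space}: reinterpret each source tensor in the multilinear target space via a canonical identification of tensor factors, and append helper tensors that pin down the resulting ``diagonal'' subgroup structure.  The wrinkle compared to \cref{lem:GL-from-one-space} is that every target summand must itself be an element of $\bigotimes_j W_j \otimes W_j^*$, so the helpers must be placed \emph{inside} this multilinear space rather than in auxiliary two-factor spaces.

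Concretely, fix a surjection $\phi \colon [M] \to [m]$ with $\abs{\phi^{-1}(i)} = c_i$ for each $i \in [m]$ and, for the sequence, set $W_j = \bbC^{n_{\phi(j)}}$.  The diagonal homomorphism $\beta \colon \GL(\vtuple{V}) \to \GL(\vtuple{W})$, $\beta(\group{g}_1,\dots,\group{g}_m) = (\group{g}_{\phi(1)},\dots,\group{g}_{\phi(M)})$, realizes the source group as a closed subgroup of the target.  I would define $\alpha(x_1,\dots,x_p) = (\bar{x}_1,\dots,\bar{x}_p, h_1,\dots,h_{M-1})$, where each $\bar{x}_k$ arises from $x_k$ via the canonical isomorphism $\bigotimes_i V_i^{\otimes c_i} \otimes (V_i^*)^{\otimes c_i} \cong \bigotimes_j V_{\phi(j)} \otimes V_{\phi(j)}^*$ (a permutation of tensor factors) combined with the identifications $V_{\phi(j)} = W_j$; this is $\beta$-equivariant by construction.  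For each $i$ with $c_i > 0$ fix an ordering $j_1^{(i)} < \dots < j_{c_i}^{(i)}$ of $\phi^{-1}(i)$ and, among the helpers, include for each $r = 2,\dots,c_i$ the ``swap'' tensor $s_r^{(i)}$ obtained by placing the transposition map $W_{j_1^{(i)}} \otimes W_{j_r^{(i)}} \to W_{j_1^{(i)}} \otimes W_{j_r^{(i)}}$, $v \otimes w \mapsto w \otimes v$, into the $j_1^{(i)}$- and $j_r^{(i)}$-factors (using the basis identifications $W_{j_1^{(i)}} = W_{j_r^{(i)}} = \bbC^{n_i}$), with the identity tensor in all remaining factors.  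This contributes $\sum_i (c_i - 1) = M - m'$ swap helpers, where $m' = \abs{\{i : c_i > 0\}}$; fill the remaining $m' - 1 \ge 0$ slots (note $m' \geq 1$ since $M > 0$) with copies of the fully invariant tensor $\bigotimes_j \Id[W_j]$.  Each swap tensor is $\beta$-invariant because the transposition on $W \otimes W$ commutes with $g \otimes g$ for every $g \in \GL(W)$, and the padding tensors are fully $\GL(\vtuple{W})$-invariant; so \cref{lem:beta crit} yields that $\alpha$ preserves closure equivalence.

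To verify that $\alpha$ reflects closure equivalence, by \cref{thm:reflect} together with \cref{lem:gl invariants} it suffices to express every source contraction invariant $G(x_1,\dots,x_p) = \bigotimes_{i=1}^m \Tr_{\pi_i}(\bigotimes_{k=1}^p x_k^{\otimes d_k})$ as a pullback $F \circ \alpha$ of a target contraction invariant.  By construction the tensor $\bigotimes_k \bar{x}_k^{\otimes d_k}$ matches $\bigotimes_k x_k^{\otimes d_k}$ factor-for-factor: the contravariant $V_i$-slot at position $\ell \in [c_i]$ of the $k$-th copy of $x$ corresponds to the contravariant $W_{j_\ell^{(i)}}$-slot of the $k$-th copy of $\bar{x}$, and analogously for covariant slots.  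An intra-$W_j$ contraction $\sigma_j$ therefore directly realizes those $\pi_i$ that only pair slots at the same position $\ell \in [c_i]$.  A general $\pi_i$ can be realized by augmenting the tensor product with appropriately many copies of the swap helpers: contracting with $s_r^{(i)}$ ``crosses'' a $W_{j_1^{(i)}}$-wire with a $W_{j_r^{(i)}}$-wire in the tensor-contraction diagram, which in the source picture exchanges positions $1$ and $r$ for a matched pair of slots (as illustrated, for $c_i = 2$ and $D = \sum_k d_k = 1$, by the elementary check that $\bar{x} \otimes s_2^{(i)}$ with a suitable intra-$W_j$ contraction produces $\sum_{a,b} x^{ab}_{ba} = \Tr_{(1\,2)}(x)$).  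Since the transpositions $(1\,r)$ together with intra-$W_j$ permutations generate the full symmetric group $S_{c_i D}$ on the slot set, any $\pi_i$ can be implemented by a suitable choice of swap-helper exponents and target permutations, producing the desired $F$ with $F \circ \alpha = G$.

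The main obstacle is the combinatorial argument in the reflection step: one must carefully track how contractions with the swap helpers shuffle slot-positions of $\bigotimes_k \bar{x}_k^{\otimes d_k}$ and verify that arbitrary $\pi_i$ are indeed realizable via appropriate choices of swap-helper exponents and intra-$W_j$ permutations.  Once this is established, efficiency is immediate---the maps $\alpha_{\ituple{n}}$ permute input coordinates and append fixed sparse integer tensors---so the sequence $(\alpha_{\ituple{n}})$ gives the required efficient algebraic reduction.
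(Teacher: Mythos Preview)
Your approach is correct and genuinely different from the paper's.  The paper proves this lemma as a four-step composition of already-established reductions: first \cref{lem:GL-to-one-space} collapses $\vtuple{V}$ to a single space $X = \bigoplus_i V_i$, giving $(\tspace{X}{M}{M})^{\oplus p}$; then \cref{lem:GL-from-one-space} splits $X$ into $M$ copies $W_1 = \dots = W_M = X$, landing in $(\bigotimes_j W_j \otimes W_j^*)^{\oplus p} \oplus \bigoplus_{j=2}^M(W_j\otimes W_1^* \oplus W_1\otimes W_j^*)$; then \cref{lem:GL-to-balanced} balances the helper summands to $W_1\otimes W_1^*\otimes W_j\otimes W_j^*$; and finally \cref{lem:GL-balanced-to-sum} pads everything up to the multilinear shape.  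You instead build a single direct reduction with $W_j = V_{\phi(j)}$, using swap tensors in the full multilinear space as helpers---an idea much closer in spirit to the later \cref{lem:GL-multilinear-sum-to-deg2-sum} than to \cref{lem:GL-from-one-space}.  Your construction yields strictly smaller target dimensions (each $W_j$ has dimension $n_{\phi(j)}$ rather than $\sum_i n_i$), while the paper's modular route avoids having to re-argue the combinatorics of wire-crossings.  Your reflection argument is admittedly sketchy at the point where group-generation must be converted into an actual contraction expression (each swap helper carries pass-through wires in \emph{every} $W_j$, and these must be routed---either spliced into existing wires or closed into loops, the latter producing harmless nonzero scalar factors), but this is no less rigorous than the paper's own treatment of the analogous step in \cref{lem:GL-multilinear-sum-to-deg2-sum}, and the tensor-network picture you outline is exactly the right way to make it precise.
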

\begin{proof}
    This is a combination of the preceding reductions.
    First, use \cref{lem:GL-to-one-space} to reduce from $\rep{(\tspace{\vtuple{V}}{\ituple{c}}{\ituple{c}})^{\oplus p}}{\GL(\vtuple{V})}$ to $\rep{(\tspace{X}{M}{M})^{\oplus p}}{\GL(X)}$.
    Then use \cref{lem:GL-from-one-space} with $\ituple{a}_k = \ituple{b}_k = (1,\dots,1)$ for all $k\in[p]$ to get a reduction to
    \[ \rep{\left(\bigotimes_{i = 1}^M W_i {\otimes}  W_i^*\right)^{\oplus p} \hspace{-6pt} \oplus \bigoplus_{i = 2}^M (W_i {\otimes} W_1^* \oplus W_1 {\otimes} W_i^*)}{\GL(\vtuple{W})}. \]
    Next, follow the proof of \cref{lem:GL-to-balanced} to reduce to (the right-hand side sums become tensor products)
    \[ \rep{\left(\bigotimes_{i = 1}^M W_i {\otimes} W_i^*\right)^{\oplus p} \hspace{-6pt} \oplus \bigoplus_{i = 2}^{M} (W_1 {\otimes} W_1^* {\otimes} W_i {\otimes} W_i^*)}{\GL(\vtuple{W})}. \]
    And finally, apply \cref{lem:GL-balanced-to-sum} to construct a reduction to
    $\rep{(\bigotimes_{i = 1}^M W_i \otimes W_i^*)^{\oplus(p + M - 1)}}{\GL(\vtuple{W})}$.
\end{proof}

Now we establish a reduction of the tensor tuple sequences obtained in the preceding lemma ones that have a \emph{constant} number of tensor factors in each summand, for a single~$\GL(X)$.

\begin{lemma}\label{lem:GL-multilinear-sum-to-deg2-sum}
    For every~$p,m\in\bbN$, there is an efficient algebraic reduction from $\rep{\left(\bigotimes_{i = 1}^m V_i \otimes V_i^*\right)^{\oplus p}}{\GL(\vtuple{V})}$ to $\rep{(X \otimes X^*)^{\oplus p} \;\oplus\; (X^{\otimes 2} \otimes (X^*)^{\otimes 2})^{\oplus m}}{\GL(X)}$.
\end{lemma}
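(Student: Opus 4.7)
My plan is to imitate the construction sketched in Section~\ref{subsec:sketch}. Set $X = V_1 \otimes \cdots \otimes V_m$ and let $\beta \colon \GL(\vtuple{V}) \to \GL(X)$ be the tensor-product embedding $\beta(\group{g}_1,\dots,\group{g}_m) = \group{g}_1 \otimes \cdots \otimes \group{g}_m$. Using the canonical $\beta$-equivariant isomorphism $\bigotimes_i(V_i \otimes V_i^*) \cong X \otimes X^*$, each source tensor $t_k$ maps naturally to a tensor $\bar{t}_k \in X \otimes X^*$. For each $i \in [m]$, I would use as a helper the tensor $r(i) \in X^{\otimes 2} \otimes (X^*)^{\otimes 2}$ corresponding to the linear operator $X^{\otimes 2} \to X^{\otimes 2}$ that swaps the $V_i$-factors of the two copies of $X$ and acts as the identity on every $V_j$-factor with $j \neq i$. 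Equivalently, under $X^{\otimes 2} \otimes (X^*)^{\otimes 2} \cong \bigotimes_{j=1}^m \bigl(V_j^{\otimes 2} \otimes (V_j^*)^{\otimes 2}\bigr)$, the $j$-th factor of $r(i)$ is the swap tensor if $j=i$ and the ``double identity'' $\Id_{V_j} \otimes \Id_{V_j}$ otherwise. Since each factor is $\GL(V_j)$-invariant, $r(i)$ is $\GL(\vtuple{V})$-invariant. The reduction is
\[ \alpha(t_1,\dots,t_p) = (\bar{t}_1,\dots,\bar{t}_p, r(1),\dots,r(m)). \]

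That $\alpha$ preserves closure equivalence follows immediately from \cref{lem:beta crit}: the $\bar{t}_k$-components are $\beta$-equivariant, while the $r(i)$-components are fixed constants (invariant under $\beta$).

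The heart of the proof is showing that $\alpha$ reflects closure equivalence. I would invoke \cref{thm:reflect} together with \cref{lem:gl invariants}, by which every contraction invariant of the source has the form
\[ G(t_1,\dots,t_p) = \Tr_{\pi_1,\dots,\pi_m}\Bigl(\bigotimes_{k=1}^p t_k^{\otimes d_k}\Bigr), \qquad D := \textstyle\sum_k d_k,\ \pi_i \in \Symg_D, \]
where $\pi_i$ specifies how the $D$ copies of the $V_i$-factor are paired with the $D$ copies of the $V_i^*$-factor. Viewing $T := \bigotimes_k \bar{t}_k^{\otimes d_k}$ as an operator on $X^{\otimes D}$, the key observation is that $G = \Tr(R \cdot T)$, where $R$ is the $\GL(\vtuple{V})$-invariant operator on $X^{\otimes D} \cong \bigotimes_i V_i^{\otimes D}$ that permutes the $V_i$-factor by $\pi_i$ for each $i$ (these operators commute across~$i$). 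I would then decompose each $\pi_i$ into a product of adjacent transpositions $\tau = (u,u+1) \in \Symg_D$ and realize each resulting elementary factor as $\Id_X^{\otimes(u-1)} \otimes r(i) \otimes \Id_X^{\otimes(D-u-1)}$, which by the defining property of $r(i)$ swaps the two $V_i$-factors at positions $u$ and $u+1$ while leaving every other $V_j$-factor untouched. Composing these elementary pieces across all $i$ and then taking the full trace against $T$ exhibits $G$ as a contraction expression in the $\bar{t}_k$, the $r(i)$, and identity tensors on $X$; the latter can be removed using \cref{lem:contraction-remove-id} at the cost of a power of $\dim X$. The outcome is a contraction invariant $F$ of the codomain satisfying $F \circ \alpha = G$, as required.

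The main obstacle is the combinatorial bookkeeping needed to verify that the elementary swap factors $\Id_X^{\otimes(u-1)} \otimes r(i) \otimes \Id_X^{\otimes(D-u-1)}$ compose correctly to the operator $R$, and that tracing their product against $T$ yields exactly the multi-contraction $\Tr_{\pi_1,\dots,\pi_m}$---this is essentially a careful index manipulation, but has to be written out to be convincing. Efficiency is routine: since $m$ is fixed, $\dim X = \prod_j n_j$ is polynomial in $\|\ituple{n}\|_\infty$; the coordinates of $\bar{t}_k$ are just a reindexing of those of $t_k$; and the $r(i)$ admit an explicit coordinate formula via Kronecker deltas. Thus the sequence $\alpha_{\ituple{n}}$ is an efficient algebraic reduction.
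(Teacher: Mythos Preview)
Your proposal is correct and follows essentially the same route as the paper's proof: the paper also sets $X=\bigotimes_i V_i$, defines the same swap tensors (denoted $S_i$ there) in $X^{\otimes 2}\otimes(X^*)^{\otimes 2}$, and verifies reflection of closure equivalence by decomposing each $\pi_i$ into transpositions realized via contractions with the $S_i$. The only cosmetic difference is that you restrict to adjacent transpositions (giving the clean form $\Id_X^{\otimes(u-1)}\otimes r(i)\otimes\Id_X^{\otimes(D-u-1)}$), whereas the paper allows arbitrary transpositions $(pq)$ by contracting $S_i$ against the $p$-th and $q$-th factors directly.
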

\begin{proof}
    Let $\vtuple{V} = (V_1, \dots, V_m)$ be an arbitrary tuple of vector spaces, and let $X = \bigotimes_{i = 1}^m V_i$.
    Note that the tensor space $X^{\otimes a} \otimes (X^*)^{\otimes b}$ is isomorphic to $\vtuple{V}^{\otimes\tworow{m \times a}{m \times b}}$ by rearranging of factors.
    Here and in the following we use the notation $m \times a=(a,\dots,a)\in\bbN^m$.
    For a permutation $\pi \in \Symg_d$ denote by $r_{i,\pi}\colon V_i^{\otimes d} \to V_i^{\otimes d}$ the map permuting the tensor factors according to the permutation~$\pi$, and by $R_{i, \tau} \colon X^{\otimes d} \to X^{\otimes d}$ the map applying $r_{i, \pi}$ to the factor~$V_i^{\otimes d}$.
    Let $s_i \in V_i^{\otimes 2} \otimes (V_i^*)^{\otimes 2}$ be the tensor corresponding to the linear map $r_{i,(12)} \colon V_i^{\otimes 2} \to V_i^{\otimes 2}$ swapping the two factors.
    Then $S_i = \bigotimes_{j = 1}^{i - 1} \Id_{V_j}^{\otimes 2} \otimes s_i \otimes \bigotimes_{j = i + 1}^{m} \Id_{V_j}^{\otimes 2} \in \vtuple{V}^{\otimes\tworow{m \times 2}{m \times 2}} \cong X^{\otimes 2} \otimes (X^*)^{\otimes 2}$ corresponds to $R_{i,(12)}$.
    Note that if $\tau = (pq)$ is a transposition in $\Symg_d$, then the map $R_{i,\tau}$ permuting two $V_i$ factors can be obtained by applying $R_{i,(12)}$ to the $p$-th and $q$-th factors of the product, or via contraction with $S_i$.
    Now define the map
    \begin{align*}
        \alpha \colon \left(\bigotimes_{i = 1}^m V_i {\otimes} V_i^*\right)^{\oplus p} \! &\to \left( X {\otimes} X^* \right)^{\oplus p} \oplus \left( X^{\otimes 2} {\otimes} (X^*)^{\otimes 2} \right)^{\oplus m}, \\
        \alpha(x_1, \dots, x_p) &= (x_1, \dots, x_p, S_1, \dots, S_m).
    \end{align*}
    Because $\alpha$ is equivariant with respect to the embedding $\beta \colon \GL(\vtuple{V}) \to \GL(X)$ given by~$\beta(\group{g}_1, \dots, \group{g}_m) = \bigotimes_{i = 1}^m \group{g}_i$, by \cref{lem:beta crit} it preserves closure equivalence.

    To show that it also reflects closure equivalence, we proceed as in the preceding proofs.
    Consider an arbitrary contraction invariant for $\rep{\left(\bigotimes_{i = 1}^m V_i \otimes V_i^*\right)^{\oplus p}}{\GL(\vtuple{V})}$,
    \begin{align*}
        G(x_1, \dots, x_n) & = \bigotimes_{i = 1}^m \Tr_{\pi_i} \left( \bigotimes_{k = 1}^p x_k^{\otimes d_k} \right) \\ & = \Tr^{\otimes m} \left( \prod_{i = 1}^m R_{i,\pi} \bigotimes_{k = 1}^p x_k^{\otimes d_k} \right).
    \end{align*}
    Decomposing the product of permutations into a product of transpositions, we can rewrite this as
    \[
        G(x_1, \dots, x_n) = \Tr^{\otimes m} \left( \prod_{j = 1}^\ell R_{i_j,\tau_j} \bigotimes_{k = 1}^p x_k^{\otimes d_k} \right).
    \]
    for suitable~$i_1,\dots,i_\ell$ and transpositions~$\tau_1,\dots,\tau_\ell$.
    Since the transpositions $R_{i_j, \tau_j}$ can be implemented by contractions with $S_{i_j}$, we see that $G = G \circ \alpha$ for some contraction invariant $F$ for $\rep{(X \otimes X^*)^{\oplus p} \oplus (X^{\otimes 2} \otimes (X^*)^{\otimes 2})^{\oplus m}}{\GL(X)}$.
    As in the previous proofs, \cref{thm:reflect} now shows that~$\alpha$ reflects closure equivalence, and we obtain an efficient algebraic reduction.
\end{proof}

We can combine the preceding results to reduce from certain tensor tuple sequences for a single~$\GL(X)$ to ones for a product group~$\GL(X) \times \GL(Y)$.

\begin{lemma}\label{lem:GL-deg2-to-peps}
    For every~$p,m\in\bbN$, there is an efficient algebraic reduction from
    \[\rep{(X \otimes X^*)^{\oplus p} \;\oplus\; (X^{\otimes 2} \otimes (X^*)^{\otimes 2})^{\oplus m}}{\GL(X)}\]
    to
    \[\rep{(X {\otimes} X^*)^{\oplus p} \oplus (X {\otimes} X^* {\otimes} Y {\otimes} Y^*)^{\oplus (m + 1)}}{\GL(X) {\times} \GL(Y)}.\]
\end{lemma}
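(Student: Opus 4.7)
The plan is to construct an efficient algebraic reduction by taking $Y$ to be a copy of $X$ (so that for each source dimension~$n = \dim X$ we also set~$\dim Y = n$) and placing a single ``swap'' tensor in the extra~$(m+1)$-th slot. Concretely, I will define
\[
    \alpha(x_1, \dots, x_p, t_1, \dots, t_m) = (x_1, \dots, x_p, \bar t_1, \dots, \bar t_m, r),
\]
where $\bar t_k \in X \otimes X^* \otimes Y \otimes Y^*$ is obtained from $t_k \in X^{\otimes 2} \otimes (X^*)^{\otimes 2}$ by relabeling its second $X$-factor as a $Y$-factor and its second $X^*$-factor as a $Y^*$-factor under the identification~$Y = X$, and $r \in X \otimes X^* \otimes Y \otimes Y^*$ has components $r^{a,c}_{b,d} = \delta^a_d \delta^c_b$. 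Equivalently, $r$ is the tensor corresponding to the linear map $X \otimes Y \to X \otimes Y$ that exchanges the two factors using the identification~$Y = X$.

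To show that $\alpha$ preserves closure equivalence, I will apply \cref{lem:beta crit} with the diagonal group homomorphism $\beta \colon \GL(X) \to \GL(X) \times \GL(Y)$, $\beta(\group{g}) = (\group{g}, \group{g})$. Under~$\beta$ the tensor~$r$ is fixed, since the action of $(\group{g}, \group{g})$ transforms its components to $(\group{g}\group{g}^{-1})^a_d (\group{g}\group{g}^{-1})^c_b = \delta^a_d \delta^c_b = r^{a,c}_{b,d}$. Moreover, each~$\bar t_k$ transforms under $\beta(\group{g})$ exactly as $t_k$ transforms under $\group{g}$ in the source, and the $x_k$'s transform identically; hence $\alpha$ is $\beta$-equivariant.

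To show that $\alpha$ reflects closure equivalence, I will invoke \cref{thm:reflect} together with \cref{lem:gl invariants}: it suffices to express every source contraction invariant
\[
    F(x_1, \dots, t_m) = \Tr_\pi \Bigl( \bigotimes_k x_k^{\otimes a_k} \otimes \bigotimes_\ell t_\ell^{\otimes b_\ell} \Bigr)
\]
as $F = G \circ \alpha$ for some target contraction invariant $G$. Under~$\alpha$ the indices of each $t_\ell$ split into kept indices (still of type~$X$/$X^*$) and converted indices (of type~$Y$/$Y^*$). Contractions in $\pi$ pairing indices of matching types can be performed directly in the target using the $X$- and $Y$-permutations. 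For mismatched pairs---a kept~$X$ paired with a converted~$Y^*$ (type ``kv''), or a converted~$Y$ paired with a kept~$X^*$ (type ``vk'')---the numbers of the two types must agree, say both equal to $M$, so they can be paired up into $M$ pairs of the form $(\text{kv}, \text{vk})$. For each such pair, inserting one copy of~$r$ and contracting its $X^*$-leg with the kept~$X$, its $Y$-leg with the converted~$Y^*$, its $X$-leg with the kept~$X^*$, and its $Y^*$-leg with the converted~$Y$ yields, via the identity $r^{a,c}_{b,d} = \delta^a_d \delta^c_b$, exactly the product of two $\delta$-functions that $\pi$ imposes. Hence $M$ copies of~$r$ and a suitable choice of $X$- and $Y$-permutations produce a target contraction invariant $G$ with $F = G \circ \alpha$.

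The main obstacle will be this combinatorial bookkeeping: verifying carefully that mismatched legs can always be paired up with legs of~$r$ and that the resulting product of $\delta$-factors contributed by the $r$'s exactly reproduces the source permutation~$\pi$. Since~$\dim Y = \dim X$, the target dimensions are polynomially bounded in those of the source; the map~$\alpha$ is defined over~$\bbF$ by copying the input coordinates and appending a fixed integer-valued tensor, and can be computed in polynomial time. Assembling these maps into a family~$\alpha_n$ gives the required efficient algebraic reduction.
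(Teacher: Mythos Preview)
Your proposal is correct and, once unrolled, yields exactly the same reduction as the paper's proof. The paper obtains this lemma by composing \cref{lem:GL-from-one-space} (which introduces $Y\cong X$ together with helper tensors $f_2\in Y\otimes X^*$ and $g_2\in X\otimes Y^*$) with \cref{lem:GL-to-balanced} (which balances these two helpers into a single tensor $f_2\otimes g_2\in X\otimes X^*\otimes Y\otimes Y^*$); with $Y=X$ and the identity isomorphism one computes $(f_2\otimes g_2)^{a,c}_{b,d}=\delta^a_d\delta^c_b$, which is precisely your swap tensor~$r$. Your direct combinatorial argument for reflection---pairing up the ``kv'' and ``vk'' mismatches and mediating each pair by one copy of~$r$---is the explicit unrolling of what those two lemmas do abstractly, so the two proofs differ only in presentation: the paper's version is modular and cites prior lemmas, while yours is self-contained and makes the index-level mechanism visible.
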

\begin{proof}
    First apply \cref{lem:GL-from-one-space} (with $m=2$ and $\ituple{a}_k = \ituple{b}_k$ equal to $(1,0)$ for $p \leq k$ and $(1,1)$ for~$p>k$) to reduce from $\rep{(X \otimes X^*)^{\oplus p} \;\oplus\; (X^{\otimes 2} \otimes (X^*)^{\otimes 2})^{\oplus m}}{\GL(X)}$ to
    \begin{multline*}\left((X \otimes X^*)^{\oplus p} \oplus (X \otimes X^* \otimes Y \otimes Y^*)^{\op m}  \right. \\ \left. {} \oplus (Y\otimes X^* \oplus X \otimes Y^* );\GL(X) \times \GL(Y)\right).\end{multline*}
    To balance the summand $Y\otimes X^* \oplus X \otimes Y^*$, we can use \cref{lem:GL-to-balanced}.
    Following its proof, a Hilbert basis of the relevant cone is given by~$(1,1)$, corresponding to~$X \otimes X^* \otimes Y \otimes Y^*$.
    Overall, we obtain reduction to {\small\[\rep{(X {\otimes} X^*)^{\oplus p} \oplus (X \otimes X^* \otimes Y \otimes Y^*)^{\oplus (m + 1)}}{\GL(X) \times \GL(Y)}.\]}
\end{proof}

Our final technical lemma allows reducing the number of summands to a constant:

\begin{lemma}\label{lem:GL-sum-to-summand}
    For every $p\in\bbN$ there exists an efficient algebraic reduction from the tensor tuple sequence $\rep{(X \otimes X^* \otimes Y \otimes Y^*)^{\oplus p}}{\GL(X)\times \GL(Y)}$ to $\rep{Z \otimes Z^* \otimes Y \otimes Y^* \;\oplus\; (Z \otimes Z^*)^{\oplus 2}}{\GL(Z) \times \GL(Y)}$, and to $\rep{Z \otimes Z^* \otimes Y \otimes Y^* \oplus Z^{\otimes 2} \otimes (Z^*)^{\otimes 2}}{\GL(Z) \times \GL(Y)}$.
\end{lemma}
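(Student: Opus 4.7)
The plan is to use a block-diagonal encoding to package the $p$ input tensors into a single ``large'' tensor, accompanied by a pair of addressing tensors that let us recover each block. Set $Z = \bbC^p \otimes X$ and identify $Z \otimes Z^* \otimes Y \otimes Y^* \cong \End(\bbC^p) \otimes \End(X) \otimes \End(Y)$, so that the $\GL(X) \times \GL(Y)$-action lifts to $\GL(Z) \times \GL(Y)$ via the embedding $\beta(\group{g}_X, \group{g}_Y) = (I_p \otimes \group{g}_X, \group{g}_Y)$. Let $E_{k\ell} \in \End(\bbC^p)$ denote the matrix units and $S = \sum_\ell E_{\ell-1,\ell}$ (indices mod $p$) be the cyclic shift. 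Package the input as the block-diagonal tensor $z = \sum_{k=1}^p E_{kk} \otimes x_k$ and introduce two addressing tensors in $Z \otimes Z^*$: the first-block projector $T_1 = E_{11} \otimes \Id_X$ and the cyclic shift $T_2 = S \otimes \Id_X$. For the first target, the map is $\alpha(x_1, \dots, x_p) = (z, T_1, T_2)$; for the second target, the map is $\alpha'(x_1, \dots, x_p) = (z, T_1 \otimes T_2)$, viewing $T_1 \otimes T_2$ as an element of $Z^{\otimes 2} \otimes (Z^*)^{\otimes 2}$. Both maps are polynomial, efficiently computable, and $\beta$-equivariant (since $T_1, T_2$ are fixed under $(I_p \otimes \group{g}_X)$-conjugation), so \cref{lem:beta crit} yields preservation of closure equivalence.

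\textbf{Reflection for the first reduction} will follow from \cref{thm:reflect} applied to the contraction invariants of \cref{lem:gl invariants}. Given a generic source invariant
\[ F(x_1, \dots, x_p) = \Tr[X]_\pi \Tr[Y]_\sigma \bigl(\bigotimes_{i=1}^d x_{k_i}\bigr), \]
the key identity is $P_k := E_{kk} \otimes \Id_X = T_2^{k-1} \cdot T_1 \cdot T_2^{p-k+1}$, so every block projector is realized as a contraction of $T_1$ with copies of $T_2$, and the product $P_k\, z$ in $\End(Z)$ equals $E_{kk} \otimes x_k$, isolating the $k$-th block. For each cycle $(k_{i_1}, \dots, k_{i_\ell})$ of $\pi$, the $\End(Z)$-product $P_{k_{i_1}} \cdot z \cdot S^{k_{i_2} - k_{i_1}} \cdot z \cdots z \cdot S^{k_{i_1} - k_{i_\ell}}$ has $\End(Z)$-trace equal to $\Tr[X](x_{k_{i_1}} \cdots x_{k_{i_\ell}})$, because the $\End(\bbC^p)$-trace cleanly factors off as $\Tr(E_{k_{i_1} k_{i_1}}) = 1$. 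Multiplying these across cycles and applying $\Tr[Y]_\sigma$ (which is unaffected by $T_1, T_2$, both carrying no $Y$-indices) produces a contraction invariant $F'(z, T_1, T_2)$ satisfying $F' \circ \alpha = F$.

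\textbf{Reflection for the second reduction} asks us to rewrite $F'$ using only copies of the single combined tensor $T = T_1 \otimes T_2$. Since $T^{\otimes k}$ equals $T_1^{\otimes k} \otimes T_2^{\otimes k}$ up to a reordering of the tensor factors, this replacement is valid precisely when the numbers of $T_1$- and $T_2$-factors appearing in $F'$ agree. I will enforce this equality by padding the contraction pattern with two kinds of disconnected decorative loops: a standalone $T_1$-loop contributes the scalar $\Tr(T_1) = \dim X$ and adds one $T_1$, while a closed loop of $p$ consecutive $T_2$'s contributes $\Tr(T_2^p) = \Tr(\Id_Z) = \dim Z$ and adds $p$ $T_2$'s. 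Since the $T_1$-count can be raised by any nonnegative integer and the $T_2$-count by any nonnegative multiple of $p$, we can always equalize the two at the cost of a nonzero scalar $c$ depending only on $p$ and $\dim X$; dividing the paired-up target invariant by $c$ yields the required $F''$ with $F'' \circ \alpha' = F$, completing the argument via \cref{thm:reflect}. The main obstacle is precisely this combinatorial balancing: since $\Tr(T_2^k) = 0$ for $0 < k < p$, finer padding of $T_2$'s is unavailable, but the coarser $T_2^p$-increments combined with unit $T_1$-increments are nevertheless sufficient to reach any desired count.
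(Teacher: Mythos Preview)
Your construction for both reductions coincides with the paper's: with $Z = X^{\oplus p}$, the paper uses exactly your block-diagonal tensor together with the shift $r = T_2$ and the first-block projector $s = T_1$, and for the second target packs them as $r \otimes s$. Your reflection argument for the first reduction is also essentially the same --- you insert shifts along each $\pi$-cycle, while the paper first extracts each $x_k$ into the first block via $(s r^{k-1}) \otimes (s r^{k-1})^*$ applied to $y$; these are rearrangements of the same contraction.

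The genuinely different step is the reflection argument for the second reduction. You equalize the $T_1$- and $T_2$-counts by padding with disconnected loops $\Tr(T_1) = \dim X$ and $\Tr(T_2^p) = \dim Z$, then regroup $T_1^{\otimes N}\otimes T_2^{\otimes N}$ as $(T_1\otimes T_2)^{\otimes N}$. The paper instead recovers $r$ and $s$ individually from $r\otimes s$ by partial contractions: tracing out one slot yields the other up to a nonzero scalar, and the remaining one is obtained from $(r\otimes s)^p = \Id_Z \otimes s$ (using $r^p=\Id_Z$) followed by a trace. This lets any contraction in $y,r,s$ be rewritten directly in $y,\,r\otimes s$ without counting. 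The paper's route is shorter and avoids the ad hoc combinatorics; your balancing argument is correct and has the mild conceptual advantage that it only needs the existence of some nonvanishing trace-word in $T_2$, not the precise identity $T_2^p=\Id_Z$. (One minor slip: with $S=\sum_\ell E_{\ell-1,\ell}$ the conjugation giving $E_{kk}\otimes\Id_X$ is $T_2^{\,1-k}\,T_1\,T_2^{\,k-1}$, not $T_2^{\,k-1}\,T_1\,T_2^{\,p-k+1}$; this does not affect the argument.)
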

\begin{proof}
    Let $X$ and $Y$ be arbitrary vector spaces, set $Z = X^{\oplus p}$ and define $\iota_i \colon X \to Z$ and $\bar\iota_i$ as the injections into $i$-th direct summand.
    Let $r,s$ be the tensors in~$Z \otimes Z^*$ that correspond to the linear maps $r(x_1, \dots, x_p) = (x_2, \dots, x_p, x_1)$ and $s(x_1, \dots, x_p) = (x_1, 0, \dots, 0)$, that is, $r$ cyclically permutes the~$p$~direct summands and $s$ is the projection onto the first summand.
    Note that $s r^{k-1}$ maps $(x_1, \dots, x_p)$ to $\iota_1(x_k) = (x_k, 0, \dots, 0)$.
    Define a map
    \begin{align*}
    \alpha\colon (X {\otimes} X^* {\otimes} Y {\otimes} Y^*)^{\oplus p} &\to (Z {\otimes} Z^* {\otimes} Y {\otimes} Y^*) \oplus (Z {\otimes} Z^*)^{\oplus 2}, \\
        \alpha(x_1, \dots, x_p) &= (y, r, s) \\ \text{where } y &= \sum_{k = 1}^p (\iota_k \otimes \bar \iota_k) x_k.
    \end{align*}
    By \cref{lem:beta crit}, $\alpha$ preserves closure equivalence because it is equivariant with respect to the embedding $\beta \colon \GL(X)\times \GL(Y) \to \GL(Z) \times \GL(Y)$ given by $(\group{g}, \group{h} \mapsto (\group{g}^{\oplus p}, \group{h})$.

    We follow the familiar recipe to prove that it also reflects closure equivalence.
    Consider a contraction invariant for $\rep{(X \otimes X^* \otimes Y \otimes Y^*)^{\oplus p}}{\GL(X) \times \GL(Y)}$,
    \[
        G(x_1, \dots, x_p) = (\Tr_{\pi_1} \otimes \Tr_{\pi_2}) \left( \bigotimes_{k = 1}^p x_k^{\ot d_k} \right).
    \]
    Because $y_k = (r^{k - 1} s) \otimes (s^* (r^*)^{k - 1})) y = (\iota_1 \otimes \bar\iota_1) x_k$, we can write this as
    \[
        G(x_1, \dots, x_p) = \left(\Tr_{\pi_1} \otimes \Tr_{\pi_2}\right) \left( \bigotimes_{k = 1}^p y_k^{\otimes d_k} \right),
    \]
    and since each~$y_k$ can be expressed as tensor contractions of $y$, $r$ and $s$, we have $G = F \circ \alpha$ for some contraction invariant $F(y, r, s)$ on $Z \otimes Z^* \otimes Y \otimes Y^* \oplus (Z \otimes Z^*)^{\oplus 2}$.
    Thus $\alpha$ reflects closure equivalence by \cref{thm:reflect} and we obtain the first of the desired efficient algebraic reduction.

    For the second, observe $r = \Tr[Z]^2_2(r \otimes s)$ and \[s = \frac{1}{\dim Z}\Tr[Z]_1^1(\Id[Z] \otimes s) = \frac{1}{\dim Z}\Tr[Z]_1^1( (r \otimes s)^p ),\]
    every tensor contraction of $y$, $r$ and $s$ can be expressed as a tensor contraction of $y$ and $r \otimes s$.
    Thus, if we define
    \begin{align*}
        \alpha' \colon \!(X {\otimes} X^* {\otimes} Y {\otimes} Y^*)^{\oplus p} &\to Z {\otimes} Z^* {\otimes} Y {\otimes} Y^* \oplus Y^{\otimes 2} {\otimes} (Y^*)^{\otimes 2}, \\
        \alpha'(x_1, \dots, x_p) &= (y, r \otimes s),
    \end{align*}
     with~$y$ as above, then one can proceed as above to see that~$\alpha'$ preserves and reflects closure equivalence.
     In this way we get the second of the desired efficient algebraic reductions.
\end{proof}

We now have all ingredients to establish the completeness of the orcit closure equivalence problems for the following simultaneous ``double conjugation'' action (that is, for the action of the gauge group for PEPS tensor networks with physical dimension~3).

\begin{theorem}\label{thm:complete PEPS}
    The OCI problem for the tensor tuple sequence
    $\rep{\left(X \otimes X^* \otimes Y \otimes Y^*\right)^{\oplus 3}}{\GL(X) \times \GL(Y)}$
    is complete for $\OCIclass$.
\end{theorem}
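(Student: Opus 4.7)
Membership in $\OCIclass$ is immediate, since the problem is itself an OCI problem for a tensor tuple sequence under a product of general linear groups. The content is $\OCIclass$-hardness, which I plan to establish by chaining together the efficient algebraic reductions already developed in this section, starting from an arbitrary tensor tuple OCI problem and massaging it into the target form.

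Given an arbitrary tensor tuple sequence $\rep{\bigoplus_{k=1}^p \tspace{\vtuple{V}}{\ituple{a}_k}{\ituple{b}_k}}{\GL(\vtuple{V})}$ representing a problem in $\OCIclass$, I would first apply \cref{lem:GL-to-balanced} to reduce to a balanced sequence, and then \cref{lem:GL-balanced-to-sum} to coalesce the summands into copies of a single balanced tensor space $\rep{(\tspace{\vtuple{V}}{\ituple{c}}{\ituple{c}})^{\op q}}{\GL(\vtuple{V})}$ (the degenerate case where the balanced space is $\bbC$ is trivial and handled directly via \cref{cor:trivial}). Then \cref{lem:GL-balanced-sum-to-multilinear-sum} yields a reduction to $\rep{(\bigotimes_{i=1}^M W_i \otimes W_i^*)^{\op q'}}{\GL(\vtuple{W})}$, \cref{lem:GL-multilinear-sum-to-deg2-sum} brings this to $\rep{(X \otimes X^*)^{\op P} \op (X^{\otimes 2} \otimes (X^*)^{\otimes 2})^{\op m}}{\GL(X)}$, and \cref{lem:GL-deg2-to-peps} further produces a reduction to $\rep{(X \otimes X^*)^{\op P} \op (X \otimes X^* \otimes Y \otimes Y^*)^{\op(m+1)}}{\GL(X) \times \GL(Y)}$.

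To unify all summands to the target type, I would use the injective equivariant linear embedding $u \mapsto u \otimes \Id_Y$ from $X \otimes X^*$ into $X \otimes X^* \otimes Y \otimes Y^*$; by \cref{ex:injective-morphism} this preserves and reflects closure equivalence, yielding $\rep{(X \otimes X^* \otimes Y \otimes Y^*)^{\op s}}{\GL(X) \times \GL(Y)}$ for some $s \geq 1$. Applying \cref{lem:GL-sum-to-summand} in its first form then collapses these into $\rep{Z \otimes Z^* \otimes Y \otimes Y^* \op (Z \otimes Z^*)^{\op 2}}{\GL(Z) \times \GL(Y)}$, and a final application of the same $\Id_Y$ embedding to the two residual $Z \otimes Z^*$ summands produces $\rep{(Z \otimes Z^* \otimes Y \otimes Y^*)^{\op 3}}{\GL(Z) \times \GL(Y)}$, which after renaming $Z$ to $X$ is exactly the desired problem. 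I do not anticipate any real obstacle: all the invariant-theoretic work is encapsulated in the preceding lemmas, and what remains is careful bookkeeping of summand counts and types along the chain to ensure that composition delivers precisely three summands of the correct form.
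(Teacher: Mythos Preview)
Your proposal is correct and follows essentially the same chain of reductions as the paper's proof. The only cosmetic difference is the order in which you apply \cref{lem:GL-balanced-to-sum} and \cref{lem:GL-deg2-to-peps} in the middle of the chain (the paper first pads the $X\otimes X^*$ summands up to $X^{\otimes 2}\otimes (X^*)^{\otimes 2}$ and then applies \cref{lem:GL-deg2-to-peps} with $p=0$, whereas you apply \cref{lem:GL-deg2-to-peps} first and then pad via the $\Id_Y$ embedding), but your $\Id_Y$ embedding step is exactly an instance of \cref{lem:GL-balanced-to-sum}, so the two routes are interchangeable.
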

\begin{proof}
We need to argue that any tensor tuple sequence $\rep{\bigoplus_{k = 1}^p \tspace{\vtuple{V}}{\ituple{a}_k}{\ituple{b}_k}}{\group{\GL(\vtuple{V})}}$ can be reduced to the sequence above.
This can be seen as follows.

First, use \cref{lem:GL-to-balanced} we reduce to a balanced tensor tuple sequence
$\rep{\bigoplus_{k = 1}^q \tspace{\vtuple{V}}{\ituple{c}_k}{\ituple{c}_k}}{\group{\GL(\vtuple{V})}}$.
Next, apply \cref{lem:GL-balanced-to-sum} and \cref{lem:GL-balanced-sum-to-multilinear-sum} to reduce to the tensor tuple sequence of the form
\[\rep{\left(\bigotimes_{i = 1}^m W_i \otimes W_i^* \right)^{\oplus r}}{\GL(\vtuple{W})},\] with $\vtuple{W} = (W_1, \dots, W_M)$ for some $M$.
Then, use \cref{lem:GL-multilinear-sum-to-deg2-sum} to reduce from this sequence to
\[\rep{(X \otimes X^*)^{\oplus r} \oplus (X^{\otimes 2} \otimes (X^*)^{\otimes 2})^{\oplus M}}{\GL(X)},\]
apply \cref{lem:GL-balanced-to-sum} to reduce to \[\rep{(X^{\otimes 2} \otimes (X^*)^{\otimes 2})^{\oplus (r + M)}}{\GL(X)},\]
and \cref{lem:GL-deg2-to-peps} to further reduce to \[\rep{(X \otimes X^* \otimes Y \otimes Y^*)^{\oplus (r + M + 1)}}{\GL(X) \times \GL(Y)}.\]
Finally, we can use \cref{lem:GL-sum-to-summand} to reduce to \[\rep{X \otimes X^* \otimes Y \otimes Y^* \oplus (X \otimes X^*)^{\oplus 2}}{\GL(X) \times \GL(Y)}\] and \cref{lem:GL-balanced-to-sum} to reduce to the required
\[\rep{\left(X \otimes X^* \otimes Y \otimes Y^*\right)^{\oplus 3}}{\GL(X) \times \GL(Y)}.\qedhere\]
\end{proof}

Next, we prove completeness of tensor tuple sequences with actions of single general linear group.

\begin{theorem}\label{thm:complete X}
The orbit closure intersection problems for the following tensor tuple sequences are complete for $\OCIclass$:
    \begin{enumerate}
        \item $\rep{\left(X^{\otimes 2} \otimes (X^*)^{\otimes 2}\right) \oplus \left(X \otimes X^*\right)^{\oplus 2}}{\GL(X)}$
        \item $\rep{\left(X^{\otimes 2} \otimes (X^*)^{\otimes 2}\right)^{\oplus 2}}{\GL(X)}$
    \end{enumerate}
\end{theorem}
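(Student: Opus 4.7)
The plan is to chain the reductions developed in the preceding lemmas to reduce any tensor tuple sequence to $\rep{(X^{\otimes 2} \otimes (X^*)^{\otimes 2})^{\oplus p}}{\GL(X)}$ for some $p$, and then reduce further to each of the two target sequences by a single-group analogue of \cref{lem:GL-sum-to-summand}. For the first part, combining \cref{lem:GL-to-balanced} with \cref{lem:GL-balanced-to-sum} and \cref{lem:GL-balanced-sum-to-multilinear-sum} reduces any tensor tuple sequence to one of the form $\rep{(\bigotimes_{i=1}^M W_i \otimes W_i^*)^{\oplus r}}{\GL(\vtuple{W})}$; then \cref{lem:GL-multilinear-sum-to-deg2-sum} followed by one more invocation of \cref{lem:GL-balanced-to-sum} yields the desired $\rep{(X^{\otimes 2} \otimes (X^*)^{\otimes 2})^{\oplus p}}{\GL(X)}$. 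This is exactly the chain of steps used in the proof of \cref{thm:complete PEPS} prior to invoking \cref{lem:GL-deg2-to-peps,lem:GL-sum-to-summand}.

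For the final step, set $Z = X^{\oplus p}$, write $\iota_k \colon X \to Z$ and $\bar\iota_k \colon X^* \to Z^*$ for the canonical injections, and let $r, s \in Z \otimes Z^*$ denote the cyclic permutation of direct summands and the projection onto the first summand, exactly as in the proof of \cref{lem:GL-sum-to-summand}. Define
\begin{align*}
    y := \sum_{k=1}^p (\iota_k^{\otimes 2} \otimes \bar\iota_k^{\otimes 2})(x_k) \in Z^{\otimes 2} \otimes (Z^*)^{\otimes 2}
\end{align*}
and consider the two candidate reductions
\begin{align*}
    \alpha(x_1, \dots, x_p) = (y, r, s), \qquad \alpha'(x_1, \dots, x_p) = (y, r \otimes s),
\end{align*}
landing in the targets of items (1) and (2), respectively. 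Both maps are equivariant under the embedding $\beta \colon \GL(X) \hookrightarrow \GL(Z)$, $\group{g} \mapsto \group{g}^{\oplus p}$, noting that $r$, $s$, and hence $r \otimes s$ are $\beta$-invariant; so \cref{lem:beta crit} immediately gives preservation of closure equivalence.

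Reflection is verified exactly as in \cref{lem:GL-sum-to-summand}: by \cref{lem:gl invariants}, any contraction invariant $G(x_1, \dots, x_p) = \Tr_\pi(\bigotimes_k x_k^{\otimes d_k})$ can be rewritten as a contraction invariant on the output side, by observing that each $x_k$ appears in $y$ as the block $(\iota_k^{\otimes 2} \otimes \bar\iota_k^{\otimes 2})(x_k)$ and can be isolated by conjugating $y$ with the projections $p_k = r^{k-1}\, s\, r^{p-k+1}$; these operator compositions are themselves tensor contractions, so the result is a contraction invariant in $y$, $r$, and $s$ (resp.~in $y$ and $r \otimes s$), and \cref{thm:reflect} concludes. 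The main obstacle is item (2), where $r$ and $s$ must be individually extracted from the single combined tensor $r \otimes s$. This is handled as in the second part of \cref{lem:GL-sum-to-summand}: the identity $\Tr(s) = 1$ yields $r = \Tr[Z]^2_2(r \otimes s)$, while $r^p = \Id_Z$ combined with the matrix product $rs = \Tr[Z]^2_1(r \otimes s)$ expresses $s = r^{p-1} \cdot (rs)$ as a tensor contraction of $p$ copies of $r \otimes s$, with any extraneous factors of $\dim Z$ absorbed using \cref{lem:contraction-remove-id}.
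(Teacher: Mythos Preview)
Your overall plan is sound and arguably more direct than the paper's route. The paper follows the proof of \cref{thm:complete PEPS} one step further, reducing first to $\rep{(X \otimes X^* \otimes Y \otimes Y^*)^{\oplus p}}{\GL(X)\times\GL(Y)}$ via \cref{lem:GL-deg2-to-peps}, then applies \cref{lem:GL-sum-to-summand} verbatim to this two-variable form, and finally collapses $X,Y$ into a single space using \cref{lem:GL-to-one-space}. You instead stop the chain at $\rep{(X^{\otimes 2}\otimes(X^*)^{\otimes 2})^{\oplus p}}{\GL(X)}$ and build a single-variable analogue of \cref{lem:GL-sum-to-summand} from scratch; this trades the detour through two variables for reproving the key lemma.

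There is, however, a genuine slip in your reflection argument. Your ``conjugation by $p_k=r^{k-1}\,s\,r^{p-k+1}$'' is (up to a harmless re-indexing) the orthogonal projection of $Z$ onto a single summand, \emph{staying in that summand}; applying it on both sides of $y$ yields $(\iota_\ell^{\otimes 2}\otimes\bar\iota_\ell^{\otimes 2})(x_\ell)$ for that $\ell$. Plugging these into $\Tr_\pi\bigl(\bigotimes_k(\cdot)^{d_k}\bigr)$ does \emph{not} reproduce $G$: whenever $\pi$ pairs a contravariant factor coming from block $k$ with a covariant factor from block $k'\neq k$, the pairing $\bar\iota_{k'}(w)(\iota_k(v))=0$ kills the term. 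The fix, exactly as in the proof of \cref{lem:GL-sum-to-summand}, is to act asymmetrically so that every extracted block lands in the \emph{same} summand: apply $q_k=s\,r^{k-1}$ to the two contravariant $Z$-factors and its transpose $q_k^T=r^{p-k+1}\,s$ to the two covariant $Z^*$-factors, obtaining $(\iota_1^{\otimes 2}\otimes\bar\iota_1^{\otimes 2})(x_k)$; then cross-contractions match those of the original $x_k$'s and \cref{thm:reflect} applies. (A smaller point: $\Tr(s)=\dim X$, not $1$, so $\Tr^2_2(r\otimes s)=(\dim X)\,r$; this introduces only harmless scalar factors.)
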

\begin{proof}
    Following the proof of the previous lemma up we obtain a reduction from an arbitrary tensor tuple sequence to
    $\rep{(X \otimes X^* \otimes Y \otimes Y^*)^{\oplus p}}{\GL(X) \times \GL(Y)}$ for some $p$.
    As the next step, apply \cref{lem:GL-sum-to-summand} to obtain a reduction to $\rep{X \otimes X^* \otimes Y \otimes Y^* \oplus (X \otimes X^*)^{\oplus 2}}{\GL(X) \times \GL(Y)}$ or
    $\rep{X \otimes X^* \otimes Y \otimes Y^* \oplus X^{\otimes 2} \otimes (X^*)^{\otimes 2}}{\GL(X) \times \GL(Y)}$.
    Finally, apply \cref{lem:GL-to-one-space} to reduce to the sequences listed in the statement of the theorem.
\end{proof}

We state one more lemma, which allows us to reduce the number of summands in a direct sum at the expense of increasing the number of tensor factors.

\begin{lemma}\label{lem:GL-sum-to-one}
    If $p \leq d!$, then there exists an efficient algebraic reduction from $\rep{(\tspace{X}{c}{c})^{\oplus p}}{\GL(X)}$ to $\rep{\tspace{X}{c + d}{c + d}}{\GL(X)}$.
\end{lemma}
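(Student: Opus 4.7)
The plan is to combine the $p$ input tensors into a single larger tensor by attaching $p$ distinct $\GL(X)$-invariant ``labels''. Since $p \leq d!$, I would choose distinct permutations $\pi_1, \ldots, \pi_p \in \Symg_d$ and use the corresponding permutation tensors $P_{\pi_k} \in \tspace{X}{d}{d}$, each of which is $\GL(X)$-invariant (see \cref{lem:gl invariants}). The reduction is then the linear map
\[
  \alpha(x_1, \ldots, x_p) \;=\; \sum_{k=1}^p x_k \otimes P_{\pi_k} \;\in\; \tspace{X}{c+d}{c+d},
\]
which is clearly efficiently computable from the coordinates of the $x_k$. Because each $P_{\pi_k}$ is $\GL(X)$-invariant, $\alpha$ is $\GL(X)$-equivariant, so \cref{lem:beta crit} immediately gives preservation of closure equivalence.

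For reflection in the bulk regime $n = \dim X \geq d$, I would appeal to \cref{ex:injective-morphism}: Schur--Weyl duality implies that the family $\{P_\pi : \pi \in \Symg_d\}$ is linearly independent in $\End(X^{\otimes d})$ whenever $\dim X \geq d$, so the chosen $P_{\pi_k}$ are also linearly independent, making $\alpha$ an injective $\GL(X)$-equivariant linear map that automatically reflects closure equivalence. This is precisely where the bound $p \leq d!$ is sharp.

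The main obstacle is the bounded small-dimension range $\dim X < d$, where the $P_{\pi_k}$ can become linearly dependent (at $\dim X = 1$ they all collapse to the scalar $1$) and $\alpha$ fails to be injective. I would handle this by setting $q(n) = \max(n, d)$: for $n < d$, embed $\bbC^n \hookrightarrow \bbC^{q(n)}$ in the first $n$ coordinates and apply the same formula inside $\tspace{\bbC^{q(n)}}{c+d}{c+d}$. The resulting map is injective and $\GL_n$-equivariant under the block-diagonal inclusion $\GL_n \subseteq \GL_{q(n)}$, so preservation still follows from \cref{lem:beta crit}. Reflection now requires an invariant-theoretic verification via \cref{thm:reflect} together with \cref{thm:contraction-invariants-multiple}: given any contraction $\GL_n$-invariant $\Tr_\tau(\bigotimes_k x_k^{\otimes d_k})$ on the source, I would exhibit it as the pullback of a target invariant of the form $\Tr_\sigma(y^{\otimes D})$ with $D = \sum_k d_k$, choosing $\sigma$ of ``block form'' not mixing $x$-slots with label-slots so that the pullback factorizes as a sum over $\vec\epsilon \in [p]^D$ of $\Tr_\tau(\bigotimes_j x_{\epsilon_j})$ weighted by scalars $\Tr_{\sigma_P}(\bigotimes_j P_{\pi_{\epsilon_j}})$; by varying $\sigma_P$ and taking linear combinations, Schur--Weyl independence in $\bbC^{q(n)}$ allows one to isolate precisely the terms of the desired multidegree $(d_1, \ldots, d_p)$ and recover the original invariant.
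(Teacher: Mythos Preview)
Your construction is exactly the paper's: the map $\alpha(x_1,\dots,x_p)=\sum_k x_k\otimes P_{\pi_k}$, with linear independence of the $P_{\pi_k}$ giving an injective $\GL(X)$-equivariant map so that \cref{ex:injective-morphism} applies. The paper simply asserts linear independence of the $P_\pi$ without restriction and stops there; you correctly observe that this fails for $\dim X<d$ and propose the fix of padding to $q(n)=\max(n,d)$ and verifying reflection via \cref{thm:reflect}. That fix is sound, with one sharpening: what you must isolate is the specific sequence $\vec\epsilon_0$, not merely its multidegree, since for fixed $\tau$ the values $\Tr_\tau(\bigotimes_j x_{\epsilon_j})$ depend on the order of the $\epsilon_j$. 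This is still achievable because the tensors $\bigotimes_j P_{\pi_{\epsilon_j}}$, being tensor products of the linearly independent $P_{\pi_k}\in\End((\bbC^{q(n)})^{\otimes d})$, are themselves linearly independent inside the semisimple centralizer algebra $\End_{\GL_{q(n)}}((\bbC^{q(n)})^{\otimes dD})$, and the trace form there is nondegenerate, so suitable linear combinations of the functionals $\Tr_{\sigma_P}$ pick out $\vec\epsilon_0$ exactly.
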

\begin{proof}
    For every permutation $\pi \in \Symg_d$ consider the tensor $P_{\pi} \in \tspace{X}{d}{d}$ corresponding to the linear map $X^{\otimes d} \to X^{\otimes d}$ permuting the factors according to the permutation $\pi$. All tensors $P_\pi$ are $\GL(X)$-invariant and linearly independent.
    It follows that for any set of permutations $\{\pi_1, \dots, \pi_p\} \subseteq \Symg_d$, the map $\iota \colon (\tspace{X}{c}{c})^{\oplus p} \to \tspace{X}{c + d}{c + d}$ defined by $\iota(x_1, \dots, x_p) = \sum_{k = 1}^p x_k \otimes P_{\pi_k}$ is an injective $\GL(X)$-equivariant linear map, and hence it is preserves and reflects closure equivalence by \cref{ex:injective-morphism}.
    As before we obtain an efficient algebraic reduction.
\end{proof}

We thus obtain completeness for a natural tensor sequence (no tuples required).

\begin{theorem}\label{thm:complete-multiconj-4}
    The orbit closure intersection problem for $\rep{X^{\otimes 4} \otimes (X^*)^{\otimes 4}}{\GL(X)}$ is complete for~$\OCIclass$.
\end{theorem}
\begin{proof}
    By \cref{thm:complete X}, the orbit closure intersection problem for $\rep{\left(X^{\otimes 2} \otimes (X^*)^{\otimes 2}\right)^{\oplus 2}}{\GL(X)}$ is $\OCIclass$-complete.
    By \cref{lem:GL-sum-to-one} (with $p=c=d=2$), it reduces to $\rep{X^{\otimes 4} \otimes (X^*)^{\otimes 4}}{\GL(X)}$.
\end{proof}

Using a similar idea, we can also prove completeness for another simple balanced sequence.

\begin{theorem}\label{thm:complete-multiconj-31}
    The orbit closure intersection problem for $\rep{X^{\otimes 3} \otimes (X^*)^{\otimes 3} \oplus X \otimes X^*}{\GL(X)}$ is complete for~$\OCIclass$.
\end{theorem}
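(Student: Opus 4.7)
The plan is to reduce from the $\OCIclass$-complete problem $\rep{X^{\otimes 2} \otimes (X^*)^{\otimes 2} \oplus (X \otimes X^*)^{\oplus 2}}{\GL(X)}$ of \cref{thm:complete X}(1) to the target via an injective $\GL(X)$-equivariant linear map, so that \cref{ex:injective-morphism} automatically yields preservation and reflection of closure equivalence. This mirrors the idea of \cref{lem:GL-sum-to-one} used in the proof of \cref{thm:complete-multiconj-4}: we pack multiple lower-degree summands into a single higher-degree slot using distinct $\GL(X)$-invariant ``label'' tensors.

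Concretely, I will define
\[
  \alpha(x, y_1, y_2) = \bigl(x \otimes \Id + y_1 \otimes P_{(12)},\; y_2\bigr) \in X^{\otimes 3} \otimes (X^*)^{\otimes 3} \oplus X \otimes X^*,
\]
where $\Id \in X \otimes X^*$ is the identity tensor and $P_{(12)} \in X^{\otimes 2} \otimes (X^*)^{\otimes 2}$ corresponds to the swap map $v \otimes w \mapsto w \otimes v$. Both are $\GL(X)$-invariant, so $\alpha$ is $\GL(X)$-equivariant; it is also visibly efficiently computable uniformly in $\dim X$, with coefficients in $\{0,1\}$.

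The main obstacle will be checking that $\alpha$ is injective, and this is precisely the reason for labelling $y_1$ with the swap $P_{(12)}$ rather than with $\Id \otimes \Id$: the naive choice would place both $x \otimes \Id$ and $y \otimes \Id \otimes \Id$ inside the same ``padded'' subspace, producing an $n^2$-dimensional kernel. With $P_{(12)}$ instead, a direct index computation settles injectivity. The vanishing of the first coordinate of $\alpha(x, y_1, y_2)$ reads
\[
  x^{a_1 a_2}_{b_1 b_2} \delta^{a_3}_{b_3} + (y_1)^{a_1}_{b_1} \delta^{a_2}_{b_3} \delta^{a_3}_{b_2} = 0
\]
for all indices; specializing to $a_3 \neq b_2$ annihilates the second term and, upon also setting $a_3 = b_3$, forces $x = 0$ (using $\dim X \geq 2$ so such indices exist). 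The remaining equation then forces $y_1 = 0$ by choosing $a_2 = b_3$ and $a_3 = b_2$. The second coordinate of $\alpha$ immediately gives $y_2 = 0$. The boundary case $\dim X = 1$ is vacuous, as every two tensors are closure equivalent on both source and target.

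Once injectivity is in hand, \cref{ex:injective-morphism} applied to the injective $\GL(X)$-equivariant linear map $\alpha$ shows that $\alpha$ preserves and reflects closure equivalence, yielding the required efficient algebraic reduction. Combined with the $\OCIclass$-completeness of the source, this establishes $\OCIclass$-completeness of $\rep{X^{\otimes 3} \otimes (X^*)^{\otimes 3} \oplus X \otimes X^*}{\GL(X)}$.
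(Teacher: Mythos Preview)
Your approach matches the paper's exactly: the same source problem from \cref{thm:complete X}(1), the same map $\alpha(x,y,z)=(x\otimes\Id + y\otimes P_{(12)},\,z)$, and the same appeal to \cref{ex:injective-morphism}. Your injectivity proof for $\dim X\ge 2$ is a direct index computation, whereas the paper uses a matrix-rank argument (a nonzero element of $\tspace{X}{2}{2}\otimes\Id$ has rank~$1$ in $\tspace{X}{2}{2}\otimes\tspace{X}{1}{1}$, while a nonzero element of $\tspace{X}{1}{1}\otimes P_{(12)}$ has rank~$\dim X$); both are fine.

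There is, however, a genuine gap in your boundary case. For $\dim X=1$ the action of $\GL_1=\bbC^*$ on every \emph{balanced} tensor space is trivial, so orbits are points and closure equivalence is \emph{equality}, not ``everything is equivalent''. In dimension~$1$ one has $P_{(12)}=\Id^{\otimes 2}$, so your first coordinate becomes $(x+y_1)\Id^{\otimes 3}$ and $\alpha$ fails to be injective (e.g.\ $(1,0,0)$ and $(0,1,0)$ have the same image but are not closure equivalent in the source). The paper repairs this by taking $q(1)=2$ and sending the $1$-dimensional instance into the $2$-dimensional target via $\alpha_1(x\,\Id^{\otimes 2},\,y\,\Id,\,z\,\Id)=(x\,\Id^{\otimes 3}+y\,(\Id\otimes P_{(12)}),\,z\,\Id)$, which is again an injective equivariant map. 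You should make an analogous adjustment.
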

\begin{proof}
    We will construct an efficient algebraic reduction from $\rep{X^{\otimes 2} \otimes (X^*)^{\otimes 2} \oplus \left(X \otimes (X^*)\right)^{\oplus 2}}{\GL(X)}$, which has $\OCIclass$-complete orbit closure intersection problem by \cref{thm:complete X}.
    Let $P_{(12)} \in \tspace{X}{2}{2}$ be the invariant tensor corresponding to the linear map $X^{\otimes 2} \to X^{\otimes 2}$ swapping two tensor factors.
    Consider the linear map
    \[\alpha \colon X^{\otimes 2} {\otimes} (X^*)^{\otimes 2} \oplus \left(X {\otimes} (X^*)\right)^{\oplus 2} \to X^{\otimes 3} {\otimes} (X^*)^{\otimes 3} \oplus X {\otimes} X^*\]
    given by $\alpha(x, y, z) = (x \otimes \Id + y \otimes P_{(12)}, z)$.
    Note that if $\dim X \geq 2$, then the subspaces $\tspace{X}{2}{2} \otimes \Id$ and $\tspace{X}{1}{1} \otimes P_{(12)}$ of $\tspace{X}{3}{3}$ intersect only in $0$.
    Indeed, consider $\tspace{X}{3}{3}$ as a tensor product $\tspace{X}{2}{2} \otimes \tspace{X}{1}{1}$.
    The matrix rank of a nonzero tensor of the form $x \otimes \Id$ with respect to this decomposition is $1$, but the rank of a nonzero tensor of the form $y \otimes P_{(12)}$ is equal to $\dim X$.
    Therefore, the sum $\tspace{X}{2}{2} \otimes \Id + \tspace{X}{1}{1} \otimes P_{(12)}$ is a direct sum and the map $\alpha$ is an injective equivariant map.
    By~\Cref{ex:injective-morphism} $\alpha$ preserves and reflects closure equivalence.

    We almost have an efficient algebraic reduction: the case $\dim X = 1$ needs to be considered separately.
    For $\dim X = 1$ the action of $\GL(X)$ on $X^{\otimes 2} \otimes (X^*)^{\otimes 2} \oplus \left(X \otimes (X^*)\right)^{\oplus 2}$ is trivial,
    so we can take in the definition of the algebaric reduction $q(1) = 2$ and the first map in the sequence
    $\alpha_1(x \Id^{\otimes 2}, y \Id, z \Id) = (x \Id^{\otimes 3} + y (\Id \otimes P_{(12)}), z \Id)$, which is again an injective equivariant map (mapping invariants to invariants).
\end{proof}

\Cref{thm:complete X,thm:complete-multiconj-4,thm:complete-multiconj-31} prove $\OCIclass$-completeness of four OCI problems for four simple balanced tensor tuple sequences.
We can extend this completeness to all balanced sequences which are more complicated than these using the following lemma.

\begin{lemma}\label{lem:balanced-GL-increase}
  If $p \leq q$ and $a_k \leq b_k$ for all $k \in [p]$, then there is an efficient algebraic reduction from  
  \[\rep{\bigoplus_{k = 1}^p V^{\otimes a_k} \otimes (V^*)^{\otimes a_k}}{\GL(V)}\] to \[\rep{\bigoplus_{k = 1}^{q} V^{\otimes b_k} \otimes (V^*)^{\otimes b_k}}{\GL(V)}.\]
\end{lemma}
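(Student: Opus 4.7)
The plan is to give an injective $\GL(V)$-equivariant linear map from the source representation to the target and invoke \Cref{ex:injective-morphism}, which then yields both preservation and reflection of closure equivalence at once. Padding by identity tensors is the natural equivariant operation available, since $\Id \in V \otimes V^*$ is $\GL(V)$-invariant; direct-summing with zero in the extra slots handles the inequality $p \leq q$.

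Concretely, for each $n \in \bbN$, set $V = \bbC^n$ and define
\[
  \alpha_n(x_1, \ldots, x_p) = \bigl(\, x_1 \otimes \Id^{\otimes (b_1 - a_1)},\ \ldots,\ x_p \otimes \Id^{\otimes (b_p - a_p)},\ 0,\ \ldots,\ 0 \,\bigr),
\]
where each $x_k \otimes \Id^{\otimes (b_k - a_k)}$ is viewed as an element of $V^{\otimes b_k} \otimes (V^*)^{\otimes b_k}$ after a fixed rearrangement of tensor factors (which is itself a $\GL(V)$-equivariant linear isomorphism). The last $q - p$ coordinates are set to $0$.

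The map $\alpha_n$ is $\GL(V)$-equivariant because $\Id$ is invariant, so $\group{g} \cdot (x_k \otimes \Id^{\otimes (b_k - a_k)}) = (\group{g} \cdot x_k) \otimes \Id^{\otimes (b_k - a_k)}$. It is injective because $\Id^{\otimes (b_k - a_k)} \neq 0$ whenever $\dim V \geq 1$, so each $x_k \otimes \Id^{\otimes (b_k - a_k)} = 0$ forces $x_k = 0$. By \Cref{ex:injective-morphism}, $\alpha_n$ preserves and reflects closure equivalence.

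Efficiency is immediate: the map simply copies each input coordinate into a prescribed output coordinate (determined by the rearrangement of factors and the identity tensor $\Id = \sum_{i=1}^n e_i \otimes e_i^*$) and sets the remaining entries to zero. Collecting the $\alpha_n$ into a sequence, with $q(n) = n$, yields the required efficient algebraic reduction. There is no essential obstacle in this lemma: once \Cref{ex:injective-morphism} is in hand, the only thing to verify is that padding with $\Id$ and zero produces an injective equivariant linear map of the correct codomain, which is immediate from the invariance of $\Id$.
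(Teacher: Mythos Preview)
Your proof is correct and takes essentially the same approach as the paper: pad each $x_k$ by $\Id^{\otimes(b_k-a_k)}$ and invoke \Cref{ex:injective-morphism}. The only cosmetic difference is that the paper fills the extra $q-p$ slots with $\Id^{\otimes b_k}$ rather than $0$; your choice of $0$ is arguably cleaner, since it keeps $\alpha_n$ genuinely linear and so makes the appeal to \Cref{ex:injective-morphism} literal.
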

\begin{proof}
  The proof is similar to \cref{lem:GL-balanced-to-sum}.
  The map
  \begin{equation*}
    \alpha \colon \bigoplus_{k = 1}^p V^{\otimes a_k} \otimes (V^*)^{\otimes a_k} \to \bigoplus_{k = 1}^{q} V^{\otimes b_k} \otimes (V^*)^{\otimes b_k}
  \end{equation*}
  sending $(x_1, \dots, x_p)$ to
  \begin{equation*}
    (x_1 \otimes \Id^{\otimes (b_1 - a_1)}, \dots, x_p \otimes \Id^{\otimes (b_p - a_p)}, \Id^{\otimes b_{p + 1}}, \dots, \Id^{\otimes b_q})
  \end{equation*}
  is an injective equivariant map, and therefore preserves and reflects closure equivalence.
  Collecting these maps into a sequence, we obtain an efficient algebraic reduction.
\end{proof}

\begin{corollary}\label{cor:complete-balanced-GL}
  The OCI problem for a tensor tuple sequence $\rep{\bigoplus_{k = 1}^q V^{\otimes a_k} \otimes (V^*)^{\otimes a_k}}{\GL(V)}$ is complete for $\OCIclass$ if $\sum_{k = 1}^q a_k \geq 4$ and at least one $a_k > 1$.
\end{corollary}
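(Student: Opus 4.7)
The plan is to argue $\OCIclass$-membership and $\OCIclass$-hardness separately. Membership is immediate from the definition of $\OCIclass$: the sequence $\rep{\bigoplus_{k = 1}^q V^{\otimes a_k} \otimes (V^*)^{\otimes a_k}}{\GL(V)}$ is manifestly a tensor tuple sequence for a single general linear group, so its OCI problem lies in $\OCIclass$.

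For $\OCIclass$-hardness, I would reduce one of the four $\OCIclass$-complete balanced tensor tuple sequences already established---namely, $V^{\otimes 4} \otimes (V^*)^{\otimes 4}$ from \cref{thm:complete-multiconj-4}, $V^{\otimes 3} \otimes (V^*)^{\otimes 3} \oplus V \otimes V^*$ from \cref{thm:complete-multiconj-31}, $(V^{\otimes 2} \otimes (V^*)^{\otimes 2})^{\oplus 2}$ from part~(2) of \cref{thm:complete X}, and $V^{\otimes 2} \otimes (V^*)^{\otimes 2} \oplus (V \otimes V^*)^{\oplus 2}$ from part~(1) of \cref{thm:complete X}---to the given sequence. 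The single tool for these reductions is \cref{lem:balanced-GL-increase}, which embeds any balanced tensor tuple sequence into another one with componentwise larger exponents and at least as many summands.

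After reordering so that $a_1 \geq a_2 \geq \dots \geq a_q$ (permuting direct summands does not change the representation), I would case-split on the two leading exponents. If $a_1 \geq 4$, apply \cref{lem:balanced-GL-increase} with source exponent pattern $(4)$. If $a_1 = 3$, the hypothesis $\sum_k a_k \geq 4$ forces $a_2 \geq 1$, and I apply the lemma with source pattern $(3, 1)$. If $a_1 = a_2 = 2$, I apply the lemma with source pattern $(2, 2)$. If $a_1 = 2$ and $a_2 \leq 1$, then the hypothesis $\sum_k a_k \geq 4$ combined with $a_k \leq 1$ for $k \geq 2$ forces at least two further indices with $a_k = 1$, so I apply the lemma with source pattern $(2, 1, 1)$. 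The remaining case $a_1 \leq 1$ is excluded by the hypothesis that some $a_k > 1$.

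There is no real obstacle here: all the heavy lifting has been done in the preceding sections, and what remains is a mechanical case analysis. The only point worth verifying is the combinatorial observation that the four exponent patterns $(4), (3, 1), (2, 2), (2, 1, 1)$ collectively dominate, in the componentwise-after-sorting sense needed by \cref{lem:balanced-GL-increase}, every tuple $(a_1, \dots, a_q)$ satisfying $\sum_k a_k \geq 4$ and $\max_k a_k \geq 2$. Since these are precisely the hypotheses of the corollary, the result follows.
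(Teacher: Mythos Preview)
Your proof is correct and follows essentially the same approach as the paper. The paper's proof is a terse two-line argument (``The four minimal cases with $\sum_{k = 1}^q a_k = 4$ are established in \cref{thm:complete X,thm:complete-multiconj-4,thm:complete-multiconj-31}, the rest follows by \cref{lem:balanced-GL-increase}''), and your proposal spells out explicitly the case analysis showing that every admissible exponent tuple dominates one of the four minimal patterns $(4)$, $(3,1)$, $(2,2)$, $(2,1,1)$ after sorting.
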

\begin{proof}
  The four minimal cases with $\sum_{k = 1}^q a_k = 4$ are established in \cref{thm:complete X,thm:complete-multiconj-4,thm:complete-multiconj-31}, the rest follows by \cref{lem:balanced-GL-increase}.
\end{proof}

\subsection{Reductions and completeness for orthogonal and symplectic groups}\label{subsec:o sp}
Let $g \in V^* \otimes V^*$ be a symmetric or symplectic form on the vector space $V$.
Interpreting $g$ as a map from $V$ to $V^*$, we get an invertible linear map equivariant with respect to the group~$\Og(V)=\Og(V,g)$ or $\Sp(V)=\Sp(V,g)$, respectively.
Therefore, when talking about tensor tuple representations of~$\Og(V)$ and~$\Sp(V)$, it is enough to consider tensor tuple representations $\bigoplus_{k = 1}^p \vtuple{V}^{\otimes \ituple{a}_p}$ that do not involve dual spaces.

\begin{theorem}\label{thm:O-to-GL}
    For every $\ituple{a}_1, \dots, \ituple{a}_p \in \bbN^m$, there exists an efficient algebraic reduction from
    $\rep{\bigoplus_{k = 1}^p \vtuple{V}^{\otimes \ituple{a}_p}}{\Og(\vtuple{V})}$
    to $\rep{\bigoplus_{k = 1}^p \vtuple{V}^{\otimes \ituple{a}_p} \oplus \bigoplus_{i = 1}^m (V_i^*)^{\otimes 2}}{\GL(\vtuple{V})}$.
\end{theorem}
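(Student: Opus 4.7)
The plan is to use the familiar ``append the invariant tensors'' construction, analogous to the reduction from graph isomorphism to a $\GL$-action on tensor tuples sketched in \cref{subsec:sketch}. Let $g_i \in V_i^* \otimes V_i^*$ denote the nondegenerate symmetric bilinear form with respect to which $\Og(V_i) = \Og(V_i, g_i)$ is defined; for $V_i = \bbC^{n_i}$ we take $g_i = \sum_{j=1}^{n_i} e_j^* \otimes e_j^*$, which has rational coordinates. Define
\[
\alpha\colon \bigoplus_{k=1}^p \vtuple{V}^{\otimes \ituple{a}_k} \;\to\; \bigoplus_{k=1}^p \vtuple{V}^{\otimes \ituple{a}_k} \;\oplus\; \bigoplus_{i=1}^m (V_i^*)^{\otimes 2},
\qquad
\alpha(x_1, \dots, x_p) = (x_1, \dots, x_p, g_1, \dots, g_m).
\]

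Preservation of closure equivalence is immediate: the $g_i$ are $\Og(V_i)$-invariant, so $\alpha$ is equivariant with respect to the inclusion $\beta\colon \Og(\vtuple{V}) \hookrightarrow \GL(\vtuple{V})$, and \cref{lem:beta crit} applies. For reflection, I plan to verify condition (c) of \cref{thm:reflect} by showing that every contraction invariant on the $\Og$-side is literally the pullback of a contraction invariant on the $\GL$-side. By the corollary following \cref{lem:o sp invariants}, any $\Og(\vtuple{V})$-invariant is a linear combination of expressions
\[
F(x_1, \dots, x_p) \;=\; \bigotimes_{i=1}^m \Tr_{\pi_i}\!\left(\bigotimes_{k=1}^p x_k^{\otimes d_k} \otimes \bigotimes_{i=1}^m g_i^{\otimes f_i}\right).
\]
Replacing each fixed tensor $g_i$ by an independent variable $y_i \in (V_i^*)^{\otimes 2}$ yields
\[
F'(x_1, \dots, x_p, y_1, \dots, y_m) \;=\; \bigotimes_{i=1}^m \Tr_{\pi_i}\!\left(\bigotimes_{k=1}^p x_k^{\otimes d_k} \otimes \bigotimes_{i=1}^m y_i^{\otimes f_i}\right),
\]
which is a contraction invariant in $\bbC[\bigoplus_k \vtuple{V}^{\otimes \ituple{a}_k} \oplus \bigoplus_i (V_i^*)^{\otimes 2}]^{\GL(\vtuple{V})}$ by \cref{lem:gl invariants}: the balancedness condition required for the contractions $\Tr_{\pi_i}$ to be defined (number of $V_i$ factors equals number of $V_i^*$ factors for each $i$) is inherited from $F$, and once that holds the same permutation $\pi_i$ describes a legal $\GL$-invariant contraction. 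By construction $F' \circ \alpha = F$, so the set $\{F' \circ \alpha\}$ spans the whole $\Og(\vtuple{V})$-invariant algebra and is therefore separating, giving reflection.

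Finally, the maps $\alpha_{\ituple{n}}$ are efficiently computable over $\bbQ$ since the standard forms $g_i$ have fixed rational entries and the rest of $\alpha$ is a copy operation, so we obtain an efficient algebraic reduction. I do not anticipate a serious obstacle here: the only thing worth double-checking is that no separate step is needed to pass between the two descriptions of $\Og$-invariants in \cref{lem:o sp invariants} (with and without the dual form $\bar{g}_i$); since our tensor tuples have no $V_i^*$-factors, the simpler description given by the corollary already covers everything, which is exactly what makes this reduction clean.
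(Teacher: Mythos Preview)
Your proposal is correct and matches the paper's proof essentially line for line: the same map $\alpha(x_1,\dots,x_p)=(x_1,\dots,x_p,g_1,\dots,g_m)$, the same preservation argument via \cref{lem:beta crit}, and the same reflection argument of replacing each fixed $g_i$ by a variable $y_i$ to obtain a $\GL(\vtuple{V})$-contraction invariant whose pullback recovers the original $\Og(\vtuple{V})$-invariant, then invoking \cref{thm:reflect}. Your remark about using the corollary to \cref{lem:o sp invariants} (so that no $\bar g_i$ appears) is exactly the simplification the paper also relies on.
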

\begin{proof}
    Let $g_i \in V_i^* \otimes V_i^*$ denote the tensor corresponding to the symmetric bilinear form invariant under $\Og(V_i)$.
    It is clear that~$\alpha(x_1,\dots,x_p) = (x_1, \dots, x_p, g_1, \dots, g_m)$ defines an equivariant map with respect to the inclusion~$\beta \colon \Og(\vtuple{V}) \to \GL(\vtuple{V})$, hence~$\alpha$ preserves closure equivalence by \cref{lem:beta crit}.

    By \cref{lem:o sp invariants}, the invariants for $\rep{\bigoplus_{k = 1}^p \vtuple{V}^{\otimes \ituple{a}_p}}{\Og(\vtuple{V})}$ are linearly spanned by the contraction invariants
    \[
        G(x_1, \dots, x_p) = \bigotimes_{i = 1}^m \Tr_{\pi_i} \left(\bigotimes_{k = 1}^p x_k^{d_k} \otimes \bigotimes_{i = 1}^m g_i^{\otimes f_i}\right).
    \]
    Note that $G = F \circ \alpha$ where
    \[
        F(y_1, \dots, y_p, z_1, \dots, z_m) = \bigotimes_{i = 1}^m \Tr_{\pi_i} \left(\bigotimes_{k = 1}^p y_k^{d_k} \otimes \bigotimes_{i = 1}^m z_i^{\otimes f_i}\right)
    \]
    is a contraction invariant for \[\rep{\bigoplus_{k = 1}^p \vtuple{V}^{\otimes \ituple{a}_p} \oplus \bigoplus_{i = 1}^m (V_i^*)^{\otimes 2}}{\GL(\vtuple{V})}.\]
    Thus $\alpha$ reflects closure equivalence by \cref{thm:reflect}.
    The map $\alpha$ is clearly efficiently computable and hence we obtain an efficient algebraic reduction.
\end{proof}

We also have an opposite reduction from orbit closure intersection problems for general linear groups to similar problems for orthogonal groups.

\begin{theorem}\label{thm:GL-to-O}
    Let $\tworow{\ituple{a}_1}{\ituple{b}_1}, \dots, \tworow{\ituple{a}_p}{\ituple{b}_p} \in \bbN^{m} \times \bbN^{m}$ and define~$\ituple{c}_k = \ituple{a}_k + \ituple{b}_k$.
    Then there is an efficient algebraic reduction from $\rep{\bigoplus_{k = 1}^p \tspace{\vtuple{V}}{\ituple{a}_k}{\ituple{b}_k}}{\GL(\vtuple{V})}$ to $\rep{\bigoplus_{k = 1}^p \vtuple{W}^{\otimes \ituple{c}_k}}{\Og(\vtuple{W})}$.
\end{theorem}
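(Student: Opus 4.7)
The plan is to exploit the standard embedding $\GL(V_i) \hookrightarrow \Og(W_i)$ where $W_i = V_i \oplus V_i^*$ is equipped with the hyperbolic form $g_i((v,f),(v',f')) = f(v') + f'(v)$. This form is symmetric, non-degenerate, and invariant under the natural $\GL(V_i)$-action on each summand, so it yields a group embedding $\beta \colon \GL(\vtuple{V}) \to \Og(\vtuple{W})$. The inclusions $V_i, V_i^* \hookrightarrow W_i$ as direct summands induce linear injections $\iota_k \colon \tspace{\vtuple{V}}{\ituple{a}_k}{\ituple{b}_k} \hookrightarrow \vtuple{W}^{\otimes \ituple{c}_k}$ (the source decomposes as $\bigotimes_i V_i^{\otimes a_{k,i}} \otimes (V_i^*)^{\otimes b_{k,i}}$, each block of $a_{k,i} + b_{k,i} = c_{k,i}$ factors sitting inside the $W_i$-block of the target). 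I set $\alpha(x_1, \dots, x_p) = (\iota_1(x_1), \dots, \iota_p(x_p))$.

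First I would verify $\beta$-equivariance of $\alpha$ directly from the definitions and invoke \cref{lem:beta crit} to get preservation of closure equivalence. Second, to get reflection I would apply \cref{thm:reflect} using the separating set of $\Og(\vtuple{W})$-contraction invariants (from \cref{lem:o sp invariants} applied componentwise). Given an arbitrary $\GL$-contraction invariant $G(x_1, \dots, x_p) = \bigotimes_{i = 1}^m \Tr_{\pi_i}(\bigotimes_{k = 1}^p x_k^{\otimes d_k})$ from \cref{lem:gl invariants}, where each $\pi_i$ is a bijection pairing $V_i$-positions with $V_i^*$-positions in the balanced tensor, I construct $F$ as the $\Og$-contraction invariant that uses the same $\pi_i$'s, reinterpreted as perfect matchings of all $W_i$-positions, contracted via the forms $g_i$. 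The claim to verify is $F \circ \alpha = G$.

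The identity $F \circ \alpha = G$ is the main step and the only real obstacle. It reduces to the elementary computation $g_i((v, 0), (0, f)) = f(v)$, together with $g_i$ vanishing on $V_i \otimes V_i$ and on $V_i^* \otimes V_i^*$. On any tensor in the image of $\alpha$, every $W_i$-factor lies entirely in either $V_i$ or $V_i^*$, and the matching $\pi_i$ pairs a $V_i$-position with a $V_i^*$-position; hence each $g_i$-contraction reduces exactly to the evaluation pairing in the original $\GL$-contraction, yielding $F \circ \alpha = G$. Since the contraction invariants $G$ span $\bbC[X]^{\GL(\vtuple{V})}$, this gives the separating condition required by \cref{thm:reflect}. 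Efficient computability of $\alpha$ is immediate after fixing coordinates $W_i \cong \bbC^{2n_i}$ so that $V_i$ and $V_i^*$ are the standard block decomposition. A minor bookkeeping point is that $\Og(\vtuple{W})$ here is defined via the hyperbolic form rather than the standard one; this is harmless over $\bbC$ because all non-degenerate symmetric bilinear forms are $\GL$-equivalent, so the resulting tensor tuple sequence is isomorphic to the one from the paper's standard convention.
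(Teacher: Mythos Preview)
Your proposal is correct and follows essentially the same approach as the paper. The paper also takes $W_i = V_i \oplus V_i^*$ with the hyperbolic form and the embedding $f \mapsto f \oplus (f^*)^{-1}$; the only cosmetic difference is that the paper first maps into $\bigoplus_k \tspace{\vtuple{W}}{\ituple{a}_k}{\ituple{b}_k}$ via $V_i \hookrightarrow W_i$ and $V_i^* \hookrightarrow W_i^*$ and then invokes the $\Og$-equivariant isomorphism $W_i \cong W_i^*$, whereas you map directly into $\vtuple{W}^{\otimes \ituple{c}_k}$ and contract with the forms~$g_i$---these are the same construction composed in a different order, and your key identity $g_i((v,0),(0,f)) = f(v)$ is exactly the paper's $\bar\alpha_i(w)\cdot\alpha_i(v) = w\cdot v$ viewed through that isomorphism.
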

\begin{proof}
    As mentioned at the beginning of the section, $W_i \cong W_i^*$ as representations of $\Og(\vtuple{W})$.
    Therefore is suffices to construct a reduction from $\rep{\bigoplus_{k = 1}^p \tspace{\vtuple{V}}{\ituple{a}_k}{\ituple{b}_k}}{\GL(\vtuple{V})}$ to $\rep{\bigoplus_{k = 1}^p \tspace{\vtuple{W}}{\ituple{a}_k}{\ituple{b}_k}}{\Og(\vtuple{W})}$, as the latter is isomorphic to $\rep{\bigoplus_{k = 1}^p \vtuple{W}^{\otimes c_k}}{\Og(\vtuple{W})}$.

    Let $\vtuple{V} = (V_1, \dots, V_m)$ be a tuple of vector spaces, and let $W_i = V_i \oplus V_i^*$.
    Equip each space $W_i$ with a symmetric bilinear form $g_i \in W_i^* \otimes W_i^*$ given by \[g_i((v_1, w_1), (v_2, w_2)) = w_1 \cdot v_2 + w_2 \cdot v_1\]
    and consider the corresponding groups $\Og(W_i)$.
    For $f \in \GL(V_i)$ we have $f \oplus (f^*)^{-1} \in \Og(W_i)$, which gives a homomorphism $\beta_i \colon \GL(V_i) \to \Og(W_i)$.
    Combining $\beta_i$ for all $i \in [m]$, we get a homomorphism~$\beta\colon \GL(\vtuple{V}) \to \Og(\vtuple{W})$.

    Let $\alpha_i \colon V_i \to W_i$ be the inclusions of $V_i$ into $W_i$ and let $\bar\alpha_i \colon V_i^* \to W_i^*$ be the inclusion of $V_i^*$ into~$W_i^* \cong V_i^* \oplus V_i$.
    They induce linear maps $\alpha_{\tworow{\ituple{a}}{\ituple{b}}}\colon \tspace{\vtuple{V}}{\ituple{a}}{\ituple{b}} \to \tspace{\vtuple{W}}{\ituple{a}}{\ituple{b}}$ for all $\tworow{\ituple{a}}{\ituple{b}}$,
    and we obtain a linear map~$\alpha$ from $\bigoplus_{k = 1}^p \tspace{\vtuple{V}}{\ituple{a}_k}{\ituple{b}_k}$ to $\bigoplus_{k = 1}^p \tspace{\vtuple{W}}{\ituple{a}_k}{\ituple{b}_k}$ as $\alpha = \bigoplus_{k = 1}^p \alpha_{\tworow{\ituple{a}_k}{\ituple{b}_k}}$.
    The map~$\alpha$ is equivariant with respect to~$\beta$ and hence it preserves closure equivalence, by \cref{lem:beta crit}.
    To see that it also reflects closure equivalence, consider any contraction invariant for $\rep{\bigoplus_{k = 1}^p \tspace{\vtuple{V}}{\ituple{a}_k}{\ituple{b}_k}}{\GL(\vtuple{V})}$,
    \[
    G(x_1, \dots, x_p) = \bigotimes_{i = 1}^m \Tr_{\pi_i} \left(\bigotimes_{k = 1}^p x_k^{\ot d_k}\right).
    \]
    Note that for $v_i \in V_i$ and $w_i \in V_i^*$ we have $\bar\alpha_i(w_i) \cdot \alpha_i(v_i) = w_i \cdot v_i$.
    By considering elementary tensors~$x_k$ we obtain
    \[
    \bigotimes_{i = 1}^m \Tr_{\pi_i} \left(\bigotimes_{k = 1}^p x_k^{\ot d_k}\right) = \bigotimes_{i = 1}^m \Tr_{\pi_i} \left(\bigotimes_{k = 1}^p \alpha_{\tworow{a_k}{b_k}}x_k^{\ot d_k}\right),
    \]
    that is, $G = F \circ \alpha$ where $F$ is a contraction invariant
    \[
    F(y_1, \dots, y_p) = \bigotimes_{i = 1}^m \Tr_{\pi_i} \cdot \left(\bigotimes_{k = 1}^p y_k^{\ot d_k}\right)
    \]
    for $\rep{\bigoplus_{k = 1}^p \tspace{\vtuple{W}}{\ituple{a}_k}{\ituple{b}_k}}{\Og(\vtuple{W})}$.
    Thus $\alpha$ reflects closure equivalence by \cref{thm:reflect} and we obtain an efficient algebraic reduction.
\end{proof}

\begin{corollary}
We have $\OCIclass_\Og = \OCIclass$.
\end{corollary}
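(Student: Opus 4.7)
The plan is to deduce both inclusions directly from the two reductions we have just established, namely \cref{thm:O-to-GL} and \cref{thm:GL-to-O}. No additional invariant-theoretic work is needed — the corollary is essentially a repackaging of these theorems at the level of complexity classes.

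For the inclusion $\OCIclass_\Og \subseteq \OCIclass$, I would take an arbitrary problem in $\OCIclass_\Og$. By definition, this is a problem that reduces (by a polynomial-time Karp reduction) to some OCI problem $\OCI\rep{\bigoplus_{k = 1}^p \vtuple{V}^{\otimes \ituple{a}_k}}{\Og(\vtuple{V})}$. By \cref{thm:O-to-GL}, there is an efficient algebraic reduction from this problem to the OCI problem for the $\GL(\vtuple{V})$-tensor tuple sequence $\bigoplus_{k = 1}^p \vtuple{V}^{\otimes \ituple{a}_k} \oplus \bigoplus_{i = 1}^m (V_i^*)^{\otimes 2}$, which lies in $\OCIclass$ by definition. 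Composing reductions closes the argument.

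For the opposite inclusion $\OCIclass \subseteq \OCIclass_\Og$, I would argue symmetrically using \cref{thm:GL-to-O}: any OCI problem for a $\GL(\vtuple{V})$-tensor tuple sequence of the form $\bigoplus_{k = 1}^p \tspace{\vtuple{V}}{\ituple{a}_k}{\ituple{b}_k}$ reduces to an OCI problem for an orthogonal tensor tuple sequence $\bigoplus_{k = 1}^p \vtuple{W}^{\otimes(\ituple{a}_k+\ituple{b}_k)}$, which is in $\OCIclass_\Og$ by definition. Again, transitivity of polynomial-time Karp reductions finishes the job.

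There is no main obstacle: both directions are one-line compositions, and the only thing to emphasize is that the complexity classes $\OCIclass$ and $\OCIclass_\Og$ are closed under polynomial-time Karp reductions (which they are by construction, since they are defined as the closure under such reductions of families of OCI problems). I would write the proof as two short paragraphs, one for each inclusion, each ending with the appropriate citation.
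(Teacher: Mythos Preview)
Your proposal is correct and matches the paper's intended argument: the corollary is stated without proof immediately after \cref{thm:O-to-GL} and \cref{thm:GL-to-O}, so it is meant to follow directly from composing those two reductions with the definitions of the classes, exactly as you describe.
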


We can also identify orbit closure problems for the orthogonal groups that are complete for the complexity class~$\OCIclass$.
For example, \cref{thm:complete-multiconj-4,thm:GL-to-O} combine to show that $\rep{X^{\otimes 8}}{\Og(X)}$ is complete.
With some more work we can show that seven copies suffice for completeness.
We start with the following theorem, which is of independent interest.

\begin{theorem}\label{thm:o 3 2}
There is an efficient algebraic reduction from $\rep{X^{\otimes 4} \otimes (X^*)^{\otimes 4}}{\GL(X)}$ to $\rep{(Y^{\otimes 3})^{\oplus 2}}{\Og(Y)}$.
As a consequence, the orbit closure intersection problem for the latter is complete for $\OCIclass$.
\end{theorem}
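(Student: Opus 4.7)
The plan is to construct the required efficient algebraic reduction via a two-step pipeline: first convert the $\GL$-problem into an $\Og$-problem through the hyperbolic doubling, then reshape the resulting tensor into a pair of 3-tensors over a larger orthogonal space. Throughout, I would rely on the invariant-theoretic machinery provided by \Cref{thm:reflect} and \Cref{thm:contraction-invariants-multiple}.

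For the first step I would apply \Cref{thm:GL-to-O} to obtain an efficient algebraic reduction from $\rep{X^{\otimes 4} \otimes (X^*)^{\otimes 4}}{\GL(X)}$ to $\rep{W^{\otimes 8}}{\Og(W)}$, where $W = X \oplus X^*$ carries the hyperbolic symmetric bilinear form $g_W$. This is essentially off-the-shelf and reduces the task to constructing a reduction entirely within the orthogonal world.

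The second step is the heart of the proof: reduce $\rep{W^{\otimes 8}}{\Og(W)}$ to $\rep{(Y^{\otimes 3})^{\oplus 2}}{\Og(Y)}$. I would pick $Y$ so that each factor of $Y$ bundles several $W$-factors --- a natural choice being $Y = W^{\otimes 4}$ equipped with $g_Y = g_W^{\otimes 4}$, under which the inclusion $\Og(W) \hookrightarrow \Og(Y)$ is given by $\sigma \mapsto \sigma^{\otimes 4}$. Since $Y^{\otimes 3} \cong W^{\otimes 12}$ is comfortably larger than $W^{\otimes 8}$, the polynomial map would take the form $\alpha(\tilde x) = (t_1(\tilde x), t_2)$ with $t_1(\tilde x) = \tilde x \otimes \bar{g}_W^{\otimes 2} \in W^{\otimes 12} = Y^{\otimes 3}$, padding $\tilde x$ with two copies of the $\Og(W)$-invariant tensor $\bar{g}_W$ to match parity and total degree, while $t_2$ is a fixed $\Og(W)$-invariant element of $Y^{\otimes 3}$ chosen as a routing gadget. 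Preservation of closure equivalence then follows from $\Og(W)$-equivariance of $\alpha$ via \Cref{lem:beta crit}.

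The main obstacle is the choice of the gadget $t_2$ and the verification of reflection. By \Cref{thm:reflect} together with \Cref{thm:contraction-invariants-multiple}, it suffices to realize every $\Og(W)$-contraction invariant $\Tr_\sigma(\tilde x^{\otimes d} \otimes g_W^{\otimes 4d})$ as the $\alpha$-pullback of an $\Og(Y)$-contraction invariant $\Tr_{\sigma'}(t_1(\tilde x)^{\otimes a} \otimes t_2^{\otimes b} \otimes g_Y^{\otimes e})$. The delicate point is that $g_Y = g_W^{\otimes 4}$ only pairs the four $W$-factors of one $Y$-bundle with those of another in a rigid block-wise fashion, and moreover $\Og(Y)$ is strictly larger than $\Og(W)$, so generic $\Og(Y)$-invariants see less structure than the required $\Og(W)$-invariants. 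The gadget $t_2$ must therefore simultaneously pin down the $W$-factor bundling (effectively breaking the $\Og(Y)$-symmetry down to $\Og(W)$ on the combined diagram) and provide enough combinatorial flexibility to realize every 4-regular bipartite pairing of the $8d$ legs of $\tilde x^{\otimes d}$. This is the orthogonal analogue of the permutation-tensor gadgets used in \Cref{lem:GL-sum-to-one}, and the construction proceeds by an explicit case analysis of contraction diagrams, possibly after replacing $Y = W^{\otimes 4}$ by an enriched variant such as $Y = W^{\otimes 4} \oplus W^{\otimes 2}$ to accommodate additional routing tensors.

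Once the reduction is in hand, the completeness claim is immediate. Since $\rep{X^{\otimes 4} \otimes (X^*)^{\otimes 4}}{\GL(X)}$ is $\OCIclass$-complete by \Cref{thm:complete-multiconj-4}, the composed reduction shows that $\rep{(Y^{\otimes 3})^{\oplus 2}}{\Og(Y)}$ is $\OCIclass$-hard, while membership in $\OCIclass$ follows from \Cref{thm:O-to-GL} and the preceding corollary $\OCIclass_\Og \subseteq \OCIclass$.
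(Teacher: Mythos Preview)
Your high-level strategy matches the paper's: pass through the hyperbolic doubling of \Cref{thm:GL-to-O}, set $Y$ to be a fourth tensor power of the resulting orthogonal space, and map $x$ to a pair $(\hat x,\text{gadget})$ where the gadget is a fixed $\Og(W)$-invariant tensor. The paper also inserts one extra easy step you omit: it first pads via \Cref{lem:GL-balanced-to-sum} to $\rep{X^{\otimes 6}\otimes(X^*)^{\otimes 6}}{\GL(X)}$ before applying \Cref{thm:GL-to-O}, so that one lands directly in $W^{\otimes 12}$ and no $\bar g_W^{\otimes 2}$-padding is needed; $\hat x$ is then simply $x$ reinterpreted in $Y^{\otimes 3}$.

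The genuine gap is the gadget and its verification, which you leave as ``explicit case analysis'' with the escape hatch of possibly changing $Y$. This is precisely the heart of the proof, and it does not reduce to a routine case analysis. The paper works in the isomorphic target $Y^{\otimes 3}\oplus Y^{\otimes 2}\otimes Y^*$ and takes the gadget to be $\hat s$, where $s=r^{\otimes 2}$ and $r=((23),\id)\cdot(\Id_X^{\otimes 2}\otimes\bar g)\in\tspace{X}{4}{2}$. Viewed as a map $Y\to Y^{\otimes 2}$, this $\hat s$ is designed so that the iterated contraction $t=\Tr^{1,2:6}_{2,3:3}\,\hat s^{\otimes 3}\in Y^{\otimes 4}\otimes Y^*$ \emph{unbundles} a single $Y$-factor: as a tensor in $\tspace{X}{16}{4}$ it equals $(\tau,\id)\cdot(\Id_X^{\otimes 4}\otimes\bar g^{\otimes 6})$ for an explicit $\tau\in\Symg_{16}$ sending the four $\Id$-legs to the first positions of the four quadruples. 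Applying $t^{\otimes 3}$ to $\hat x$ therefore reproduces $x$ with each of its twelve $X$-legs placed in the first slot of a separate $Y$-bundle, after which $g_Y=g_X^{\otimes 4}$ can realize any $\Og(X)$-contraction $\Tr_\pi(x^{\otimes d}\otimes g^{\otimes 6d})$ up to a power of $\dim X$. The permutation bookkeeping (tracking $\tau$, $\sigma$, and their interaction with $\pi\times\id_4$) is the actual content; your proposal neither identifies this unbundling mechanism nor the specific gadget that implements it, and the analogy with \Cref{lem:GL-sum-to-one} is misleading since those gadgets distinguish summands rather than split tensor factors.
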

\begin{proof}
    Recall from \cref{thm:complete-multiconj-4} that the orbit closure intersection problem for $\rep{X^{\otimes 4} \otimes (X^*)^{\otimes 4}}{\GL(X)}$ is complete for~$\OCIclass$.
    Using \cref{lem:GL-balanced-to-sum} we can reduce to $\rep{X^{\otimes 6} \otimes (X^*)^{\otimes 6}}{\GL(X)}$.
    Then apply \cref{thm:GL-to-O} to reduce to $\rep{X^{\otimes 12}}{\Og(X)}$.
    To prove the theorem, it suffices to give a reduction to $\rep{Y^{\otimes 3} \;\oplus\; Y^{\otimes 2} \otimes Y^*}{\Og(Y)}$, as the latter is isomorphic to~$(Y^{\ot 3})^{\oplus 2}$ as a $\Og(Y)$-representation.

    Thus let $X$ be a vector space equipped with a symmetric bilinear form $g \in X^* \otimes X^*$, and let $\bar{g} \in X \otimes X$ be the tensor invariant under the corresponding orthogonal group $\Og(X)$ (it is the tensor of the dual bilinear form on $X^*$).
    Denote $Y = X^{\otimes 4}$.
    Let $\beta$ be the diagonal inclusion of~$\Og(X)$ into~$\Og(Y)$ where the orthogonal structure on $Y = X^{\otimes 4}$ is induced by the structure given by $g$ on $X$.
    In terms of tensors, the symmetric bilinear form on $Y$ is $h =  \sigma \cdot g^{\otimes 4} \in (X^*)^{\otimes 8} = Y^* \otimes Y^*$ where
    \[ \sigma = \begin{pmatrix}1 & 2 & 3 & 4 & 5 & 6 & 7 & 8 \\ 1 & 5 & 2 & 6 & 3 & 7 & 4 & 8\end{pmatrix} \in \Symg_8. \]

    Consider the tensors $r \in \tspace{X}{4}{2}$ and $s \in \tspace{X}{8}{4}$ defined as $r = ((23), \id) \cdot \Id[X]^{\otimes 2} \otimes \bar{g}$ and $s = r^{\otimes 2}$.
    For a tensor $x \in X^{\otimes 12}$ denote by $\hat{x}$ the same tensor considered as an element of $Y^{\otimes 3}$.
    Similarly, denote by $\hat{s}$ the tensor $s$ viewed as an element of $Y^{\otimes 2} \otimes Y^* \cong \tspace{X}{8}{4}$.
    Then we can define a map
    \[ \alpha \colon X^{\otimes 12} \to Y^{\otimes 3} \;\oplus\; Y^{\otimes 2} \otimes Y^*, \quad \alpha(x) = (\hat{x}, \hat{s}). \]
    Note that $s$ is invariant under $\Og(X)$.
    It follows that the map $\alpha$ is $\Og(X)$-equivariant with respect to~$\beta$ and hence~$\alpha$ preserves closure equivalence by \cref{lem:beta crit}.

    Consider the contraction $t = \Tr^{1,2:6}_{2,3:3} \hat{s}^{\otimes 3} \in Y^{\otimes 4} \otimes Y^*$.
    As a tensor in~$\tspace{X}{16}{4}$ it is equal to~$(\tau, \id) \cdot (\Id[X]^{\otimes 4} \otimes \bar{g}^{\otimes 6})$, where
    \[\tau = \begin{psmallmatrix} 1 & 2 & 3 & 4 & 5 & 6 & 7 & 8 & 9 & 10 & 11 & 12 & 13 & 14 & 15 & 16 \\ 1 & 5 & 9 & 13 & 3 & 7 & 11 & 15 & 2 & 4 & 6 & 8 & 10 & 12 & 14 & 16\end{psmallmatrix} \in \Symg_{16}\]
    Note in particular that the indices belonging to the first $4$ factors ($\Id[X]$ tensors) are permuted to $1$, $5$, $9$, and $13$, that is, if we divide the set $[16]$ into four quadruples corresponding to the $Y$-factors of $Y^{\otimes 4} = X^{\otimes 16}$, then $\tau$ maps the first four indices into the first indices of the four quadruples.

    \begin{figure}[ht]
    \centering
    \begin{tikzpicture}[>=latex]
        \node (r) at (-1.5,1) {\Large $r$};
        \node at (-0.9,1) {$=$};
        \draw[<-] (0,0)--(0,2);
        \draw[<-] (2,0)--(2,2);
        \node[draw,circle] (g1) at (1,1) {$\bar{g}$};
        \draw[<-] (1,0)--(g1.south);
        \draw[<-] (3,0)--(g1.south east);
    \end{tikzpicture}
    \\
    \vspace{1em}
    \begin{tikzpicture}[>=latex,scale=0.9]
        \node (s) at (-1.5,1) {\Large $s$};
        \node at (-0.9,1) {$=$};
        \draw[<-] (0,0)--(0,2);
        \draw[<-] (2,0)--(2,2);
        \draw[<-] (4,0)--(4,2);
        \draw[<-] (6,0)--(6,2);
        \node[draw,circle] (g1) at (1,1) {$\bar{g}$};
        \draw[<-] (1,0)--(g1.south);
        \draw[<-] (3,0)--(g1.south east);
        \node[draw,circle] (g2) at (5,1) {$\bar{g}$};
        \draw[<-] (5,0)--(g2.south);
        \draw[<-] (7,0)--(g2.south east);
        \draw[dashed] (-0.25,0.25)--(2.25,0.25)--(2.25,1.75)--(-0.25,1.75)--cycle;
        \draw[dashed] (3.75,0.25)--(6.25,0.25)--(6.25,1.75)--(3.75,1.75)--cycle;
        \node[fill=white] at (1,1.75) {$r$};
        \node[fill=white] at (5,1.75) {$r$};
    \end{tikzpicture}
    \\
    \vspace{1em}
    \begin{tikzpicture}[>=latex,scale=0.85]
        \node (s) at (-1.5,0) {\Large $t$};
        \node at (-0.9,0) {$=$};

        \draw[-] (0,0)--(0,2);
        \draw[-] (2,0)--(2,2);
        \draw[-] (4,0)--(4,2);
        \draw[-] (6,0)--(6,2);
        \node[draw,circle] (g1) at (1,1) {$\bar{g}$};
        \draw[-] (1,0)--(g1.south);
        \draw[-] (3,0)--(g1.south east);
        \node[draw,circle] (g2) at (5,1) {$\bar{g}$};
        \draw[-] (5,0)--(g2.south);
        \draw[-] (7,0)--(g2.south east);
        \draw[dashed] (-0.25,0.25)--(6.25,0.25)--(6.25,1.75)--(-0.25,1.75)--cycle;

        \draw[->] (0,0)--(0,-2);
        \draw[->] (1,0)--(1,-2);
        \draw[->] (2,0)--(2,-2);
        \draw[->] (3,0)--(3,-2);
        \node[draw,circle] (g3) at (0.5,-1) {$\bar{g}$};
        \draw[<-] (0.5,-2)--(g3.south);
        \draw[<-] (1.5,-2)--(g3.south east);
        \node[draw,circle] (g4) at (2.5,-1) {$\bar{g}$};
        \draw[<-] (2.5,-2)--(g4.south);
        \draw[<-] (3.5,-2)--(g4.south east);
        \draw[dashed] (-0.25,-0.25)--(3.25,-0.25)--(3.25,-1.75)--(-0.25,-1.75)--cycle;

        \draw[->] (4,0)--(4,-2);
        \draw[->] (5,0)--(5,-2);
        \draw[->] (6,0)--(6,-2);
        \draw[->] (7,0)--(7,-2);
        \node[draw,circle] (g5) at (4.5,-1) {$\bar{g}$};
        \draw[<-] (4.5,-2)--(g5.south);
        \draw[<-] (5.5,-2)--(g5.south east);
        \node[draw,circle] (g6) at (6.5,-1) {$\bar{g}$};
        \draw[<-] (6.5,-2)--(g6.south);
        \draw[<-] (7.5,-2)--(g6.south east);
        \draw[dashed] (3.75,-0.25)--(7.25,-0.25)--(7.25,-1.75)--(3.75,-1.75)--cycle;

        \node[fill=white] at (3,1.75) {$s$};
        \node[fill=white] at (1.5,-0.25) {$s$};
        \node[fill=white] at (5.5,-0.25) {$s$};
    \end{tikzpicture}
    \caption{Diagrammatic representations of the tensors $r$, $s$, and $t$ from the proof of~\cref{thm:o 3 2}}
    \end{figure}
    
    Similarly, $t^{\otimes 3} \in \tspace{X}{48}{12} = (\tau', \id) \cdot \Id[X]^{\otimes 12} \otimes \bar{g}^{\otimes 18}$ where $\tau'$ maps the first $12$ indices into the first indices of the $12$ quadruples.
    More precisely, $\tau'$ can be expressed with the help of the the inclusion $\times \colon \Symg_m \times \Symg_n \to \Symg_{mn}$ where the first permutation permutes blocks of size $n$ and the second permutation permutes each block. Indeed, $\tau' = (\id_3 \times \tau)(\rho \times \id_4)$ where
    \[\rho = \begin{pmatrix}1 & 2 & 3 & 4 & 5 & 6 & 7 & 8 & 9 & 10 & 11 & 12 \\ 1 & 5 & 9 & 2 & 6 & 10 & 3 & 7 & 11 & 4 & 8 & 12\end{pmatrix} \!\in\! \Symg_{12}\]
    Note also that the permutation $\sigma$ above also shares this property --- it maps the first two indices to the first indices of the two quadruples in~$[8]$.

    Now consider an arbitrary contraction invariant $G(x)$ for $\rep{X^{\otimes 12}}{\Og(X)}$ given by
    \[
        G(x) = \Tr_{\pi} (x^{\otimes d} \otimes g^{\otimes 6d}).
    \]
    We can obtain a scalar multiple of $G$ using only tensors over $Y$ as follows:
    \begin{align*}
    &\Tr_{\pi} \left((t^{\otimes 3} \cdot \hat{x})^{\otimes d} \otimes h^{\otimes 6d}\right)
    \\
    = &\Tr_{\pi \times \id_4} \left((\tau' \cdot x \otimes \bar{g}^{\otimes 18})^{\otimes d} \otimes (\id_{4d} \times \sigma) \cdot g^{\otimes 24d}\right) \\
    = &\Tr_{{(\pi \times \id_4)(\id_d \times \tau')}} \left((x \otimes \bar{g}^{\otimes 18})^{\otimes d} \otimes (\id_{6d} \times \sigma) \cdot g^{\otimes 24d}\right).
    \end{align*}
    Since $\tau'$ maps factors of the tensor $x$ into the first factors in each quadruple and $\pi \times \id_4$ permutes quadruples without permuting their elements, the factors of $x$ are mapped to first factors of each quadruple according to the permutation $\pi$.
    Additionally, $\sigma$ maps the factors of the first $g$ in each of $6d$ blocks to the first factors of each quadruple, so we can separate $x$ and $6d$ copies of $g$, rewriting the above expression as
    \[
        \Tr_{\pi} (x^{\otimes d} \otimes g^{\otimes 6d}) \cdot \Tr_{\pi'} (\bar{g}^{\otimes 18 d} \otimes g^{\otimes 18 d}).
    \]
    Using the facts that $\Tr^i_j \bar{g} \otimes g = \Id$ and $\Tr \Id = \dim X$, we see that this expression is $(\dim X)^c G(x)$ for some constant $c$.
    Thus we see that $G = F \circ \alpha$ for the contraction invariant
    \[ F(x, z) = \Tr_{\pi} \left(((\Tr^{1,2:6}_{2,3:3} \hat{z}^{\otimes 3})^{\otimes 3}\cdot \hat{x})^{\otimes d} \otimes h^{\otimes 6d}\right) \]
    for $\rep{Y^{\otimes 3} \oplus Y^{\otimes 2} \otimes Y^*}{\Og(Y)}$.
    By \cref{thm:reflect}, we conclude that~$\alpha$ reflects closure equivalence and we obtain an efficient algebraic reduction.
\end{proof}

\begin{corollary}
    The orbit closure intersection problem for $\rep{\left(Y^{\otimes 3}\right)^{\oplus 2} \oplus (Y^*)^{\otimes 2}}{\GL(Y)}$ is complete for~$\OCIclass$.
\end{corollary}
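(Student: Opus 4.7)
The plan is to derive this corollary as a direct consequence of the preceding \cref{thm:o 3 2} together with \cref{thm:O-to-GL}, applied in a very simple specialization. Since the target problem is itself the OCI problem for a tensor tuple sequence with a $\GL(Y)$-action, membership in $\OCIclass$ is immediate from the definition of the class. So only $\OCIclass$-hardness requires argument.

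For hardness, I would reduce from $\OCI\rep{(Y^{\otimes 3})^{\oplus 2}}{\Og(Y)}$, which is $\OCIclass$-complete by \cref{thm:o 3 2}. Applying \cref{thm:O-to-GL} with $m = 1$, $p = 2$, and $\ituple{a}_1 = \ituple{a}_2 = 3$ produces an efficient algebraic reduction from $\OCI\rep{(Y^{\otimes 3})^{\oplus 2}}{\Og(Y)}$ to $\OCI\rep{(Y^{\otimes 3})^{\oplus 2} \oplus (Y^*)^{\otimes 2}}{\GL(Y)}$. Concretely, the reduction is the map $\alpha(x_1, x_2) = (x_1, x_2, g)$, where $g \in (Y^*)^{\otimes 2}$ is the tensor of the defining symmetric bilinear form of $\Og(Y)$; equivariance with respect to the inclusion $\Og(Y) \hookrightarrow \GL(Y)$ gives preservation of closure equivalence via \cref{lem:beta crit}, and reflection of closure equivalence follows from \cref{thm:reflect} together with the description of $\Og(Y)$-invariants by contractions with $g$ (\cref{lem:o sp invariants}), exactly as in the proof of \cref{thm:O-to-GL}.

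Composing this reduction with the completeness of $\OCI\rep{(Y^{\otimes 3})^{\oplus 2}}{\Og(Y)}$ yields the $\OCIclass$-hardness and hence $\OCIclass$-completeness of $\OCI\rep{(Y^{\otimes 3})^{\oplus 2} \oplus (Y^*)^{\otimes 2}}{\GL(Y)}$. There is no serious obstacle here: the entire content has already been established, and the corollary is merely the instantiation of \cref{thm:O-to-GL} in the smallest relevant case combined with the previous theorem.
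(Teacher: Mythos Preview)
Your proof is correct and takes essentially the same approach as the paper: combine the $\OCIclass$-completeness of $\rep{(Y^{\otimes 3})^{\oplus 2}}{\Og(Y)}$ from \cref{thm:o 3 2} with the reduction of \cref{thm:O-to-GL} specialized to $m=1$, $p=2$, $\ituple{a}_1=\ituple{a}_2=3$. The paper's proof is the single line ``This follows from \cref{thm:o 3 2,thm:O-to-GL},'' and you have simply unpacked it.
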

\begin{proof}
This follows from \cref{thm:o 3 2,thm:O-to-GL}.
\end{proof}

We now give a variant of \cref{lem:GL-sum-to-one} that can be used to replace copies by additional tensor factors for actions of the orthogonal group.

\begin{lemma}\label{lem:detupelize o}
    If $p \leq d!$, then there is an efficient algebraic reduction from $\rep{(X^{\otimes c})^{\oplus p}}{\Og(X)}$ to $\rep{X^{\otimes (c + 2d)}}{\Og(X)}$.
\end{lemma}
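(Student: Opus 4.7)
The plan is to mimic the proof of \cref{lem:GL-sum-to-one}, but adapted to the orthogonal setting. The key new ingredient is that we need $\Og(X)$-invariant tensors living in $X^{\otimes 2d}$ (with no dual factors), indexed by permutations in $\Symg_d$, in order to produce an injective equivariant linear map into a pure tensor power.

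First I would construct such invariants. Since the bilinear form $g$ (and its dual $\bar g$) gives an $\Og(X)$-equivariant isomorphism $X \cong X^*$, we obtain an $\Og(X)$-equivariant linear isomorphism $\varphi\colon X^{\otimes 2d} \to \tspace{X}{d}{d}$ (e.g.\ by applying $g$ to every other tensor factor). For each $\pi \in \Symg_d$, let $P_\pi \in \tspace{X}{d}{d}$ be the $\GL(X)$-invariant permutation tensor introduced in the proof of \cref{lem:GL-sum-to-one}, and set $Q_\pi = \varphi^{-1}(P_\pi) \in X^{\otimes 2d}$. Because $P_\pi$ is $\GL(X)$-invariant (so in particular $\Og(X)$-invariant) and $\varphi$ is $\Og(X)$-equivariant, each $Q_\pi$ is $\Og(X)$-invariant. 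Equivalently, $Q_\pi$ is (up to a fixed permutation of factors) $\bar g^{\otimes d}$ with its ``top'' $d$ indices permuted by $\pi$.

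Next, since $p \leq d!$, I would choose $p$ distinct permutations $\pi_1, \dots, \pi_p \in \Symg_d$ and define
\[ \iota \colon (X^{\otimes c})^{\oplus p} \to X^{\otimes (c + 2d)}, \qquad \iota(x_1, \dots, x_p) = \sum_{k = 1}^p x_k \otimes Q_{\pi_k}. \]
This is linear and, since each $Q_{\pi_k}$ is $\Og(X)$-invariant, it is also $\Og(X)$-equivariant. Provided $\dim X \geq d$, the permutation tensors $P_{\pi_1}, \dots, P_{\pi_p}$ are linearly independent in $\tspace{X}{d}{d}$ (the symmetric group $\Symg_d$ embeds faithfully in $\End(X^{\otimes d})$), hence so are the $Q_{\pi_k}$, and therefore $\iota$ is injective. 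By \cref{ex:injective-morphism}, $\iota$ preserves and reflects closure equivalence.

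The one subtlety, and arguably the main obstacle, is that when $\dim X < d$ the tensors $Q_{\pi_k}$ may fail to be linearly independent, so $\iota$ need not be injective. I would handle these finitely many small-dimension cases exactly as in the $\dim X = 1$ case of \cref{thm:complete-multiconj-31}: for $n < d$ the input and output dimensions are bounded by constants, so one may compose an arbitrary fixed polynomial-time reduction (even a brute-force one) with the subsequent tensor-tuple sequence. Packaging the maps defined for $n \geq d$ together with this finite exceptional table yields an efficiently computable sequence of polynomial maps $\alpha_n$ giving the required efficient algebraic reduction.
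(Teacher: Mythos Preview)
Your approach is essentially the same as the paper's: both use the permutation tensors $P_\pi$ (which are $\GL(X)$-invariant, hence $\Og(X)$-invariant) together with the $\Og(X)$-equivariant isomorphism $X \cong X^*$ to build an injective $\Og(X)$-equivariant linear map, and then invoke \cref{ex:injective-morphism}. The paper phrases this as mapping into $\tspace{X}{c+d}{d} \cong X^{\otimes(c+2d)}$, while you first push the isomorphism through to obtain $Q_\pi \in X^{\otimes 2d}$; these are the same construction viewed from two ends.

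You are actually more careful than the paper in flagging that the $P_\pi$ may fail to be linearly independent when $\dim X < d$; the paper's proof (here and already in \cref{lem:GL-sum-to-one}) silently ignores this. One caution about your proposed fix: an ``arbitrary fixed polynomial-time reduction (even a brute-force one)'' must still be realized by \emph{polynomial maps} to qualify as an efficient algebraic reduction in the sense of \cref{def:algebraic reduction}, not merely a Karp reduction. The technique you cite from \cref{thm:complete-multiconj-31} does exactly this---increase $q(n)$ for the finitely many small $n$ and exhibit an explicit injective equivariant map into the larger space---so the repair is available, but it should be stated in those terms rather than as a black-box decision procedure.
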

\begin{proof}
    As in \cref{lem:GL-sum-to-one} the permutation operators $P_{\pi}\colon X^{\otimes d} \to X^{\otimes d}$ are linearly independent $\Og(X)$-invariants.
    Then the maps $\alpha(x_1, \dots, x_p) = \sum_{k = 1}^p x_k \otimes P_{\pi_k}$ are injective $\Og(X)$-linear maps $(X^{\otimes c})^{\oplus p}$ to $\tspace{X}{c + d}{d} \cong X^{\otimes (c + 2d)}$, which combine to form an efficient algebraic reduction.
\end{proof}

\begin{theorem}
    The orbit closure intersection problem for $\rep{Y^{\otimes 7}}{\Og(Y)}$ is complete for $\OCIclass$.
\end{theorem}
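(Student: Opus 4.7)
The plan is to observe that this statement follows essentially immediately from the combination of two earlier results: the $\OCIclass$-completeness of $\rep{(Y^{\otimes 3})^{\oplus 2}}{\Og(Y)}$ established in \cref{thm:o 3 2}, and the detuplization \cref{lem:detupelize o}, which converts direct sums of copies of a single tensor space into a single tensor space of larger degree, at the cost of increasing the number of factors by~$2d$ whenever $p \leq d!$.

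More concretely, I would apply \cref{lem:detupelize o} with the parameters $c = 3$, $p = 2$, and $d = 2$. The constraint $p \leq d!$ becomes $2 \leq 2! = 2$, which is satisfied (with equality), and the target number of factors is $c + 2d = 3 + 4 = 7$. This yields an efficient algebraic reduction from $\rep{(Y^{\otimes 3})^{\oplus 2}}{\Og(Y)}$ to $\rep{Y^{\otimes 7}}{\Og(Y)}$. Combined with the completeness of the former from \cref{thm:o 3 2}, this gives $\OCIclass$-hardness of $\OCI\rep{Y^{\otimes 7}}{\Og(Y)}$. Containment in $\OCIclass$ is immediate, since $\rep{Y^{\otimes 7}}{\Og(Y)}$ is a tensor tuple sequence for an orthogonal group, and we have already shown $\OCIclass_\Og = \OCIclass$ via \cref{thm:O-to-GL}.

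There is essentially no obstacle here beyond verifying that the two tensor-valued permutation operators $P_\pi$ indexed by the two elements of $\Symg_2$ are used to construct the injective equivariant map in \cref{lem:detupelize o}, namely
\[
\alpha(x_1, x_2) = x_1 \otimes P_{\id} + x_2 \otimes P_{(12)} \;\in\; \tspace{Y}{3+2}{2} \cong Y^{\otimes 7},
\]
and that this map is injective and $\Og(Y)$-equivariant (since $P_{\id}$ and $P_{(12)}$ are linearly independent $\Og(Y)$-invariants in $\tspace{Y}{2}{2}$). The rest of the argument is routine: injectivity and equivariance give closure-equivalence preservation and reflection by \cref{ex:injective-morphism}, and the maps can clearly be implemented efficiently. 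Thus the full proof is just one or two sentences citing \cref{thm:o 3 2} and \cref{lem:detupelize o}.
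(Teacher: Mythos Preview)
Your proposal is correct and follows exactly the same approach as the paper: apply \cref{lem:detupelize o} with $c=3$, $p=2$, $d=2$ to the $\OCIclass$-complete sequence $\rep{(Y^{\otimes 3})^{\oplus 2}}{\Og(Y)}$ from \cref{thm:o 3 2}. If anything, you are more explicit than the paper, which condenses the whole argument into a single sentence.
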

\begin{proof}
    Apply \cref{lem:detupelize o} to to $\rep{\left(Y^{\otimes 3}\right)^{\oplus 2}}{\Og(Y)}$, which is complete according to \cref{thm:o 3 2}.
\end{proof}

All constructions presented in this section also apply to tensor tuple sequences for the symplectic groups $\Sp(\vtuple{V})$ if $g_i$ are taken to denote tensor corresponding to symplectic forms instead of symmetric ones.
In \cref{thm:GL-to-O} the symplectic form on $U_i = V_i \oplus V_i^*$ can be defined as $g((v_1, w_1),(v_2, w_2)) = w_1 \cdot v_2 - w_2 \cdot v_1$.
We summarize these results:

\begin{theorem}
We have $\OCIclass_\Sp = \OCIclass$.
Moreover, the orbit closure intersection problems for $\rep{(V^{\otimes 3})^{\oplus 2}}{\Sp(V)}$ and for $\rep{V^{\otimes 7}}{\Sp(V)}$ are each complete for this class.
\end{theorem}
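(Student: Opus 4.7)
The plan is to adapt verbatim the arguments of the preceding subsection on orthogonal groups, replacing the symmetric bilinear form tensors $g_i \in V_i^* \otimes V_i^*$ by the corresponding symplectic form tensors throughout. This is possible because \cref{lem:o sp invariants} describes the tensor invariants for the orthogonal and symplectic groups in a unified way.

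First I would establish the symplectic analog of \cref{thm:O-to-GL}: with $g_i$ now denoting the symplectic (antisymmetric, nondegenerate) form tensors, the map $\alpha(x_1,\dots,x_p) = (x_1,\dots,x_p,g_1,\dots,g_m)$ defines an efficient algebraic reduction from $\rep{\bigoplus_k \vtuple{V}^{\otimes \ituple{a}_k}}{\Sp(\vtuple{V})}$ to the enlarged $\GL(\vtuple{V})$-sequence. Preservation of closure equivalence follows from equivariance under $\Sp(\vtuple{V}) \hookrightarrow \GL(\vtuple{V})$ via \cref{lem:beta crit}. Reflection holds because every contraction invariant for the source, as described by \cref{lem:o sp invariants}, factors as $F \circ \alpha$ for a matching $\GL$-contraction invariant $F$, so \cref{thm:reflect} applies. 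This gives $\OCIclass_\Sp \subseteq \OCIclass$.

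Next I would prove the symplectic analog of \cref{thm:GL-to-O} using the symplectic form $g_i((v_1,w_1),(v_2,w_2)) = w_1 \cdot v_2 - w_2 \cdot v_1$ on $W_i = V_i \oplus V_i^*$, as indicated in the paragraph preceding the theorem. This form is antisymmetric and nondegenerate, and the diagonal embedding $f \mapsto f \oplus (f^*)^{-1}$ sends $\GL(V_i)$ into $\Sp(W_i)$ since it preserves the canonical pairing between $V_i$ and $V_i^*$. The linear map $\alpha$ and the contraction identity $\bar\alpha_i(w) \cdot \alpha_i(v) = w \cdot v$ carry over verbatim, yielding $\OCIclass \subseteq \OCIclass_\Sp$ and, together with the previous step, the equality $\OCIclass_\Sp = \OCIclass$.

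For the completeness statements, I would adapt \cref{thm:o 3 2} and \cref{lem:detupelize o} to the symplectic setting. The adaptation of \cref{lem:detupelize o} is immediate: the permutation operators $P_\pi \in \tspace{X}{d}{d}$ are $\Sp(X)$-invariant for the same reason they are $\Og(X)$-invariant, so the reduction $(X^{\otimes c})^{\oplus p} \to X^{\otimes(c+2d)}$ is built from the same injective equivariant linear map. Applied with $p=c=d=2$ (noting $p\leq d!$) to the complete sequence $\rep{(V^{\otimes 3})^{\oplus 2}}{\Sp(V)}$, this yields completeness of $\rep{V^{\otimes 7}}{\Sp(V)}$. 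The main obstacle lies in the adaptation of \cref{thm:o 3 2}: tensor powers of a symplectic form alternate between symmetric (for even exponents) and antisymmetric (for odd exponents), so one must select the auxiliary space $Y$ as an appropriate (odd) tensor power of $X$ in order that $Y$ inherits a genuine symplectic structure from the diagonal $\Sp(X)$-action. Identities such as $\Tr^1_1(\bar g \otimes g) = \pm \Id$ acquire signs from the antisymmetry of $g$, but these contribute only nonzero scalar factors to the pullback identity $G = (\text{scalar}) \cdot F\circ\alpha$, which remains enough to invoke \cref{thm:reflect}. With this bookkeeping in place, the construction of the auxiliary tensors $r,s,t$ and the permutation $\sigma$ carries over, establishing completeness of $\rep{(V^{\otimes 3})^{\oplus 2}}{\Sp(V)}$.
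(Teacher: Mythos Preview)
Your approach is exactly the paper's: the paper merely asserts that all constructions in the orthogonal subsection carry over to the symplectic case with the form $g_i((v_1,w_1),(v_2,w_2)) = w_1 \cdot v_2 - w_2 \cdot v_1$ on $W_i = V_i \oplus V_i^*$, and leaves the rest to the reader. Your adaptations of \cref{thm:O-to-GL}, \cref{thm:GL-to-O}, and \cref{lem:detupelize o} go through just as you say (minor slip: in the last step you want $c=3$, not $c=2$), yielding $\OCIclass_\Sp = \OCIclass$ and reducing the $V^{\otimes 7}$ claim to the $(V^{\otimes 3})^{\oplus 2}$ claim.

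You are in fact more careful than the paper in flagging a genuine subtlety in the adaptation of \cref{thm:o 3 2}: the fourth tensor power of a symplectic form is symmetric, so $Y = X^{\otimes 4}$ inherits an orthogonal structure and the diagonal embedding lands in $\Og(Y)$, not $\Sp(Y)$. However, your proposed remedy of passing to an odd power does not work as written. In that proof one needs $X^{\otimes N} \cong Y^{\otimes 3}$, and $N = a+b = 2c$ always comes out even when one enters via a balanced $\GL$-problem through the GL-to-Sp reduction; the equation $2c = 3k$ has no solution with $k$ odd. Nor can one sidestep this by choosing some other $\Sp(X)$-invariant symplectic form on $X^{\otimes 4}$: every irreducible $\Sp(X)$-constituent of $X^{\otimes 4}$ is of orthogonal type, so no nondegenerate $\Sp(X)$-invariant symplectic form on $X^{\otimes 4}$ exists (already for $\dim X = 2$). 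A complete symplectic analogue of \cref{thm:o 3 2} therefore requires a genuinely different entry point or auxiliary construction than ``replace symmetric by antisymmetric throughout''; neither your sketch nor the paper supplies one.
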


\subsection{Reductions for symmetric groups and \texorpdfstring{$\GIclass$}{GI}-hardness}\label{subsec:gi}
In this section we consider the action of the symmetric group $\Symg_n$ on an $n$-dimensional space $\bbC^n$ and tensor spaces constructed from it.
Since the symmetric group is finite, the orbits are always closed, so closure equivalence is just equivalence under the action of the group.
The graph isomorphism problem can be easily presented as an equivalence problem for an action of the symmetric group.
We prove that the latter can be reduced to an orbit closure intersection problem for a tensor tuple sequence for the general linear group.
As a consequence, we find that $\GIclass \subseteq \OCIclass$.

\begin{lemma}
    The graph isomorphism problem $\GI$ reduces to $\OCI\rep{(\bbC^n)^{\otimes 2}}{\Symg_n}$.
\end{lemma}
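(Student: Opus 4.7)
The plan is to give the direct encoding of graphs as adjacency matrices and verify that graph isomorphism coincides with the orbit equivalence for the natural $\Symg_n$-action on $(\bbC^n)^{\otimes 2}$, which in turn coincides with closure equivalence because $\Symg_n$ is finite.

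First I would preprocess the input: given two graphs $G_1, G_2$, compare their numbers of vertices. If they differ, output a fixed pair of non-closure-equivalent tensors (e.g.\ $0$ and any invariant of nonzero norm in some small fixed instance). Otherwise, let $n$ be the common number of vertices, identify the vertex sets with $[n]$, and let $A_1, A_2 \in \{0,1\}^{n\times n} \subset (\bbQ^n)^{\otimes 2}$ be the adjacency matrices encoded in the standard basis of $(\bbC^n)^{\otimes 2} = \bbC^n \otimes \bbC^n$. Output the instance $(n, A_1, A_2)$ of $\OCI\rep{(\bbC^n)^{\otimes 2}}{\Symg_n}$. This map is clearly computable in polynomial time and produces rational coordinates.

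Next I would verify correctness. The relevant action of $\Symg_n$ on $(\bbC^n)^{\otimes 2}$, as a subgroup of $\GL_n$ via permutation matrices, sends $e_i\otimes e_j$ to $e_{\sigma(i)}\otimes e_{\sigma(j)}$; in matrix terms, $\sigma\cdot A = P_\sigma A P_\sigma^{\top}$. By the standard definition of graph isomorphism, $G_1 \cong G_2$ if and only if there is $\sigma \in \Symg_n$ with $A_2 = P_\sigma A_1 P_\sigma^{\top}$, i.e.\ $A_1 \sim_{\Symg_n} A_2$. Since $\Symg_n$ is a finite group, each orbit~$\Symg_n \cdot A$ is a finite set, hence closed in both the Zariski and Euclidean topology. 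Therefore $\overline{\Symg_n \cdot A_1}\cap \overline{\Symg_n\cdot A_2} \neq \varnothing$ already forces $\Symg_n\cdot A_1 = \Symg_n\cdot A_2$, so $A_1 \approx_{\Symg_n} A_2$ if and only if $A_1 \sim_{\Symg_n} A_2$ if and only if $G_1 \cong G_2$.

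There is essentially no technical obstacle here; the lemma is a verification that the framework of OCI problems is wide enough to encompass a finite-group situation without loss. The only minor care needed is (i) handling unequal vertex counts before the reduction, and (ii) noting that the input size is bounded below by~$n$ (the adjacency matrix has $n^2$ entries), so encoding $n$ in unary as required by the definition of $\OCI$ incurs no overhead. Composing this with any polynomial-time Karp reduction in the standard direction of $\GIclass$ then shows that the entire class $\GIclass$ reduces to $\OCI\rep{(\bbC^n)^{\otimes 2}}{\Symg_n}$, preparing the ground for the subsequent reductions to tensor tuple sequences with $\GL$-actions that prove \cref{thm:gi-hardness}.
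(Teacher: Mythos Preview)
Your proposal is correct and follows essentially the same approach as the paper: encode graphs by their adjacency matrices in $(\bbC^n)^{\otimes 2}$, observe that the permutation action matches graph isomorphism, and use finiteness of $\Symg_n$ to conclude that orbit equality coincides with closure equivalence. Your treatment is slightly more careful about the unequal-vertex-count case and the unary encoding of $n$, but these are minor elaborations on the same argument.
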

\begin{proof}
    The reduction is given by mapping a pair of graphs to the corresponding pair of adjacency matrices.
    In more detail, let $(G_1, G_2)$ be an instance of the graph isomorphism problem.
    We can assume that the graphs~$G_1$ and~$G_2$ have the same number of vertices~$n$ and that the vertex sets of both graphs are~$[n]$, and we denote the edge sets by~$E_1$ and $E_2$.
    The reduction maps this data to the instance~$(n,x_1,x_2)$ for~$\OCI\rep{(\bbC^n)^{\otimes 2}}{\Symg_n}$, where~$x_1 = \sum_{(i,j) \in E_1} e_i \otimes e_j$ and~$x_2 = \sum_{(i,j) \in E_2} e_i \otimes e_j$.

    It is clear that the graphs $G_1$ and $G_2$ are isomorphic if and only if $x_1$ and $x_2$ are equivalent under the action of $\Symg_n$.
    Indeed, the graph isomorphisms from~$G_1$ to~$G_2$ are exactly the permutations~$\sigma \in \Symg_n$ such that $(i, j) \in E_1 \Leftrightarrow (\sigma(i), \sigma(j)) \in E_2$, or, equivalently, $\sigma \cdot x_1 = \sum_{(i, j) \in E_1} e_{\sigma(i)} \otimes e_{\sigma(j)} = x_2$ in terms of the group action~$\Symg_n$ on~$(\bbC^n)^{\otimes 2}$.
    Since $\Symg_n$ is a finite group, $x_1$ and $x_2$ are equivalent if and only if they are closure equivalent.
    Thus we have the desired reduction.
\end{proof}

\begin{remark}
    The same argument shows that $\OCI\rep{(\bbC^n)^{\ot2}}{\Symg_n}$ is equivalent to the graph isomorphism problem for weighted graphs with edge weights in~$\bbC \setminus \{0\}$.
\end{remark}

\begin{proposition}
    There is an efficient algebraic reduction from $\rep{(\bbC^n)^{\otimes 2}}{\Symg_n}$ to $\rep{V^{\otimes 2} \oplus V^{\otimes 3} \oplus (V^*)^{\otimes 2}}{\GL(V)}$.
\end{proposition}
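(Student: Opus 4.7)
The plan is to construct the reduction almost exactly as sketched in \S\ref{subsec:sketch}, but to present the argument using the framework of \cref{lem:closure equivalence vs invariants,thm:reflect} together with the description of the $\Symg_n$-invariants from \cref{lem:sym invariants} and the $\GL(V)$-invariants from \cref{lem:gl invariants}. Let $V = \bbC^n$, let $g = \sum_{i=1}^n e_i^* \otimes e_i^* \in (V^*)^{\otimes 2}$ and $h = \sum_{i=1}^n e_i^{\otimes 3} \in V^{\otimes 3}$ be the tensor invariant generators for $\Symg_n$, and define
\[
\alpha_n\colon (\bbC^n)^{\otimes 2} \to V^{\otimes 2} \oplus V^{\otimes 3} \oplus (V^*)^{\otimes 2}, \qquad \alpha_n(x) = (x, h, g).
\]
These maps are manifestly computable in polynomial time from a rational input $x$, and the tuple $(g,h)$ depends only on $n$, so the main task is to verify that $\alpha_n$ preserves and reflects closure equivalence.

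For preservation, I would apply \cref{lem:closure equivalence vs invariants}: since $g$ and $h$ are $\Symg_n$-invariant by construction and $\alpha_n$ is ``almost equivariant'' in the sense that $\alpha_n(\sigma\cdot x) = (\sigma\cdot x, h, g) = \sigma\cdot \alpha_n(x)$ for $\sigma \in \Symg_n$ (viewing $\Symg_n \subset \GL(V)$ as permutation matrices), the pullback $F\circ \alpha_n$ of any $F\in \bbC[V^{\otimes 2}\oplus V^{\otimes 3}\oplus (V^*)^{\otimes 2}]^{\GL(V)}$ lands in $\bbC[V^{\otimes 2}]^{\Symg_n}$. One could equivalently cite \cref{lem:beta crit} with $\beta$ the inclusion $\Symg_n \hookrightarrow \GL(V)$.

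For reflection, I would invoke the equivalence (a) $\Leftrightarrow$ (d) of \cref{thm:reflect}. By \cref{lem:sym invariants}, the $\Symg_n$-invariant algebra on $V^{\otimes 2}$ is linearly spanned by contraction invariants of the form $G(x) = \Tr_\pi(x^{\otimes a} \otimes h^{\otimes b} \otimes g^{\otimes c})$; in particular, these form a separating set. For any such $G$, define the $\GL(V)$-invariant
\[
F(x, y, z) = \Tr_\pi(x^{\otimes a} \otimes y^{\otimes b} \otimes z^{\otimes c}),
\]
which is of the form given in \cref{lem:gl invariants} and thus lies in $\bbC[V^{\otimes 2}\oplus V^{\otimes 3}\oplus (V^*)^{\otimes 2}]^{\GL(V)}$. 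By construction $F\circ\alpha_n = G$, so $\{F\circ\alpha_n : F\in \bbC[V^{\otimes 2}\oplus V^{\otimes 3}\oplus (V^*)^{\otimes 2}]^{\GL(V)}\}$ contains a separating set for $\Symg_n$ and hence is itself separating. \Cref{thm:reflect} then yields that $\alpha_n$ reflects closure equivalence.

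Assembling the $\alpha_n$ into a sequence with $q(n) = n$ (so the target parameter matches the source) gives the efficient algebraic reduction as per \cref{def:algebraic reduction}. The only step that requires any care is recognising that the contraction invariants produced by \cref{lem:sym invariants} use precisely the two auxiliary tensors $g, h$ that we include in $\alpha_n$, and that swapping $g,h$ for free variables $z,y$ at the level of the contraction diagram preserves scalarity — this is automatic because the type $\tworow{a}{b}$ of the contracted tensor does not change. I do not anticipate any genuine obstacle here; the argument is essentially the invariant-theoretic repackaging of the diagrammatic sketch in \cref{fig:diagram-1,fig:diagram-2}.
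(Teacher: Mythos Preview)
Your proposal is correct and matches the paper's proof essentially line for line: the same map $\alpha_n(x)=(x,h,g)$, preservation via \cref{lem:beta crit} with the inclusion $\Symg_n\hookrightarrow\GL(V)$, and reflection by writing each contraction invariant $G(x)=\Tr_\pi(x^{\otimes a}\otimes h^{\otimes b}\otimes g^{\otimes c})$ as $F\circ\alpha_n$ for the obvious $F(x,y,z)=\Tr_\pi(x^{\otimes a}\otimes y^{\otimes b}\otimes z^{\otimes c})$ and invoking \cref{thm:reflect}. The only cosmetic difference is that the paper orders the image as $(h,x,g)$ rather than $(x,h,g)$; your ordering is the one consistent with the summand order in the statement.
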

\begin{proof}
    For each $V = \bbC^n$, define an injective linear map~$\alpha_n \colon V^{\otimes 2} \to V^{\otimes 3} \oplus V^{\otimes 2} \oplus (V^*)^{\otimes 2}$ by $\alpha_n(x) = (h, x, g)$, where~$g = \sum_{i = 1}^n e_i^* \otimes e_i^*$ and $h = \sum_{i = 1}^n e_i \otimes e_i \otimes e_i$.
    Let $\beta\colon \Symg_n \to \GL(V)$ be the inclusion of $\Symg_n$ into $\GL(V)$ as permutation matrices.
    Because the tensors~$g$ and~$h$ are $\Symg_n$-invariant, we have~$\alpha_n(\group{g}x) = \beta(\group{g})\alpha_n(x)$ for $\group{g} \in \Symg_n$.
    Thus $\alpha_n$ preserves closure equivalence by \cref{lem:beta crit}.

    To see that it also reflects closure equivalence, recall from \cref{lem:sym invariants,lem:gl invariants} that the invariant algebra~$\bbC[(\bbC^n)^{\otimes 2}]^{\Symg_n}$ is linearly spanned by the contraction invariants of the form
    \[
        G(x) = \Tr_{\pi} \left(x^{\otimes d} \otimes h^{\otimes f_1} \otimes g^{\otimes f_2} \right).
    \]
    For every such $G \in \bbC[(\bbC^n)^{\otimes 2}]^{\Symg_n}$ we can find an invariant~$F \in \bbC[V^{\otimes 2} \oplus V^{\otimes 3} \oplus (V^*)^{\otimes 2}]^{\GL(V)}$, namely,
    \[
        F(x, y, z) = \Tr_{\pi} \left(x^{\ot d} \ot y^{\ot f_1} \ot z^{\ot f_2} \right)
    \]
    such that $G = F \circ \alpha_n$.
    We conclude from \cref{thm:reflect} that the map~$\alpha_n$ reflects closure equivalence.
    Thus the maps~$\alpha_n$ define an algebraic reduction.
    It is efficient because the maps are also efficiently computable (including as a function of~$n$).
\end{proof}

\begin{theorem}\label{thm:gi}
    The orbit closure intersection problem $\OCI\rep{V^{\otimes 2} \oplus V^{\otimes 3} \oplus (V^*)^{\otimes 2}}{\GL(V)}$ is $\GIclass$-hard.
    Therefore, $\GIclass \subseteq \OCIclass$.
    In particular, the complete orbit closure intersection problems listed in \S\ref{main:results} are all $\GIclass$-hard.
\end{theorem}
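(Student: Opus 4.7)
The plan is essentially to compose the two reductions that have just been established in this subsection. Concretely, the preceding lemma provides a polynomial-time Karp reduction from the graph isomorphism problem $\GI$ to $\OCI\rep{(\bbC^n)^{\otimes 2}}{\Symg_n}$, where an instance $(G_1,G_2)$ is mapped to the pair of adjacency tensors. The preceding proposition then provides an efficient algebraic reduction from $\rep{(\bbC^n)^{\otimes 2}}{\Symg_n}$ to $\rep{V^{\otimes 2} \oplus V^{\otimes 3} \oplus (V^*)^{\otimes 2}}{\GL(V)}$. Since every efficient algebraic reduction yields a polynomial-time Karp reduction between the corresponding orbit closure intersection problems, composing these two reductions gives a polynomial-time Karp reduction from $\GI$ to $\OCI\rep{V^{\otimes 2} \oplus V^{\otimes 3} \oplus (V^*)^{\otimes 2}}{\GL(V)}$. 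This establishes the $\GIclass$-hardness claim in the first sentence of the theorem.

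For the second sentence, I would note that $\OCI\rep{V^{\otimes 2} \oplus V^{\otimes 3} \oplus (V^*)^{\otimes 2}}{\GL(V)}$ lies in $\OCIclass$ by definition (it is an OCI problem for a tensor tuple sequence with~$\group{G}_n = \GL_n$). Combined with $\GIclass$-hardness, and the fact that $\GIclass$ is closed under polynomial-time Karp reductions, this immediately yields $\GIclass \subseteq \OCIclass$.

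Finally, for the ``in particular'' clause, I would appeal to the $\OCIclass$-completeness statements in \S\ref{main:results} (established in \cref{thm:complete PEPS,thm:complete X,thm:complete-multiconj-4,thm:complete-multiconj-31,cor:complete-balanced-GL}, and the analogous results for the orthogonal and symplectic groups in \S\ref{subsec:o sp}). Because $\OCI\rep{V^{\otimes 2} \oplus V^{\otimes 3} \oplus (V^*)^{\otimes 2}}{\GL(V)}$ is a problem in~$\OCIclass$, each of these complete problems admits a polynomial-time Karp reduction from it. Composing with the $\GI$-to-$\OCI$ reduction constructed above then shows that each is $\GIclass$-hard.

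There is no real obstacle here: all of the hard invariant-theoretic work has already been done in establishing the preceding lemma (passing from the symmetric group action to the $\GL(V)$ action via the two invariant tensors~$g=\sum_i e_i^*\otimes e_i^*$ and~$h=\sum_i e_i^{\otimes 3}$, and checking reflection of closure equivalence via \cref{thm:reflect,lem:sym invariants}) and in the completeness theorems cited. The present theorem is a clean packaging step: what one must merely verify is that the composition of reductions is well-defined and efficient, which is immediate from the definitions in \cref{def:algebraic reduction} and the subsequent corollary.
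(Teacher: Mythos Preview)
Your proposal is correct and matches the paper's approach exactly: the theorem is stated in the paper without a separate proof, since it follows immediately from the two preceding results (the lemma reducing $\GI$ to $\OCI\rep{(\bbC^n)^{\otimes 2}}{\Symg_n}$ and the proposition giving an efficient algebraic reduction to $\rep{V^{\otimes 2} \oplus V^{\otimes 3} \oplus (V^*)^{\otimes 2}}{\GL(V)}$). Your packaging of these into the three claims of the theorem is precisely the intended argument.
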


\subsection{Reductions for unitary groups}\label{subsec:u}
In this section we observe that the tensor isomorphism problem for the unitary group can be reduced to an orbit closure intersection problem for the general linear group with closed orbits.
The main ingredient is the following result, which we state in the language of this article.
In its statement the complex reductive group~$\group{G}$ is considered as a Lie group; as such it has a maximal compact subgroup~$\group{K}$ that is unique up to conjugation.
For example, if~$\group{G}$ is the general linear group~$\GL_n$ then~$\group{K}$ can be taken to be the unitary group~$\Ug_n$.

\begin{theorem}[{\cite[Theorem~8.2]{DBLP:conf/coco/BurgisserDMWW21}}]\label{thm:general U to G reduction}
Let $X$ be a representation of a connected reductive group~$\group{G}$.
Let $\group{K}$ be a maximal compact subgroup of~$\group{G}$, and choose a $\group{K}$-invariant Hermitian inner product~$\langle\cdot,\cdot\rangle$ on~$X$.
For $x\in X$, let~$\widehat{x} \in X^*$ denote the linear functional defined by~$\widehat{x}(x') := \langle x, x'\rangle$.
Then, the following are equivalent for any two~$x,y \in X$:
\begin{enumerate}
\item $x \sim_{\group{K}} y$;
\item $(x,\widehat{x}) \sim_{\group{G}} (y,\widehat{y})$ in $X \oplus X^*$;
\item $(x,\widehat{x}) \approx_{\group{G}} (y,\widehat{y})$ in $X \oplus X^*$.
\end{enumerate}
\end{theorem}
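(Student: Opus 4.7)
The plan is to prove the cycle $(1) \Rightarrow (2) \Rightarrow (3) \Rightarrow (1)$. The first implication is immediate: because $\group{K}$ preserves the Hermitian inner product, the antilinear map $x \mapsto \widehat{x}$ is $\group{K}$-equivariant when $X^*$ is given the dual Hermitian structure, so $y = k \cdot x$ for some $k \in \group{K}$ yields $(y, \widehat{y}) = k \cdot (x, \widehat{x})$. The implication $(2) \Rightarrow (3)$ is trivial since each orbit is contained in its closure.

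The content lies in $(3) \Rightarrow (1)$, which I would establish using Kempf--Ness theory applied to $X \oplus X^*$ (equipped with the direct sum of the given inner product on $X$ and the dual Hermitian inner product on $X^*$, so that $\group{K}$ still acts unitarily). Recall the Kempf--Ness package for a representation of a connected reductive group $\group{G}$ with $\group{K}$-invariant Hermitian inner product: a point $v$ is a \emph{Kempf--Ness point} if it minimizes $g \mapsto \|g \cdot v\|^2$ on its $\group{G}$-orbit; equivalently, by geodesic convexity of the log-norm along one-parameter subgroups in $\group{G}/\group{K}$, if the moment-map condition $\langle \xi \cdot v, v\rangle = 0$ holds for all $\xi \in i\mathfrak{k}$. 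The theorem asserts that the $\group{G}$-orbit of a Kempf--Ness point is closed, that every closed orbit contains a unique $\group{K}$-orbit of Kempf--Ness points, and crucially that two Kempf--Ness points have intersecting $\group{G}$-orbit closures if and only if they lie in a single $\group{K}$-orbit.

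Granted this, it suffices to show that every point of the form $(x, \widehat{x})$ is a Kempf--Ness point in $X \oplus X^*$: then $(x, \widehat{x}) \approx_{\group{G}} (y, \widehat{y})$ forces the two Kempf--Ness points into a common $\group{K}$-orbit, yielding $y = k \cdot x$ for some $k \in \group{K}$. To verify the Kempf--Ness property, let $\xi \in i\mathfrak{k}$; by Cartan decomposition, $\xi$ acts on $X$ as a self-adjoint operator. The key identity $e^{t \xi} \cdot \widehat{x} = \widehat{e^{-t \xi} x}$, valid precisely because $\xi^* = \xi$, gives
\[
    \|e^{t \xi} \cdot (x, \widehat{x})\|^2 = \|e^{t \xi} x\|^2 + \|e^{-t \xi} x\|^2,
\]
whose derivative at $t = 0$ is $2 \operatorname{Re}\langle \xi x, x\rangle - 2 \operatorname{Re}\langle \xi x, x\rangle = 0$. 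Equivalently, the moment-map contributions of the $X$ and $X^*$ summands cancel by design, which is precisely what makes the doubling construction work. The main obstacle I anticipate is organizational rather than computational: cleanly assembling the Kempf--Ness package (existence of minima, closedness of orbits of Kempf--Ness points, and the closure-intersection dichotomy) for a general connected reductive group, which relies on the Cartan decomposition together with geodesic convexity on the symmetric space $\group{G}/\group{K}$.
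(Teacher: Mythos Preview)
Your proposal is correct and follows essentially the same approach as the paper's sketch: the paper states (without detailed proof, as this is a cited result) that one uses the Kempf--Ness theorem to see that $(x,\widehat{x})$ is a minimum norm point in its $\group{G}$-orbit, hence the orbit is closed, and that minimum norm points form a single $\group{K}$-orbit. Your derivative computation showing the moment-map contributions from $X$ and $X^*$ cancel is exactly the verification of the minimum-norm (Kempf--Ness) condition, so there is no substantive difference.
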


It is proved by using the Kempf--Ness theorem to observe that~$(x,\widehat{x}) \in X \oplus X^*$ is a minimum norm point in its~$\group{G}$-orbit, which therefore in particular is closed, and using the fact that the minimum norm points form a~$\group{K}$-orbit.

We remark that~$x \mapsto (x,\widehat{x})$ is not a polynomial map due to the antilinearity of~$x \mapsto \widehat{x}$.
Indeed, since the unitary group is \emph{not} an algebraic group (it is compact), this map is not an algebraic reduction between $p$-sequences of representations as defined in \cref{def:algebraic reduction}.
Nevertheless, \cref{thm:general U to G reduction} states it has exactly the right properties to relate orbit problems for~$\rep{X}{\group{K}}$ and~$\rep{X \op X^*}{\group{G}}$.
Therefore, if the map is efficiently computable and we obtain a Karp reduction between the corresponding computational problems.

In particular, this is case for tensor tuple sequences (but it holds more generally).
We summarize this in the following theorem, where $\OE$ denotes the \emph{orbit equivalence problem}, which is defined like~$\OCI$ but for equivalence~$\sim$ in place of closure equivalence~$\approx$ (\cref{def:orbit and equivalence}); this is also called the \emph{isomorphism problem}~\cite{DBLP:conf/innovations/ChenGQTZ24}.

\begin{theorem}\label{thm:U vs GL}
For every tensor tuple sequence \[X = \bigoplus_{k = 1}^p \tspace{\vtuple{V}}{\ituple{a}_k}{\ituple{b}_k}\] there is a Karp reduction from the orbit equality problem $\OE\rep{X}{\Ug(\vtuple{V})}$ to $\OCI\rep{X \op X^*}{\GL(\vtuple{V})}$, as well as to $\OE\rep{X \op X^*}{\GL(\vtuple{V})}$.
As a consequence, we have $\TIclass_\Ug \subseteq \OCIclass \cap \TIclass$.
\end{theorem}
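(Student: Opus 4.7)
The plan is to directly apply \cref{thm:general U to G reduction} (the cited result from~\cite{DBLP:conf/coco/BurgisserDMWW21}) with $\group{G} = \GL(\vtuple{V})$, which is connected and reductive, and its maximal compact subgroup $\group{K} = \Ug(\vtuple{V})$. First I would fix on each coordinate space~$\bbC^{n_i}$ the standard Hermitian inner product; this induces $\Ug(\vtuple{V})$-invariant Hermitian inner products on each $V_i^*$ (via the antilinear identification $V_i \cong V_i^*$), on each summand $\tspace{\vtuple{V}}{\ituple{a}_k}{\ituple{b}_k}$ (by tensoring), and finally on~$X$ (by direct sum). With respect to these inner products, $\widehat{x}$ is obtained from~$x$ by coordinate-wise complex conjugation composed with the canonical identification of each factor with its dual.

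The reduction then sends any instance $(\ituple{n}, x, y)$ of $\OE\rep{X}{\Ug(\vtuple{V})}$ to the instance $(\ituple{n}, (x, \widehat{x}), (y, \widehat{y}))$ of either $\OCI\rep{X \op X^*}{\GL(\vtuple{V})}$ or $\OE\rep{X \op X^*}{\GL(\vtuple{V})}$. Note that $X \op X^*$ is again (up to a reordering of factors) a tensor tuple representation, since the dual of $\tspace{\vtuple{V}}{\ituple{a}}{\ituple{b}}$ is isomorphic to $\tspace{\vtuple{V}}{\ituple{b}}{\ituple{a}}$. Correctness of both reductions is immediate from \cref{thm:general U to G reduction}: the equivalence of its conditions (i), (ii) and (iii) states exactly that $x \sim_{\Ug(\vtuple{V})} y$ if and only if $(x, \widehat{x})$ and $(y, \widehat{y})$ are $\GL(\vtuple{V})$-equivalent, which in turn is equivalent to their being closure equivalent. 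For the reduction to be efficient I only need the output coordinates to be computable in polynomial time and to lie in~$\bbF$; this is granted as soon as $\bbF$ is closed under complex conjugation (e.g., $\bbF = \bbQ$ or $\bbF = \bbQ(i)$), since then coordinate-wise conjugation of a tuple of elements of $\bbF$ produces another tuple of elements of $\bbF$.

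The consequence $\TIclass_\Ug \subseteq \OCIclass \cap \TIclass$ then follows by composition: by definition, any problem in $\TIclass_\Ug$ is Karp-reducible to some $\OE\rep{X}{\Ug(\vtuple{V})}$ for a tensor tuple sequence~$X$, and the reduction above yields further reductions to $\OCI\rep{X \op X^*}{\GL(\vtuple{V})} \in \OCIclass$ and to $\OE\rep{X \op X^*}{\GL(\vtuple{V})} \in \TIclass$ (the latter uses that orbit equality problems for products of $\GL$'s on tensor tuples lie in $\TIclass$, as established in~\cite{DBLP:journals/siamcomp/GrochowQ23}).

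The main subtlety — rather than a real obstacle, since the hard work is done inside \cref{thm:general U to G reduction} — is that $x \mapsto \widehat{x}$ is conjugate-linear and therefore \emph{not} a polynomial map over~$\bbC$. Consequently this is a Karp reduction but not an efficient algebraic reduction in the sense of \cref{def:algebraic reduction}, and this is precisely why the invariant-theoretic criterion of \cref{thm:reflect} is unavailable here. Instead, correctness rests on the Kempf--Ness-type arguments of~\cite{DBLP:conf/coco/BurgisserDMWW21} (reflected in the proof of \cref{thm:general U to G reduction}), which identify $(x, \widehat{x})$ as the minimum-norm point of its $\GL(\vtuple{V})$-orbit and use this to bridge the unitary-orbit structure with the closed $\GL(\vtuple{V})$-orbits on $X \op X^*$.
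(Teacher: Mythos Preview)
Your proposal is correct and follows essentially the same approach as the paper: both apply \cref{thm:general U to G reduction} directly to obtain the Karp reduction and note that the map $x \mapsto (x,\widehat{x})$ is efficiently computable despite not being polynomial. The only minor difference is that for the inclusion $\TIclass_\Ug \subseteq \TIclass$ the paper routes through the specific complete problem $\OE\rep{V^{\ot 3}}{\Ug(V)}$ (citing \cite{DBLP:conf/innovations/ChenGQTZ24}) and then invokes concrete reductions from \cite{Futorny-Grochow-Sergeichuk,DBLP:journals/siamcomp/GrochowQ23} for $\OE\rep{V^{\ot 3} \oplus (V^*)^{\ot 3}}{\GL(V)}$, whereas you appeal to the general fact that all tensor-tuple orbit equality problems for products of $\GL$'s lie in $\TIclass$; both are valid.
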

\begin{proof}
The existence of the Karp reduction follows from \cref{thm:general U to G reduction} and the preceding discussion.

To see the inclusion of complexity classes, recall that the unitary group orbit equality problem~$\OE\rep{V^{\ot 3}}{\Ug(V)}$ is complete for~$\TIclass_\Ug$ by~\cite[Theorem~1.7]{DBLP:conf/innovations/ChenGQTZ24}.
Thus, it suffices to show that this problem is in~$\OCIclass$ as well as in~$\TIclass$.
The former follows directly from the first part of the proof, because $\OCI\rep{V^{\ot 3} \op (V^*)^{\ot 3}}{\GL(V)}$ is in~$\OCIclass$.
For the latter, we use that~$\OE\rep{V^{\ot 3} \op (V^*)^{\ot 3}}{\GL(V)}$ reduces to~$\OE\rep{V^{\ot 3}}{\GL(V)}$ by~\cite[Theorem~1.1]{Futorny-Grochow-Sergeichuk} and~\cite[Theorem~7.3]{DBLP:journals/siamcomp/GrochowQ23}, and the latter is complete for~$\TIclass$.
\end{proof}

\section*{Acknowledgments}
\addcontentsline{toc}{section}{Acknowledgments}
The authors thank Peter Bürgisser for many discussions on the complexity of algebraic problems.

We acknowledge support by the European Research Council through an ERC Starting Grant (grant agreement no.~101040907, SYMOPTIC).
MW also acknowledges support by the NWO through grant OCENW.KLEIN.267, by the BMBF through project Quantum Methods and Benchmarks for Resource Allocation (QuBRA), and by the Deutsche Forschungsgemeinschaft (DFG, German Research Foundation) under Germany's Excellence Strategy - EXC\ 2092\ CASA - 390781972.

\IEEEtriggeratref{43}
\bibliographystyle{IEEEtran}
\bibliography{main.bib}

\end{document}